\newlength{\alphabet}
\crefname{equation}{}{}
\Crefname{equation}{Equation}{Equations}
\newcommand{\dd}{\ensuremath{\mathrm{d}}}
\theoremstyle{definition}
\newtheorem{definition}{Definition}[section]
\newtheorem{openquestion}[definition]{Open Question}
\theoremstyle{plain}
\newtheorem{proposition}[definition]{Proposition}
\newtheorem{theorem}[definition]{Theorem}
\newtheorem{lemma}[definition]{Lemma}
\newtheorem{corollary}[definition]{Corollary}
\theoremstyle{remark}
\newtheorem{example}[definition]{Example}
\newtheorem{remark}[definition]{Remark}
\newcommand{\disjoint}{\bowtie}
\newcommand{\overlap}{\not \disjoint}
\DeclareMathOperator{\OverlapMonics}{\mathnormal{\overlap}-Monics}
\newcommand{\citeinabstract}[1]{\citeauthor{#1} [\citejournal{#1}]}
\title{Spacetimes categories and disjointness for algebraic quantum field theory}
\author{Alastair Grant-Stuart}
\affil{\small Martin A. Fisher School of Physics, Brandeis University, Waltham MA, USA \\
\small Email: \texttt{agrantstuart@brandeis.edu}}
\date{\today}
\begin{document}

\maketitle

% display preprint number on arXiv submission:
% \thispagestyle{titlepage}

\begin{abstract}
An algebraic quantum field theory (AQFT) may be expressed as a functor from a category of spacetimes to a category of algebras of observables.
However, a generic category $\mathsf{C}$ whose objects admit interpretation as spacetimes is not necessarily viable as the domain of an AQFT functor;
often, additional constraints on the morphisms of $\mathsf{C}$ must be imposed.
We introduce \emph{disjointness relations}, a generalisation of the orthogonality relations of
\citeinabstract{BeniniSchenkelWoike2021}.
% Benini, Schenkel and Woike (2021).
In any category $\mathsf{C}$ equipped with a disjointness relation, we identify a subcategory $\mathsf{D}_\mathsf{C}$ which is suitable as the domain of an AQFT.
We verify that when $\mathsf{C}$ is the category of all globally hyperbolic spacetimes of dimension $d+1$ and all local isometries, equipped with the disjointness relation of spacelike separation, the specified subcategory $\mathsf{D}_\mathsf{C}$ is the commonly-used domain $\mathsf{Loc}_{d+1}$ of relativistic AQFTs.
By identifying appropriate chiral disjointness relations, we construct a category $\chi\mathsf{Loc}$ suitable as domain for chiral conformal field theories (CFTs) in two dimensions.
We compare this to an established AQFT formulation of chiral CFTs, and show that any chiral CFT expressed in the established formulation induces one defined on $\chi\mathsf{Loc}$.
\end{abstract}

% ------------------------------

% CONTENT

% ------------------------------

\tableofcontents

\listoftodos

\section{Introduction}
\label{sec:introduction}

In the modern locally covariant approach, an algebraic quantum field theory (AQFT) is realised as a functor whose domain is a category of spacetimes and whose codomain is a category of algebras of observables \cite{BrunettiFredenhagenVerch2003,FewsterVerch2015}.
Such a functor must satisfy physically motivated axioms, including causality and a time-slice axiom.

For relativistic theories in $d+1$ dimensions, the standard choice of domain category is $\mathsf{Loc}_{d+1}$.
The objects of $\mathsf{Loc}_{d+1}$ are globally hyperbolic spacetimes of dimension $d+1$.
The morphisms are maps simultaneously satisfying three separate properties: they preserve metric structure (i.e. are local isometries), they are injective, and their images are causally convex in their codomains.
With this, a relativistic AQFT is a functor $\mathcal{A} : \mathsf{Loc}_{d+1} \to \mathsf{Obs}$; the codomain $\mathsf{Obs}$ is commonly taken to be a category of associative and unital complex $*$-algebras whose morphisms are unit-preserving $*$-algebra homomorphisms.

It is noteworthy that the domain $\mathsf{Loc}_{d+1}$ is a strict subcategory of the category $\mathsf{GlobHypSpTm}_{d+1}$ of all $(d+1)$-dimensional globally hyperbolic spacetimes and \emph{all} structure-preserving maps (local isometries) between them.
Compared to $\mathsf{GlobHypSpTm}_{d+1}$, the morphisms of $\mathsf{Loc}_{d+1}$ are constrained to be injective with causally convex image.
This constraint is discovered by specific, \textit{ad hoc} physical reasoning particular to relativistic QFTs in \cite{BrunettiFredenhagenVerch2003}, with motivation from \cite{Kay1996}.

To formulate AQFTs beyond the standard relativistic setting, analogues of the domain category $\mathsf{Loc}_{d+1}$ must be found; this includes identifying analogous constraints on maps to define appropriate morphisms.
Several cases of physical interest are sufficiently similar to the standard relativistic case that the constraint of injectivity with causally convex image may be adapted directly from $\mathsf{Loc}_{d+1}$.
Such examples include (non-chiral) conformal field theories (CFTs) \cite{Pinamonti2009,BeniniGiorgettiSchenkel2021}, and theories defined on globally hyperbolic spacetimes equipped with extra structure (e.g. spin structures, background gauge fields) \cite{BeniniSchenkel2017}.
However, venturing further afield from $\mathsf{Loc}_{d+1}$ can lead to difficulties; see for instance \cite{BeniniPerinSchenkel2020} where a description of smooth AQFTs using stacks is limited to one-dimensional spacetimes (i.e. time intervals) only.

In developing an homotopical description of AQFT \cite{BeniniSchenkelWoike2019,BeniniSchenkel2019,BruinsmaSchenkel2019,Carmona2021,Yau2019}, \textcite{BeniniSchenkelWoike2021} have described a structure called an \emph{orthogonality relation} on a category.
To serve as the domain of an AQFT, a category must be equipped with an orthogonality relation.
On $\mathsf{Loc}_{d+1}$, the orthogonality relation formally encodes causal disjointness (spacelike separateness) in Lorentzian manifolds.
While the notion of causal disjointness also makes sense in $\mathsf{GlobHypSpTm}_{d+1}$, it fails to satisfy the definition of an orthogonality relation there; the extra constraints on morphisms of $\mathsf{Loc}_{d+1}$ influence whether causal disjointness constitutes an orthogonality relation.

\paragraph{Key contributions:}

Given a category $\mathsf{C}$ of spacetimes, we identify a particular subcategory $\mathsf{D}_\mathsf{C}$ which we propose as a suitable domain for AQFT functors.
To define $\mathsf{D}_\mathsf{C}$, we introduce a structure on $\mathsf{C}$ called a \emph{disjointness relation} -- a generalisation of the orthogonality relations of \cite{BeniniSchenkelWoike2021} -- which relates conterminous morphisms $f_1 : M_1 \rightarrow N \leftarrow M_2 : f_2$ in $\mathsf{C}$.
Denoting the disjointness relation on $\mathsf{C}$ by symbol $\disjoint_\mathsf{C}$, we say that $f_1$ and $f_2$ are disjoint if they are related under the disjointness relation ($f_1 \disjoint_\mathsf{C} f_2$) and that $f_1$ and $f_2$ overlap if they are not related ($f_1 \overlap_\mathsf{C} f_2$).

The physical role of the disjointness relation is to specify limitations on signal propagation in spacetimes.
For instance on $\mathsf{C} = \mathsf{GlobHypSpTm}_{d+1}$, the relevant disjointness relation records that $f_1 \disjoint_\mathsf{C} f_2$ if their images $f_i(M_i)$ are causally disjoint (spacelike separated) in their shared codomain $N$.
Via the causality axiom on AQFTs, the disjointness relation specifies when the degrees of freedom supported on some spacetime regions are necessarily independent.

With a disjointness relation $\disjoint_\mathsf{C}$ on $\mathsf{C}$, the identified subcategory $\mathsf{D}_\mathsf{C}$ is the wide subcategory of \emph{overlap-monic} morphisms.
Roughly, a morphism is overlap-monic with respect to a disjointness relation if it `respects disjointness'.
Notably, the disjointness relation on $\mathsf{C}$ reduces to an orthogonality relation on the subcategory $\mathsf{D}_\mathsf{C}$ upon restriction.

In the case of $\mathsf{C} = \mathsf{GlobHypSpTm}_{d+1}$ with the described causal disjointness relation, we show that a morphism is overlap-monic if and only if it is injective with causally convex image.
We begin by characterising overlap-monics without assuming global hyperbolicity of spacetimes, i.e. by considering the causal disjointness relation on a category $\mathsf{SpTm}_{d+1}$ of all spacetimes of dimension $d+1$ and all local isometries.
We can then specialise the characterisation of overlap-monics to subcategories of spacetimes with particular causal properties such as global hyperbolicity.
A key simplification occurs when all spacetimes are at least causally simple;
causal simplicity is a slightly weaker causal property than global hyperbolicity.
Specialising further to globally hyperbolic spacetimes delivers the advertised result and thereby verifies that our proposed subcategory of overlap-monics in $\mathsf{GlobHypSpTm}_{d+1}$ coincides with the often-used domain $\mathsf{Loc}_{d+1}$ of relativistic AQFTs.

The benefits of this general characterisation of AQFT-domain subcategory $\mathsf{D}_\mathsf{C}$ in spacetimes category $\mathsf{C}$ are twofold: first, $\mathsf{C}$ often has more appealing properties as a category than $\mathsf{D}_\mathsf{C}$.
For instance, in the case $\mathsf{C} = \mathsf{GlobHypSpTm}_{d+1}$, disjoint unions of spacetimes are coproducts in $\mathsf{GlobHypSpTm}_{d+1}$ but fail to satisfy the universal property of coproducts in $\mathsf{D}_\mathsf{C} = \mathsf{Loc}_{d+1}$.
Our proposal allows us access to any such good categorical properties of $\mathsf{C}$, while still giving a clear definition of AQFTs as functors $\mathsf{D}_\mathsf{C} \to \mathsf{Obs}$.

Second, our proposal gives a well-defined mathematical procedure to replace the \textit{ad hoc} discovery of morphism-constraints analogous to those of $\mathsf{Loc}_{d+1}$ (i.e. that morphisms must be injective maps with causally convex image).
This procedure can be used to construct appropriate domains for AQFTs in contexts where the notions of `spacetime' or disjointness differ from the standard relativistic setting.
The inputs needed for this procedure -- a category $\mathsf{C}$ of `spacetimes' and a disjointness relation on it -- can be arrived at by a series of physical modelling decisions:
\begin{itemize}
	\item
	The objects of $\mathsf{C}$ are decided by the relevant notion of `spacetime' in the context of interest.
	For relativistic theories of either the generic or conformal type, spacetimes are time-oriented Lorentzian manifolds of a particular dimension, possibly with additional `niceness' properties (usually global hyperbolicity).
	Euclidean theories may use Riemannian manifolds as `spacetimes' instead.
	
% 	\item
% 	Assuming $\mathsf{C}$ is to be concrete, the morphisms of $\mathsf{C}$ are structure-preserving maps for some appropriate choice of structure existing on the objects.
% 	So, if the QFTs under consideration are sensitive to some particular structure on the `spacetimes', then this structure decides the morphisms.
% 	For instance, the choice $\mathsf{C} = \mathsf{GlobHypSpTm}_{d+1}$ describes QFTs sensitive to the full Lorentzian metric; maps which preserve metric structure are local isometries.
% 	On the other hand, CFTs are sensitive only to conformal structure; in this case the appropriate maps are merely conformal rather than locally isometric.
% 	$\mathsf{C}$ may also be chosen not to be concrete; for instance, by taking $\mathsf{C}$ to be a poset (a thin category) one can recover traditional net-based formulations of AQFT.
	
	\item
	Assuming $\mathsf{C}$ is to be concrete, the morphisms of $\mathsf{C}$ are structure-preserving maps for some appropriate choice of structure existing on the objects -- this is the structure to which QFTs under consideration are sensitive.
	For instance, the choice $\mathsf{C} = \mathsf{GlobHypSpTm}_{d+1}$ describes QFTs sensitive to the full Lorentzian metric, since morphisms of $\mathsf{GlobHypSpTm}_{d+1}$ are local isometries.
	On the other hand, CFTs are sensitive only to conformal structure; in this case the appropriate morphisms are merely conformal maps rather than local isometries.
	One may also choose a non-concrete category $\mathsf{C}$;
% 	$\mathsf{C}$ may also be chosen not to be concrete;
	for instance, by taking $\mathsf{C}$ to be a poset (a thin category) one can recover traditional net-based formulations of AQFT.\todo{refer forwards if I include this example}
	
	\item
	The disjointness relation on $\mathsf{C}$ is decided by limitations on signal propagation in the theories under consideration.
	For instance, signals in standard relativistic theories may only propagate along causal curves; therefore regions of spacetime which are spacelike separated should be recorded as disjoint by the disjointness relation.
	In chiral CFTs, signals in the right- or left-moving half of the theory propagate only along right- or left-moving null curves respectively; each of these defines a different disjointness relation on the appropriate category.
\end{itemize}

\Textcite{BeniniSchenkelWoike2021} have formalised the structures on a category $\mathsf{D}$ needed to define an AQFT as a functor $\mathsf{D} \to \mathsf{Obs}$.
The required structures enable statement of the causality and time-slice axioms.

To state the causality axiom, $\mathsf{D}$ must be equipped with an orthogonality relation.
A functor $\mathcal{A} : \mathsf{D} \to \mathsf{Obs}$ satisfies the causality axiom if for any conterminous pair $f_1 : M_1 \rightarrow N \leftarrow M_2 : f_2$ in $\mathsf{D}$ related under the orthogonality relation, $\mathcal{A}f_1 (a_1)$ and $\mathcal{A}f_2 (a_2)$ commute in algebra $\mathcal{A}N$ for any observables $a_1 \in \mathcal{A}M_1$ and $a_2 \in \mathcal{A}M_2$:
\begin{equation*}
	[\mathcal{A}f_1 (a_1), \mathcal{A}f_2 (a_2)]_{\mathcal{A}N} = 0.
\end{equation*}
In the terminology of \cite{BeniniSchenkelWoike2021}, such $\mathcal{A}$ is called $\perp$-commutative.
In the case of $\mathsf{D} = \mathsf{Loc}_{d+1}$, the relevant orthogonality relation is that of causal disjointness, so $f_1$ and $f_2$ are related if the images $f_i(M_i)$ are spacelike separated in the shared target spacetime $N$.
Here, the causality axiom expresses that observables supported on spacelike-separated regions of spacetime $N$ must commute.

To state the time-slice axiom, there must be a specified class $W$ of morphisms in $\mathsf{D}$.
A functor $\mathcal{A} : \mathsf{D} \to \mathsf{Obs}$ satisfies the time-slice axiom with respect to $W$ if $\mathcal{A}f$ is an isomorphism in $\mathsf{Obs}$ for every $f \in W$.
In the terminology of \cite{BeniniSchenkelWoike2021}, such $\mathcal{A}$ is called $W$-constant.
In the case of $\mathsf{D} = \mathsf{Loc}_{d+1}$, the specified class $W$ consists of all morphisms $f : M \to N$ such that the image $f(M)$ contains a Cauchy surface of $N$; such maps are called Cauchy maps.

With these statements of the causality and time-slice axioms in hand, our proposed procedure to construct an AQFT domain may be summarised as follows: take as input data $(\mathsf{C}, \disjoint_\mathsf{C}, W_\mathsf{C})$ consisting of a category $\mathsf{C}$, a disjointness relation $\disjoint_\mathsf{C}$ on $\mathsf{C}$, and a class of morphisms $W_\mathsf{C}$ in $\mathsf{C}$.
From this, produce data $(\mathsf{D}_\mathsf{C}, \disjoint_{\mathsf{D}_\mathsf{C}}, W_{\mathsf{D}_\mathsf{C}})$ consisting of the wide subcategory $\mathsf{D}_\mathsf{C}$ of overlap-monics in $\mathsf{C}$ with respect to $\disjoint_\mathsf{C}$, along with the restrictions $\disjoint_{\mathsf{D}_\mathsf{C}}$ and $W_{\mathsf{D}_\mathsf{C}}$ to this subcategory of $\disjoint_\mathsf{C}$ and $W_\mathsf{C}$ respectively.
The restricted disjointness relation $\disjoint_{\mathsf{D}_\mathsf{C}}$ is an orthogonality relation on $\mathsf{D}_\mathsf{C}$.
This data suffices to define an AQFT on $\mathsf{D}_\mathsf{C}$ as a functor $\mathcal{A}: \mathsf{D}_\mathsf{C} \to \mathsf{Obs}$ satisfying the causality axiom with respect to $\disjoint_{\mathsf{D}_\mathsf{C}}$, and the time-slice axiom with respect to $W_{\mathsf{D}_\mathsf{C}}$.

\paragraph{Applications:}

To illustrate our proposed procedure, we apply it to chiral CFTs.
We begin by defining appropriate right- and left-chiral disjointness relations on a category $\mathsf{CSpTm}_{1+1}^\mathrm{o,to}$ of two-dimensional (2D) oriented spacetimes and conformal maps which preserve orientation and time-orientation.
These disjointness relations reflect that in the right- and left-moving halves of a chiral CFT, signals can only propagate along right- and left-moving null curves respectively.
We then characterise overlap-monics with respect to these chiral disjointness relations; this mirrors the standard relativistic case.
In particular, to simplify the characterisations we restrict to spacetimes with good chiral properties, analogous to the causal properties of causal simplicity and global hyperbolicity used in the standard relativistic case.
Formulating the needed chiral properties requires that we first develop some technical tools, namely chiral frames and chiral flows on 2D oriented spacetimes.

The characterisation of overlap-monics with respect to the right-chiral disjointness relation simplifies when we restrict to spacetimes satisfying a right-chiral analogue of global hyperbolicity.
Specifically, in the full subcategory of $\mathsf{CSpTm}_{1+1}^\mathrm{o,to}$ on spacetimes containing an appropriate right-chiral analogue of a Cauchy surface, a morphism is overlap-monic with respect to the right-chiral disjointness relation if and only if it is injective and its image satisfies an appropriate right-chiral analogue of causal convexity in its codomain.
We denote the resulting subcategory of overlap-monics between such spacetimes as $\chi\mathsf{Loc}$, in analogy with $\mathsf{Loc}_{d+1}$.
Similar observations apply for the left-chiral disjointness relation.

Our proposal then suggests that the right-moving half of a chiral CFT is a functor $\chi\mathsf{Loc}\to \mathsf{Obs}$ obeying causality and time-slice axioms.
This differs from established AQFT formulations of chiral CFTs.
In traditional, net-based AQFT, a chiral CFT is defined on a net of open subsets of the circle $\mathbb{S}^1$.
A locally covariant version of this defines a chiral CFT as a functor $\mathsf{Emb}_1 \to \mathsf{Obs}$ where $\mathsf{Emb}_1$ is a category of one-dimensional manifolds and embeddings; see \cite{BeniniGiorgettiSchenkel2021}\footnote{In \cite{BeniniGiorgettiSchenkel2021}, the category we refer to here as $\mathsf{Emb}_1$ is denoted as $\mathsf{Man}_1$.}.
The key difference from our proposal is that the established formulations of chiral CFT build in the time-slice axiom preemptively.

To compare, we explicitly express the time-slice axiom on $\chi\mathsf{Loc}$ by describing a class $W$ of morphisms serving as an appropriate chiral analogue of Cauchy maps.
We exhibit a functor $Q : \chi\mathsf{Loc} \to \mathsf{Emb}_1$ which sends all morphisms in $W$ to isomorphisms.
Consequently any functor $\mathcal{A}:\mathsf{Emb}_1 \to \mathsf{Obs}$ representing a chiral CFT in the established formulation induces by pre-composition a functor $\mathcal{A}\circ Q: \chi\mathsf{Loc} \to \mathsf{Obs}$ representing a chiral CFT satisfying the time-slice axiom with respect to $W$ in our formulation.

We leave as an open question whether the converse is also true: does any functor $\chi\mathsf{Loc} \to \mathsf{Obs}$ representing a chiral CFT satisfying the time-slice axiom also induce a functor $\mathsf{Emb}_1 \to \mathsf{Obs}$ representing a chiral CFT in the established formulation?

Future applications of our proposal to determine domain $\mathsf{D}_\mathsf{C}$ from spacetimes category $(\mathsf{C}, \disjoint_\mathsf{C})$ may proceed beyond chiral CFTs.
A minor generalisation of the standard relativistic type of AQFT loosens the assumption of global hyperbolicity on spacetimes; in fact, this generalisation is treated already in our verification that overlap-monics in $\mathsf{GlobHypSpTm}_{d+1}$ coincide with $\mathsf{Loc}_{d+1}$.
For theories on spacetimes with extra structure (e.g. spin structure, background gauge fields) one may choose $\mathsf{C}$ to be an appropriately fibred category, similar to \cite{BeniniSchenkel2017}.
Further removed from the relativistic context, for Euclidean field theories $\mathsf{C}$ is expected to be a category of Riemannian manifolds.
More speculative possibilities may include lattice field theories (if an appropriate category $\mathsf{C}$ of lattice representations of spacetimes is identified), or smooth AQFTs as in \cite{BeniniPerinSchenkel2020}
(if disjointness relations admit an adequate generalisation to stacks of categories).

\paragraph{Outline of the paper:}

In \cref{sec:disjointness}, we present a categorical description of disjointness, including the definitions of disjointness relations (\cref{sec:disjointness:relations}) and overlap-monics (\cref{sec:disjointness:overlap-monics}).
\Cref{sec:spacetimes-cats-relativistic} verifies that $\mathsf{Loc}_{d+1}$ is the subcategory in $\mathsf{GlobHypSpTm}_{d+1}$ of overlap-monics with respect to the causal disjointness relation.
This is done by first studying overlap-monics in the category $\mathsf{SpTm}_{d+1}$ of spacetimes without assuming any causal properties, and then specialising to spacetimes with causal properties of increasing strength until global hyperbolicity is reached.
In \cref{sec:spacetimes-cats-chiral}, we apply our proposed construction to chiral CFTs by defining left- and right-chiral disjointness relations a category of 2D oriented spacetimes and conformal maps.
The result is a category $\chi\mathsf{Loc}$ which we propose as domain of functors $\chi\mathsf{Loc} \to \mathsf{Obs}$ representing a chiral CFTs.
In \cref{sec:spacetimes-cats-chiral:comparison}, we compare this proposed $\chi\mathsf{Loc}$ to established AQFT formulations of chiral CFTs by implementing the time-slice axiom.

\section{A categorical description of disjointness}
\label{sec:disjointness}
% BEGIN

In this section, we introduce disjointness relations on categories; these are structures capable of describing various intuitive notions of disjointness in a formal categorical setting.
They are a direct generalisation of the orthogonality relations introduced by \textcite{BeniniSchenkelWoike2021} to formalise causal disjointness (spacelike separateness) in spacetimes, as necessary to describe the causality axiom in algebraic QFT.

Our disjointness relations facilitate a formal description of those morphisms in a category which `respect disjointness'; this description is presented in \cref{sec:disjointness:overlap-monics}.
In particular, we may recover a category with orthogonality relation -- as needed for AQFT -- by considering the `disjointness-respecting' morphisms in any category equipped with a disjointness relation.

\subsection{Disjointness relations on categories}
\label{sec:disjointness:relations}

\begin{definition} \label{def:disjointness_rel}
	A \emph{disjointness relation} (or \emph{$\disjoint$-relation}) $\disjoint_\mathsf{C}$ on a category $\mathsf{C}$ is a binary relation on conterminous morphisms of $\mathsf{C}$, denoted as $f_1 \disjoint_\mathsf{C} f_2$ or
    \begin{equation*}
		\begin{tikzcd}[cramped, column sep=small]
								& c \\
			b_1 \arrow[ur, "f_1"{name=1, near start}]	& & b_2 \arrow[ul, "f_2"{near start, name=2,swap}]
			\arrow[phantom, from=1, to=2, "\disjoint_\mathsf{C}"{below, pos=0.58}]
		\end{tikzcd}
	\end{equation*}
	when $f_1$ and $f_2$ are related under $\disjoint_\mathsf{C}$, which satisfies the following properties: for any conterminous pair $f_1 : b_1 \rightarrow c \leftarrow b_2 : f_2$ in $\mathsf{C}$,
	\begin{enumerate}
		\item
		symmetry:
		$f_1 \disjoint_\mathsf{C} f_2$ implies $f_2 \disjoint_\mathsf{C} f_1$,

		\item \label{def:disjointness_rel:pre-comp}
		stability under pre-composition:
		$f_1 \disjoint_\mathsf{C} f_2$ implies $f_1 \circ g_1 \disjoint_\mathsf{C} f_2 \circ g_2$ for any composable morphisms $g_1$ and $g_2$. Diagramatically:
		\begin{equation*}
		\begin{tikzcd}[cramped, column sep=small]
								& c \\
			b_1 \arrow[ur, "f_1"{name=1, near start}]	& & b_2 \arrow[ul, "f_2"{near start, name=2,swap}]
			\arrow[phantom, from=1, to=2, "\disjoint_\mathsf{C}"{below, pos=0.58}]
		\end{tikzcd}
		\qquad \textrm{implies} \qquad
		\begin{tikzcd}[cramped, column sep=small]
								& c \\
			b_1 \arrow[ur, "f_1"{name=1, near start}] \arrow[rr, phantom, "\disjoint_\mathsf{C}"{pos=0.58}]	& & b_2 \arrow[ul, "f_2"{near start, name=2,swap}] \\
			a_1 \arrow[u, "g_1"]	& & a_2 \arrow[u, "g_2"{swap}]
		\end{tikzcd},
		\end{equation*}

		\item \label{def:disjointness_rel:post-comp}
		stability under post-composition by isomorphisms:
		$f_1 \disjoint_\mathsf{C} f_2$ implies that $h \circ f_1 \disjoint_\mathsf{C} h \circ f_2$ for any composable isomorphism $h$. Diagramatically:
		\begin{equation*}
		\begin{tikzcd}[cramped, column sep=small]
								& c \\
			b_1 \arrow[ur, "f_1"{name=1, near start}]	& & b_2 \arrow[ul, "f_2"{near start, name=2,swap}]
			\arrow[phantom, from=1, to=2, "\disjoint_\mathsf{C}"{below, pos=0.58}]
		\end{tikzcd}
		\qquad \textrm{implies} \qquad
		\begin{tikzcd}[cramped, column sep=small]
								& d \\
								& c \arrow[u,"h"] \arrow[u, draw=none, shift right=1.2, "\sim"{marking}] \\
			b_1 \arrow[ur, "f_1"{name=1, near start}]	& & b_2 \arrow[ul, "f_2"{near start, name=2,swap}]
			\arrow[phantom, from=1, to=2, "\disjoint_\mathsf{C}"{below, pos=0.58}]
		\end{tikzcd}.
		\end{equation*}
	\end{enumerate}
	The pair $(\mathsf{C}, \disjoint_\mathsf{C})$ consisting of a category $\mathsf{C}$ and  a disjointness relation $\disjoint_\mathsf{C}$ on $\mathsf{C}$ is called a \emph{$\disjoint$-category}.
\end{definition}

\begin{remark} \label{rm:disjointness_rel:iso}
	The property of stability under post-composition by isomorphisms is the minimal property needed to ensure that $\disjoint$-relations respect isomorphisms of the category.
	Property \labelcref{def:disjointness_rel:post-comp} of \cref{def:disjointness_rel} is equivalent to:
	\begin{itemize}
		\item[$3'.$]
		For isomorphism $h : c \to d$ and conterminous pair $f_1 : b_1 \rightarrow c \leftarrow b_2 : f_2$ in $\mathsf{C}$, $f_1 \disjoint_\mathsf{C} f_2$ if and only if $h \circ f_1 \disjoint_\mathsf{C} h \circ f_2$.
	\end{itemize}
	This is because stability under post-composition by isomorphisms gives that for isomorphism $h$, $h \circ f_1 \disjoint_\mathsf{C} h \circ f_2$ implies $h^{-1} \circ h \circ f_1 \disjoint_\mathsf{C} h^{-1} \circ h \circ f_2$.
\end{remark}

\begin{example} \label{ex:setwise_disjointness:sets}
	Consider the category $\mathsf{Set}$ of sets and functions. Define a $\disjoint$-relation $\disjoint_\mathrm{set}$ of setwise-disjointness on $\mathsf{Set}$ by:
	\begin{equation*}
		\begin{tikzcd}[cramped, column sep=small]
								& Y \\
			X_1 \arrow[ur, "f_1"{name=1, near start}]	& & X_2 \arrow[ul, "f_2"{near start, name=2,swap}]
			\arrow[phantom, from=1, to=2, "\disjoint_\mathrm{set}"{below, pos=0.63}]
		\end{tikzcd}
		\qquad \text{if} \qquad
		% https://q.uiver.app/?q=WzAsNCxbMCwwLCJcXHZhcm5vdGhpbmciXSxbMSwwLCJYXzIiXSxbMCwxLCJYXzEiXSxbMSwxLCJZIl0sWzAsMV0sWzAsMl0sWzIsMywiZl8xIiwyXSxbMSwzLCJmXzIiXSxbMCwzLCIiLDEseyJzdHlsZSI6eyJuYW1lIjoiY29ybmVyIn19XV0=
		\begin{tikzcd}
			\varnothing & {X_2} \\
			{X_1} & Y
			\arrow[from=1-1, to=1-2]
			\arrow[from=1-1, to=2-1]
			\arrow["{f_1}"', from=2-1, to=2-2]
			\arrow["{f_2}", from=1-2, to=2-2]
			\arrow["\lrcorner"{anchor=center, pos=0.125}, draw=none, from=1-1, to=2-2]
		\end{tikzcd}
		\qquad \text{ is a pullback.}
	\end{equation*}
	Equivalently, $f_1 \disjoint_\mathrm{set} f_2$ if the intersection $f_1(X_1) \cap f_2(X_2)$ of their images in $Y$ is empty.
	It is straightforward to check that this satisfies symmetry and stability under post-composition by isomorphisms as per \cref{def:disjointness_rel}.
	To show stability under pre-composition, note first that the initial object $\varnothing$ in $\mathsf{Set}$ is stable under pullback, i.e. the (apex of the) pullback of $g: W \rightarrow X_1 \leftarrow \varnothing$ for any $g$ is again $\varnothing$.
	It follows by the pasting law for pullbacks that if the rightmost square of the following diagram is a pullback, then so too is the outermost square:
	\begin{equation*}
	% https://q.uiver.app/?q=WzAsNixbMCwwLCJcXHZhcm5vdGhpbmciXSxbMSwwLCJcXHZhcm5vdGhpbmciXSxbMiwwLCJYXzIiXSxbMiwxLCJZIl0sWzEsMSwiWF8xIl0sWzAsMSwiVyJdLFswLDFdLFsxLDJdLFsyLDMsImZfMiJdLFsxLDRdLFswLDVdLFs1LDQsImciLDJdLFs0LDMsImZfMSIsMl0sWzAsNCwiIiwyLHsic3R5bGUiOnsibmFtZSI6ImNvcm5lciJ9fV1d
	\begin{tikzcd}
		\varnothing & \varnothing & {X_2} \\
		W & {X_1} & Y
		\arrow[from=1-1, to=1-2]
		\arrow[from=1-2, to=1-3]
		\arrow["{f_2}", from=1-3, to=2-3]
		\arrow[from=1-2, to=2-2]
		\arrow[from=1-1, to=2-1]
		\arrow["g"', from=2-1, to=2-2]
		\arrow["{f_1}"', from=2-2, to=2-3]
		\arrow["\lrcorner"{anchor=center, pos=0.125}, draw=none, from=1-1, to=2-2]
	\end{tikzcd}.
	\end{equation*}
	Hence if $f_1 \disjoint_\mathrm{set} f_2$ then $f_1 \circ g \disjoint_\mathrm{set} f_2$.
	
	We note that it is not true that $f_1 \disjoint_\mathrm{set} f_2$ implies $h \circ f_1 \disjoint_\mathrm{set} h \circ f_2$ for arbitrary $h$: any non-injective function $h$ sends some disjoint subsets of its domain to intersecting subsets of its codomain.

	This example generalises to any category $\mathsf{C}$ with an initial object that is stable under pullback; for instance, this includes $\mathsf{C} = \mathsf{Cat}$ the category of small categories, with initial object the empty category.
% 	
% 	
% 	
% 	Equivalently, $f_1 \disjoint_\mathrm{set} f_2$ if the intersection $f_1(X_1) \cap f_2(X_2)$ of their images in $Y$ is empty.
% 	It is straightforward to check that this satisfies \cref{def:disjointness_rel}; stability under pre-composition uses that the initial object $\varnothing$ in $\mathsf{Set}$ is stable under pullback, i.e. the (apex of the) pullback of $\varnothing \rightarrow Y \leftarrow X$ is again $\varnothing$. We note that it is not true that $f_1 \disjoint_\mathrm{set} f_2$ implies $h \circ f_1 \disjoint_\mathrm{set} h \circ f_2$ for arbitrary $h$: any non-injective function $h$ sends some disjoint subsets of its domain to intersecting subsets of its codomain.
% 
% 	
% 	This example generalises to any category $\mathsf{C}$ with pullbacks and an initial object that is stable under pullback; for instance, this includes $\mathsf{C} = \mathsf{Cat}$ the category of small categories, with initial object the empty category.
\end{example}

Disjointness relations of \cref{def:disjointness_rel} generalise the \emph{orthogonality relations} or \emph{$\perp$-relations} introduced in \cite{BeniniSchenkelWoike2021}.
Specifically, an orthogonality relation $\perp_\mathsf{C}$ on $\mathsf{C}$ satisfies \cref{def:disjointness_rel} (everywhere replacing $\disjoint_\mathsf{C}$ by $\perp_\mathsf{C}$), except with property \labelcref{def:disjointness_rel:post-comp} substituted for:
\begin{itemize}
	\item[$3_\perp.$] stability under post-composition: $f_1 \perp_\mathsf{C} f_2$ implies $h \circ f_1 \perp_\mathsf{C} h \circ f_2$ for any composable morphism $h$ (isomorphism or otherwise).
\end{itemize}
The generalisation from $\perp$-relation to $\disjoint$-relation allows us to describe notions of disjointness that would not meet the definition of an orthogonality relation, as illustrated by \cref{ex:setwise_disjointness:sets} above.

When we wish to denote that conterminous morphisms $f_1$ and $f_2$ of a $\disjoint$-category $(\mathsf{C}, \disjoint_\mathsf{C})$ are not related under $\disjoint_\mathsf{C}$, we write $f_1 \overlap_\mathsf{C} f_2$; the complement $\overlap_\mathsf{C}$ of $\disjoint_\mathsf{C}$ is also a binary relation on conterminous morphisms of $\mathsf{C}$.
$f_1 \overlap_\mathsf{C} f_2$ may be read as `$f_1$ and $f_2$ are not disjoint' or `$f_1$ and $f_2$ overlap'.

\begin{proposition} \label{prop:complement_disjointness_relation}
	A binary relation $\disjoint_\mathsf{C}$ on conterminous morphisms of category $\mathsf{C}$ is a valid $\disjoint$-relation if and only if its complement $\overlap_\mathsf{C}$ satisfies the following properties
	for any conterminous pair $f_1 : b_1 \rightarrow c \leftarrow b_2 : f_2$ in $\mathsf{C}$:
	\begin{enumerate}
		\item
		symmetry:
		$f_1 \overlap_\mathsf{C} f_2$ implies $f_2 \overlap_\mathsf{C} f_1$

		\item
		stability under pre-cancellation:
		for any morphisms $g_i : a_i \to b_i$,
		\begin{equation*}
		\begin{tikzcd}[cramped, column sep=small]
								& c \\
			b_1 \arrow[ur, "f_1"{name=1, near start}] \arrow[rr, phantom, "\overlap_\mathsf{C}"{pos=0.58}]	& & b_2 \arrow[ul, "f_2"{near start, name=2,swap}] \\
			a_1 \arrow[u, "g_1"]	& & a_2 \arrow[u, "g_2"{swap}]
		\end{tikzcd}
		\qquad \textrm{implies} \qquad
		\begin{tikzcd}[cramped, column sep=small]
								& c \\
			b_1 \arrow[ur, "f_1"{name=1, near start}]	& & b_2 \arrow[ul, "f_2"{near start, name=2,swap}]
			\arrow[phantom, from=1, to=2, "\overlap_\mathsf{C}"{below,pos=0.58}]
		\end{tikzcd}
		\end{equation*}

		\item
		stability under post-composition by isomorphisms:
		for isomorphism $h : c \xrightarrow{\sim} d$,
		\begin{equation*}
		\begin{tikzcd}[cramped, column sep=small]
								& c \\
			b_1 \arrow[ur, "f_1"{name=1, near start}]	& & b_2 \arrow[ul, "f_2"{near start, name=2,swap}]
			\arrow[phantom, from=1, to=2, "\overlap_\mathsf{C}"{below,pos=0.58}]
		\end{tikzcd}
		\qquad \textrm{implies} \qquad
		\begin{tikzcd}[cramped, column sep=small]
								& d \\
								& c \arrow[u,"h"] \arrow[u, draw=none, shift right=1.2, "\sim"{marking}] \\
			b_1 \arrow[ur, "f_1"{name=1, near start}]	& & b_2 \arrow[ul, "f_2"{near start, name=2,swap}]
			\arrow[phantom, from=1, to=2, "\overlap_\mathsf{C}"{below,pos=0.58}]
		\end{tikzcd}
		\end{equation*}
	\end{enumerate}
\end{proposition}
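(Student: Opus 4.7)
The plan is to prove each of the three properties of $\overlap_\mathsf{C}$ is equivalent to the corresponding property of $\disjoint_\mathsf{C}$ by taking contrapositives, since $\overlap_\mathsf{C}$ is by definition the complement of $\disjoint_\mathsf{C}$. This reduces the proposition to three independent equivalences, which I would treat in order.

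First, for symmetry, the statement $f_1 \disjoint_\mathsf{C} f_2 \Rightarrow f_2 \disjoint_\mathsf{C} f_1$ has contrapositive $\neg(f_2 \disjoint_\mathsf{C} f_1) \Rightarrow \neg(f_1 \disjoint_\mathsf{C} f_2)$, i.e.\ $f_2 \overlap_\mathsf{C} f_1 \Rightarrow f_1 \overlap_\mathsf{C} f_2$; relabelling gives exactly the stated symmetry of $\overlap_\mathsf{C}$. So the two symmetry properties are equivalent.

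Second, for stability under pre-composition versus pre-cancellation, I would take the contrapositive of $f_1 \disjoint_\mathsf{C} f_2 \Rightarrow f_1 \circ g_1 \disjoint_\mathsf{C} f_2 \circ g_2$ to obtain $f_1 \circ g_1 \overlap_\mathsf{C} f_2 \circ g_2 \Rightarrow f_1 \overlap_\mathsf{C} f_2$, which is precisely the pre-cancellation condition. The converse direction is identical since taking contrapositives is an involution.

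The main (though still minor) subtlety is the third property, because property~\labelcref{def:disjointness_rel:post-comp} of \cref{def:disjointness_rel} and the stated property for $\overlap_\mathsf{C}$ are both phrased as one-directional implications, so naively their contrapositives do not match. Here I would invoke \cref{rm:disjointness_rel:iso}: the property of stability under post-composition by isomorphisms for $\disjoint_\mathsf{C}$ is equivalent to the biconditional $f_1 \disjoint_\mathsf{C} f_2 \Leftrightarrow h \circ f_1 \disjoint_\mathsf{C} h \circ f_2$ for any isomorphism $h$. Taking the contrapositive of both directions yields the biconditional $f_1 \overlap_\mathsf{C} f_2 \Leftrightarrow h \circ f_1 \overlap_\mathsf{C} h \circ f_2$, which in particular implies the stated one-directional property for $\overlap_\mathsf{C}$. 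For the converse, I would observe that the stated property for $\overlap_\mathsf{C}$, applied once with isomorphism $h$ and once with $h^{-1}$, upgrades itself to a biconditional; then contrapositives recover the corresponding biconditional for $\disjoint_\mathsf{C}$, and hence property~\labelcref{def:disjointness_rel:post-comp}.

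Since each of the three equivalences is essentially a one-line contrapositive argument, I do not expect any step to present a substantial obstacle; the only point requiring care is the explicit use of \cref{rm:disjointness_rel:iso} to convert the one-directional post-composition property into a biconditional before taking contrapositives.
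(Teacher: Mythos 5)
Your proposal is correct and follows essentially the same route as the paper: symmetry and pre-cancellation are handled as contrapositives of symmetry and stability under pre-composition, and the post-composition-by-isomorphisms property is upgraded to a biconditional via the argument of \cref{rm:disjointness_rel:iso} (applying the property with $h$ and $h^{-1}$) before contraposing. The only difference is cosmetic — the paper states the biconditional first for $\overlap_\mathsf{C}$ and contraposes to $\disjoint_\mathsf{C}$, whereas you run the same upgrade in both directions — so there is nothing substantive to add.
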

\begin{proof}
	As in \cref{rm:disjointness_rel:iso}, stability of $\overlap_\mathsf{C}$ under post-composition by isomorphisms is equivalent to the condition that $f_1 \overlap_\mathsf{C} f_2$ if and only if $h \circ f_1 \overlap_\mathsf{C}  h \circ f_2$ for any isomorphism $h$.
	By contraposition, this gives that $f_1 \disjoint_\mathsf{C} f_2$ if and only if $h \circ f_1 \disjoint_\mathsf{C} h \circ f_2$ for isomorphism $h$.
	Symmetry and stability under pre-cancellation of $\overlap_\mathsf{C}$ are respective contrapositives of symmetry and stability under pre-composition of $\disjoint_\mathsf{C}$.
\end{proof}

\begin{example} \label{ex:binrel_disjointness}
	Consider the category $\mathsf{sBin}$ of sets equipped with symmetric binary relations defined as follows.
	Objects $(X, R_X)$ of $\mathsf{sBin}$ consist of a set $X$ and a symmetric homogeneous binary relation $R_X \subseteq X \times X$; recall that $R_X$ is symmetric if $(a,b) \in R_X$ implies $(b,a) \in R_X$ for any $a,b \in X$.
	Morphisms $f:(X, R_X) \to (Y,R_Y)$ of $\mathsf{sBin}$ are relation-preserving maps, i.e. maps $f : X \to Y$ such that $(a,b) \in R_X$ implies $(f(a), f(b)) \in R_Y$ for any $a,b \in X$.
	
	Define $\disjoint$-relation $\disjoint_\mathrm{bin}$ on $\mathsf{sBin}$ by:
	\begin{equation*}
		\begin{tikzcd}[cramped, column sep=small]
								& (Y,R_Y) \\
			(X_1,R_{X_1}) \arrow[ur, "f_1"{name=1, near start}]	& & (X_2,R_{X_2}) \arrow[ul, "f_2"{near start, name=2,swap}]
			\arrow[phantom, from=1, to=2, "\disjoint_\mathrm{bin}"{below, pos=0.54}]
		\end{tikzcd}
		\qquad \text{if} \quad
		R_Y \cap \left[f_1(X_1) \times f_2(X_2)\right] = \varnothing.
	\end{equation*}
	Equivalently, $f_1 \disjoint_\mathrm{bin} f_2$ if there does not exist $a_1 \in X_1$ and $a_2 \in X_2$ such that $(f_1(a_1) , f_2(a_2)) \in R_Y$.
	
	That this $\disjoint_\mathrm{bin}$ is a valid $\disjoint$-relation can be shown using \cref{prop:complement_disjointness_relation}.
	Symmetry of $\overlap_\mathrm{bin}$ follows from symmetry of the binary relations $R_X$ in objects $(X, R_X)$ of $\mathsf{sBin}$.
	Stability of $\overlap_\mathrm{bin}$ under pre-cancellation and post-composition are shown straightforwardly; in particular, $\overlap_\mathrm{bin}$ is stable under post-composition by arbitrary morphisms rather than only isomorphisms.
\end{example}

Functors may respect $\disjoint$-relations on categories in different ways:

\begin{definition}
	Let $(\mathsf{C}, \disjoint_\mathsf{C})$ and $(\mathsf{D}, \disjoint_\mathsf{D})$ be $\disjoint$-categories, and let $F :\mathsf{C} \to \mathsf{D}$ be a functor. We say:
	\begin{itemize}
		\item
		$F$ \emph{preserves $\disjoint$-relations}, or $F : (\mathsf{C}, \disjoint_\mathsf{C}) \to (\mathsf{D}, \disjoint_\mathsf{D})$ is \emph{$\disjoint$-preserving},  if
		\begin{equation*}
		\begin{tikzcd}[cramped, column sep=small]
								& c \\
			b_1 \arrow[ur, "f_1"{name=1, near start}]	& & b_2 \arrow[ul, "f_2"{near start, name=2,swap}]
			\arrow[phantom, from=1, to=2, "\disjoint_\mathsf{C}"{below, pos=0.6}]
		\end{tikzcd}
		\qquad \textrm{implies} \qquad
		\begin{tikzcd}[cramped, column sep=small]
								& Fc \\
			Fb_1 \arrow[ur, "Ff_1"{name=1, near start}]	& & Fb_2 \arrow[ul, "Ff_2"{near start, name=2,swap}]
			\arrow[phantom, from=1, to=2, "\disjoint_\mathsf{D}"{below, pos=0.55}]
		\end{tikzcd},
		\end{equation*}
		for any conterminous pair $b_1 \xrightarrow{f_1} c \xleftarrow{f_2} b_2$ in $\mathsf{C}$.

		\item
		$F$ \emph{reflects $\disjoint$-relations}, or $F : (\mathsf{C}, \disjoint_\mathsf{C}) \to (\mathsf{D}, \disjoint_\mathsf{D})$ is \emph{$\disjoint$-reflecting}, if
		\begin{equation*}
		\begin{tikzcd}[cramped, column sep=small]
								& Fc \\
			Fb_1 \arrow[ur, "Ff_1"{name=1, near start}]	& & Fb_2 \arrow[ul, "Ff_2"{near start, name=2,swap}]
			\arrow[phantom, from=1, to=2, "\disjoint_\mathsf{D}"{below, pos=0.55}]
		\end{tikzcd}
		\qquad \textrm{implies} \qquad
		\begin{tikzcd}[cramped, column sep=small]
								& c \\
			b_1 \arrow[ur, "f_1"{name=1, near start}]	& & b_2 \arrow[ul, "f_2"{near start, name=2,swap}]
			\arrow[phantom, from=1, to=2, "\disjoint_\mathsf{C}"{below, pos=0.6}]
		\end{tikzcd},
		\end{equation*}
		for any conterminous pair $b_1 \xrightarrow{f_1} c \xleftarrow{f_2} b_2$ in $\mathsf{C}$.
	\end{itemize}
\end{definition}

In this work, many functors of interest will both preserve and reflect $\disjoint$-relations.
One source of such functors is as follows:
\begin{definition}
	Given a $\disjoint$-category $(\mathsf{D}, \disjoint_\mathsf{D})$ and a functor $F : \mathsf{C} \to \mathsf{D}$, the \emph{pullback of $\disjoint_\mathsf{D}$ along $F$} is the $\disjoint$-relation $\disjoint_F$ on $\mathsf{C}$ defined by $f_1 \disjoint_F f_2$ if $Ff_1 \disjoint_\mathsf{D} Ff_2$, for conterminous pair $f_1 : b_1 \rightarrow c \leftarrow b_2 : f_2$ in $\mathsf{C}$.
	
	With respect to $\disjoint_F$, the functor $F : (\mathsf{C}, \disjoint_F) \to (\mathsf{D}, \disjoint_\mathsf{D})$ both preserves and reflects $\disjoint$-relations.
\end{definition}
\todo[color=cyan]{better name than pullback? My guess: $\disjoint_F$ is the smallest $\disjoint$-relation such that $F$ is $\disjoint$-reflecting, and largest such that $F$ is $\disjoint$-preserving. Then maybe `final' or `cofinal' or `initial' $\disjoint$-relation?}

\begin{example} \label{ex:setwise_disjointness:concrete}
	For any concrete category $\mathsf{C}$ with forgetful functor $U : \mathsf{C} \to \mathsf{Set}$, we may equip $\mathsf{C}$ with the pullback $\disjoint_U$ of the setwise-disjointness relation $\disjoint_\mathrm{set}$ of \cref{ex:setwise_disjointness:sets}.
	
	For clarity where the forgetful functor is not explicitly named, we may also denote the pullback relation $\disjoint_U$ on $\mathsf{C}$ as $\disjoint_\mathrm{set}$.
\end{example}

\begin{example} \label{ex:setwise_disjointness:diagonal}
	Consider the functor $\Delta : \mathsf{Set} \to \mathsf{sBin}$ which sends sets $X$ to $(X, \Delta_X)$, for diagonal relation $\Delta_X := \left\{(x,x)\in X \times X\right\}$.
	By definition, any map $f : X \to Y$ preserves the diagonal relations.
	
	Then $\disjoint_\mathrm{set}$ on $\mathsf{Set}$ coincides with the pullback $\disjoint_\Delta$ of $\disjoint_\mathrm{bin}$ along $\Delta$: for any conterminous pair $f_1 : X_1 \rightarrow Y \leftarrow X_2 : f_2$, we have $f_1 \disjoint_\mathrm{set} f_2$ if and only if $f_1(X_1) \cap f_2(X_2) = \varnothing$ if and only if $\Delta_Y \cap \left[f_1(X_1) \times f_2(X_2)\right] = \varnothing$ if and only if $f_1 \disjoint_\Delta f_2$.
	See also \cref{ex:setwise_disjointness:sets,ex:binrel_disjointness}.
\end{example}

\begin{example} \label{ex:top_disjointness}
	Consider the functor $C : \mathsf{Top} \to \mathsf{sBin}$ which sends any topological space $X$ to its underlying set equipped with the equivalence relation $C_X$ defined by $(a, b) \in C_X$ if $a,b \in X$ lie in the same connected component of $X$.
	Continuous maps preserve the relations $C$ because the image of a connected set under a continuous map is connected.
	
	The pullback $\disjoint_C$ of $\disjoint_\mathrm{bin}$ along $C$ gives a $\disjoint$-relation on $\mathsf{Top}$.
	Explicitly, conterminous pair $f_1 : X_1 \rightarrow Y \leftarrow X_2 : f_2$ in $\mathsf{Top}$ has $f_1 \disjoint_C f_2$ if no points $a_1 \in X_1$, $a_2 \in X_2$ have $f_i(a_i)$ lying in the same connected component of $Y$, i.e. $f_1 \disjoint_C f_2$ if either of $X_i$ is empty or there exists a separation $\left\{U_1,U_2\right\}$ of $Y$ with $f_i(X_i) \subseteq U_i$.
	
	Similarly, consider the functor $P : \mathsf{Top} \to \mathsf{sBin}$ which sends any topological space $X$ to its underlying set equipped with the equivalence relation $P_X$ defined by $(a,b) \in P_X$ if $a,b \in X$ lie in the same path-component of $X$.
	
	The pullback $\disjoint_P$ of $\disjoint_\mathrm{bin}$ gives another $\disjoint$-relation on $\mathsf{Top}$: conterminous pair $f_1 : X_1 \rightarrow Y \leftarrow X_2 : f_2$ in $\mathsf{Top}$ has $f_1 \disjoint_P f_2$ if no points $a_1 \in X_1$, $a_2 \in X_2$ have $f_i(a_i)$ lying in the same path-component of $Y$, i.e. $f_1 \disjoint_P f_2$ if there is no continuous path in $Y$ beginning in image $f_1(X_1)$ and ending in $f_2(X_2)$.
	
	Because path-components are always contained in connected components but the converse does not hold, the identity functor $\mathrm{id}_\mathsf{Top} : (\mathsf{Top}, \disjoint_C) \to (\mathsf{Top}, \disjoint_P)$ preserves but does not reflect $\disjoint$-relations.
\end{example}

\todo[color=cyan]{maybe there's a statement like invertible functor $F$ preserves $\disjoint$-relations if and only if $F^{-1}$ reflects $\disjoint$-relations. Prove if useful...}

\begin{remark}
	Any $\disjoint$-relation on a category can induce a notion of disjointness of subobjects in that category.
	In many familiar categories, the reverse holds: familiar notions of subobject-disjointness produce $\disjoint$-relations (but not necessarily orthogonality relations).
	Besides the minimal selection of examples presented above for illustration, and those of physical interest in \cref{sec:spacetimes-cats-relativistic,sec:spacetimes-cats-chiral}, other $\disjoint$-relations addressed in \cite[Chapter 2]{Grant-Stuart2022a} include: linear independence of subspaces in categories of vector spaces; orthogonality (in the sense of inner products) of subspaces in categories of Hilbert spaces; mutual commutativity of submonoids in the category of monoids in any symmetric monoidal category.
\end{remark}

Given a $\disjoint$-category $(\mathsf{C}, \disjoint_\mathsf{C})$ with subcategory inclusion $i : \mathsf{B} \hookrightarrow \mathsf{C}$, we refer to $(\mathsf{B}, \disjoint_i)$ as a \emph{$\disjoint$-subcategory} of $(\mathsf{C}, \disjoint_\mathsf{C})$.
Where the inclusion functor is not explicitly named, we may also denote the pullback of $\disjoint_\mathsf{C}$ to $\mathsf{B}$ along the inclusion as $\disjoint_\mathsf{C}$ again\footnote{Compare to \cref{ex:setwise_disjointness:concrete} of concrete categories and $\disjoint_\mathrm{set}$.}.

For brevity, we may refer to a $\disjoint$-category $(\mathsf{C}, \disjoint_\mathsf{C})$ merely by its underlying category $\mathsf{C}$ when the $\disjoint$-relation $\disjoint_\mathsf{C}$ is unambiguously implied.
Similarly, we may refer to a $\disjoint$-subcategory $(\mathsf{B}, \disjoint_\mathsf{C})$ of $(\mathsf{C}, \disjoint_\mathsf{C})$ as simply a $\disjoint$-subcategory $\mathsf{B}$ of $\mathsf{C}$.

\subsection{Morphisms which respect disjointness} \label{sec:disjointness:overlap-monics}

In any $\disjoint$-category, there is a special class of morphisms which `respect disjointness':

\begin{definition} \label{def:overlap-monic}
	Let $(\mathsf{C}, \disjoint_\mathsf{C})$ be a $\disjoint$-category, and let $h : c \to d$ be a morphism in $\mathsf{C}$. We say $h$ is \emph{overlap-monic} (or \emph{$\overlap$-monic}) if for any conterminous pair
	$f_1 : b_1 \rightarrow c \leftarrow b_2 : f_2$
% 	$b_1 \xrightarrow{f_1} c \xleftarrow{f_2} b_2$
	in $\mathsf{C}$,
	\begin{equation*}
		f_1 \disjoint_\mathsf{C} f_2
		\qquad \text{implies} \qquad
		h \circ f_1 \disjoint_\mathsf{C} h \circ f_2,
	\end{equation*}
	or equivalently,
	\begin{equation} \label{eqn:overlap-monics}
		h \circ f_1 \overlap_\mathsf{C} h \circ f_2
		\qquad \text{implies} \qquad
		f_1 \overlap_\mathsf{C} f_2.
	\end{equation}
\end{definition}
We avoid calling such morphisms `disjointness-preserving' to prevent conflation with $\disjoint$-preserving functors.
Instead, we name them $\overlap$-monic after the following analogy: in standard terminology, a morphism is monic if it satisfies a property similar to \cref{eqn:overlap-monics}, but with the relation $\overlap_\mathsf{C}$ replaced by the relation of equality of morphisms.

\begin{remark}
	We may rephrase property \labelcref{def:disjointness_rel:post-comp} of the definition of a disjointness relation $\disjoint_\mathsf{C}$ on category $\mathsf{C}$ as:
	\begin{itemize}
		\item[$3''.$] All isomorphisms in $\mathsf{C}$ are $\overlap$-monic.
	\end{itemize}
\end{remark}

\begin{example} \label{ex:setwise_disjointness:overlap_monics}
	In $(\mathsf{Set}, \disjoint_\mathrm{set})$ of \cref{ex:setwise_disjointness:sets}, a morphism $h : Y \to Z$ is $\overlap$-monic if and only if it is injective (equivalently, monic in $\mathsf{Set}$):
	
	Say $h$ is injective, and let maps $f_i : X_i \to Y$ have $f_1(X_1) \cap f_2(X_2) = \varnothing$ so that $f_1 \disjoint_\mathrm{set} f_2$.
	If either $X_i$ is the empty set, then trivially  $h \circ f_1 (X_1) \cap h \circ f_2 (X_2) = \varnothing$.
	Otherwise take any $x_i \in X_i$; since $f_1 (x_1) \neq f_2(x_2)$ and $h$ is injective, we have $h \circ f_1 (x_1) \neq h \circ f_2(x_2)$.
	Hence again $h \circ f_1 (X_1) \cap h \circ f_2 (X_2) = \varnothing$ in $Z$, i.e. $h \circ f_1 \disjoint_\mathrm{set} h \circ f_2$, so $h$ is $\overlap$-monic.
	
	On the other hand, say $h$ is not injective, so that there exist distinct $y_1, y_2 \in Y$ with $h(y_1) = h(y_2)$.
	Take $f_i : \left\{*\right\} \to Y$ the constant maps to $y_i$.
	Then $f_1 \disjoint_\mathrm{set} f_2$ since $\{y_1\} \cap \{ y_2\} = \varnothing$, but $h \circ f_1 \overlap_\mathrm{set} h \circ f_2$ since $\{ h(y_1) \} \cap \{ h (y_2) \} \neq \varnothing$; hence $h$ is not $\overlap$-monic.
\end{example}

\begin{example} \label{ex:binrel_disjointness:overlap-monics}
	In $(\mathsf{sBin}, \disjoint_\mathrm{bin})$ of \cref{ex:binrel_disjointness}, a morphism $h : (Y,R_Y) \to (Z, R_Z)$ is $\overlap$-monic if and only if it reflects relations, i.e. $(h(y_1), h(y_2)) \in R_Z$ implies $(y_1, y_2) \in R_Y$ for any $y_1,y_2 \in Y$:
	
	Say $h$ reflects relations; take relation-preserving maps $f_i : (X_i, R_{X_i}) \to (Y, R_Y)$ with
% 	$R_Y \cap \left[f_1(X_1) \times f_2(X_2)\right] = \varnothing$ so that 
	$h \circ f_1 \overlap_\mathrm{bin} h \circ f_2$, so there exists $x_i \in X_i$ with $(h\circ f_1(x_1), h \circ f_2(x_2)) \in R_Z$.
	Since $h$ reflects relations, this gives $(f_1(x_1), f_2(x_2)) \in R_Y$ so $f_1 \overlap_\mathrm{bin} f_2$; hence $h$ is $\overlap$-monic.
	
	Conversely, say $h$ is $\overlap$-monic; consider $y_1,y_2 \in Y$ such that $(h(y_1),h(y_2)) \in R_Z$.
	Take $f_i : (\{*\}, \varnothing) \to (Y, R_Y)$ the constant maps to $y_i$; these trivially preserve relations since the domain $\{*\}$ is equipped with the empty relation $\varnothing$.
	Then $h \circ f_1 \overlap_\mathrm{bin} h \circ f_2$ and thus $f_1 \overlap_\mathrm{bin} f_2$ since $h$ is $\overlap$-monic.
	This means $(y_1, y_2) \in R_Y$, so $h$ reflects relations.
\end{example}

\begin{proposition} \label{prop:functors-respect-overlap-monics}
	Say $F : (\mathsf{C}, \disjoint_\mathsf{C}) \to (\mathsf{D}, \disjoint_\mathsf{D})$ preserves and reflects $\disjoint$-relations.
	Then $F$ reflects $\overlap$-monics,
	i.e. for any morphism $h : c \to d$ in $\mathsf{C}$,	if $Fh$ is $\overlap$-monic in $\mathsf{D}$ then $h$ is $\overlap$-monic in $\mathsf{C}$.
	
	If $F$ is moreover full and essentially surjective on objects then it also preserves $\overlap$-monics,
	i.e. if $h$ is $\overlap$-monic, then $Fh$ is $\overlap$-monic.
\end{proposition}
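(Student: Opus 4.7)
For the reflection direction, I would chase the definitions directly. Assume $Fh$ is $\overlap$-monic in $\mathsf{D}$, and take any conterminous pair $f_1 : b_1 \to c \leftarrow b_2 : f_2$ in $\mathsf{C}$ with $f_1 \disjoint_\mathsf{C} f_2$. Applying $F$ (which preserves $\disjoint$-relations) yields $Ff_1 \disjoint_\mathsf{D} Ff_2$; then $\overlap$-monicity of $Fh$ gives $F(h \circ f_1) \disjoint_\mathsf{D} F(h \circ f_2)$; finally reflection of $\disjoint$-relations by $F$ delivers $h \circ f_1 \disjoint_\mathsf{C} h \circ f_2$, as required.

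For the preservation direction, the key idea is to lift any test pair for $Fh$ in $\mathsf{D}$ back to a test pair in $\mathsf{C}$. Suppose $h$ is $\overlap$-monic in $\mathsf{C}$, and take a conterminous pair $g_1 : x_1 \to Fc \leftarrow x_2 : g_2$ in $\mathsf{D}$ with $g_1 \disjoint_\mathsf{D} g_2$. Using essential surjectivity, choose objects $b_i$ in $\mathsf{C}$ with isomorphisms $\alpha_i : Fb_i \xrightarrow{\sim} x_i$; then each composite $g_i \circ \alpha_i : Fb_i \to Fc$ lifts by fullness to some $f_i : b_i \to c$ in $\mathsf{C}$ with $Ff_i = g_i \circ \alpha_i$.

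The slightly delicate step is transferring $g_1 \disjoint_\mathsf{D} g_2$ to $Ff_1 \disjoint_\mathsf{D} Ff_2$. Here I would use that stability of $\disjoint$-relations under pre-composition (\cref{def:disjointness_rel}\labelcref{def:disjointness_rel:pre-comp}) makes pre-composition by isomorphisms reversible: pre-composing $g_1 \disjoint_\mathsf{D} g_2$ by $\alpha_1, \alpha_2$ gives $Ff_1 \disjoint_\mathsf{D} Ff_2$. Then reflection by $F$ yields $f_1 \disjoint_\mathsf{C} f_2$; $\overlap$-monicity of $h$ yields $h \circ f_1 \disjoint_\mathsf{C} h \circ f_2$; preservation by $F$ yields $Fh \circ Ff_1 \disjoint_\mathsf{D} Fh \circ Ff_2$, i.e. $(Fh \circ g_1) \circ \alpha_1 \disjoint_\mathsf{D} (Fh \circ g_2) \circ \alpha_2$. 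Pre-composing by $\alpha_1^{-1}, \alpha_2^{-1}$ cancels the $\alpha_i$ and gives $Fh \circ g_1 \disjoint_\mathsf{D} Fh \circ g_2$, showing $Fh$ is $\overlap$-monic.

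The main obstacle is this last round of pre-composition manipulations with the $\alpha_i$, which forces the use of essential surjectivity and fullness in tandem rather than either alone; the proof pattern hinges on the fact that stability under pre-composition combined with invertibility makes the $\disjoint$-relation invariant under pre-composition by isomorphisms in both directions, a fact already implicitly used in \cref{rm:disjointness_rel:iso}. Everything else is just composition of the definitions.
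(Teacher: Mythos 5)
Your proof is correct and follows essentially the same route as the paper's: the reflection direction is the same definition chase, and the preservation direction uses essential surjectivity plus fullness to lift the test pair, stability under pre-composition to transport disjointness across the isomorphisms $\alpha_i$, and a final pre-composition by $\alpha_i^{-1}$ to cancel them, exactly as in the paper. No gaps.
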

\begin{proof}
	Say $h : c \to d$ is a morphism in $\mathsf{C}$ such that $Fh$ is $\overlap$-monic in $\mathsf{D}$, and consider any morphisms $f_i : b_i \rightarrow c$ in $\mathsf{C}$.
	If $f_1 \disjoint_\mathsf{C} f_2$ then $Ff_1 \disjoint_\mathsf{D} Ff_2$ since $F$ is $\disjoint$-preserving.
	Then
% 	\todo[color=red]{should I avoid inline $=$ next to $\disjoint$?}
% 	\begin{equation*}
% 		F( h\circ f_1) = 
% 		Fh \circ Ff_1 \disjoint_\mathsf{D} Fh \circ Ff_2
% 		= F (h \circ f_2),
% 	\end{equation*}
	$Fh \circ Ff_1 \disjoint_\mathsf{D} Fh \circ Ff_2$
	since $Fh$ is $\overlap$-monic in $\mathsf{D}$.
	But then $h \circ f_1 \disjoint_\mathsf{C} h \circ f_2$ in $\mathsf{C}$ since $F$ is $\disjoint$-reflecting; hence $h$ is $\overlap$-monic in $\mathsf{C}$.
	
	Say that $F$ is not only $\disjoint$-preserving and $\disjoint$-reflecting, but also full and essentially surjective on objects.
	Let $h : c \to d$ be $\overlap$-monic in $\mathsf{C}$, and take any morphisms $f_i : a_i \to Fc$ in $\mathsf{D}$ with $f_1 \disjoint_\mathsf{D} f_2$.
	Since $F$ is essentially surjective, there exist objects $b_i$ of $\mathsf{C}$ and isomorphisms $g_i : Fb_i \to a_i$ in $\mathsf{D}$.
	Since $F$ is full, there exist morphisms $k_i : b_i \to c$ in $\mathsf{C}$ such that $F k_i$ coincide in $\mathsf{D}$ with the composites
	\begin{equation*}
		Fb_i \xrightarrow{g_i} a_i \xrightarrow{f_i} Fc.
	\end{equation*}
	By stability under pre-composition, $f_1 \disjoint_\mathsf{D} f_2$ implies $f_1 \circ g_1 \disjoint_\mathsf{D} f_2 \circ g_2$, so $Fk_1 \disjoint_\mathsf{D} Fk_2$.
% 	\begin{equation*}
% 		F k_1 = f_1 \circ g_1 \disjoint_\mathsf{D} f_2 \circ g_2 = Fk_2.
% 	\end{equation*}
	Then $k_1 \disjoint_\mathsf{C} k_2$ since $F$ reflects $\disjoint$-relations, so $h \circ k_1 \disjoint_\mathsf{C} h \circ k_2$ since $h$ is $\overlap$-monic in $\mathsf{C}$.
	Because $F$ preserves $\disjoint$-relations, this gives
	\begin{equation*}
		Fh \circ f_1 \circ g_1
% 		=
% 		Fh \circ Fk_1
		\disjoint_\mathsf{D}
% 		Fh \circ Fk_2
% 		=
		Fh \circ f_2 \circ g_2.
	\end{equation*}
	Then we may pre-compose with $g_1^{-1}$ on the left and $g_2^{-1}$ on the right to find $Fh \circ f_1 \disjoint_\mathsf{D} Fh \circ f_2$, showing that $Fh$ is $\overlap$-monic in $\mathsf{D}$.
\end{proof}

\begin{example} \label{ex:disjoint_subcat:overlap-monics}
	For any $F : (\mathsf{B},\disjoint_F) \to (\mathsf{C}, \disjoint_\mathsf{C})$ where $\disjoint_F$ is the pullback of $\disjoint_\mathsf{C}$ along $F$, a morphism $h : c \to d$ in $\mathsf{B}$ is $\overlap$-monic in $\mathsf{B}$ if $Fh$ is $\overlap$-monic in $\mathsf{C}$ by \cref{prop:functors-respect-overlap-monics}.
	
	In particular, morphisms in a $\disjoint$-subcategory $\mathsf{B}$ of $\mathsf{C}$ are necessarily $\overlap$-monic in $\mathsf{B}$ if they are $\overlap$-monic in $\mathsf{C}$.
	Similarly, in any concrete category equipped with $\disjoint_\mathrm{set}$ as in \cref{ex:setwise_disjointness:concrete}, if a morphism $f$ is an injective map (and so $\overlap$-monic in $\mathsf{Set}$ by \cref{ex:setwise_disjointness:overlap_monics}) then it is $\overlap$-monic.
\end{example}

\begin{example} \label{ex:top_disjointness:overlap-monics}
	Consider $(\mathsf{Top}, \disjoint_P)$ as in \cref{ex:top_disjointness}.
	By definition of the pullback relation $\disjoint_P$, \cref{ex:binrel_disjointness:overlap-monics,prop:functors-respect-overlap-monics} apply to give that a continuous map $f : X \to Y$ is $\overlap$-monic with respect to $\disjoint_P$ if $f$ reflects the relations $P$, i.e. for any $x, x' \in X$ if there exists a path in $Y$ from $f(x)$ to $f(x')$ then there exists a path in $X$ from $x$ to $x'$.
	Equivalently, $f : X \to Y$ is $\overlap$-monic with respect to $\disjoint_P$ if the induced map $\pi_0(f) : \pi_0(X) \to \pi_0(Y)$ is injective.

	In this example, the converse is also true: if $\pi_0(f)$ is injective, then $f$ is $\overlap$-monic with respect to $\disjoint_P$.
	For, consider $g_i : W_i \to X$ such that $f \circ g_1 \overlap_P f \circ g_2$; then there exist $w_i \in W_i$ such that there is a path in $Y$ from $f\circ g_1(w_1)$ to $f\circ g_2(w_2)$.
	Since $\pi_0(f)$ is injective, there is a path in $X$ from $g_1(w_1)$ to $g_2(w_2)$; thus $g_1 \overlap_P g_2$.
\end{example}

The generalisation from $\perp$-relations to $\disjoint$-relations on categories allows the relations to describe a variety of intuitive notions of disjointness on familiar categories.
Nonetheless, for AQFT it remains necessary to work with categories of spacetimes equipped with $\perp$-relations \cite{BeniniSchenkelWoike2021}.

Observe that $\perp$-relations are precisely $\disjoint$-relations with respect to which all morphisms are $\overlap$-monic.
Using this, we can produce from any $\disjoint$-category a wide $\disjoint$-subcategory on which the $\disjoint$-relation is moreover a $\perp$-relation:

\begin{definition} \label{def:overlap-monic-subcat}
	Let $(\mathsf{C}, \disjoint_\mathsf{C})$ be a $\disjoint$-category.
	Define $\OverlapMonics \left(\mathsf{C}, \disjoint_\mathsf{C}\right)$ to be the $\disjoint$-subcategory consisting of all objects of $\mathsf{C}$, and only those morphisms in $\mathsf{C}$ which are $\overlap$-monic with respect to $\disjoint_\mathsf{C}$.
\end{definition}
All isomorphisms, and in particular all identities, in $\mathsf{C}$ are $\overlap$-monic.
It follows from \cref{eqn:overlap-monics} that a composition of $\overlap$-monics is also $\overlap$-monic.
Hence $\OverlapMonics(\mathsf{C}, \disjoint_\mathsf{C})$ is indeed a subcategory of $\mathsf{C}$.
As per \cref{ex:disjoint_subcat:overlap-monics}, all morphisms in $\OverlapMonics \left(\mathsf{C}, \disjoint_\mathsf{C}\right)$ are $\overlap$-monic because they are $\overlap$-monic in the ambient $\disjoint$-category $(\mathsf{C}, \disjoint_\mathsf{C})$ by definition; so, $\OverlapMonics \left(\mathsf{C}, \disjoint_\mathsf{C}\right)$ is an orthogonal category.

\begin{example}
% 	From the characterisation of $\overlap$-monics in $(\mathsf{Set},\disjoint_\mathrm{set})$ in 
	From \cref{ex:setwise_disjointness:overlap_monics} of $(\mathsf{Set},\disjoint_\mathrm{set})$, we have orthogonal category $\OverlapMonics(\mathsf{Set}, \disjoint_\mathrm{set})$ consisting of sets and injective maps.

% 	From the characterisation of $\overlap$-monics in $(\mathsf{sBin}, \disjoint_\mathrm{bin})$ in 
	From \cref{ex:binrel_disjointness:overlap-monics}, we have orthogonal category $\OverlapMonics(\mathsf{sBin}, \disjoint_\mathrm{bin})$ consisting of sets equipped with symmetric binary relations, and morphisms that preserve and reflect the relations.

% 	From the characterisation of $\overlap$-monics in $(\mathsf{Top},\disjoint_P)$ in 
	From \cref{ex:top_disjointness:overlap-monics}, we have orthogonal category $\OverlapMonics(\mathsf{Top}, \disjoint_P)$ consisting of topological spaces $X$ and continuous maps $f:X \to Y$ such that the induced map $\pi_0(f) : \pi_0(X) \to \pi_0(Y)$ is injective.
\end{example}

% END

\section{Causal disjointness and spacetimes categories for relativistic QFT} \label{sec:spacetimes-cats-relativistic}
% BEGIN

We now turn our toolkit of disjointness relations to the study of categories of spacetimes suitable for algebraic quantum field theory, in the categorical formulation sometimes called locally covariant QFT \parencite{BrunettiFredenhagenVerch2003,FewsterVerch2015}.

By \emph{spacetime} $(M,g)$ we mean a smooth, time-oriented, Lorentzian manifold  $M$ (Hausdorff, paracompact) of dimension at least 2, where $g$ is the metric tensor of signature $(-+\ldots+)$.
% $(1, \dim M -1)$; we use the mostly-plus signature convention.
Where explicit notation for the metric is not needed, we refer to spacetime $(M,g)$ merely by $M$.
We do not assume $M$ to be connected, in contrast to several standard references \cite{BeemEhrlichEasley1996,HawkingEllis1973,ONeill1983}.

A map $f : M \to N$ between spacetimes $(M,g)$ and $(N,h)$ is \emph{conformal} if it is smooth and there is some $\omega_f \in C^\infty(M)$ such that $f^*h = e^{\omega_f} g$.
% The everywhere-strictly-positive function $e^{\omega_f}$ is called the \emph{conformal factor} of $f$.
Further, $f$ is a \emph{local isometry} if it is conformal with $\omega_f = 0$.
% i.e. unit conformal factor.
Any conformal map $f : M \to N$ is a smooth immersion, i.e. its tangent map $df_p : T_p M \to T_{f(p)} N$ is injective for all points $p \in M$.
For, if $v \in T_p M$ has $df_p (v) = 0$ then for any other $u \in T_p M$ we have $e^{\omega_f(p)} g_p(u,v) = (f^*h)_p (u,v) = h_{f(p)} (df_p (u), df_p(v)) = 0$; then $v = 0$ since $g$ is non-degenerate.
If $f : M \to N$ is both a conformal map and a diffeomorphism, then its inverse $f^{-1} : N \to M$ is also conformal with $\omega_{f^{-1}} = - \omega_f \circ f^{-1}$.

\begin{definition}
	For any integer $d \geq 1$, denote by $\mathsf{SpTm}_{d+1}$ the category whose objects are spacetimes of dimension $d+1$, and whose morphisms are local isometries, composing as functions.
\end{definition}

\begin{remark} \label{rm:sptm-mor-local-diffeo}
	The morphisms of $\mathsf{SpTm}_{d+1}$ are smooth immersions.
	Since all objects of $\mathsf{SpTm_{d+1}}$ share the same fixed dimension $d+1$, smooth immersions between them are local diffeomorphisms; see for example \cite[Proposition 4.8]{Lee2013}.
	Hence all morphisms of $\mathsf{SpTm}_{d+1}$ are local diffeomorphisms, and in particular open maps.
	
	If a morphism $g : M \to N$ of $\mathsf{SpTm}_{d+1}$ is injective, it is therefore a smooth embedding \parencite[Proposition 4.22]{Lee2013} and so a diffeomorphism onto its image $g(M) \subseteq N$.
\end{remark}

The category of spacetimes most often used as domain for relativistic AQFTs is not $\mathsf{SpTm}_{d+1}$, but rather a subcategory of it:
\begin{definition} \label{def:loc}
	Denote by $\mathsf{Loc}_{d+1}$ the category whose objects are globally hyperbolic spacetimes of dimension $d+1$, and whose morphisms $f:M \to N$  are injective local isometries such that $f(M)$ is a causally convex subset of $N$.
\end{definition}
We recall the definitions of causal convexity and global hyperbolicity in \cref{sec:spacetimes-cats-relativistic:causality,sec:spacetimes-cats-relativistic:causal_properties} below.
Since it is unlikely to cause ambiguity, we leave the dimension $d+1$ implicit and write only $\mathsf{SpTm}$, $\mathsf{Loc}$ etc.
In this work we only consider categories of spacetimes wherein all spacetimes share the same fixed dimension.

In comparison to $\mathsf{SpTm}$, the morphisms of $\mathsf{Loc}$ are notably constrained to be injective and have causally convex image.
From a purely categorical viewpoint, these constraints seem unappealing: they go further than merely asking that morphisms `preserve structure'.
As a consequence, some basic constructions do not possess their usual universal properties in $\mathsf{Loc}$;
for instance, the disjoint union of spacetimes is a coproduct in $\mathsf{SpTm}$ but not in $\mathsf{Loc}$.

Originally in \cite{BrunettiFredenhagenVerch2003}, these constraints on the morphisms of $\mathsf{Loc}$ were arrived at by \textit{ad hoc} physical reasoning specific to relativistic QFTs.
In this section, we relate $\mathsf{Loc}$ to (a subcategory of) $\mathsf{SpTm}$ systematically via the machinery of disjointness relations described in \cref{sec:disjointness}.
In particular, we show that the constraints imposed on morphisms in $\mathsf{Loc}$ amount precisely to the requirement that they be $\overlap$-monic, as per \cref{def:overlap-monic}.

\begin{remark}
	In fact, the category of spacetimes more usually used for relativistic AQFT is the subcategory $\mathsf{Loc}^{\mathrm{o,to}}$ of $\mathsf{Loc}$ whose objects are oriented and whose morphisms are orientation- and time-orientation preserving.
	The relationship we show below between $\mathsf{Loc}$ and $\mathsf{SpTm}$ has an analogue between $\mathsf{Loc}^\mathrm{o,to}$ and the subcategory $\mathsf{SpTm}^\mathrm{o,to}$ of $\mathsf{SpTm}$ similarly restricted to oriented spacetimes and orientation- and time-orientation-preserving morphisms.
	For generality, we work with $\mathsf{Loc}$ and $\mathsf{SpTm}$.
\end{remark}

\subsection{Causal relations on a spacetime} \label{sec:spacetimes-cats-relativistic:causality}

Before defining on $\mathsf{SpTm}$ a $\disjoint$-relation to describe causal disjointness, we recall some standard notions in the causality theory of spacetimes to fix notation.

% % --------old version--------------
% Causal properties of a spacetime are frequently encoded in binary relations on that spacetime.
% For any homogeneous binary relation $R \subseteq X \times X$ on a set $X$, we may always construct its \emph{symmetric closure}, which we denote $s(R)$ or $sR$, as the smallest symmetric relation which contains $R$.
% Explicitly, denoting the transpose of relation $R$ as $R^T := \Set{(y,x) \in X \times X | (x,y) \in R }$,
% \todo{check consistent set-builder notation throughout document}
% it is a standard result that $s(R) = R \cup R^T$.
% 
% When $X$ is a topological space, for any binary relation $R \subseteq X \times X$ we may consider also its \emph{topological closure} $\overline{R}$, namely the smallest closed subset of topological space $X \times X$ (equipped with the product topology) which contains $R$.
% As a consequence of the definition of the product topology, it is straightforward to observe that $(x,y) \in \overline{R}$ if and only if for any open neighbourhoods $U$ of $x$ and $V$ of $y$ in $X$, there exist $x' \in U$ and $y' \in V$ with $(x',y') \in R$.
% The operations of taking symmetric and topological closures of a relation commute:
% % ---------------------------------

Causal properties of a spacetime are frequently encoded in binary relations on that spacetime.
For any homogeneous binary relation $R \subseteq X \times X$ on a set $X$, we may always take its \emph{symmetric closure} -- the smallest symmetric relation on $X$ containing $R$ -- which we denote as $s(R)$ or $sR$.
Denoting the transpose of relation $R$ as $R^T := \Set{(y,x) \in X \times X | (x,y) \in R }$, it is a standard result that $s(R) = R \cup R^T$.

If $X$ is a topological space then, for any binary relation $R \subseteq X \times X$, we may consider also its \emph{topological closure} $\overline{R}$ -- namely, the smallest closed subset of topological space $X \times X$ (equipped with the product topology) containing $R$.

\begin{lemma} \label{lem:top_closure_relation}
	Let $R$ be a binary relation on a topological space $X$, and let $x, y \in X$.
	Then $(x,y) \in \overline{R}$ if and only if for any open neighbourhoods $U$ of $x$ and $V$ of $y$ in $X$, there exist $x' \in U$ and $y' \in V$ with $(x',y') \in R$.
\end{lemma}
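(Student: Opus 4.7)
The plan is to recognise this as a direct application of the standard characterisation of topological closure in terms of open neighbourhoods, combined with the definition of the product topology on $X \times X$. Recall that for any topological space $Y$ and subset $S \subseteq Y$, a point $p \in Y$ lies in $\overline{S}$ if and only if every open neighbourhood of $p$ in $Y$ meets $S$. I will apply this with $Y = X \times X$ (equipped with the product topology), $S = R$, and $p = (x,y)$.

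For the forward direction, suppose $(x,y) \in \overline{R}$ and let $U, V$ be open neighbourhoods of $x, y$ in $X$ respectively. Then $U \times V$ is an open neighbourhood of $(x,y)$ in $X \times X$, so by the closure characterisation there exists $(x', y') \in (U \times V) \cap R$, which gives $x' \in U$, $y' \in V$ and $(x', y') \in R$ as required.

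For the reverse direction, suppose that for every open $U \ni x$ and $V \ni y$ there exist $x' \in U$, $y' \in V$ with $(x', y') \in R$. Let $W$ be any open neighbourhood of $(x,y)$ in $X \times X$. By definition of the product topology, basic open sets of the form $U \times V$ with $U, V$ open in $X$ form a basis, so there exist open $U \ni x$ and $V \ni y$ with $U \times V \subseteq W$. By hypothesis there exist $x' \in U$ and $y' \in V$ with $(x', y') \in R$, whence $(x', y') \in W \cap R$. Thus every open neighbourhood of $(x,y)$ meets $R$, so $(x,y) \in \overline{R}$.

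There is no genuine obstacle here: the lemma is essentially just the unpacking of ``closure in the product topology'' in terms of the basis for that topology. The only thing to be careful about is to invoke the basis of products of opens rather than working with arbitrary open neighbourhoods of $(x,y)$ directly, but this is immediate from the definition of the product topology.
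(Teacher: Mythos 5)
Your proof is correct and takes essentially the same route as the paper: the standard characterisation of closure via open neighbourhoods, specialised to the basis of sets $U \times V$ in the product topology on $X \times X$. You merely spell out both directions more explicitly than the paper does.
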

\begin{proof}
	The product topology on $X \times X$ has a basis consisting of sets $U \times V$ for $U,V$ open sets in $X$.
	It is a standard characterisation of topological closures that $(x,y) \in \overline{R}$ if and only if every basis set $U \times V$ containing $(x,y)$ intersects $R$.
\end{proof}

The operations of taking symmetric and topological closures of a relation commute:

\begin{lemma} \label{lem:sym_top_closures_commute}
	$\overline{s(R)} = s\left(\overline{R}\right)$ for any binary relation $R$ on topological space $X$.
\end{lemma}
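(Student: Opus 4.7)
The plan is to unfold both sides according to the definition of symmetric closure, namely $s(R) = R \cup R^T$, and then reduce the claim to two standard facts about topological closures: first, that closure distributes over finite unions, so that $\overline{R \cup R^T} = \overline{R} \cup \overline{R^T}$; and second, that the swap map
\begin{equation*}
    \sigma : X \times X \to X \times X, \qquad (x,y) \mapsto (y,x),
\end{equation*}
is a homeomorphism (it is continuous with continuous inverse $\sigma$ itself), and therefore commutes with the closure operator. The latter gives $\overline{R^T} = \overline{\sigma(R)} = \sigma(\overline{R}) = \overline{R}^T$.

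Putting these together, I would compute
\begin{equation*}
    \overline{s(R)} = \overline{R \cup R^T} = \overline{R} \cup \overline{R^T} = \overline{R} \cup \overline{R}^T = s(\overline{R}),
\end{equation*}
which is exactly the desired equality. Each step is justified respectively by the definition of $s$, distributivity of closure over finite unions, the homeomorphism property of $\sigma$, and again the definition of $s$ (applied now to $\overline{R}$).

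I do not expect any significant obstacle: both distributivity of closure over finite unions and the fact that homeomorphisms preserve closures are standard point-set facts that may be invoked without elaboration. If desired, one could instead give an element-wise argument using \cref{lem:top_closure_relation}: a pair $(x,y)$ lies in $\overline{s(R)}$ iff every product neighbourhood $U \times V$ of $(x,y)$ contains some $(x',y')$ with either $(x',y') \in R$ or $(y',x') \in R$, and one splits into the two cases to match the condition $(x,y) \in \overline{R}$ or $(y,x) \in \overline{R}$ defining $s(\overline{R})$. The homeomorphism argument is cleaner and I would prefer it.
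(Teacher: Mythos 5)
Your proof is correct and is essentially identical to the paper's: both unfold $s(R) = R \cup R^T$, use distributivity of closure over finite unions, and identify $\overline{R^T} = \overline{R}^T$ via the swap homeomorphism $(x,y)\mapsto(y,x)$. No differences worth noting.
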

\begin{proof}
\begin{equation*}
	\overline{s(R)}
	= \overline{R \cup R^T}
	= \overline{R} \cup \overline{R^T}
	= \overline{R} \cup \overline{R}^T
	= s \left(\overline{R}\right),
\end{equation*}
where the third equality follows from the fact that $R^T$ is the image of $R$ under the homeomorphism $X \times X \to X \times X$ which rearranges $(x,y) \mapsto (y,x)$.
\end{proof}

If sets $X$ and $Y$ are equipped with binary relations $R_X$ and $R_Y$ respectively, recall that a map $f : X \to Y$ \emph{preserves} the relations if $(x,y) \in R_X$ implies $(f(x), f(y)) \in R_Y$ for all $x,y \in X$.
In cases where the relations $R_X$ and $R_Y$ are reflexive, we say $f : X \to Y$ \emph{strictly preserves} the relations if $f$ preserves the relations and $(x,y) \in R_X$ has $f(x) = f(y)$ only if $x = y$.
The map $f$ \emph{reflects} relations if  $(f(x), f(y)) \in R_Y$ implies $(x,y) \in R_X$.
A map preserves (reflects) given relations if and only if it preserves (reflects) their transposes.
From $s(R) = R \cup R^T$ it then follows that any map which preserves (reflects) some given relations also preserves (reflects) their symmetric closures.

The binary relations on spacetimes that are relevant for our purposes are defined in terms of causal curves.
Recall that for $(M,g)$ a spacetime and $I \subseteq \mathbb{R}$ an open interval, a smooth curve $\gamma : I \to M$ is \emph{timelike}, \emph{causal} or \emph{null} if at every $t \in I$ its tangent vector $\dot{\gamma}_t$ is respectively timelike, causal or null, i.e. has $\dot{\gamma}_t \neq 0$ with respectively $g(\dot{\gamma}_t, \dot{\gamma}_t) < 0$, $g(\dot{\gamma}_t, \dot{\gamma}_t) \leq 0$ or $g(\dot{\gamma}_t, \dot{\gamma}_t) = 0$.
Timelike and null curves are both special cases of causal curves.
Causal curve $\gamma$ is \emph{future-directed} if $\dot{\gamma}_t$ lies in the future lightcone, and \emph{past-directed} if $\dot{\gamma}_t$ lies in the past lightcone.
Smooth causal curves are regular by our definition; no smooth causal curve is future-directed for some values of its parameter $t$ and past-directed for others.
For $J \subseteq \mathbb{R}$ a not-necessarily-open interval, a continuous and piecewise-smooth curve $\gamma : J \to M$ is causal if either each smooth piece of $\gamma$ defined on an open sub-interval of $J$ is future-directed causal, or each such smooth piece is past-directed causal.
In the former case, $\gamma$ is future-directed causal; in the latter case, $\gamma$ is past-directed causal.
We adopt the convention that if a curve $\gamma$ is merely called (future- or past-directed) causal then this means it is piecewise-smooth and (respectively future- or past-directed) causal; when smoothness of $\gamma$ is required this is stated explicitly.

Let $\gamma : I \to M$ be a future-directed causal curve.
If it exists, the limit from above $\lim_{t \to \inf I^+} \gamma(t)$ is the \emph{past-endpoint} of $\gamma$. (We take $\inf I := -\infty$ if $I \subseteq \mathbb{R}$ is unbounded below; similarly $\sup I := + \infty$ if $I$ is unbounded above.)
If $\gamma$ has no past-endpoint, then we say $\gamma$ is \emph{past-inextendable}.
Similarly, the \emph{future-endpoint} of $\gamma$ is the limit from below $\lim_{t \to \sup I^-} \gamma(t)$ if it exists; if no future-endpoint exists then we say $\gamma$ is \emph{future-inextendable}.
$\gamma$ is called \emph{inextendable} if it is both past-inextendable and future-inextendable.

Any spacetime $(M,g)$ has a \emph{causal relation}:
\begin{equation*}
	J_M := \Set{(p,q) \in M \times M |
	\begin{array}{c}
		p = q, \text{ or there exists a future-directed causal} \\
		\text{curve } \gamma : [0,1] \to M \text{ with } \gamma(0)=p \text{ and } \gamma(1)=q.
	\end{array} }.
\end{equation*}
Where unambiguous, we may write only $J$ and leave the spacetime $M$ implicit.
The relation $J$ is reflexive by definition.
$J$ is also transitive by concatenation of causal curves, so that $J$ is a preorder on $M$.

Standard notations for the \emph{causal future and past of $p \in M$} are respectively
\begin{equation*}
	J^+(p) := \Set{ q \in M | (p,q) \in J}
	\qquad \text{and} \qquad
	J^-(p) := \Set{ q \in M | (q,p) \in J}.
\end{equation*}
\emph{Causal diamonds} in $M$ are sets of the form $J^+(p) \cap J^-(q)$ for some $p,q \in M$.

The symmetric closure $sJ = J \cup J^T$ of $J$ may be explicitly characterised as:
\begin{equation*}
	sJ_M := \Set{(p,q) \in M \times M | \begin{array}{c}
		p = q, \text{ or there exists any causal curve} \\
		\gamma : [0,1] \to M \text{ with } \gamma(0)=p \text{ and } \gamma(1)=q.
	\end{array} }.
\end{equation*}

\begin{remark}
	It is possible to generalise from piecewise-smooth to merely continuous causal curves; see \cite[pg. 184]{HawkingEllis1973} and \cite[Proposition 3.16]{MinguzziSanchez2008} for equivalent definitions.
	However, this generalisation leaves the causal relations $J$ unchanged: as explained in \cite[Remark 2.14]{Minguzzi2019}, if there is a future-directed continuous causal curve from point $p$ to point $q$ in a spacetime $(M,g)$, then there is also a future-directed piecewise-smooth causal curve from $p$ to $q$ in $(M,g)$.
	For this reason, all causal curves in this work are taken to be at least piecewise-smooth.
\end{remark}

For any open subset $U \subseteq M$ understood as a submanifold with the induced metric and time-orientation, it is automatically true that
\begin{equation*}
	J_U \subseteq J_M \cap [U \times U].
\end{equation*}
The converse $J_M \cap [U \times U] \subseteq J_U$ is not true for general $U$.

A subset $U \subseteq M$ of spacetime $M$ is called \emph{causally convex} if any causal curve $\gamma : [0,1] \to M$ with endpoints $\gamma(0), \gamma(1)$ lying in $U$ has $\gamma([0,1]) \subseteq U$, i.e. $\gamma$ remains always in $U$.
If $U \subseteq M$ is open and causally convex, then $J_U = J_M \cap [U \times U]$.
We may interpret this as saying that the `internal' causal relation $J_U$ on $U$ coincides with the causal relation $J_M \cap [U \times U]$ `induced' by restriction of $J_M$.
We note that open, causally convex subsets are not the only subsets that possess this property: take for instance the open but not causally convex strip $U := \Set{(t,x) \in \mathbb{R}^{1,1} | -1 < x < 1 }$ in two-dimensional Minkowski space $\mathbb{R}^{1,1}$ with its usual coordinates.

Given a conformal map $f : (M,g) \to (N,h)$ between spacetimes, any causal curve $\gamma : I \to M$ maps to a causal curve $f \circ \gamma : I \to N$, since
\begin{equation*}
	h \left(df_{\gamma(t)} (\dot{\gamma}_t), df_{\gamma(t)} (\dot{\gamma}_t)\right) = (f^* h) (\dot{\gamma}_t, \dot{\gamma}_t) = e^{\omega_f} g (\dot{\gamma}_t, \dot{\gamma}_t)
\end{equation*}
has the same sign as $g (\dot{\gamma}_t, \dot{\gamma}_t)$ for all $t\in I$.
For $\gamma$ future-directed, the curve $f \circ \gamma$ will be future- or past-directed if $f$ preserves or reverses time-orientation respectively on the path-component of $M$ which contains $\gamma$.
Consequently, any conformal map $f : M \to N$ \emph{preserves $J$ up time-orientation reversal}: for any $(p,q) \in J_M$, it follows that $\left(f(p), f(q)\right) \in J_N$ when $f$ preserves time-orientation on the path-component of $M$ containing $p$ and $q$, and $\left(f(q), f(p)\right) \in J_N$ when $f$ reverses time-orientation on that path-component.
Regardless of time-orientation reversals, any conformal map also preserves the symmetric closures $sJ$ of causal relations.

A conformal map $f : M \to N$ \emph{strictly preserves $J$ up to time-orientation reversal} if it preserves $J$ up to time-orientation reversal and for any $(p,q) \in J_M$, we have $f(p) = f(q)$ only if $p = q$.
Not all conformal maps strictly preserve $J$ up to time-orientation reversal.
Consider, for instance, the quotient map from two-dimensional Minkowski spacetime $\mathbb{R}^{1,1}$ which identifies $(t,x) \sim (t + 1, x)$ in standard coordinates.
The points $p = (0,0)$ and $q = (1,0)$ in $\mathbb{R}^{1,1}$ have $(p,q) \in J_{\mathbb{R}^{1,1}}$ as exhibited by the future-directed causal curve $\gamma : [0,1] \to \mathbb{R}^{1,1}, \gamma(\tau) = (\tau , 0)$.
While $p$ and $q$ are distinct in $\mathbb{R}^{1,1}$, their images under the quotient coincide; note that $\gamma$ becomes a closed causal curve in the quotient.

While all conformal maps preserve $J$ up to time-orientation reversal, some may also reflect it. Again, we must account for time-orientation reversal:
\begin{definition}
	For spacetimes $M$ and $N$, a conformal map $f : M \to N$  \emph{reflects $J$ up to time-orientation reversal} if for any $p,q \in M$, when $(f(p), f(q)) \in J_N$ it follows that $p$ and $q$ lie in the same path-component $\widetilde{M}$ of $M$, and:
	\begin{itemize}
		\item $(p,q) \in J_M$ if $f$ preserves time-orientation on $\widetilde{M}$, and
		\item $(q,p) \in J_M$ if $f$ reverses time-orientation on $\widetilde{M}$.
	\end{itemize}
\end{definition}

If $f : M \to N$ is a conformal embedding, i.e. a conformal map which is also a smooth embedding, then any future-directed causal curve $\gamma : I \to N$ with image $\gamma(I)$ contained in $f(M)$ has unique curve $\delta : I \to M$ such that $\gamma = f \circ \delta$.
Since $f$ (and hence the inverse of its codomain-restriction) is conformal, this $\delta$ is timelike, causal or null when $\gamma$ is timelike, causal or null respectively; it is future-directed if $f$ preserves time-orientation on the path-component of $M$ containing $\delta$, and past-directed if $f$ reverses time-orientation on that path-component.
Using this, we can immediately identify examples of maps which reflect $J$ up to time-orientation reversal:

\begin{proposition} \label{prop:causally_convex_conf_emb_reflect_J}
	Let $M, N$ be spacetimes and $f : M \to N$ a conformal map.
	If $f$ is a smooth embedding whose image $f(M)$ is a causally convex subset of $N$, then $f$ reflects $J$ up to time-orientation reversal.
\end{proposition}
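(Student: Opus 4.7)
The plan is to lift a future-directed causal curve in $N$ joining $f(p)$ and $f(q)$ back to a causal curve in $M$ joining $p$ and $q$, exploiting that $f(M)$ is causally convex in $N$ and that $f$, being a smooth embedding, is a diffeomorphism onto its image.

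First I would fix $p,q \in M$ with $(f(p),f(q)) \in J_N$. If $f(p) = f(q)$, injectivity of the embedding $f$ forces $p=q$, so $(p,q)\in J_M$ and the path-component conclusion hold trivially, regardless of whether $f$ preserves or reverses time-orientation. Otherwise there is a future-directed piecewise-smooth causal curve $\gamma:[0,1]\to N$ with $\gamma(0)=f(p)$ and $\gamma(1)=f(q)$. Because $f(M)$ is causally convex in $N$ and both endpoints of $\gamma$ lie in $f(M)$, the whole image $\gamma([0,1])$ is contained in $f(M)$. Since $f$ is a diffeomorphism onto $f(M)$, I can set $\delta := f^{-1} \circ \gamma : [0,1] \to M$; this is a well-defined piecewise-smooth curve with $f\circ\delta = \gamma$, $\delta(0) = p$ and $\delta(1) = q$. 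In particular $p$ and $q$ lie in a common path-component $\widetilde{M}$ of $M$.

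Next I would apply the observation recorded immediately before the proposition: on each open sub-interval where $\gamma$ is smooth, $\delta$ is a smooth causal curve (because the inverse of a conformal embedding restricted to its image is itself conformal), and this smooth piece is future-directed where $f$ preserves time-orientation and past-directed where $f$ reverses it. Since $\delta([0,1]) \subseteq \widetilde{M}$ and $f$'s time-orientation behaviour is constant on any single path-component, all smooth pieces of $\delta$ share the same time-direction. Concatenating, $\delta$ is a future-directed causal curve from $p$ to $q$ (so $(p,q)\in J_M$) when $f$ preserves time-orientation on $\widetilde{M}$, and is a past-directed causal curve from $p$ to $q$ (so $(q,p)\in J_M$, after re-parametrisation) when $f$ reverses it.

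The main subtlety I anticipate is the consistency of time-direction across the smooth pieces of $\delta$: the paper's convention requires that all pieces of a piecewise-smooth causal curve point the same way in time. This is secured by two features working together: causal convexity of $f(M)$ confines $\delta$ to a single path-component $\widetilde{M}$, and on that path-component $f$ is uniformly either time-orientation-preserving or time-orientation-reversing. Everything else reduces to invoking facts already collected in the preceding paragraphs of the excerpt.
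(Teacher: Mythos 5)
Your proof is correct and follows essentially the same route as the paper: lift the future-directed causal curve through the causally convex image via the inverse of the conformal embedding, and read off the time-direction of the lift from whether $f$ preserves or reverses time-orientation on the single path-component containing $\delta$. Your explicit attention to the consistency of time-direction across the smooth pieces of $\delta$ is a slightly more careful spelling-out of what the paper's preceding remark on conformal embeddings already packages, but the argument is the same.
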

\begin{proof}
	Take $p,q \in M$ such that $\left(f(p), f(q)\right) \in J_N$.
	Since $f$ is injective, if $f(p) = f(q)$ then $p=q$ so that both $(p,q)$ and $(q,p)$ lie in $J_M$.
	So assume $f(p) \neq f(q)$; then there is a future-directed causal curve $\gamma : [0,1] \to N$ with $\gamma(0) = f(p)$ and $\gamma(1) = f(q)$.
	By causal convexity of $f(M)$ in $N$, this implies that $\gamma([0,1]) \in f(M)$; so, there is a unique causal curve $\delta : [0,1] \to M$ with $\gamma = f \circ \delta$ and in particular $\delta(0) = p, \delta(1) = q$.
	If $f$ preserves time-orientation on the path-component of $M$ containing $\delta$, then $\delta$ is future-directed, exhibiting $(p,q) \in J_M$.
	Likewise if $f$ reverses time-orientation on that component then $\delta$ is past-directed and so exhibits $(q,p) \in J_M$.
\end{proof}

\begin{remark}
	The preceding proposition shows that all morphisms of $\mathsf{Loc}$ reflect $J$ up to time-orientation reversal.
	Indeed, the original motivation\footnote{In \cite{BrunettiFredenhagenVerch2003}, the category now commonly called $\mathsf{Loc}$ is denoted $\mathfrak{Man}$.}
	stated in \cite{BrunettiFredenhagenVerch2003} for choosing to impose that any morphism $f : M \to N$ of $\mathsf{Loc}$ have causally convex image was that the `intrinsic' and `induced' causal structures on $f(M)$ should coincide.
	In the language of causal relations, this stipulates that $J_{f(M)} = J_N \cap \left[f(M) \times f(M)\right]$, which for locally isometric embedding $f$ is equivalent to $f$ reflecting $J$ up to time-orientation reversal.
	
	It is not true that any conformal $f : M \to N$ which reflects $J$ up to time-orientation reversal is necessarily an embedding with causally convex image.
	For instance, the inclusion of the strip $U := \Set{ (t,x) \in \mathbb{R}^{1,1} | -1 < x < 1}$ into $\mathbb{R}^{1,1}$ has image which is not causally convex in $\mathbb{R}^{1,1}$; nonetheless, because $J_U = J_{\mathbb{R}^{1,1}} \cap [U \times U]$ the inclusion reflects $J$.
	
	However, we show in \cref{prop:reflect_J_causally_convex_conf_emb} below
	that a converse of \cref{prop:causally_convex_conf_emb_reflect_J} does hold when $f : M \to N$ is conformal map between globally hyperbolic spacetimes.
\end{remark}

For any subsets $U$ and $V$ of a topological space $X$, and continuous curve $\gamma : I \to X$, we say that \emph{$\gamma$ connects $U$ and $V$} if there are $a,b \in I$ such that $\gamma(a) \in U$ and $\gamma(b) \in V$.
For points $p,q \in X$, curve $\gamma : I \to X$ connects $p$ and $q$ if $\gamma$ connects $\left\{p\right\}$ and $\left\{q\right\}$.
If a causal curve $\gamma$ connects subsets $U$ and $V$ of a spacetime $M$, then $\gamma$ exhibits that $sJ \cap [U \times V]$ is non-empty.

Note that this terminology agnostic as to whether $\gamma$ is future- or past-directed.
When we need to specify direction, we instead say that \emph{$\gamma$ is future-directed causal from $U$ to $V$} or \emph{$\gamma$ is past-directed causal from $U$ to $V$}.
The former case exhibits that $J \cap [U \times V]$ is non-empty, and the latter that $J \cap [V \times U]$ is non-empty.
If $\gamma : [0,1] \to M$ is future-directed causal from $U$ to $V$ then the reparameterised curve $\gamma' : [0,1] \to M$ with $\gamma'(t) := \gamma(1-t)$ is past-directed causal from $V$ to $U$.

For open subsets $U$ and $V$ in spacetime $M$, we have the following equivalent characterisations of the notion that $U$ and $V$ are spacelike-separated:

\begin{proposition} \label{prop:caus_discon_equiv}
	For open subsets $U$ and $V$ of spacetime $M$, the following are equivalent:
	\begin{enumerate}[label=\textup{(\roman*)}]
		\item \label{prop:caus_discon_equiv:sJ}
		$sJ_M$ does not intersect $U \times V$,
		
		\item \label{prop:caus_discon_equiv:J}
		$J_M$ intersects neither $U \times V$ nor $V \times U$,
		
		\item \label{prop:caus_discon_equiv:spacelikesep}
		There exists no causal curve in $M$ connecting $U$ and $V$.
	\end{enumerate}
\end{proposition}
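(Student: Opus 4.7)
The plan is to prove the cyclic chain \labelcref{prop:caus_discon_equiv:sJ} $\Leftrightarrow$ \labelcref{prop:caus_discon_equiv:J} and then \labelcref{prop:caus_discon_equiv:J} $\Leftrightarrow$ \labelcref{prop:caus_discon_equiv:spacelikesep}, with the first equivalence being essentially a rewrite and the second handled by reversing the direction of causal curves.

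First I would establish \labelcref{prop:caus_discon_equiv:sJ} $\Leftrightarrow$ \labelcref{prop:caus_discon_equiv:J}. Since $sJ_M = J_M \cup J_M^T$, and $(p,q) \in J_M^T \cap (U \times V)$ if and only if $(q,p) \in J_M \cap (V \times U)$, the set $sJ_M \cap (U \times V)$ is empty if and only if both $J_M \cap (U \times V)$ and $J_M \cap (V \times U)$ are empty.

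Next, for \labelcref{prop:caus_discon_equiv:J} $\Rightarrow$ \labelcref{prop:caus_discon_equiv:spacelikesep}, I would argue by contrapositive: if a causal curve $\gamma : [0,1] \to M$ connects $U$ and $V$, then up to reparameterisation $\gamma$ is either future- or past-directed (causal curves have a single direction). Reversing $\gamma$ if necessary, assume it is future-directed; then there are $a,b \in [0,1]$ with $\gamma(a) \in U$ and $\gamma(b) \in V$. If $a \leq b$, restriction and reparameterisation of $\gamma$ on $[a,b]$ exhibits $(\gamma(a), \gamma(b)) \in J_M \cap (U \times V)$; if $b \leq a$, similarly $(\gamma(b), \gamma(a)) \in J_M \cap (V \times U)$. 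Either way, \labelcref{prop:caus_discon_equiv:J} fails.

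For \labelcref{prop:caus_discon_equiv:spacelikesep} $\Rightarrow$ \labelcref{prop:caus_discon_equiv:J}, again by contrapositive: suppose $(p,q) \in J_M \cap (U \times V)$ (the case $(p,q) \in J_M \cap (V \times U)$ is symmetric). If $p \neq q$, the definition of $J_M$ supplies a future-directed causal curve $\gamma : [0,1] \to M$ with $\gamma(0) = p \in U$ and $\gamma(1) = q \in V$, which connects $U$ and $V$. The only subtle subcase is $p = q$, meaning $p \in U \cap V$, an open non-empty subset of $M$. Here I would invoke the fact that through any point of a spacetime there passes a smooth timelike (hence causal) curve --- e.g. a short integral curve of any smooth timelike vector field defined near $p$, or a short timelike geodesic. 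By continuity, such a curve can be restricted to an open sub-interval on which it remains inside $U \cap V$; this restricted curve connects $U$ and $V$ since its image lies in both.

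The main obstacle is the degenerate subcase $p = q$ in the final implication, which requires explicit production of a causal curve inside the open set $U \cap V$ because the paper's convention excludes constant curves from being causal. This is a small point, handled by the standard observation that every point of a spacetime admits locally a non-trivial causal curve through it.
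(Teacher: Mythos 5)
Your proof is correct and follows essentially the same route as the paper: the equivalence of \labelcref{prop:caus_discon_equiv:sJ} and \labelcref{prop:caus_discon_equiv:J} via $sJ = J \cup J^T$, the easy direction by restricting/reversing a connecting causal curve, and the same key observation that the degenerate case $p = q \in U \cap V$ must be handled by exhibiting a non-constant timelike (hence causal) curve inside the open set $U \cap V$ (the paper uses a timelike geodesic there). Routing the nontrivial implication as \labelcref{prop:caus_discon_equiv:spacelikesep} $\Rightarrow$ \labelcref{prop:caus_discon_equiv:J} rather than the paper's $\neg$\labelcref{prop:caus_discon_equiv:sJ} $\Rightarrow$ $\neg$\labelcref{prop:caus_discon_equiv:spacelikesep} is an immaterial difference given the first equivalence.
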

\begin{proof}
	\labelcref{prop:caus_discon_equiv:sJ,prop:caus_discon_equiv:J} are equivalent since $sJ = J \cup J^T$.
	\labelcref{prop:caus_discon_equiv:J} implies \labelcref{prop:caus_discon_equiv:spacelikesep} trivially.
	Say \labelcref{prop:caus_discon_equiv:sJ} does not hold, so there is some $p \in U$ and $q \in V$ with $(p,q) \in sJ_M$.
	Then either there is a causal curve in $M$ connecting $p$ and $q$, showing immediately that \labelcref{prop:caus_discon_equiv:spacelikesep} does not hold, or $p = q$.
	In the latter case, since $U$ and $V$ are open, their intersection $U \cap V$ is also open and hence a subspacetime of $M$.
	So, there exists a causal curve in $U \cap V$ through the point $p = q$ (take for instance any timelike geodesic in $U \cap V$ through $p=q$), exhibiting also that \labelcref{prop:caus_discon_equiv:spacelikesep} does not hold.
\end{proof}

\subsection{Causal disjointness on the category of spacetimes}
\label{sec:spacetimes-cats-relativistic:disjointness}

From the causal relations $J$ available on any spacetime, we can produce a disjointness relation\footnote{Compare to \cref{ex:top_disjointness}.} on the category $\mathsf{SpTm}$ to encode the notion of causal disjointness (i.e. spacelike-separateness) of subspacetimes:

\begin{definition} \label{def:causal_discon}
	Define a $\disjoint$-relation $\disjoint_J$ on $\mathsf{SpTm}$ as follows: for any conterminous pair $f_1 : M_1 \rightarrow N \leftarrow M_2 : f_2$, say that $f_1 \disjoint_J f_2$ if
% 	$sJ_N \cap [f_1(M_1) \times f_2(M_2)] = \varnothing$.
	$sJ_N$ does not intersect $f_1(M_1) \times f_2(M_2)$.

	We call $\disjoint_J$ the \emph{causal disjointness relation} on $\mathsf{SpTm}$.
\end{definition}

By \cref{prop:caus_discon_equiv} and since the morphisms of $\mathsf{SpTm}$ are open maps, we can equivalently say that $f_1 \disjoint_J f_2$ if there is no causal curve in $N$ connecting $f_1(M_1)$ and $f_2(M_2)$.
That $\disjoint_J$ is a valid $\disjoint$-relation follows since it is the pullback of the $\disjoint$-relation $\disjoint_\mathrm{bin}$ of \cref{ex:binrel_disjointness} along the functor $\mathsf{SpTm} \to \mathsf{sBin}$ which sends spacetimes $M$ to their underlying sets equipped with binary relation $sJ_M$; morphisms $f : M \to N$ preserve $sJ$ because they are conformal maps.

From \cref{ex:binrel_disjointness:overlap-monics,prop:functors-respect-overlap-monics}, it follows that if a morphism of $\mathsf{SpTm}$ reflects $sJ$ then it is $\overlap$-monic.
However, a full characterisation of $\overlap$-monics in $\mathsf{SpTm}$ is as follows:

\begin{theorem} \label{thm:overlap-monics-SpTm}
	A morphism $f : M \to N$ in $\left(\mathsf{SpTm}, \disjoint_J\right)$ is $\overlap$-monic if and only if it reflects $\overline{sJ}$,
	i.e. for any $p,q \in M$:
	\begin{equation*}
		\left(f(p), f(q)\right) \in \overline{sJ_N}
		\qquad \text{implies} \qquad
		(p,q) \in \overline{sJ_M}.
	\end{equation*}
\end{theorem}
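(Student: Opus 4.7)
The plan is to prove each direction separately, using Lemma~\ref{lem:top_closure_relation} (the neighbourhood characterisation of topological closure) to bridge between $sJ$ and $\overline{sJ}$, together with the fact (Remark~\ref{rm:sptm-mor-local-diffeo}) that every morphism of $\mathsf{SpTm}$ is a local diffeomorphism, hence an open map.

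For the ``only if'' direction, suppose $f$ is $\overlap$-monic and let $p,q\in M$ satisfy $(f(p),f(q))\in \overline{sJ_N}$.  To conclude $(p,q)\in \overline{sJ_M}$, I would invoke Lemma~\ref{lem:top_closure_relation} and take arbitrary open neighbourhoods $U\ni p$ and $V\ni q$ in $M$.  Viewing $U$ and $V$ as subspacetimes, their inclusions $\iota_U:U\hookrightarrow M$ and $\iota_V:V\hookrightarrow M$ are morphisms of $\mathsf{SpTm}$.  Because $f$ is an open map, $f(U)$ and $f(V)$ are open neighbourhoods of $f(p)$ and $f(q)$ in $N$; applying Lemma~\ref{lem:top_closure_relation} to $(f(p),f(q))\in\overline{sJ_N}$ then gives that $sJ_N$ meets $f(U)\times f(V)$, i.e.\ $f\circ \iota_U \overlap_J f\circ \iota_V$.  $\overlap$-monicity of $f$ yields $\iota_U \overlap_J \iota_V$, so $sJ_M$ meets $U\times V$.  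Since $U,V$ were arbitrary, Lemma~\ref{lem:top_closure_relation} gives $(p,q)\in\overline{sJ_M}$.

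For the ``if'' direction, suppose $f$ reflects $\overline{sJ}$ and let $g_i:A_i\to M$ be conterminous with $f\circ g_1 \overlap_J f\circ g_2$.  Pick $a_i\in A_i$ witnessing $(f(g_1(a_1)),f(g_2(a_2)))\in sJ_N \subseteq \overline{sJ_N}$.  Writing $p:=g_1(a_1)$, $q:=g_2(a_2)$, the reflection hypothesis gives $(p,q)\in \overline{sJ_M}$.  Now the key move: since each $g_i$ is a local diffeomorphism, there exist open $U_i\ni a_i$ in $A_i$ with $g_i(U_i)$ open in $M$ (containing $p,q$ respectively).  Lemma~\ref{lem:top_closure_relation} applied to $(p,q)\in\overline{sJ_M}$ and the neighbourhoods $g_1(U_1),g_2(U_2)$ yields $(m_1,m_2)\in sJ_M$ with $m_i\in g_i(U_i)\subseteq g_i(A_i)$, so $g_1 \overlap_J g_2$ as required.

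The main conceptual point (and the place where one should be most careful) is the asymmetry between the two directions.  The ``only if'' direction exploits that arbitrary open subsets of $M$ arise as images of morphisms (the inclusions of open subspacetimes), so $\overlap$-monicity of $f$ constrains behaviour on arbitrarily fine neighbourhoods of points of $M$; this is precisely what is needed to access the topological closure.  Conversely, in the ``if'' direction, the reflection hypothesis gives only closure-level information $(p,q)\in \overline{sJ_M}$, and the subtlety is that one needs witnesses of $g_1\overlap_J g_2$ lying in the images $g_i(A_i)$---this is where the local-diffeomorphism property of the $g_i$ is essential, as it lets open neighbourhoods of $a_i$ in $A_i$ be used to probe the closure.
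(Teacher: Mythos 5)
Your proposal is correct and follows essentially the same route as the paper's proof: both directions hinge on the fact that $\mathsf{SpTm}$-morphisms are open maps together with the neighbourhood characterisation of topological closure, with inclusions of open subspacetimes used to probe $\overline{sJ_N}$ in the ``only if'' direction. The only cosmetic difference is that in the ``if'' direction the paper works directly with the open images $g_i(O_i)$ rather than passing through smaller neighbourhoods of the witnesses $a_i$, which changes nothing substantive.
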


\begin{proof}
$(\Leftarrow)$
	Say $f : M \to N$ reflects $\overline{sJ}$, and consider any conterminous pair $g_1 : O_1 \rightarrow M \leftarrow O_2 : g_2$ with $f \circ g_1 \overlap_J f \circ g_2$.
	Then there exist $p \in O_1$ and $q \in O_2$ with
	\begin{equation*}
		\left(f \circ g_1(p), f \circ g_2(q)\right) \in sJ_N \subseteq \overline{sJ_N}.
	\end{equation*}
	Since $f$ reflects $\overline{sJ}$, it follows that $\left(g_1(p), g_2(q)\right) \in \overline{sJ_M}$.
	Recalling that morphisms of $\mathsf{SpTm}$ are necessarily open maps as per \cref{rm:sptm-mor-local-diffeo}, we have open set $g_1(O_1) \times g_2(O_2)$ in $M \times M$ which intersects $\overline{sJ_M}$.
	By definition of topological closure, this implies that $g_1(O_1) \times g_2(O_2)$ intersects $sJ_M$ and hence $g_1 \overlap_J g_2$.
	
$(\Rightarrow)$
	Say that $f : M \to N$ is $\overlap$-monic;
	take any $p,q \in M$ with $\left(f(p), f(q)\right) \in \overline{sJ_N}$.
	Choose arbitrary open neighbourhoods $U$ of $p$ and $V$ of $q$ in $M$, and denote the inclusion maps $\iota_U : U \hookrightarrow M$ and $\iota_V : V \hookrightarrow M$.
	Then $U$ and $V$ are subspacetimes of $M$ equipped with the induced metrics and time-orientations, so the inclusions $\iota_U$ and $\iota_V$ are valid $\mathsf{SpTm}$ morphisms.
	Using that $f$ is an open map, $f(U) \times f(V)$ is an open set in $N \times N$ intersecting $\overline{sJ_N}$ at least at $\left(f(p), f(q)\right)$.
	Thus also $f(U) \times f(V)$ intersects $sJ_N$ by definition of topological closure, which demonstrates that $f \circ \iota_U \overlap_J f \circ \iota_V$.
	Since $f$ is $\overlap$-monic, this implies $\iota_U \overlap_J \iota_V$, and hence that $U \times V$ intersects $sJ_M$, i.e. there exist $p' \in U$ and $q' \in V$ such that $(p',q') \in sJ_M$.
	Because this holds for arbitrary open neighbourhoods $U$ of $p$ and $V$ of $q$, it shows that $(p,q) \in \overline{sJ_M}$.
\end{proof}

In general, the property that a map $f$ reflects $\overline{sJ}$ does not straightforwardly simplify:

\begin{example} \label{ex:minkowski_minus_point}
	We exhibit a conformal map which reflects $\overline{J}$ and hence $\overline{sJ}$, but not $J$ or $sJ$.
	Let $\mathbb{R}^{1,1}$ be two-dimensional Minkowski spacetime, and define open subset $U := \mathbb{R}^{1,1} \setminus \left\{r\right\}$ for some point $r = (t_0,x_0) \in \mathbb{R}^{1,1}$.
	$U$ becomes a spacetime when equipped with the induced metric and time-orientation from $\mathbb{R}^{1,1}$; then the inclusion $U \hookrightarrow \mathbb{R}^{1,1}$ is a local isometry and so a conformal map.

	Consider the points $p := (t_0 - \Delta, x_0 - \Delta)$ and $q := (t_0 + \Delta, x_0 + \Delta)$, for any $\Delta > 0$, as illustrated in \cref{fig:minkowski_minus_point}.
	Then $(p,q) \in J_{\mathbb{R}^{1,1}}$ as exhibited by the null curve in $\mathbb{R}^{1,1}$ given by $\tau \mapsto (t_0 + \tau, x_0 + \tau)$ for $\tau \in [-\Delta, \Delta]$.
	
	However, due to the removal of $r$, there is no causal curve in $U$ connecting $p$ and $q$.
	Hence $(p,q) \not \in sJ_U$, so the inclusion $U \hookrightarrow \mathbb{R}^{1,1}$ does not reflect $J$ or $sJ$.

	Nonetheless, it follows from \cref{lem:top_closure_relation} that $(p,q) \in \overline{J_U}$.
	For, consider any open neighbourhoods $V_p, V_q \subseteq U$ of $p$ and $q$ respectively; then there exists some $\epsilon > 0$ such that $p' := (t_0 + \epsilon - \Delta, x_0 - \Delta)$ is in $V_p$ and $q' := (t_0 + \epsilon + \Delta, x_0 + \Delta)$ is in $V_q$.
	It holds that $(p',q') \in J_U$, as exhibited by the null curve in $U$ given by $\tau \mapsto (t_0 + \epsilon + \tau, x_0 + \tau)$ for $\tau \in [-\Delta, \Delta]$; notice that this curve avoids the point $r = (t_0, x_0)$ which is missing from $U$.

% 	The points $p := (t_0 - \Delta, x_0 - \Delta)$ and $q := (t_0 + \Delta, x_0 + \Delta)$ illustrated in \cref{fig:minkowski_minus_point} have $(p,q) \in J_{\mathbb{R}^{1,1}}
% % 	\subseteq \overline{sJ_{\mathbb{R}^{1,1}}}
% 	$ for any $\Delta > 0$.
% 	Clearly, $(p,q) \not \in sJ_U$ because of the removal of $r$ from $U$; so the inclusion $U \hookrightarrow \mathbb{R}^{1,1}$ does not reflect $J$ or $sJ$.
% 	However, it is true that $(p,q) \in \overline{J_U}$ since any arbitrary neighbourhood of $p$ can be connected to any arbitrary neighbourhood of $q$ by a future-directed causal curve.

	Similar considerations for points $\tilde{p} := (t_0 - \Delta, x_0 + \Delta)$ and $\tilde{q} := (t_0 + \Delta, x_0 - \Delta)$ show that the inclusion $U \hookrightarrow \mathbb{R}^{1,1}$ does reflect $\overline{J}$.

	\begin{figure}[bt]
		\centering
% 		% makebox command allows figure to be wider than text and remain centred.
% 		\makebox[\linewidth][c]{
		\begin{tikzpicture}[scale=0.8]
			%axes:
			\draw[->] (-1,0) -- (4,0) node [anchor=north]{$x$};
			\draw[->] (0,-1) -- (0,4) node [anchor=east]{$t$};
			
			\draw[dashed]	(1,1) node[anchor=east]{$p$}
						--	(2,2) node[anchor=north west]{$r$}
						--	(3,3) node[anchor=west]{$q$};
			\filldraw[fill=white] (2,2) circle (1.5pt);
			\filldraw (3,3) circle (1pt);
			\filldraw (1,1) circle (1pt);
		\end{tikzpicture}
		\caption{Minkowski spacetime $\mathbb{R}^{1,1}$ with a point $r$ removed. Points $p$ and $q$ have $(p,q) \in \overline{J}$ but cannot be connected by a causal curve.}
		\label{fig:minkowski_minus_point}
	\end{figure}
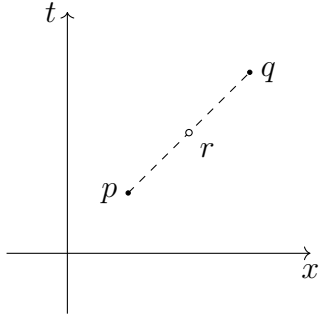
\end{example}

Examples such as this make the characterisation of $\overlap$-monics in $\mathsf{SpTm}$ given by \cref{thm:overlap-monics-SpTm} difficult to handle, both conceptually and technically.
However, the characterisation simplifies greatly when we restrict to spacetimes with suitable causal properties.

% Subspacetimes which intersect as sets are necessarily not causally disjoint.
% Formally, the category $\mathsf{SpTm}$ is concrete, so we may consider on it the setwise-disjointness relation $\disjoint_\mathrm{set}$ as in \cref{ex:setwise_disjointness:concrete}:
% \begin{proposition} \label{prop:causal_implies_setwise_disjointness}
% 	The identity functor $(\mathsf{SpTm}, \disjoint_J) \to (\mathsf{SpTm}, \disjoint_\mathrm{set})$ preserves but does not reflect $\disjoint$-relations.
% \end{proposition}
% \begin{proof}
% 	Say $M_1 \xrightarrow{f_1} N \xleftarrow{f_2} M_2$ has $f_1 \disjoint_J f_2$.
% 	Since $J_N$ is reflexive, so too is $sJ_N$, i.e. $sJ_N$ contains the diagonal $\Delta_N := \left\{(p,p) \in N \times N\right\}$.
% 	Then
% 	\begin{equation*}
% 		\Delta_N \cap \left[f_1(M_1) \times f_2(M_2)\right] \subseteq sJ_N \cap \left[f_1(M_1) \times f_2(M_2)\right] = \varnothing,
% 	\end{equation*}
% 	so that subsets $f_1(M_1)$ and $f_2(M_2)$ of $N$ do not intersect, showing $f_1 \disjoint_\mathrm{set} f_2$.
% 	
% 	The converse that $f_1 \disjoint_\mathrm{set} f_2$ implies $f_1 \disjoint_J f_2$ does not hold: for counter-example, take any two open subsets of a spacetime which are non-intersecting but not spacelike separated.
% \end{proof}

Before demonstrating this, we observe that
unions of subspacetimes respect causal disjointness:
\todo[color=cyan]{this looks like a coverage, making $\mathsf{SpTm}$ a site - check pullback stability}
\todo{better to move the following lemma closer to where I use it?}
\begin{lemma} \label{lem:unions_preserve_causal_disjointness}
	Let $M_1 \xrightarrow{f_1} N \xleftarrow{f_2} M_2$ be any conterminous pair in $\mathsf{SpTm}$,
	and let $\left\{i_\alpha : U_\alpha \hookrightarrow M_1 \right\}_{\alpha \in \mathcal{I}}$ be a family (not necessarily countable) of subspacetime inclusions in $\mathsf{SpTm}$ such that $M_1 = \bigcup_{\alpha \in \mathcal{I}} U_\alpha$.
	\begin{equation*}
		\text{If} \qquad
		\begin{tikzcd}[cramped, column sep=small]
								& N \\
			M_1 \arrow[ur, "f_1"{name=1, near start}] \arrow[rr, phantom, "\disjoint_J"]	& & M_2 \arrow[ul, "f_2"{near start, name=2,swap}] \\
			U_\alpha \arrow[u, hook, "i_\alpha"]
		\end{tikzcd}
		\qquad \text{for each } \alpha \in \mathcal{I}, \text{ then} \qquad
		\begin{tikzcd}[cramped, column sep=small]
								& N \\
			M_1 \arrow[ur, "f_1"{name=1, near start}]	& & M_2 \arrow[ul, "f_2"{near start, name=2,swap}]
			\arrow[phantom, from=1, to=2, "\disjoint_J"{below}]
		\end{tikzcd}.
	\end{equation*}
\end{lemma}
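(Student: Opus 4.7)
The plan is to unfold the definition of $\disjoint_J$ and observe that the statement reduces to a straightforward set-theoretic fact about images of unions under a function. By \cref{def:causal_discon}, the hypothesis $f_1 \circ i_\alpha \disjoint_J f_2$ for each $\alpha$ means that $sJ_N$ does not intersect $(f_1 \circ i_\alpha)(U_\alpha) \times f_2(M_2) = f_1(U_\alpha) \times f_2(M_2)$, where the equality holds because $i_\alpha$ is the inclusion of $U_\alpha$ into $M_1$. The goal $f_1 \disjoint_J f_2$ is that $sJ_N$ does not intersect $f_1(M_1) \times f_2(M_2)$.

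The key step is then to use that images commute with unions: since $M_1 = \bigcup_{\alpha \in \mathcal{I}} U_\alpha$, we have
\begin{equation*}
    f_1(M_1) = f_1\!\left(\bigcup_{\alpha \in \mathcal{I}} U_\alpha\right) = \bigcup_{\alpha \in \mathcal{I}} f_1(U_\alpha),
\end{equation*}
and consequently
\begin{equation*}
    f_1(M_1) \times f_2(M_2) = \bigcup_{\alpha \in \mathcal{I}} \bigl[ f_1(U_\alpha) \times f_2(M_2) \bigr].
\end{equation*}
If $sJ_N$ intersected $f_1(M_1) \times f_2(M_2)$, then a witness pair would have to lie in $f_1(U_\alpha) \times f_2(M_2)$ for at least one $\alpha \in \mathcal{I}$, contradicting the hypothesis for that $\alpha$. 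Hence $sJ_N$ does not intersect $f_1(M_1) \times f_2(M_2)$, giving $f_1 \disjoint_J f_2$.

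There is essentially no obstacle here: the proof is a one-line unpacking of definitions, using only that the direct image of a union is the union of the direct images. No causal or conformal structure beyond the definition of $\disjoint_J$ is needed; the same argument would go through for the pullback along $\mathsf{SpTm} \to \mathsf{sBin}$ of any $\disjoint$-relation defined setwise as in \cref{ex:binrel_disjointness}. The hypotheses that $i_\alpha$ are specifically subspacetime inclusions (rather than arbitrary morphisms with images covering $M_1$) and that the family is possibly uncountable are not used in any essential way.
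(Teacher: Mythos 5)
Your proof is correct and is essentially the paper's own argument: both unfold \cref{def:causal_discon} and observe that any witness of $f_1 \overlap_J f_2$ lands in $f_1(U_\alpha) \times f_2(M_2)$ for some $\alpha$, the paper phrasing this by picking the point $p \in U_\alpha$ directly rather than via the image-of-a-union identity. Your closing remark that the argument generalises to any setwise-defined $\disjoint$-relation is accurate but not needed for the lemma.
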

\begin{proof}
	Assume $f_1 \overlap_J f_2$, so there exist $p \in M_1$, $q \in M_2$ with $(f_1(p), f_2(q)) \in sJ_N$.
	Since $M_1 = \bigcup_{\alpha \in \mathcal{I}} U_\alpha$, there is an $\alpha \in \mathcal{I}$ with $p \in U_\alpha$.
	Then $(f_1 \circ i_\alpha(p), f_2(q)) \in sJ_N$, showing that $f_1 \circ i_\alpha \overlap_J f_2$.
\end{proof}
The same property passes to any $\disjoint$-subcategory of $\mathsf{SpTm}$:
\begin{corollary} \label{cor:unions_preserve_causal_disjointness}
	Let $\mathsf{C}$ be any $\disjoint$-subcategory of $(\mathsf{SpTm}, \disjoint_J)$. Then \cref{lem:unions_preserve_causal_disjointness} applies with $\mathsf{C}$ in place of $\mathsf{SpTm}$.
\end{corollary}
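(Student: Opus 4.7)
The plan is to leverage \cref{lem:unions_preserve_causal_disjointness} directly by pushing the hypothesis forward along the $\disjoint$-subcategory inclusion $\iota : \mathsf{C} \hookrightarrow \mathsf{SpTm}$ and then pulling the conclusion back. Since $\mathsf{C}$ is a $\disjoint$-subcategory, the $\disjoint$-relation on $\mathsf{C}$ is by definition the pullback $\disjoint_\iota$ of $\disjoint_J$ along $\iota$, so $\iota$ both preserves and reflects $\disjoint$-relations. This is the only structural fact I need.

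Concretely, I would set up the statement as follows. Let $f_1 : M_1 \rightarrow N \leftarrow M_2 : f_2$ be a conterminous pair in $\mathsf{C}$, and let $\{ i_\alpha : U_\alpha \hookrightarrow M_1\}_{\alpha \in \mathcal{I}}$ be a family of subspacetime inclusions in $\mathsf{C}$ with $M_1 = \bigcup_\alpha U_\alpha$, such that $f_1 \circ i_\alpha \disjoint_\iota f_2$ in $\mathsf{C}$ for each $\alpha$. Applying $\iota$, each $f_1 \circ i_\alpha$ and $f_2$ remain morphisms of $\mathsf{SpTm}$ with the same underlying maps, and the hypothesis $f_1 \circ i_\alpha \disjoint_\iota f_2$ translates (by the definition of the pullback $\disjoint$-relation, i.e. because $\iota$ is $\disjoint$-preserving) to $f_1 \circ i_\alpha \disjoint_J f_2$ in $\mathsf{SpTm}$. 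The $i_\alpha$, viewed in $\mathsf{SpTm}$, still form a family of subspacetime inclusions whose images cover $M_1$, so \cref{lem:unions_preserve_causal_disjointness} applies to give $f_1 \disjoint_J f_2$ in $\mathsf{SpTm}$.

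Finally, I would pull this back along $\iota$: since $\iota$ reflects $\disjoint$-relations, $f_1 \disjoint_J f_2$ in $\mathsf{SpTm}$ implies $f_1 \disjoint_\iota f_2$ in $\mathsf{C}$, which is the desired conclusion. There is no real obstacle here; the only thing to be careful about is that the bookkeeping between the two relations $\disjoint_J$ and $\disjoint_\iota$ is handled by the preserve-and-reflect property, and that the $i_\alpha$ are genuine $\mathsf{SpTm}$-morphisms once we apply $\iota$ — both of which are immediate from the definition of a $\disjoint$-subcategory recalled just before \cref{lem:unions_preserve_causal_disjointness}.
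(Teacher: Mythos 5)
Your argument is correct and is essentially the paper's own proof: push the hypothesis along the $\disjoint$-subcategory inclusion using that it preserves $\disjoint$-relations, apply \cref{lem:unions_preserve_causal_disjointness} in $\mathsf{SpTm}$, and pull the conclusion back using that the inclusion reflects $\disjoint$-relations. No gaps; the bookkeeping between $\disjoint_\iota$ and $\disjoint_J$ is exactly as the paper handles it.
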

\begin{proof}
	Say $\left\{ i_\alpha : U_\alpha \hookrightarrow M_1\right\}_{\alpha \in \mathcal{I}}$ are the given subspacetime inclusions in $\mathsf{C}$,
	and say for each $\alpha \in \mathcal{I}$ that $f_1 \circ i_\alpha \disjoint f_2$ in $\mathsf{C}$.
	Since the $\disjoint$-subcategory inclusion $\mathsf{C} \hookrightarrow \mathsf{SpTm}$ preserves $\disjoint$-relations, this means that $f_1 \circ i_\alpha \disjoint_J f_2$ in $\mathsf{SpTm}$.
	Then $f_1 \disjoint_J f_2$ in $\mathsf{SpTm}$ by \cref{lem:unions_preserve_causal_disjointness}, so $f_1 \disjoint_J f_2$ in $\mathsf{C}$ since $\mathsf{C} \hookrightarrow \mathsf{SpTm}$ reflects $\disjoint$-relations.
\end{proof}
Another immediate corollary of \cref{lem:unions_preserve_causal_disjointness} is that, since coproducts in $\mathsf{SpTm}$ are disjoint unions, coproducts preserve the causal disjointness relation $\disjoint_J$.

\subsection{Spacetimes with good causal properties}
\label{sec:spacetimes-cats-relativistic:causal_properties}

In the definition of $\mathsf{Loc}$, objects are restricted to be only those spacetimes which are globally hyperbolic.
Global hyperbolicity is the strongest of a hierarchy of causal properties that a spacetime may possess.
As we restrict from $\mathsf{SpTm}$ to full subcategories of spacetimes obeying specified causal properties, the characterisation in \cref{thm:overlap-monics-SpTm} of $\overlap$-monics with respect to $\disjoint_J$ simplifies.

The most significant simplification occurs when we restrict our spacetimes to a level of the causal hierarchy called causal simplicity.
After demonstrating such simplification, we relate $\overlap$-monics to the definition of $\mathsf{Loc}$.

We begin by recalling those causal properties in the hierarchy that will be relevant; we refer the reader to \textcite{Minguzzi2019} for an extensive modern review of the hierarchy.
Along the way, we also note some consequences for conformal maps when their domains or codomains have given causal properties.

A spacetime $(M,g)$ is \emph{causal} if it contains no closed causal curves. Equivalently, the causal relation $J$ is anti-symmetric: if $(p,q) \in J$ and $(q,p) \in J$ then $p = q$.
This property lies near the bottom of the hierarchy, but already leads to several useful facts about conformal maps that take causal spacetimes as domain or codomain.

\begin{proposition} \label{prop:conf_map_to_causal_strictly_preserves_J}
	Let $M, N$ be spacetimes with $N$ causal, and let $f : M \to N$ be a conformal map.
	Then $f$ strictly preserves $J$ up to time-orientation reversal.
\end{proposition}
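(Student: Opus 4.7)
The plan is to show the only non-trivial condition: that for $(p,q) \in J_M$, $f(p) = f(q)$ forces $p = q$. The first part of the definition, that $f$ preserves $J$ up to time-orientation reversal, is automatic for any conformal map, as already observed in the preceding discussion.

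First I would take $(p,q) \in J_M$ with $f(p) = f(q)$, and assume for contradiction that $p \neq q$. Then by definition of $J_M$ there is a future-directed causal curve $\gamma : [0,1] \to M$ with $\gamma(0) = p$ and $\gamma(1) = q$. Let $\widetilde{M}$ denote the path-component of $M$ containing the image of $\gamma$ (a single path-component, by continuity of $\gamma$). Since $f$ is conformal, $f \circ \gamma$ is a causal curve in $N$ from $f(p)$ to $f(q)$; it is future-directed if $f$ preserves time-orientation on $\widetilde{M}$ and past-directed if $f$ reverses time-orientation on $\widetilde{M}$. In either case, $f\circ\gamma$ is a causal curve whose two endpoints coincide in $N$.

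The key remaining point is that $f \circ \gamma$ is non-constant, so that it constitutes a genuine closed causal curve. This follows because every conformal map is a smooth immersion (as recalled at the start of \cref{sec:spacetimes-cats-relativistic}): the tangent vector of $f \circ \gamma$ at parameter $t$ is $df_{\gamma(t)}(\dot{\gamma}_t)$, which is non-zero since $\dot{\gamma}_t$ is causal (hence non-zero) and $df_{\gamma(t)}$ is injective. Therefore $f \circ \gamma$ has non-vanishing tangent everywhere and cannot be constant, exhibiting a non-trivial closed causal curve in $N$.

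This contradicts the hypothesis that $N$ is causal, which by definition admits no closed causal curves. Hence $p = q$, completing the proof. The main subtlety to be careful about is ensuring $f \circ \gamma$ is a legitimate closed causal curve rather than a constant curve collapsing to the point $f(p) = f(q)$, which is precisely why the immersion property of conformal maps is needed; no further causal hypotheses on $M$ or on $f$ are required.
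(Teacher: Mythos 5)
Your proposal is correct and follows essentially the same route as the paper: compose the future-directed causal curve witnessing $(p,q) \in J_M$ with $f$, observe that $f \circ \gamma$ is a non-constant (since $f$ is a smooth immersion) causal curve with coinciding endpoints, and contradict causality of $N$. The explicit check that the immersion property rules out a constant curve is exactly the parenthetical remark in the paper's proof.
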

\begin{proof}
	Since it is conformal, $f$ preserves $J$ up to time-orientation reversal.
	Take $(p,q) \in J_M$ such that $f(p) = f(q)$, and assume that there is a future-directed causal curve $\gamma : [0,1] \to M$ from $p = \gamma(0)$ to $q = \gamma(1)$.
	Then $f \circ \gamma : [0,1] \to N$ has $f \circ \gamma (0) = f(p) = f(q) = f \circ \gamma (1)$, and so is a closed causal curve.
	(Since $f$ is a smooth immersion, $f \circ \gamma$ is not constant.)
	This is in contradiction with $N$ a causal spacetime, so we must have $p = q$.
\end{proof}

\begin{corollary} \label{cor:codomain_caus_reflect_J_iff_reflect_sJ}
	Let $M, N$ be spacetimes with $N$ causal, and let $f : M \to N$ be a conformal map.
	Then $f$ reflects $J$ up to time-orientation reversal if and only if $f$ reflects the symmetric closure $sJ$.
\end{corollary}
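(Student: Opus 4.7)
The plan is to prove the biconditional by treating the two directions separately, noting that only the backward direction requires the causality assumption on $N$. For the forward direction, if $f$ reflects $J$ up to time-orientation reversal, then from $sJ = J \cup J^T$ it is immediate that $f$ reflects $sJ$: given $(f(p), f(q)) \in sJ_N$, either $(f(p), f(q)) \in J_N$ or $(f(q), f(p)) \in J_N$, and in either case reflection of $J$ up to time-orientation reversal produces an element of $J_M$ whose symmetric closure contains $(p,q)$.

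For the backward direction, assume $f$ reflects $sJ$ and suppose $(f(p), f(q)) \in J_N$. Then $(f(p), f(q)) \in sJ_N$, so by hypothesis $(p,q) \in sJ_M$; equivalently, $(p,q) \in J_M$ or $(q,p) \in J_M$. In either case there is a causal curve in $M$ between $p$ and $q$ (or $p = q$), so $p$ and $q$ lie in a common path-component $\widetilde{M}$ of $M$. It remains to identify which of $(p,q)$ and $(q,p)$ lies in $J_M$ according to whether $f$ preserves or reverses time-orientation on $\widetilde{M}$.

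Suppose $f$ preserves time-orientation on $\widetilde{M}$; the reversing case is symmetric. If $(p,q) \in J_M$ we are done, so consider the sub-case where only $(q,p) \in J_M$. A future-directed causal curve in $M$ from $q$ to $p$ is then sent by $f$ to a future-directed causal curve in $N$ from $f(q)$ to $f(p)$, giving $(f(q), f(p)) \in J_N$. Combined with $(f(p), f(q)) \in J_N$, anti-symmetry of $J_N$ (which holds because $N$ is causal) forces $f(p) = f(q)$. Invoking \cref{prop:conf_map_to_causal_strictly_preserves_J}, which gives strict preservation of $J$ up to time-orientation reversal, the relation $(q,p) \in J_M$ together with $f(q) = f(p)$ yields $q = p$, so $(p,q) \in J_M$ by reflexivity.

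The main subtlety of the argument is precisely this wrong-direction sub-case: reflection of $sJ$ alone is insufficient to distinguish $J_M$ from its transpose, and the distinction is recovered only by using causality of $N$ to suppress the pathological situation through anti-symmetry and strict preservation. All other steps are routine bookkeeping with the definitions of the relations and path-components.
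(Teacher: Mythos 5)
Your proof is correct and follows essentially the same route as the paper: the forward direction is the trivial $sJ = J \cup J^T$ observation, and the backward direction uses causality of $N$ (anti-symmetry plus strict preservation via \cref{prop:conf_map_to_causal_strictly_preserves_J}) to rule out the wrong-direction sub-case by forcing $p = q$. No gaps.
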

\begin{proof}
$(\Rightarrow)$:
	Trivial using $sJ = J \cup J^T$.
	
$(\Leftarrow)$:
	Take any $p,q \in M$ such that $\left(f(p), f(q)\right) \in J_N$.
	Since $f$ reflects $sJ$ and $J_N \subseteq sJ_N$, it follows that $(p,q) \in sJ_M$ so
% 	at least one of the following cases holds:
	either $(p,q) \in J_M$ or $(q,p) \in J_M$.
	In either case, $p$ and $q$ lie in the same path-component $\widetilde{M}$ of $M$.
	
	Say $f$ preserves time-orientation on $\widetilde{M}$; we show that the first case $(p,q) \in J_M$ necessarily holds by showing that the second case $(q,p) \in J_M$ implies $p = q$.
	Assume that $(q,p) \in J_M$; conformal $f$ preserves $J$ on $\widetilde{M}$, from which it results that both $(f(q), f(p))$ and $(f(p), f(q))$ lie in $J_N$.
	Hence $f(p) = f(q)$ because $N$ is causal.
% 	($J_N$ is anti-symmetric).
	But also because $N$ is causal, $f$ \emph{strictly} preserves $J$ by the preceding proposition;
% 	\cref{prop:conf_map_to_causal_strictly_preserves_J};
	therefore $p=q$.
	
	Similarly if $f$ reverses time-orientation on $\widetilde{M}$, then the first case $(p,q) \in J_M$ implies that $p=q$, so that the second case $(q,p) \in J_M$ necessarily holds.
% 	So $f$ reflects $J$ up to time-orientation reversal.
\end{proof}

% In case the domain of a conformal map is a causal spacetime, we have the following 

\begin{proposition} \label{prop:domain_caus_reflect_J_injective}
	Let $M, N$ be spacetimes with $M$ causal, and let $f : M \to N$ be a conformal map.
	If $f$ reflects $J$ up to time-orientation reversal then $f$ is injective.
\end{proposition}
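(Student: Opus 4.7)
The plan is to assume $f(p) = f(q)$ for some $p, q \in M$ and deduce $p = q$, by producing both $(p,q)$ and $(q,p)$ in $J_M$ and invoking anti-symmetry of $J_M$ on the causal spacetime $M$.

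First, I would use reflexivity of the causal relation on $N$: since $f(p) = f(q)$, trivially both $(f(p), f(q)) \in J_N$ and $(f(q), f(p)) \in J_N$. Now I apply the hypothesis that $f$ reflects $J$ up to time-orientation reversal twice. The first application gives that $p$ and $q$ lie in a common path-component $\widetilde{M}$ of $M$, and the second gives the same conclusion redundantly.

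Next, I split on whether $f$ preserves or reverses time-orientation on $\widetilde{M}$. In the preserving case, applying reflection to $(f(p), f(q)) \in J_N$ yields $(p,q) \in J_M$, while applying it to $(f(q), f(p)) \in J_N$ yields $(q,p) \in J_M$. In the reversing case, the roles are simply swapped, but one still obtains both $(p,q) \in J_M$ and $(q,p) \in J_M$. So in either case both orderings hold.

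Finally, since $M$ is causal, the relation $J_M$ is anti-symmetric, and we conclude $p = q$. There is no real obstacle here: the entire argument is a direct application of the definition of reflecting $J$ up to time-orientation reversal, combined with reflexivity of $J_N$ and anti-symmetry of $J_M$. The only minor subtlety is bookkeeping the two sub-cases (time-orientation preserving vs.\ reversing on $\widetilde{M}$), but both collapse to the same conclusion.
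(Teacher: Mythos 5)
Your proof is correct and is essentially the same as the paper's: both deduce from $f(p)=f(q)$ that $(f(p),f(q))$ and $(f(q),f(p))$ lie in $J_N$ (by reflexivity), reflect these back to get both $(p,q)$ and $(q,p)$ in $J_M$, and conclude $p=q$ by anti-symmetry of $J_M$ on the causal spacetime $M$. Your explicit case split on time-orientation preservation versus reversal is just a slightly more verbose rendering of the same step the paper handles implicitly.
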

\begin{proof}
	Take $p,q \in M$ with $f(p) = f(q)$.
	Then both $\left(f(p), f(q)\right)$ and $\left(f(q), f(p)\right)$ lie in $J_N$, so that both $(p,q)$ and $(q,p)$ lie in $J_M$ for $f$ which reflects $J$ up to time orientation reversal.
	But then $p=q$ since $J_M$ is antisymmetric.
\end{proof}

Proceeding further up the hierarchy of causal properties,
a spacetime $(M,g)$ is \emph{non-imprisoning} (also called non-totally-imprisoning) if no future-inextendable causal curve is contained in a compact set.
Equivalently, $(M,g)$ is non-imprisoning if no past-inextendable causal curve is contained in a compact set.
\todo{Minguzzi states this, but the proof goes via a lot of extra technology.}
Any non-imprisoning spacetime is causal, since any closed causal curve has compact image and may be made inextendable by winding over its image.

A spacetime $(M,g)$ is \emph{causally simple} if it is causal and its causal relation $J$ is topologically closed ($J = \overline{J}$).
Topological closedness of $J$ has several equivalent characterisations \cite[Theorem 4.12]{Minguzzi2019}:
\begin{proposition} \label{prop:characterisation_J_closed}
	For any spacetime $(M,g)$, the following are equivalent:
	\begin{enumerate}[label=\textup{(\roman*)}]
		\item the causal relation $J$ is a closed subset of $M \times M$,
		\item causal futures $J^+(p)$ and pasts $J^-(p)$ are closed subsets of $M$ for all $p \in M$,
		\item causal diamonds $J^+(p) \cap J^-(q)$ are closed subsets of $M$ for all $p,q \in M$.
	\end{enumerate}
\end{proposition}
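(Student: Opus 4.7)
The plan is to prove the cycle of implications \textup{(i)} $\Rightarrow$ \textup{(ii)} $\Rightarrow$ \textup{(iii)} together with \textup{(iii)} $\Rightarrow$ \textup{(ii)} $\Rightarrow$ \textup{(i)}, thereby establishing pairwise equivalence. The first two are essentially free: for \textup{(i)} $\Rightarrow$ \textup{(ii)}, fix $p$ and note that the map $\iota_p : M \to M \times M$, $q \mapsto (p,q)$ is continuous, so $J^+(p) = \iota_p^{-1}(J)$ is the preimage of a closed set; symmetrically for $J^-(p)$. The implication \textup{(ii)} $\Rightarrow$ \textup{(iii)} is immediate since finite intersections of closed sets are closed.

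The content lies in the reverse implications. For \textup{(iii)} $\Rightarrow$ \textup{(ii)}, I would fix $p$ and take a sequence $q_n \in J^+(p)$ with $q_n \to q$. Choose any $b \in I^+(q)$ (this chronological future is nonempty, as one may produce such $b$ slightly to the timelike future of $q$ inside a geodesically convex normal neighbourhood). Since $I^-(b)$ is open and contains $q$, eventually $q_n \in I^-(b) \subseteq J^-(b)$, so $q_n$ lies in the closed diamond $J^+(p) \cap J^-(b)$; hence so does its limit, yielding $q \in J^+(p)$. The argument for $J^-(p)$ is symmetric.

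For \textup{(ii)} $\Rightarrow$ \textup{(i)}, suppose $(p_n, q_n) \to (p, q)$ with $q_n \in J^+(p_n)$. Pick any $a \in I^-(p)$; since $I^+(a)$ is open and contains $p$, eventually $p_n \in I^+(a) \subseteq J^+(a)$, and then by transitivity of $J$ also $q_n \in J^+(p_n) \subseteq J^+(a)$. By \textup{(ii)}, $J^+(a)$ is closed, so $q \in J^+(a)$, i.e.\ $a \in J^-(q)$. As $a$ was arbitrary in $I^-(p)$, this gives $I^-(p) \subseteq J^-(q)$, and closedness of $J^-(q)$ from \textup{(ii)} upgrades this to $\overline{I^-(p)} \subseteq J^-(q)$. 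Finally $p \in \overline{I^-(p)}$ in any spacetime (one approximates $p$ from its past along timelike geodesics inside a convex normal neighbourhood), so $p \in J^-(q)$ and thus $(p,q) \in J$.

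The main obstacle is the combined step \textup{(iii)} $\Rightarrow$ \textup{(i)}, which is handled above by routing through \textup{(ii)}. The essential inputs not yet spelled out in the excerpt are standard facts from Lorentzian causality theory: chronological futures and pasts $I^\pm(p)$ are open, every point lies in the closure of its own chronological past and future, and $J$ is transitive. Each is a routine consequence of geodesically convex normal neighbourhoods, so the proof requires no global hypothesis on $(M,g)$.
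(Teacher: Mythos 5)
Your proof is correct. Note, though, that the paper does not actually prove this proposition: it quotes it as a known result, citing Minguzzi's review (Theorem 4.12 there), so there is no internal argument to compare against; what you have written is a self-contained substitute along the standard lines of causality theory. The easy directions (i) $\Rightarrow$ (ii) via preimages under $q \mapsto (p,q)$ and (ii) $\Rightarrow$ (iii) via finite intersections are fine, and the two substantive steps are sound: for (iii) $\Rightarrow$ (ii) you trap the tail of the sequence in a closed diamond $J^+(p)\cap J^-(b)$ using openness of $I^-(b)$, and for (ii) $\Rightarrow$ (i) you use openness of $I^+(a)$, transitivity of $J$, and $p \in \overline{I^-(p)}$ to conclude $I^-(p) \subseteq J^-(q)$ and hence $p \in J^-(q)$. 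The auxiliary facts you invoke (openness and nonemptiness of $I^\pm(p)$, $I \subseteq J$, $p \in \overline{I^\pm(p)}$, transitivity of $J$) hold in an arbitrary spacetime with no causality hypotheses, exactly as needed here, and your sequential arguments are legitimate because $M$, hence $M \times M$, is first countable (indeed metrizable, being a paracompact Hausdorff manifold). Routing (iii) $\Rightarrow$ (i) through (ii) is a clean decomposition; the only cosmetic remark is that you prove more implications than a minimal cycle requires, which costs nothing.
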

\todo[color=cyan]{Can I find a concise proof that causally simple implies non-imprisoning? Try to use that $J^+(p)$ is closed, so $J^+(p) \cap K$ is compact for any compact $K$...}
If spacetime $(M,g)$ is causally simple, then it is non-imprisoning; this is shown via several intermediate levels of the causal hierarchy in \cite[Section 4]{Minguzzi2019}.

A spacetime $(M,g)$ is \emph{globally hyperbolic} if it is causal and its causal diamonds $J^+(p) \cap J^-(q)$ are compact for all $p,q \in M$.
Compactness of causal diamonds implies their closedness, so any globally hyperbolic spacetime is causally simple.
There are several equivalent characterisations of global hyperbolicity; one important characterisation is that the spacetime $(M,g)$ have a Cauchy surface.

\begin{remark} \label{rm:cauchy_surfaces_smooth}
	Cauchy surfaces may defined merely to be subsets $S \subseteq M$ met exactly once by every inextendable timelike curve \cite{BernalSanchez2003,ONeill1983}, from which it follows that they are achronal, topologically embedded continuous submanifolds of codimension 1, and met by any inextendable causal curve (possibly at more than one point, in case of null segments of causal curves).
	
	However, \textcite{BernalSanchez2003} show that in any globally hyperbolic spacetime there exists a Cauchy surface which is not merely topologically embedded and achronal but also smoothly embedded and spacelike (so acausal).
	So without loss of generality, we take Cauchy surfaces to be smoothly embedded and spacelike.
\end{remark}

\begin{remark}
	Classical definitions of global hyperbolicity and causal simplicity appear more restrictive than those we have given. For instance, \textcite{HawkingEllis1973} use strong causality in place of causality in the definition of global hyperbolicity. \textcite{BernalSanchez2007} show that the definitions given above are equivalent to the classical ones.
\end{remark}

We note here that arbitrarily small globally hyperbolic neighbourhoods can be found around any point $p$ in an any spacetime $M$: there is a neighbourhood base at $p$ of $M$ consisting of open, globally hyperbolic sets.
See \cite[Corollary 2]{Minguzzi2014} and \cite[Theorem 1.35]{Minguzzi2019}, though these discuss such neighbourhood bases with many more good topological and pseudo-Riemannian properties than are needed here.

When a conformal map has domain a globally hyperbolic spacetime and codomain any spacetime of the same dimension, we obtain a converse to \cref{prop:causally_convex_conf_emb_reflect_J}:

\begin{proposition} \label{prop:reflect_J_causally_convex_conf_emb}
	Let $M, N$ be spacetimes of the same dimension $\dim M = \dim N$ with $M$ globally hyperbolic, and let $f : M \to N$ be a conformal map.
	If $f$ reflects $J$ up to time-orientation then $f$ is a smooth embedding with image $f(M)$ causally convex in $N$.
\end{proposition}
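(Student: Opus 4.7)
The plan is to deduce the two conclusions in sequence. Since $M$ is globally hyperbolic it is in particular causal, so \cref{prop:domain_caus_reflect_J_injective} immediately gives that $f$ is injective. Any conformal map is a smooth immersion, and a smooth immersion between manifolds of equal dimension is a local diffeomorphism (by the same reasoning as in \cref{rm:sptm-mor-local-diffeo}); an injective local diffeomorphism is a smooth embedding onto an open subset, so $f(M) \subseteq N$ is open and $f : M \xrightarrow{\sim} f(M)$ is a diffeomorphism.

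For causal convexity I would argue by contradiction. Let $\gamma : [0,1] \to N$ be a causal curve with $\gamma(0) = f(p)$ and $\gamma(1) = f(q)$, and assume $\gamma$ is future-directed (the past-directed case is symmetric). Suppose $\gamma([0,1]) \not\subseteq f(M)$. Openness of $f(M)$ in $N$ makes $S := \gamma^{-1}(N \setminus f(M))$ a nonempty closed subset of $[0,1]$ disjoint from $\{0,1\}$; set $t_- := \inf S \in (0,1)$. Then $\gamma(t_-) \notin f(M)$ while $\gamma([0,t_-)) \subseteq f(M)$, so the causal lift $\delta_- := f^{-1} \circ \gamma|_{[0,t_-)} : [0,t_-) \to M$ is well-defined and starts at $p$.

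Because $f$ reflects $J$ up to time-orientation, $p$ and $q$ lie in a common path-component $\widetilde{M}$ of $M$; handle first the case where $f$ preserves time-orientation on $\widetilde{M}$ (the reverse case is analogous with future and past exchanged). Then $\delta_-$ takes values in $\widetilde{M}$ by continuity and is future-directed, giving $\delta_-([0,t_-)) \subseteq J^+_M(p)$. For any $t \in [0,t_-)$, the restriction $\gamma|_{[t,1]}$ witnesses $(f(\delta_-(t)), f(q)) \in J_N$; since $f$ reflects $J$ and preserves time-orientation on $\widetilde{M}$, we obtain $(\delta_-(t), q) \in J_M$. Hence $\delta_-([0,t_-))$ lies in the causal diamond $J^+_M(p) \cap J^-_M(q)$, which is compact by global hyperbolicity of $M$.

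To finish, pick any sequence $t_n \to t_-^-$. Spacetimes are metrisable (being paracompact Hausdorff), so compactness of the diamond yields a subsequence with $\delta_-(t_{n_k}) \to x \in M$; continuity of $f$ then gives $f(x) = \lim f(\delta_-(t_{n_k})) = \lim \gamma(t_{n_k}) = \gamma(t_-)$, contradicting $\gamma(t_-) \notin f(M)$. The main obstacle is this trapping step: the argument demands that the lift $\delta_-$ be forced into a compact subset of $M$, and the compact causal diamonds supplied by global hyperbolicity are exactly what meshes with the $J$-reflection hypothesis to produce one.
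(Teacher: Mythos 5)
Your proof is correct, and it follows the paper's strategy for most of its length: injectivity from \cref{prop:domain_caus_reflect_J_injective} plus the equal-dimension local-diffeomorphism observation of \cref{rm:sptm-mor-local-diffeo} give the embedding; then the infimum parameter $t_-$, the lift $\delta_-$ of $\gamma|_{[0,t_-)}$ into $M$, and the trapping of $\delta_-$ in the compact diamond $J^+_M(p)\cap J^-_M(q)$ are exactly the paper's moves (the paper traps via both $(p,\delta(t))\in J_M$ and $(\delta(t),q)\in J_M$; your use of $J^+_M(p)$ from future-directedness of $\delta_-$ plus reflection for the $J^-_M(q)$ half is an equivalent bookkeeping). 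Where you genuinely diverge is the final contradiction. The paper first argues that $\delta$ is future-inextendable (no limit as $t \to t_0^-$, else $\gamma(t_0)\in f(M)$) and then invokes the non-imprisoning property of $M$ — a fact it imports from the causal hierarchy via causal simplicity — to rule out an inextendable causal curve inside a compact set. You instead extract a convergent subsequence $\delta_-(t_{n_k})\to x$ from the compact diamond (legitimate, since spacetimes here are paracompact Hausdorff manifolds, hence metrisable, so compact subsets are sequentially compact) and use continuity of $f$ and $\gamma$ together with uniqueness of limits to get $f(x)=\gamma(t_-)$, contradicting $\gamma(t_-)\notin f(M)$ directly. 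Your ending is the more elementary and self-contained one: it needs only compactness of causal diamonds and continuity, with no appeal to non-imprisonment and no inextendability claim. The paper's ending, in exchange, phrases the obstruction in standard causal-hierarchy language (imprisoned inextendable curves), which keeps the argument aligned with the cited causality-theory toolkit; but nothing in your version is lost or weaker, and the hypotheses used are identical.
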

\begin{proof}
	Since $f$ is conformal and $\dim M = \dim N$, it follows that $f$ is a local diffeomorphism as in \cref{rm:sptm-mor-local-diffeo}.
	By \cref{prop:domain_caus_reflect_J_injective}, $f$ is injective because $M$ is globally hyperbolic and hence causal.
	Thus $f : M \to N$ is a smooth embedding.
	
	Pick any distinct $p,q \in M$ with future-directed causal curve $\gamma : [0,1] \to N$ from $\gamma(0) = f(p)$ to $\gamma(1) = f(q)$.
	To show that $f(M)$ is causally convex in $N$, assume for the sake of contradiction that $\gamma$ leaves $f(M)$.
	Then $\gamma^{-1}\left(N \setminus f(M) \right)$ is a non-empty subset of $(0,1)$; define
	\begin{equation*}
		t_0 := \inf \gamma^{-1} \left( N \setminus f(M) \right).
	\end{equation*}
	Since $f$ is a local diffeomorphism, it is an open map; hence $N \setminus f(M)$ is closed and $\gamma^{-1}(N \setminus f(M))$ is closed. So, $t_0$ lies in $\gamma^{-1} \left( N \setminus f(M) \right)$.

	Consider the restriction $\gamma|_{[0,t_0)} : [0, t_0) \to f(M)$.
	Because $f : M \to N$ is a conformal embedding, there is a unique causal curve $\delta : [0, t_0) \to M$ with $\gamma|_{[0,t_0)} = f \circ \delta$.
	Since $f$ reflects $J$ up to time-orientation, $p$, $q$ and curve $\delta$ lie in the same path-component $\widetilde{M}$ of $M$.
	If $f$ preserves time-orientation on $\widetilde{M}$ then $\delta$ is future-directed. Moreover, $\delta$ is future-inextendable: for, say $\lim_{t \to t_0^-} \delta(t) \in M $ exists; then
	\begin{equation*}
		\gamma(t_0) 
		= \lim_{t \to t_0^-} \gamma(t)
		= \lim_{t \to t_0^-} f \circ \delta (t)
		= f \Big( \lim_{t \to t_0^-} \delta(t) \Big)
		\in f(M),
	\end{equation*}
	in contradiction with $t_0 \in \gamma^{-1} \left( N \setminus f(M) \right)$.
	Likewise if $f$ reverses time-orientation on $\widetilde{M}$ then $\delta$ is past-directed and past-inextendable.

	For any $t \in [0,t_0)$, the future-directed causal curve $\gamma$ exhibits that $(f(p), f \circ \delta(t))$ and $(f \circ \delta (t), f(q))$ lie in $J_N$.
	Say $f$ preserves time-orientation on $\widetilde{M}$.
	Then because $f$ reflects $J$ up to time-orientation reversal, it follows that $(p, \delta(t))$ and $(\delta(t), q)$ lie in $J_M$ for all $t \in [0,t_0)$, so that the curve $\delta$ is contained in the causal diamond $J_M^+(p) \cap J_M^-(q)$.
	Since $M$ is globally hyperbolic, its causal diamonds are compact.
	But then we have a future-inextendable causal curve $\delta$ in $M$ contained in a compact set, which contradicts the non-imprisoning property of $M$.
	Similarly, when $f$ reverses time-orientation on $\widetilde{M}$, we find that the past-inextendable causal curve $\delta$ is contained in compact causal diamond $J_M^+(q) \cap J_M^-(p)$ which again contradicts the non-imprisoning property of $M$.
\end{proof}

Formally, we may realise the hierarchy of causal properties as a nested sequence of full $\disjoint$-subcategories of $\mathsf{SpTm}$:
\begin{definition}
	Let $\kappa$ denote any causal property in the hierarchy; we define $\kappa \mathsf{SpTm}$ to be the full $\disjoint$-subcategory of $(\mathsf{SpTm}, \disjoint_J)$ consisting of those spacetimes which satisfy $\kappa$.
\end{definition}
If $\kappa$ and $\lambda$ are any two causal properties in the hierarchy with $\kappa$ stronger than $\lambda$, then the inclusion $\kappa \mathsf{SpTm} \hookrightarrow \lambda \mathsf{SpTm}$ is also a full $\disjoint$-subcategory inclusion.

This notation also works if we regard the null causal property (i.e. having no specified causal property) as the lowest level of the hierarchy;
then $\mathsf{SpTm}$ is $\kappa \mathsf{SpTm}$ for $\kappa$ the null property.
In particular, we have full $\disjoint$-subcategory inclusions
\begin{equation*}
	\mathsf{GlobHypSpTm} \hookrightarrow \mathsf{CausSimSpTm} \hookrightarrow \mathsf{SpTm},
\end{equation*}
for prefixes $\mathsf{GlobHyp}$ and $\mathsf{CausSim}$ denoting globally hyperbolic and causally simple, respectively.

\subsection{Overlap-monics in categories of spacetimes with good causal properties}
\label{sec:spacetimes-cats-relativistic:overlap-monics}

To make use of the general characterisation of $\overlap$-monics in $\mathsf{SpTm}$ given in \cref{thm:overlap-monics-SpTm} to characterise $\overlap$-monics in a subcategory $\kappa \mathsf{SpTm}$, we must first ensure that moving between such subcategories does not change which morphisms are $\overlap$-monic.

Since $\kappa \mathsf{SpTm} \hookrightarrow \lambda \mathsf{SpTm}$ are $\disjoint$-subcategory inclusions, they reflect $\overlap$-monics by \cref{prop:functors-respect-overlap-monics}.
It is also true that they preserve $\overlap$-monics, essentially because we can cover any spacetime with open globally hyperbolic neighbourhoods and the $\disjoint$-relation $\disjoint_J$ respects such coverings as per \cref{lem:unions_preserve_causal_disjointness}:

\begin{proposition} \label{prop:spacetime_inclusions_preserve_overlap-monics}
	The inclusion functors $(\kappa \mathsf{SpTm}, \disjoint_J) \hookrightarrow (\lambda \mathsf{SpTm}, \disjoint_J)$ preserve $\overlap$-monics.
\end{proposition}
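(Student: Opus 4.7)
The goal is to show that if $h : M \to N$ is $\overlap$-monic in $\kappa\mathsf{SpTm}$, then it remains $\overlap$-monic when regarded as a morphism in $\lambda\mathsf{SpTm}$. Equivalently, for any conterminous pair $f_1 : P_1 \rightarrow M \leftarrow P_2 : f_2$ in $\lambda\mathsf{SpTm}$, I must show that $f_1 \disjoint_J f_2$ implies $h \circ f_1 \disjoint_J h \circ f_2$. The key obstacle is that $P_1$ and $P_2$ need not satisfy the stronger causal property $\kappa$, so the $\overlap$-monicity hypothesis on $h$ inside $\kappa\mathsf{SpTm}$ cannot be applied directly to the pair $(f_1, f_2)$.

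The plan is to replace $P_1$ and $P_2$ with open covers by globally hyperbolic sub-spacetimes, reducing to instances of $\overlap$-monicity that do live inside $\kappa\mathsf{SpTm}$, and then to reassemble via \cref{cor:unions_preserve_causal_disjointness}. Concretely, using the fact recalled before \cref{prop:reflect_J_causally_convex_conf_emb} that every spacetime has a neighbourhood base of open globally hyperbolic sets, I choose covers $P_1 = \bigcup_\alpha U_\alpha$ and $P_2 = \bigcup_\beta V_\beta$ with each $U_\alpha, V_\beta$ globally hyperbolic and hence belonging to $\kappa\mathsf{SpTm}$ (since global hyperbolicity sits at the top of the causal hierarchy and so implies any property $\kappa$ under consideration). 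Write $i_\alpha : U_\alpha \hookrightarrow P_1$ and $j_\beta : V_\beta \hookrightarrow P_2$ for the inclusions, which are morphisms of $\lambda\mathsf{SpTm}$.

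For each $(\alpha, \beta)$, the composites $f_1 \circ i_\alpha : U_\alpha \to M$ and $f_2 \circ j_\beta : V_\beta \to M$ have source and target objects all in $\kappa\mathsf{SpTm}$, so (by fullness of $\kappa\mathsf{SpTm} \hookrightarrow \lambda\mathsf{SpTm}$) they are morphisms of $\kappa\mathsf{SpTm}$. Stability of $\disjoint_J$ under pre-composition (property 2 of \cref{def:disjointness_rel}) applied to $f_1 \disjoint_J f_2$ yields $f_1 \circ i_\alpha \disjoint_J f_2 \circ j_\beta$ for every pair $(\alpha, \beta)$. Since $h$ is $\overlap$-monic in $\kappa\mathsf{SpTm}$, it follows that $h \circ f_1 \circ i_\alpha \disjoint_J h \circ f_2 \circ j_\beta$ in $\kappa\mathsf{SpTm}$, and hence in $\lambda\mathsf{SpTm}$ since the subcategory inclusion preserves $\disjoint$-relations.

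It remains to reassemble these disjointness statements. Fixing $\alpha$ and letting $\beta$ vary, the family $\{j_\beta\}_\beta$ covers $P_2$, so by the symmetric version of \cref{cor:unions_preserve_causal_disjointness} (available since $\disjoint_J$ is symmetric) applied in $\lambda\mathsf{SpTm}$, I obtain $h \circ f_1 \circ i_\alpha \disjoint_J h \circ f_2$. This now holds for every $\alpha$ with $\{i_\alpha\}_\alpha$ covering $P_1$, so a second application of \cref{cor:unions_preserve_causal_disjointness} in $\lambda\mathsf{SpTm}$ delivers $h \circ f_1 \disjoint_J h \circ f_2$, as required. The only nontrivial input is the existence of globally hyperbolic open covers, which serves as the bridge between $\lambda\mathsf{SpTm}$ and $\kappa\mathsf{SpTm}$; everything else is bookkeeping with pre-composition stability and the covering lemma.
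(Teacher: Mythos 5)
Your proof is correct and follows essentially the same route as the paper's: cover the domains of $f_1,f_2$ by open globally hyperbolic subspacetimes (which lie in $\kappa\mathsf{SpTm}$ since global hyperbolicity tops the causal hierarchy), apply pre-composition stability and the $\overlap$-monicity of $h$ in $\kappa\mathsf{SpTm}$ to the restricted pairs, and reassemble with \cref{cor:unions_preserve_causal_disjointness}. Your explicit two-step reassembly (first in the second argument via symmetry, then in the first) just makes precise a step the paper's proof states in one breath; there is no substantive difference.
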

\begin{proof}
	Let $N \xrightarrow{g} P$ be $\overlap$-monic in $\kappa \mathsf{SpTm}$, and consider any conterminous pair $f_1 : M_1 \rightarrow N \leftarrow M_2 : f_2$ in $\lambda \mathsf{SpTm}$.
	Since $\lambda$ is a weaker causal property than $\kappa$, $M_1$ and $M_2$ are not necessarily objects of $\kappa \mathsf{SpTm}$.
	
	However, each point $p \in M_1$ has an open globally hyperbolic neighbourhood $U_p \subseteq M_1$; denote by $i_p : U_p \hookrightarrow M_1$ the inclusion.
	Then since global hyperbolicity is the strongest causal property, the neighbourhood $U_p$ is an object of $\kappa \mathsf{SpTm}$, and the inclusion $i_p$ is a morphism of $\lambda \mathsf{SpTm}$.
	Similarly, for each $q \in M_2$ there is an open globally hyperbolic neighbourhood $V_q$ in $\kappa \mathsf{SpTm}$ with inclusion $j_q : V_q \hookrightarrow M_2$ in $\lambda \mathsf{SpTm}$.

	Say $f_1 \disjoint_J f_2$ in $\lambda \mathsf{SpTm}$. Then by stability of $\disjoint_J$ under pre-composition, 
	\begin{equation*}
		\begin{tikzcd}[cramped, column sep=small]
								& N \\
			M_1 \arrow[ur, "f_1"{name=1, near start}] \arrow[rr, phantom, "\disjoint_J"{pos=0.55}]	& & M_2 \arrow[ul, "f_2"{near start, name=2,swap}] \\
			U_p \arrow[u, hook, "i_p"]	& & V_q \arrow[u, hook', "j_q"{swap}]
		\end{tikzcd}
		\qquad \text{in $\lambda \mathsf{SpTm}$ for all $p \in M_1$ and $q \in M_2$.}
	\end{equation*}
	Since $\kappa \mathsf{SpTm} \hookrightarrow \lambda \mathsf{SpTm}$ is a full subcategory, $f_1 \circ i_p : U_p \to N$ and $f_2 \circ j_q : V_q \to N$ are morphisms in $\kappa \mathsf{SpTm}$. The inclusion functor also reflects $\disjoint$-relations, so
	\begin{equation*}
		\begin{tikzcd}[cramped, column sep=small]
								& N \\
			U_p \arrow[ur, "f_1 \circ i_p"{name=1, near start}]	& & V_q \arrow[ul, "f_2 \circ j_q"{near start, name=2,swap}]
			\arrow[phantom, from=1, to=2, "\disjoint_J"{below,pos=0.55}]
		\end{tikzcd}
		\qquad \text{in $\kappa \mathsf{SpTm}$ for all $p \in M_1$ and $q \in M_2$.}
	\end{equation*}
	Since $g : N \to P$ is $\overlap$-monic in $\kappa \mathsf{SpTm}$, this gives $g \circ f_1 \circ i_p \disjoint_J g \circ f_2 \circ j_q$ in $\kappa \mathsf{SpTm}$. But since the inclusion functor preserves $\disjoint$-relations,
	\begin{equation*}
		\begin{tikzcd}[cramped, column sep=small]
								& P \\
								& N \arrow[u, "g"]\\
			M_1 \arrow[ur, "f_1"{name=1, near start}] \arrow[rr, phantom, "\disjoint_J"]	& & M_2 \arrow[ul, "f_2"{near start, name=2,swap}] \\
			U_p \arrow[u, hook, "i_p"]	& & V_q \arrow[u, hook', "j_q"{swap}]
		\end{tikzcd}
		\qquad \text{in $\lambda \mathsf{SpTm}$ for all $p \in M_1$ and $q \in M_2$.}
	\end{equation*}
	By \cref{cor:unions_preserve_causal_disjointness}, and since clearly $\bigcup_{p\in M_1} U_p = M_1$ and $\bigcup_{q \in M_2} V_q = M_2$, we conclude that $g \circ f_1 \disjoint_J g \circ f_2$ in $\lambda \mathsf{SpTm}$ so $g$ is $\overlap$-monic in $\lambda \mathsf{SpTm}$.
\end{proof}

Each inclusion $\kappa \mathsf{SpTm} \hookrightarrow \lambda \mathsf{SpTm}$ both preserves and reflects $\overlap$-monics: for any morphism $f$ in $\kappa \mathsf{SpTm}$, $f$ is $\overlap$-monic in $\kappa \mathsf{SpTm}$ if and only if it is $\overlap$-monic in $\lambda \mathsf{SpTm}$ for any weaker causal property $\lambda$.
In particular, $f$ is $\overlap$-monic in $\kappa \mathsf{SpTm}$ if and only if it satisfies the general characterisation given by \cref{thm:overlap-monics-SpTm} of $\overlap$-monics in $\mathsf{SpTm}$.
Using this, we can find greatly simplified characterisations of $\overlap$-monics when we restrict to spacetimes with strong causal properties:

\begin{theorem} \label{thm:overlap-monics-CausSimSpTm}
	A morphism $f : M \to N$ in $(\mathsf{CausSimSpTm}, \disjoint_J)$ is $\overlap$-monic if and only if it reflects causal relations $J$ up to time-orientation reversal.
\end{theorem}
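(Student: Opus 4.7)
The plan is to chain together three results from earlier in the excerpt, so the theorem will follow essentially by bookkeeping without any new spacetime-causality arguments. The overall strategy is: reduce from $\mathsf{CausSimSpTm}$ to $\mathsf{SpTm}$, apply the general characterisation of $\overlap$-monics in $\mathsf{SpTm}$, and then use causal simplicity to simplify the resulting condition.

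First I would invoke Proposition \ref{prop:spacetime_inclusions_preserve_overlap-monics} applied to the inclusion $\mathsf{CausSimSpTm} \hookrightarrow \mathsf{SpTm}$. Because this inclusion is a full $\disjoint$-subcategory inclusion it reflects $\overlap$-monics by Proposition \ref{prop:functors-respect-overlap-monics}, and by Proposition \ref{prop:spacetime_inclusions_preserve_overlap-monics} it also preserves them. Hence $f : M \to N$ is $\overlap$-monic in $\mathsf{CausSimSpTm}$ if and only if it is $\overlap$-monic in $\mathsf{SpTm}$. By Theorem \ref{thm:overlap-monics-SpTm}, the latter is equivalent to $f$ reflecting $\overline{sJ}$, i.e.\ $(f(p),f(q))\in\overline{sJ_N}$ implies $(p,q)\in\overline{sJ_M}$ for all $p,q\in M$.

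Next I would use causal simplicity of both $M$ and $N$ (both are objects of $\mathsf{CausSimSpTm}$, which is a full subcategory) to remove the topological closures. By definition, $J_M=\overline{J_M}$ and $J_N=\overline{J_N}$. Applying Lemma \ref{lem:sym_top_closures_commute}, which states that symmetric and topological closures commute, yields $\overline{sJ_M} = s(\overline{J_M}) = sJ_M$ and likewise $\overline{sJ_N} = sJ_N$. Thus the condition that $f$ reflect $\overline{sJ}$ collapses to the condition that $f$ reflect $sJ$.

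Finally, since $N$ is causally simple it is in particular causal, so Corollary \ref{cor:codomain_caus_reflect_J_iff_reflect_sJ} applies: $f$ reflects $sJ$ if and only if $f$ reflects $J$ up to time-orientation reversal. Combining the three equivalences gives the claim. There is no substantial obstacle here; the only point that requires care is that the hypothesis of causal simplicity is used on \emph{both} $M$ and $N$ to pass from $\overline{sJ}$ to $sJ$ (one needs $\overline{sJ_M}=sJ_M$ to conclude reflection on the domain side), whereas the appeal to Corollary \ref{cor:codomain_caus_reflect_J_iff_reflect_sJ} needs only that the codomain $N$ is causal.
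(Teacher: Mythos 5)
Your proposal is correct and follows essentially the same route as the paper's proof: reduce to $\mathsf{SpTm}$ via the preservation/reflection of $\overlap$-monics by the subcategory inclusion, apply \cref{thm:overlap-monics-SpTm}, collapse $\overline{sJ}$ to $sJ$ using causal simplicity together with \cref{lem:sym_top_closures_commute}, and finish with \cref{cor:codomain_caus_reflect_J_iff_reflect_sJ} using causality of $N$. Your closing remark about where causal simplicity of $M$ versus $N$ is actually needed is accurate and consistent with the paper's argument.
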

\begin{proof}
	$f : M \to N$ is $\overlap$-monic in $\mathsf{CausSimSpTm}$ if and only if it is $\overlap$-monic in $\mathsf{SpTm}$, hence if and only if it reflects $\overline{sJ}$ by \cref{thm:overlap-monics-SpTm}.
	Each object of $\mathsf{CausSimSpTm}$ has $\overline{sJ} = s\left(\overline{J}\right) = sJ$, where the first equality holds because taking topological and symmetric closures of relations commutes (\cref{lem:sym_top_closures_commute}).
	Hence $f$ is $\overlap$-monic if and only if it reflects $sJ$.
	But since $N$ is also causal, $f$ reflects $sJ$ if and only if it reflects $J$ up to time-orientation reversal by \cref{cor:codomain_caus_reflect_J_iff_reflect_sJ}.
\end{proof}

\begin{theorem} \label{thm:overlap-monics-GlobHypSpTm}
	A morphism $f : M \to N$ in $(\mathsf{GlobHypSpTm}, \disjoint_J)$ is $\overlap$-monic if and only if the following equivalent conditions hold:
	\begin{enumerate}[label=\textup{(\roman*)}]
		\item
		$f$ reflects $J$ up to time-orientation reversal,
		
		\item
		$f$ is injective and has causally convex image $f(M)$ in $N$.
	\end{enumerate}
\end{theorem}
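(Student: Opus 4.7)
The plan is to obtain the main equivalence by composing three earlier results: the $\mathsf{CausSimSpTm}$ characterisation, and the two propositions relating reflection of $J$ to the embedding/causal convexity conditions.

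First, I would observe that $\mathsf{GlobHypSpTm}$ sits as a full $\disjoint$-subcategory of $\mathsf{CausSimSpTm}$, since globally hyperbolic spacetimes are causally simple. By \cref{prop:functors-respect-overlap-monics} together with \cref{prop:spacetime_inclusions_preserve_overlap-monics}, this inclusion both preserves and reflects $\overlap$-monics. Consequently, a morphism $f : M \to N$ of $\mathsf{GlobHypSpTm}$ is $\overlap$-monic if and only if it is $\overlap$-monic in $\mathsf{CausSimSpTm}$, which by \cref{thm:overlap-monics-CausSimSpTm} is if and only if $f$ reflects $J$ up to time-orientation reversal. This establishes the equivalence between being $\overlap$-monic and condition \textup{(i)}.

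It then remains to prove the equivalence \textup{(i)} $\Leftrightarrow$ \textup{(ii)}. The implication \textup{(ii)} $\Rightarrow$ \textup{(i)} is immediate from \cref{prop:causally_convex_conf_emb_reflect_J}, since a morphism of $\mathsf{GlobHypSpTm}$ is in particular a conformal map (indeed a local isometry), and the hypothesis that $f$ is injective with causally convex image gives the smooth embedding condition needed there (injective local isometries between spacetimes of the same dimension are smooth embeddings, by \cref{rm:sptm-mor-local-diffeo}). For the converse \textup{(i)} $\Rightarrow$ \textup{(ii)}, I would apply \cref{prop:reflect_J_causally_convex_conf_emb}: its hypotheses are that $M$ is globally hyperbolic, $f$ is conformal, and $\dim M = \dim N$, all of which hold automatically in $\mathsf{GlobHypSpTm}_{d+1}$; its conclusion is precisely that $f$ is a smooth embedding with causally convex image, which in particular yields injectivity.

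There is no serious obstacle here, since all the hard work is already encoded in the earlier results; the task is essentially one of assembly. The only point that deserves a moment of care is checking that the hypotheses of the two propositions relating \textup{(i)} and \textup{(ii)} are indeed satisfied for an arbitrary $\mathsf{GlobHypSpTm}$-morphism, in particular that dimensional agreement and local isometry give the required smooth embedding and conformal-map structure, and that $\overlap$-monics in the ambient $\mathsf{SpTm}$ and in $\mathsf{GlobHypSpTm}$ genuinely coincide via the chain of full $\disjoint$-subcategory inclusions through $\mathsf{CausSimSpTm}$.
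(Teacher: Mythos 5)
Your proposal is correct and follows essentially the same route as the paper: reduce to $\mathsf{CausSimSpTm}$ via the preservation/reflection of $\overlap$-monics by the full $\disjoint$-subcategory inclusion (\cref{prop:functors-respect-overlap-monics}, \cref{prop:spacetime_inclusions_preserve_overlap-monics}) and \cref{thm:overlap-monics-CausSimSpTm}, then obtain (i) $\Leftrightarrow$ (ii) from \cref{prop:reflect_J_causally_convex_conf_emb} and \cref{prop:causally_convex_conf_emb_reflect_J}, using \cref{rm:sptm-mor-local-diffeo} to upgrade injectivity to a smooth embedding. No gaps.
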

\begin{proof}
	$f : M \to N$ is $\overlap$-monic in $\mathsf{GlobHypSpTm}$ if and only if it is $\overlap$-monic in $\mathsf{CausSimSpTm}$, hence if and only if it reflects $J$ up to time-orientation reversal by \cref{thm:overlap-monics-CausSimSpTm}.
	
	By \cref{prop:reflect_J_causally_convex_conf_emb}, if $f$ reflects $J$ up to time-orientation reversal then $f$ is injective with causally convex image.
	Conversely, if $f$ is injective then (since $f$ is a local diffeomorphism as per \cref{rm:sptm-mor-local-diffeo}) it is an embedding; so if $f$ is injective with causally convex image then $f$ reflects $J$ up to time-orientation reversal by \cref{prop:causally_convex_conf_emb_reflect_J}.
\end{proof}

This characterisation of $\overlap$-monics in $\mathsf{GlobHypSpTm}$ exactly matches with the extra constraints imposed on morphisms in the definition of $\mathsf{Loc}$.
Moreover, the orthogonality relation $\perp$ of spacelike separateness on $\mathsf{Loc}$ \cite[Example 3.21]{BeniniSchenkelWoike2021} coincides with the restriction of $\disjoint_J$:
\begin{equation*}
	(\mathsf{Loc}, \perp) = \OverlapMonics(\mathsf{GlobHypSpTm}, \disjoint_J).
\end{equation*}

We conclude that the category $\mathsf{Loc}$ often used as domain for relativistic AQFT functors may be arrived at systematically by the procedure proposed in \cref{sec:introduction}:
first formulate the category $\mathsf{GlobHypSpTm}$ of globally hyperbolic spacetimes and all structure-preserving maps between them.
Also formulate an appropriate disjointness relation $\disjoint_J$ on $\mathsf{GlobHypSpTm}$ to encode physical limitations on signal propagation to be imposed by the causality axiom.
Finally, restrict to the wide subcategory of $\overlap$-monics with respect to $\disjoint_J$; this is $\mathsf{Loc}$.
The disjointness relation becomes an orthogonality relation on the subcategory, so we may formulate an AQFT as a functor,
\begin{equation*}
	\mathcal{A} : \OverlapMonics(\mathsf{GlobHypSpTm}, \disjoint_J) \to \mathsf{Obs},
\end{equation*}
obeying the causality axiom.

The time-slice axiom seems to play no significant role in this procedure to construct $\mathsf{Loc}$.
We can define an appropriate class $\widetilde{W}$ of Cauchy maps (morphisms $f : M \to N$ where $f(M)$ contains a Cauchy surface of $N$) in the original spacetimes category $\mathsf{GlobHypSpTm}$.
Then the time-slice axiom is simply implemented with respect to the intersection $W := \widetilde{W} \cap \operatorname{mor}\mathsf{Loc}$ of $\widetilde{W}$ with the class of morphisms in the subcategory $\mathsf{Loc}$: the functor $\mathcal{A}$ above satisfies the time-slice axiom if $\mathcal{A}f$ is an isomorphism for every $f \in W$.

\todo[color=cyan]{remark on using chronal relation $I$ instead of causal $J$?}

\begin{remark} \label{rm:CLoc}
	The same procedure applies when the structure to be preserved by morphisms is not the full Lorentzian metric structure, but merely the conformal structure.
	Formulating a category $\mathsf{CSpTm}$ of spacetimes and conformal maps in place of the category $\mathsf{SpTm}$ of spacetimes and local isometries, the characterisations presented above still apply.
	Namely, after equipping $\mathsf{CSpTm}$ with $\disjoint$-relation $\disjoint_J$ defined similarly to \cref{def:causal_discon} and restricting to globally hyperbolic spacetimes, we obtain full $\disjoint$-subcategory $(\mathsf{GlobHypCSpTm}, \disjoint_J)$.
	Then the orthogonal subcategory of $\overlap$-monics is
	\begin{equation*}
		(\mathsf{CLoc}, \perp) = \OverlapMonics(\mathsf{GlobHypCSpTm}, \disjoint_J),
	\end{equation*}
	consisting of globally hyperbolic spacetimes and conformal maps which are injective with causally convex image; $\mathsf{CLoc}$ is the appropriate category of spacetimes for defining (non-chiral) CFTs \cite{Pinamonti2009}.
\end{remark}

% END

\section{Chiral disjointness and spacetimes categories for chiral CFT} \label{sec:spacetimes-cats-chiral}
% BEGIN

As shown in the previous section, the category $\mathsf{Loc}$ may be constructed as the subcategory of $\overlap$-monics in a category of globally hyperbolic spacetimes and all appropriate structure-preserving maps.
In this section, we apply the same construction in the context of chiral CFTs on two-dimensional (2D) spacetimes.
After developing some useful technical tools in \cref{sec:spacetimes-cats-chiral:frames} and using them to produce a hierarchy of chiral properties of 2D oriented spacetimes in \cref{sec:spacetimes-cats-chiral:chiral-properties},
the construction leads to a proposal in \cref{sec:spacetimes-cats-chiral:overlap-monics} of a category $\chi\mathsf{Loc}$ to be used as the domain of chiral CFT functors.
In \cref{sec:spacetimes-cats-chiral:comparison}, we compare this proposal to established approaches to chiral CFTs in the AQFT literature.

Let us begin by recalling an elementary example of a chiral CFT on a fixed spacetime:
\begin{example} \label{ex:free_fermion}
	A free fermion field theory on 2D Minkowski space $\mathbb{R}^{1,1}$ is described by the Lagrangian
	\begin{equation*}
		\mathcal{L} = i \overline{\Psi} \gamma^\mu \partial_\mu \Psi,
	\end{equation*}
	where $\Psi$ is a two-component spinor field, $\overline{\Psi} = \Psi^\dagger \gamma^0$ is its Dirac conjugate, and we use the following representation of the 2D Dirac algebra:
	\begin{equation*}
		\gamma^0
		=
		\begin{pmatrix}
			 0 & 1 \\
			-1 & 0
		\end{pmatrix}
		\qquad \text{and} \qquad
		\gamma^1
		=
		\begin{pmatrix}
			0 & 1 \\
			1 & 0
		\end{pmatrix}.
	\end{equation*}
	With $(t,x)$ the standard coordinates on $\mathbb{R}^{1,1}$, the Dirac operator is then:
	\begin{equation*}
		i \gamma^\mu \partial_\mu
		=
		\begin{pmatrix}
			0 & i(\partial_t + \partial_x) \\
			- i(\partial_t - \partial_x) & 0
		\end{pmatrix}.
	\end{equation*}
	Denoting components of $\Psi$ by $\Psi^t = \left(\psi, \widetilde{\psi}\right)$,
% 	\begin{equation*}
% 		\Psi
% 		=
% 		\begin{pmatrix}
% 			\psi \\
% 			\widetilde{\psi} \quad 
% 		\end{pmatrix},
% 	\end{equation*}
	the Lagrangian may be written as
	\begin{equation*}
		\mathcal{L} = -i \left[ \psi^*(\partial_t - \partial_x) \psi + \widetilde{\psi}^* (\partial_t + \partial_x) \widetilde{\psi} \right].
	\end{equation*}
	The classical equations of motion are then simply first-order wave equations,
	\begin{equation*}
		(\partial_t - \partial_x) \psi = 0
		\qquad \text{and} \qquad
		(\partial_t + \partial_x) \widetilde{\psi} = 0,
	\end{equation*}
	so that $\psi$ is a function $\psi(t,x) = f_1(t+x)$ of the combination $t+x$ only, and $\widetilde{\psi}$ is a function $\widetilde{\psi}(t,x) = f_2(t-x)$ of $t-x$ only.
	
% 	Hence the component $\psi$ is left-moving, and component $\widetilde{\psi}$ is right-moving.
	Because there is no coupling between $\psi$ and $\widetilde{\psi}$ in the Lagrangian, the theory splits into two independent halves: one describing the left-moving degree of freedom $\psi$, and the other describing the right-moving degree of freedom $\widetilde{\psi}$.
	This split into left- and right-moving halves means that the free fermion theory is chiral.
	
	The left-moving component $\psi$ is constant on curves of constant $t+x$ (the characteristics of the equation of motion).
	In $\mathbb{R}^{1,1}$, these are the \emph{left-moving null} curves;
% 	Such lines are left-moving null curves.
	that $\psi$ is constant on them means that, in a particularly trivial way, $\psi$ \enquote{propagates signals} along the left-moving null curves.
% 	constancy of $\psi$ is a particularly trivial form of \enquote{signal propagation} along these curves.
	
	On the other hand, the equations of motion specify no influence of the value $\psi(t,x) = f_1(t+x)$ on $\psi(t',x') = f_1(t'+x')$ for points $(t,x)$ and $(t',x')$ on distinct such curves $t+x \neq t'+x'$: the function $f_1$ is arbitrary (determined only by initial conditions).
	Hence there is no \enquote{signal propagation} between distinct left-moving null curves in the left-moving half of the free fermion theory.
	
	Similar observations hold for the right-moving half of the theory: the component $\widetilde{\psi}$ is constant on curves of constant $t-x$, which are the \emph{right-moving null} curves of $\mathbb{R}^{1,1}$, and there is no \enquote{signal propagation} between distinct such curves.
% 	
% 	
% 	\todo[inline]{1st order wave equations, so we can definitely write solutions as $\psi(t+x)$ and $\tilde{\psi}(t-x)$}
% 	The component $\psi$ is left-moving: since it is independent of the coordinate combination $t - x$, we may specify it as a function $\psi(t+x)$ of the combination $t+x$ alone.
% 	Similarly, the component $\widetilde{\psi}$ may be specified as a function only of $t-x$, and hence is right-moving.
% 	
% 	\todo[inline]{I should refresh my memory about characteristic curves of PDEs, and relate this to the `signal propagation' language I've been using...}
% 	
% 	Because there is no coupling between $\psi$ and $\widetilde{\psi}$ in the Lagrangian, the theory splits into two independent halves: one describing the left-moving degree of freedom $\psi$, and the other describing the right-moving degree of freedom $\widetilde{\psi}$.
\end{example}

% \begin{remark}
% 	\todo[inline]{Move to after I've appropriately defined left/right-moving null curves?}
% 	The left-moving component $\psi$ of \cref{ex:free_fermion} is constant on curves of constant $x+t$ (the characteristics of the equation of motion for $\psi$).
% 	In $\mathbb{R}^{1,1}$, these are the left-moving null curves;
% % 	Such lines are left-moving null curves.
% 	constancy of $\psi$ is a particularly trivial form of \enquote{signal propagation} along these curves.
% 	
% 	Note that for points $(t,x)$ and $(t',x')$ on distinct such curves ($x+t \neq x'+t'$) the equations of motion specify no influence of the value $\psi(t,x) = f_1(x+t)$ on $\psi(t',x') = f_1(x'+t')$: the function $f_1$ is arbitrary (determined only by initial conditions).
% 	Hence there is no \enquote{signal propagation} between distinct curves of constant $x+t$ in the left-moving half of the free fermion theory.
% 	
% 	Similar observations hold for the right-moving half of the theory, with component $\widetilde{\psi}$ and curves of constant $x-t$.
% \end{remark}

\begin{remark}
	Chiral CFTs are commonly studied in Euclidean rather than Lorentzian signature.
	Roughly, the Euclidean signature version of the preceding example replaces time $t$ with an imaginary counterpart $i\tau$.
	Denote $z := x + i\tau$ and $\bar{z} := x - i\tau$.
	Then in place of left- and right-moving halves in the Lorentzian signature of \cref{ex:free_fermion}, one finds holomorphic and anti-holomorphic halves in the Euclidean signature: the equations of motion reduce to $\partial_{\bar{z}} \psi = 0$ and $\partial_z \widetilde{\psi} = 0$ respectively.
	
	Generally, Euclidean CFTs which split into holomorphic and anti-holomorphic parts
	(free bosons,
	$bc$-ghost systems, % GSW does this
	WZW models, etc.) % see e.g. arXiv:2205.11190
	have Lorentzian counterparts which split analogously into left- and right-moving parts.
	In this work, we restrict our attention to Lorentzian signature.
\end{remark}

As illustrated by the free fermion of \cref{ex:free_fermion}, a chiral CFT splits into a left-moving half and a right-moving half.
Signals propagate only along left-moving null curves in the left-moving half, and dually, only along right-moving null curves in the right-moving half.
This is in contrast to standard relativistic QFTs as defined on $\mathsf{Loc}$ or CFTs as defined on $\mathsf{CLoc}$\footnote{See \cref{rm:CLoc}.}, wherein signals may propagate from a point to anywhere in its causal future.

To formulate a description of chiral CFTs on general 2D spacetimes, as per locally covariant AQFT, we begin with an appropriate category of 2D spacetimes and structure-preserving maps.
Since we are to describe conformal field theories, the maps are chosen to preserve only conformal structure on spacetimes rather than the full Lorentzian metric:
\begin{definition}
	Let $\mathsf{CSpTm}_{1+1}^{\mathrm{o,to}}$ be the category whose objects are 2D oriented spacetimes, and whose morphisms are orientation- and time-orientation-preserving smooth conformal maps, composing as functions.
\end{definition}
We restrict to oriented 2D spacetimes in order to facilitate a consistent definition of left-moving and right-moving null curves.
As in $\mathsf{SpTm}_{d+1}$, the morphisms of $\mathsf{CSpTm}_{1+1}^\mathrm{o,to}$ are necessarily local diffeomorphisms since they are smooth immersions between manifolds of the same dimension; see also \cref{rm:sptm-mor-local-diffeo}.
% We have restricted to oriented spacetimes and orientation- and time-orientation-preserving maps in $\mathsf{CSpTm}_{1+1}^\mathrm{o,to}$ to facilitate consistent definition of left-moving and right-moving}null curves.

For each half (left- or right-moving) of a chiral theory, we equip $\mathsf{CSpTm}_{1+1}^{\mathrm{o,to}}$ with a disjointness relation describing the relevant limitation on signal propagation.
First, we recall coordinate-independent definitions of left- and right-moving null curves on oriented 2D spacetimes.

% ---------

Take $(M,g)$ a 2D oriented spacetime.
Let $\Omega$ be a smooth nowhere-vanishing 2-form on $M$ representing the orientation.
Similarly, let $T$ be a smooth everywhere-timelike vector field on $M$ representing the time-orientation.
At any point $p \in M$, any non-zero, non-spacelike tangent vector $v \in T_p M$ lies in one of two familiar classes: either $v$ is future-directed in case $g_p(T,v) < 0$, or $v$ is past-directed in case $g_p(T,v) > 0$.
Moreover, for two different timelike vector fields $T$ and $T'$ representing the same time-orientation on $M$ so $g(T,T') < 0$, the signs of $g_p(T, v)$ and $g_p(T', v)$ coincide for non-spacelike $v$.
Thus the future and past lightcones are determined by the time-orientation on $M$ rather than merely its representative $T$.

Similarly, when $M$ is two-dimensional and oriented, any non-zero non-timelike tangent vector $v \in T_p M$ lies in one of two classes: either $v$ is \emph{right-pointing} in case $\Omega_p(T,v) > 0$, or $v$ is \emph{left-pointing} in case $\Omega_p(T,v) <0$.
These classes are independent of the representatives $\Omega$ and $T$ of the orientation and time-orientation on $M$:
\begin{proposition}
	Let $M$ be a 2D oriented spacetime, and $v \in T_p M$ a non-zero non-timelike tangent vector at $p \in M$.
	Say $\Omega$ and $\Omega'$ are both nowhere-vanishing 2-forms representing the orientation on $M$, and say $T$ and $T'$ are both everywhere-timelike vector fields representing the time-orientation on $M$.
	Then $\Omega_p(T,v)$ and $\Omega'_p(T',v)$ have the same sign.
\end{proposition}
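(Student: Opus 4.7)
The plan is to reduce the statement to a local computation in a well-chosen orthonormal frame at $p$. First, since $\Omega$ and $\Omega'$ represent the same orientation on $M$, their ratio is a positive smooth function, so $\Omega'_p = \lambda \, \Omega_p$ for some $\lambda > 0$; hence $\Omega'_p(T', v)$ and $\Omega_p(T', v)$ have the same sign. It therefore suffices to show that $\Omega_p(T, v)$ and $\Omega_p(T', v)$ have the same sign, and in fact that this common sign depends only on $v$ (not on the timelike vector field chosen).

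Next, I would pick an orthonormal basis $\{e_0, e_1\}$ of $T_p M$ with $e_0$ future-directed timelike (compatible with the time-orientation on $M$) and with $\{e_0, e_1\}$ positively oriented, i.e.\ $\Omega_p(e_0, e_1) > 0$. The only subtle point here is the simultaneous compatibility; this is standard and follows from the fact that specifying both a future lightcone and an orientation in a 2D Lorentzian tangent space uniquely selects one of the four connected components of $O(1,1)$.

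Expand $T_p = t^0 e_0 + t^1 e_1$, $T'_p = s^0 e_0 + s^1 e_1$, and $v = v^0 e_0 + v^1 e_1$. The hypotheses give $t^0, s^0 > 0$ with $(t^0)^2 > (t^1)^2$ and $(s^0)^2 > (s^1)^2$ (since $T, T'$ are future-directed timelike), and $(v^1)^2 \geq (v^0)^2$ with $v^1 \neq 0$ (since $v$ is non-zero and non-timelike; if $v^1 = 0$ then $v$ would be timelike or zero). A direct expansion yields
\begin{equation*}
	\Omega_p(T, v) = (t^0 v^1 - t^1 v^0)\, \Omega_p(e_0, e_1),
	\qquad
	\Omega_p(T', v) = (s^0 v^1 - s^1 v^0)\, \Omega_p(e_0, e_1).
\end{equation*}
Using $|t^1| < t^0$ and $|v^0| \leq |v^1|$ gives $|t^1 v^0| \leq |t^1|\,|v^1| < t^0 |v^1|$, so the sign of $t^0 v^1 - t^1 v^0$ is that of $t^0 v^1$, hence of $v^1$. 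Applying the same argument to $T'$ shows $\Omega_p(T', v)$ also has the sign of $v^1$, completing the reduction.

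I do not anticipate a serious obstacle: the proof is essentially a sign-tracking computation. The only point requiring a little care is verifying that the three constraints (timelike, future-directed, non-timelike non-zero) combine to make the cross term $t^1 v^0$ strictly dominated by $t^0 v^1$ in absolute value, which is precisely where the hypotheses on $v$ and $T$ are both used.
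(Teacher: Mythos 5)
Your proof is correct, but it takes a different route from the paper. Both arguments begin with the same reduction ($\Omega' = e^{\omega}\Omega$ with positive factor, so only the replacement $T \rightsquigarrow T'$ needs attention), but after that the paper argues by contradiction using convexity of the future lightcone: if $\Omega_p(T,v)$ and $\Omega_p(T',v)$ had opposite signs, the positive combination $T'' = \tfrac{1}{\lambda}T - \tfrac{1}{\lambda'}T'$ would be future-directed timelike yet satisfy $\Omega_p(T'',v)=0$, forcing $T''$ and $v$ to be collinear (as $\Omega_p$ is a non-zero skew form on a two-dimensional space), contradicting that $v$ is non-timelike. Your argument instead fixes a time-oriented, positively oriented orthonormal frame at $p$ and tracks signs explicitly, showing that $\Omega_p(T,v)$ has the sign of the spatial component $v^1$ regardless of which future-directed timelike $T$ is used; the inequality $|t^1 v^0| \le |t^1||v^1| < t^0|v^1|$ is exactly right, and your observation that $v^1 \neq 0$ (else $v$ would be zero or timelike) closes the case analysis. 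The paper's route is shorter, coordinate-free, and isolates the two structural facts doing the work (convexity of the cone, and degeneracy of $\Omega_p(\cdot,v)$ only along $\operatorname{span}\{v\}$); yours is more computational but yields a slightly stronger conclusion for free, namely that the common sign is precisely the sign of $v^1$ in any such frame, i.e.\ it identifies the sign with the right-/left-pointing dichotomy concretely and also shows $\Omega_p(T,v)\neq 0$ explicitly. The only step you wave at -- existence of an orthonormal basis simultaneously compatible with the time-orientation and the orientation -- is indeed standard (take any orthonormal $\{e_0,e_1\}$ with $e_0$ future-directed and flip $e_1$ if necessary), so there is no gap.
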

\begin{proof}
	Since $\Omega$ and $\Omega'$ represent the same orientation, there is some $\omega \in C^\infty(M)$ such that $\Omega' = e^{\omega} \Omega$.
	So, we need only show that $\Omega_p(T,v)$ and $\Omega_p(T',v)$ have the same sign.	
	For the sake of contradiction, and without loss of generality, assume
	\begin{equation*}
		\lambda := \Omega_p(T,v) > 0
		\qquad \text{and} \qquad
		\lambda' := \Omega_p(T',v) < 0.
	\end{equation*}
	Since the (interior of the) future lightcone is a convex cone,
	\begin{equation*}
		T'' := \frac{1}{\lambda} T + \left(-\frac{1}{\lambda'}\right) T'
	\end{equation*}
	is also a future-directed timelike vector field.
	But $\Omega_p \left(T'', v\right) = 0$, which implies that $v$ and $T''$ are collinear since $\Omega_p$ is a non-zero skew-symmetric bilinear form on two-dimensional vector space $T_p M$.
	This is a contradiction because $T''$ is timelike and $v$ is non-timelike.
\end{proof}

We shall call a curve $\gamma : I \to M$ \emph{right-chiral} if it is smooth and (its restriction to the interior of $I$ is) causal with $\dot{\gamma}_t$ either everywhere null, future-directed and right-pointing, or everywhere null, past-directed and left-pointing.
Similarly, $\gamma$ is \emph{left-chiral} if it is smooth and causal with $\dot{\gamma}_t$ either everywhere null, future-directed and left-pointing, or everywhere null, past-directed and right-pointing.
A future-directed right- or left-chiral curve is called \emph{right-} or \emph{left-moving} respectively.
Since a chiral curve $\gamma : I \to M$ has non-zero derivative $\dot{\gamma}_t$ everywhere, it is either future-directed everywhere or past-directed everywhere.
In particular, any right-chiral $\gamma : [0,1] \to M$ is either right-moving or has $\gamma' : [0,1] \to M$ defined by $\gamma'(t) = \gamma(1-t)$ right-moving; similar for left-chiral curves.

% ---------
% CHIRAL_EXAMPLE_EDITS
% ---------
% Inserted for example:

% \todo[inline,color=purple]{Move some parts of $\mathbb{R}^{1,1}$ example up to here: get as far as lightcone coordinates, and that (regular) curves of constant $x^\pm$ are left/right-chiral curves.
% 
% Then can do an example of a chiral CFT: just use basic free boson CFT, and show that equation of motion breaks down into left/right-chiral halves constant in $x^\pm$, so constant on left/right-chiral curves.
% 
% Remark 1: since EOM give no more than this, in particular they do not relate (pass signals between) different left/right-chiral curves.
% 
% Remark 2: this is the Lorentzian signature version of the separation of Euclidean chiral CFTs into holomorphic and anti-holomorphism halves (corresponding to left- and right-chiral halves). In particular, an analogous story holds for the Lorentzian analogue of any Euclidean CFT that separates into holomorphic and anti-holomorphic halves.
% }

\begin{example} \label{ex:Minkowski_lightcone_coordinates}
	Consider 2D Minkowski spacetime $\mathbb{R}^{1,1}$ with metric $g = -\dd t^2 + \dd x^2$ in its standard coordinates $(t,x)$.
	The standard time-orientation can be represented by future-directed timelike vector field $\partial_t$, and the standard orientation by 2-form $\dd t \wedge \dd x$.
	
	Define lightcone coordinates
% 	$x^\pm := t \pm x$.
	$x^+ := t+x$ and $x^- := t-x$.
	The coordinate vector field $\partial_+ = \frac{1}{2} \left(\partial_t + \partial_x\right)$ is null, future-directed and right-pointing;
	similarly, $\partial_- = \frac{1}{2} \left(\partial_t - \partial_x\right)$ is null, future-directed and left-pointing.
	
	Consequently, any regular curve in $\mathbb{R}^{1,1}$ at constant $x^- = t - x$ is right-chiral, and any regular curve at constant $x^+ = t + x$ is left-chiral.
	With appropriate parameterisation, these are the right- and left-moving null curves of \cref{ex:free_fermion}.
\end{example}

% ---------

Analogous to the causal relation $J_M$ on an arbitrary spacetime, we define the \emph{right-chiral relation} on any 2D oriented spacetime by:
\begin{equation*}
	\chi^+_M := \Set{(p,q) \in M \times M | \begin{array}{c}
		p = q, \text{ or there exists a right-moving null curve} \\
		\gamma : [0,1] \to M \text{ with } \gamma(0)=p \text{ and } \gamma(1)=q.
	\end{array} }.
\end{equation*}
By definition, $\chi^+$ is reflexive.
Since we have restricted our definition of chiral curves to be smooth for convenience elsewhere, it is not immediately clear that $\chi^+$ is transitive -- generally the concatenation of smooth curves is not smooth.
After developing the necessary tools, we show in \cref{sec:spacetimes-cats-chiral:frames} that indeed $\chi^+$ is transitive.
Its symmetric closure is
\begin{equation*}
	s\chi^+_M = \Set{(p,q) \in M \times M | \begin{array}{c}
		p = q, \text{ or there exists a right-chiral curve} \\
		\gamma : [0,1] \to M \text{ with } \gamma(0)=p \text{ and } \gamma(1)=q.
	\end{array} }.
\end{equation*}
The \emph{left-chiral relation} $\chi^-_M$ on $M$ is defined dually, with right-moving curves replaced by left-moving curves.
All definitions and statements below involving the right-chiral relation $\chi^+$ have duals with the left-chiral relation $\chi^-$ in its place.

For 2D oriented spacetimes $M$ and $N$, any conformal map $f : M \to N$ sends null curves to null curves; if, moreover, $f$ preserves time-orientation and orientation, then $f$ sends right-moving null curves to right-moving null curves and left-moving null curves to left-moving null curves.
Consequently, conformal maps $f$ which preserve orientation and time-orientation necessarily preserve both chiral relations $\chi^+$ and $\chi^-$.

For open subsets $U$ and $V$ in spacetime $M$, the right-chiral relation $\chi^+_M$ describes when $U$ and $V$ cannot be connected by right-chiral curves in $M$ as follows:
\begin{proposition} \label{prop:chiral_discon_equiv}
	For open subsets $U$ and $V$ of 2D oriented spacetime $M$, the following are equivalent:
	\begin{enumerate}[label=\textup{(\roman*)}]
		\item
		$s\chi^+_M$ does not intersect $U \times V$,
		
		\item
		$\chi^+_M$ intersects neither $U \times V$ nor $V \times U$,
		
		\item
		There exists no right-chiral curve in $M$ connecting $U$ and $V$.
	\end{enumerate}
\end{proposition}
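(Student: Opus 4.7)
The plan is to mirror the proof of \cref{prop:caus_discon_equiv}, replacing the causal relation $J$ with the right-chiral relation $\chi^+$ and appealing to local existence of right-chiral curves at any point of a 2D oriented spacetime.

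First, (i) $\Leftrightarrow$ (ii) is immediate from $s\chi^+ = \chi^+ \cup (\chi^+)^T$: the set $s\chi^+_M$ intersects $U \times V$ if and only if $\chi^+_M$ intersects $U \times V$ or $(\chi^+_M)^T$ intersects $U \times V$, and the latter is equivalent to $\chi^+_M$ intersecting $V \times U$.

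Next, (ii) $\Rightarrow$ (iii): I argue the contrapositive. Suppose there is a right-chiral curve $\gamma : [0,1] \to M$ connecting $U$ and $V$, so that $\gamma(a) \in U$ and $\gamma(b) \in V$ for some $a,b \in [0,1]$. As noted before the statement of the proposition, any right-chiral curve on a closed interval is either right-moving or has its time-reversal right-moving. In the first case, if $a \leq b$ the restriction to $[a,b]$ (suitably reparameterised) witnesses $(\gamma(a), \gamma(b)) \in \chi^+_M \cap (U \times V)$; if $b \leq a$, we get $(\gamma(b),\gamma(a)) \in \chi^+_M \cap (V \times U)$. The time-reversed case swaps the two subcases. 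Either way (ii) fails.

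Finally, (iii) $\Rightarrow$ (i) also by contrapositive. Suppose there exist $p \in U$ and $q \in V$ with $(p,q) \in s\chi^+_M$. If $p \neq q$, the definition of $s\chi^+_M$ directly supplies a right-chiral curve $\gamma : [0,1] \to M$ from $p$ to $q$, which connects $U$ and $V$. If $p = q$, then $p$ lies in the open set $U \cap V$; the main (small) task is to exhibit a right-chiral curve through $p$ inside this open neighbourhood. Since $M$ is a 2D oriented spacetime, at $p$ there is a null right-pointing tangent direction (the span of $\partial_+$ in lightcone-like coordinates as in \cref{ex:Minkowski_lightcone_coordinates}), and a short null geodesic in this direction gives a smooth right-chiral curve defined on some interval around $0$ with $\gamma(0)=p$, whose image lies in $U \cap V$ after shrinking its domain. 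This curve connects $U$ and $V$ at $p$, contradicting (iii).

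The only step with any substance is the $p=q$ case of (iii) $\Rightarrow$ (i), and even that reduces to the fact that a null right-pointing direction always exists at each point of a 2D oriented spacetime, and that the corresponding null geodesic is smooth near its initial point; both follow immediately from the orientation and time-orientation data used to define right-chirality.
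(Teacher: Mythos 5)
Your proof is correct and takes essentially the same route as the paper, whose proof of this proposition is literally ``exactly as \cref{prop:caus_discon_equiv}'': the only non-trivial point is the $p=q$ case of (iii) $\Rightarrow$ (i), where the causal argument's timelike geodesic in $U \cap V$ is replaced, as you do, by a short right-moving null curve through $p$ (your null geodesic with future-directed right-pointing initial tangent stays right-chiral by continuity, so this works). Nothing further is needed.
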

\begin{proof}
	Exactly as \cref{prop:caus_discon_equiv}.
\end{proof}

As done with $\mathsf{SpTm}_{d+1}$ and causal relations $J$, we may produce a $\disjoint$-relation on $\mathsf{CSpTm}_{1+1}^\mathrm{o,to}$ using right-chiral relations $\chi^+$:

\begin{definition} \label{def:chiral_discon}
	Define a $\disjoint$-relation $\disjoint_{\chi^+}$ on $\mathsf{CSpTm}_{1+1}^{\mathrm{o,to}}$ as follows: for any conterminous pair $f_1: M_1 \rightarrow N \leftarrow M_2 : f_2$, say that $f_1 \disjoint_{\chi^+} f_2$ if
	$s\chi^+_N$ does not intersect $f_1(M_1) \times f_2(M_2)$.
% 	$s\chi^+_N \cap [f_1(M_1) \times f_2(M_2)] = \varnothing$.
	
	We call $\disjoint_{\chi^+}$ the \emph{right-chiral disjointness relation} on $\mathsf{CSpTm}_{1+1}^\mathrm{o,to}$.
\end{definition}
Equivalently, $f_1 \disjoint_{\chi^+} f_2$ if there is no right-chiral curve in $N$ connecting $f_1(M_1)$ and $f_2(M_2)$.
Since all morphisms of $\mathsf{CSpTm}_{1+1}^{\mathrm{o,to}}$ are conformal and preserve orientations and time-orientations, they preserve $\chi^+$ and hence also $s\chi^+$.
Then $\disjoint_{\chi^+}$ is easily seen to be the pullback of relation $\disjoint_\mathrm{bin}$ of \cref{ex:binrel_disjointness} along the functor $\mathsf{CSpTm}_{1+1}^{\mathrm{o,to}} \to \mathsf{sBin}$ which sends each spacetime $M$ to its underlying set equipped with binary relation $s\chi^+_M$.

A similar \emph{left-chiral disjointness relation} $\disjoint_{\chi^-}$ on $\mathsf{CSpTm}_{1+1}^\mathrm{o,to}$ may be defined using the left-chiral relation $\chi^-$.
In the context of chiral disjointness relations $\disjoint_{\chi^+}$ or $\disjoint_{\chi^-}$, the relevant spacetimes are always two-dimensional and oriented so that $\chi^\pm$ can be defined, and the relevant maps between spacetimes always preserve both orientation and time-orientation so that they preserve $s\chi^\pm$.
Consequently, we may unambiguously omit the super- and subscripts from $\mathsf{CSpTm}_{1+1}^\mathrm{o,to}$ and write only $\mathsf{CSpTm}$ when discussing chiral disjointness.

At the level of the $\disjoint$-relation, $\left(\mathsf{SpTm}_{d+1}, \disjoint_J \right)$ and $\left(\mathsf{CSpTm}, \disjoint_{\chi^+}\right)$ are very similar. Indeed, \cref{thm:overlap-monics-SpTm,lem:unions_preserve_causal_disjointness,cor:unions_preserve_causal_disjointness} and their proofs all carry over verbatim, with $\mathsf{SpTm}_{d+1}$ and $J$ replaced by $\mathsf{CSpTm}$ and $\chi^+$ respectively:

\begin{theorem} \label{thm:overlap-monics-CSpTm}
	A morphism $f : M \to N$ in $\left(\mathsf{CSpTm}, \disjoint_{\chi^+}\right)$ is $\overlap$-monic if and only if it reflects $\overline{s\chi^+}$,
	i.e. for any $p,q \in M$:
	\begin{equation*}
		\left(f(p), f(q)\right) \in \overline{s\chi^+_N}
		\qquad \text{implies} \qquad
		(p,q) \in \overline{s\chi^+_M}.
	\end{equation*}
\end{theorem}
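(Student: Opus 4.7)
The plan is to mirror the proof of \cref{thm:overlap-monics-SpTm} essentially verbatim, since the argument there relied only on three structural features shared by the present setting: (a) $\disjoint_{\chi^+}$ arises as a pullback along a functor $\mathsf{CSpTm} \to \mathsf{sBin}$ sending $M$ to its underlying set with relation $s\chi^+_M$, so $f_1 \disjoint_{\chi^+} f_2$ iff $s\chi^+_N \cap [f_1(M_1) \times f_2(M_2)] = \varnothing$; (b) morphisms of $\mathsf{CSpTm}$ are local diffeomorphisms, hence open maps; and (c) any open subset of a 2D oriented spacetime inherits a canonical orientation and time-orientation, so open-set inclusions are valid morphisms. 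With these in hand the argument is a direct transcription.

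For the $(\Leftarrow)$ direction, suppose $f$ reflects $\overline{s\chi^+}$ and take any conterminous pair $g_1 : O_1 \to M \leftarrow O_2 : g_2$ with $f \circ g_1 \overlap_{\chi^+} f \circ g_2$. Unfolding the definition of $\disjoint_{\chi^+}$ yields points $p \in O_1$, $q \in O_2$ with $(f g_1(p), f g_2(q)) \in s\chi^+_N \subseteq \overline{s\chi^+_N}$; reflection of $\overline{s\chi^+}$ lifts this to $(g_1(p), g_2(q)) \in \overline{s\chi^+_M}$. Because $g_1, g_2$ are open maps, $g_1(O_1) \times g_2(O_2)$ is open in $M \times M$ and meets $\overline{s\chi^+_M}$, hence (by the standard characterisation of topological closure, as in \cref{lem:top_closure_relation}) meets $s\chi^+_M$ itself, giving $g_1 \overlap_{\chi^+} g_2$ as required.

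For the $(\Rightarrow)$ direction, suppose $f$ is $\overlap$-monic and fix $p,q \in M$ with $(f(p), f(q)) \in \overline{s\chi^+_N}$. Choose arbitrary open neighbourhoods $U \ni p$ and $V \ni q$ in $M$; equipping them with the restricted conformal structure and inherited (time-)orientations makes the inclusions $\iota_U, \iota_V$ into morphisms of $\mathsf{CSpTm}$. Openness of $f$ gives that $f(U) \times f(V)$ is an open neighbourhood of $(f(p), f(q))$ in $N \times N$, so it meets $\overline{s\chi^+_N}$ and therefore meets $s\chi^+_N$. Hence $f \circ \iota_U \overlap_{\chi^+} f \circ \iota_V$, and $\overlap$-monicity of $f$ yields $\iota_U \overlap_{\chi^+} \iota_V$, i.e.\ $U \times V$ meets $s\chi^+_M$. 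Since $U, V$ were arbitrary neighbourhoods of $p, q$, \cref{lem:top_closure_relation} gives $(p,q) \in \overline{s\chi^+_M}$, showing that $f$ reflects $\overline{s\chi^+}$.

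The only substantive check, and the point where the transcription could conceivably fail, is that the chiral analogues of the ingredients used in \cref{thm:overlap-monics-SpTm} are genuinely available: namely, that $\chi^+$ (hence $s\chi^+$) is preserved by every morphism of $\mathsf{CSpTm}_{1+1}^{\mathrm{o,to}}$ (so that $\disjoint_{\chi^+}$ really is the pullback of $\disjoint_\mathrm{bin}$), and that morphisms are open maps. Both are established in the paragraphs immediately preceding \cref{def:chiral_discon} and in the discussion of $\mathsf{CSpTm}$ as local diffeomorphisms; no appeal to any causal property of $M$ or $N$ is needed for this theorem, exactly paralleling the general (non-causal) statement of \cref{thm:overlap-monics-SpTm}.
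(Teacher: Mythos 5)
Your proposal is correct and matches the paper exactly: the paper itself states that \cref{thm:overlap-monics-SpTm} and its proof carry over verbatim to $(\mathsf{CSpTm}, \disjoint_{\chi^+})$ with $J$ replaced by $\chi^+$, which is precisely the transcription you carry out. Your explicit check of the needed ingredients (morphisms preserve $\chi^+$, are open maps, and open-set inclusions are morphisms) is exactly the justification the paper relies on.
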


\begin{lemma} \label{lem:unions_preserve_chiral_disjointness}
	Let $M_1 \xrightarrow{f_1} N \xleftarrow{f_2} M_2$ be any conterminous pair in $\mathsf{CSpTm}$,
	and let $\left\{ i_\alpha : U_\alpha \hookrightarrow M_1 \right\}_{\alpha \in \mathcal{I}}$ be a family (not necessarily countable) of subspacetime inclusions in $\mathsf{CSpTm}$ such that $M_1 = \bigcup_{\alpha \in \mathcal{I}} U_\alpha$.
	\begin{equation*}
		\text{If} \qquad
		\begin{tikzcd}[cramped, column sep=small]
								& N \\
			M_1 \arrow[ur, "f_1"{name=1, near start}] \arrow[rr, phantom, "\disjoint_{\chi^+}"{pos=0.6}]	& & M_2 \arrow[ul, "f_2"{near start, name=2,swap}] \\
			U_\alpha \arrow[u, hook, "i_\alpha"]
		\end{tikzcd}
		\qquad \text{for each } \alpha \in \mathcal{I}, \text{ then} \qquad
		\begin{tikzcd}[cramped, column sep=small]
								& N \\
			M_1 \arrow[ur, "f_1"{name=1, near start}]	& & M_2 \arrow[ul, "f_2"{near start, name=2,swap}]
			\arrow[phantom, from=1, to=2, "\disjoint_{\chi^+}"{below,pos=0.6}]
		\end{tikzcd}.
	\end{equation*}
\end{lemma}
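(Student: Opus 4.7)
The plan is to mimic verbatim the proof of \cref{lem:unions_preserve_causal_disjointness}, since as the paper already notes, the argument transfers with only the replacement of $\mathsf{SpTm}_{d+1}$ and $J$ by $\mathsf{CSpTm}$ and $\chi^+$. The proof is contrapositive: assume the conclusion fails, i.e.\ $f_1 \overlap_{\chi^+} f_2$, and deduce that one of the hypothesis instances $f_1 \circ i_\alpha \disjoint_{\chi^+} f_2$ must also fail.

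More precisely, I would proceed as follows. Suppose $f_1 \overlap_{\chi^+} f_2$. By \cref{def:chiral_discon} (the definition of $\disjoint_{\chi^+}$), the set $s\chi^+_N$ intersects $f_1(M_1) \times f_2(M_2)$, so there exist $p \in M_1$ and $q \in M_2$ with $(f_1(p), f_2(q)) \in s\chi^+_N$. The covering hypothesis $M_1 = \bigcup_{\alpha \in \mathcal{I}} U_\alpha$ yields some index $\alpha \in \mathcal{I}$ with $p \in U_\alpha$, so that $f_1 \circ i_\alpha(p) = f_1(p)$. Hence $(f_1 \circ i_\alpha(p), f_2(q)) \in s\chi^+_N$ lies in $(f_1 \circ i_\alpha)(U_\alpha) \times f_2(M_2)$, showing $f_1 \circ i_\alpha \overlap_{\chi^+} f_2$ for this $\alpha$, contradicting the hypothesis.

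There is no real obstacle: the only input beyond set-level manipulation is the definition of $\disjoint_{\chi^+}$ in terms of intersection with $s\chi^+_N$, which directly parallels the $sJ_N$ condition used in the causal case. I would not need to invoke anything about chirality, transitivity of $\chi^+$, or the structure of chiral curves — the argument is purely about images of conterminous maps and the pointwise criterion for $\disjoint_{\chi^+}$. If desired, I could simply write: \emph{The proof is identical to that of \cref{lem:unions_preserve_causal_disjointness} upon replacing $sJ_N$ by $s\chi^+_N$}, and spell out the two-line contrapositive above.
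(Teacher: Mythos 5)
Your proposal is correct and matches the paper exactly: the paper's own proof is precisely the verbatim transfer of the contrapositive argument from \cref{lem:unions_preserve_causal_disjointness}, replacing $sJ_N$ by $s\chi^+_N$. Nothing further is needed.
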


\begin{corollary} \label{cor:unions_preserve_chiral_disjointness}
	Let $\mathsf{C}$ be any $\disjoint$-subcategory of $(\mathsf{CSpTm}, \disjoint_{\chi^+})$. Then \cref{lem:unions_preserve_chiral_disjointness} applies with $\mathsf{C}$ in place of $\mathsf{CSpTm}$.
\end{corollary}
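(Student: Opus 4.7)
The plan is to imitate verbatim the proof of \cref{cor:unions_preserve_causal_disjointness}, which is the analogous statement for causal disjointness on $\mathsf{SpTm}$. The essential observation is that a $\disjoint$-subcategory inclusion $\mathsf{C} \hookrightarrow \mathsf{CSpTm}$ is by definition equipped with the pullback $\disjoint$-relation, so the inclusion functor both preserves and reflects $\disjoint_{\chi^+}$. This is the only categorical input needed; everything else is already done by \cref{lem:unions_preserve_chiral_disjointness}.

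Concretely, I would assume given a conterminous pair $f_1 : M_1 \to N \leftarrow M_2 : f_2$ in $\mathsf{C}$ together with a family of subspacetime inclusions $\{i_\alpha : U_\alpha \hookrightarrow M_1\}_{\alpha \in \mathcal{I}}$ in $\mathsf{C}$ covering $M_1$, and satisfying $f_1 \circ i_\alpha \disjoint_{\chi^+} f_2$ in $\mathsf{C}$ for each $\alpha$. Applying the $\disjoint$-preserving inclusion $\mathsf{C} \hookrightarrow \mathsf{CSpTm}$ transfers these relations to $\mathsf{CSpTm}$; then \cref{lem:unions_preserve_chiral_disjointness}, applied in the ambient category $\mathsf{CSpTm}$, yields $f_1 \disjoint_{\chi^+} f_2$ in $\mathsf{CSpTm}$; finally, $\disjoint$-reflection of the inclusion pulls this conclusion back to $\mathsf{C}$.

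There is no real obstacle: the argument is purely formal manipulation of preservation and reflection of the $\disjoint$-relation by the subcategory inclusion, together with one application of the already-established \cref{lem:unions_preserve_chiral_disjointness}. The only point worth noting is that the inclusions $i_\alpha$ need to be morphisms of $\mathsf{C}$ (not merely of $\mathsf{CSpTm}$), as stated in the hypothesis; this ensures that the composites $f_1 \circ i_\alpha$ make sense in $\mathsf{C}$ before transferring to $\mathsf{CSpTm}$. Because the proof is so short and mechanically parallel to that of \cref{cor:unions_preserve_causal_disjointness}, it may even be appropriate to omit it and simply remark that the argument carries over verbatim with $\chi^+$ in place of $J$.
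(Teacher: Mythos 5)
Your argument is exactly the paper's: the paper proves this corollary by noting that the proof of \cref{cor:unions_preserve_causal_disjointness} carries over verbatim, and that proof is precisely your transfer via the $\disjoint$-preserving and $\disjoint$-reflecting subcategory inclusion followed by one application of the union lemma in the ambient category. The proposal is correct and matches the paper's approach, including the observation that the $i_\alpha$ must be morphisms of $\mathsf{C}$.
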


Just as for the case of $(\mathsf{SpTm}_{d+1}, \disjoint_J)$, the general characterisation of $\overlap$-monics in $\left(\mathsf{CSpTm}, \disjoint_{\chi^+}\right)$ is technically and conceptually difficult to handle.
To simplify it, we can again restrict to spacetimes with good `chiral properties' analogous to the causal properties of \cref{sec:spacetimes-cats-relativistic:causal_properties}.

\subsection{Chiral frames and chiral flows} \label{sec:spacetimes-cats-chiral:frames}

Before developing chiral analogues of the causal properties in \cref{sec:spacetimes-cats-relativistic:causal_properties}, we introduce some technical tools.

\begin{definition} \label{def:chiral_vectors}
	Let $(M,g)$ be a 2D oriented spacetime.
	
	A \emph{right-chiral vector field} $n_+$ on $M$ is a smooth, nowhere-vanishing vector field such that $n_+(p)$ is null, future-directed and right-pointing for all $p \in M$.
	Dually, a \emph{left-chiral vector field} $n_-$ on $M$ is a smooth, nowhere-vanishing vector field that is everywhere null, future-directed and left-pointing.
	
	A \emph{chiral frame} $(n_-, n_+)$ on $M$ is a smooth global frame on $M$, i.e. a frame for the tangent bundle $TM \to M$, where $n_-$ and $n_+$ are respectively left- and right-chiral vector fields.
	
	A \emph{right-chiral 1-form} $\eta^+$ on $M$ is a 1-form such that the vector field $(\eta^+)^\sharp$ is right-chiral.
	Similarly, a \emph{left-chiral 1-form} $\eta^-$ is one such that $(\eta^-)^\sharp$ is a left-chiral vector field.
	The musical isomorphism is with respect to the metric $g$ on $M$, so $(\eta^\pm)^\sharp$ is defined such that $\eta^\pm = g\left((\eta^\pm)^\sharp, \cdot \right)$.
	
	A \emph{chiral coframe} is a smooth global coframe $(-\eta^+, -\eta^-)$ on $M$, i.e. a frame for the cotangent bundle $T^*M \to M$, where $\eta^-$ and $\eta^+$ are respectively left- and right-chiral 1-forms.
\end{definition}

The right-chiral vector fields on $M$ form a convex cone in the $C^\infty(M)$-module $\Gamma(TM)$ of vector fields on $M$.
In other words, if $n_+, n_+' \in \Gamma(TM)$ are both right-chiral vector fields, and $f_1, f_2 \in C^\infty(M)$ are both smooth functions with $f_i(p) > 0$ for all $p \in M$, then $f_1 n_+ + f_2 n_+'$ is also a right-chiral vector field.
The same is true of left-chiral vector fields, and similar is true for right- and left-chiral 1-forms in $\Gamma(T^*M)$.

The arbitrary choice of signs and ordering in \cref{def:chiral_vectors} of a chiral frames $(n_-,n_+)$ and chiral coframes $(-\eta^+, -\eta^-)$ is arranged such that the following holds:

\begin{lemma}
	Let $(M, g)$ be a 2D oriented spacetime, and let $(E_1,E_2)$ be smooth global frame on $M$ with dual coframe $(\varepsilon^1, \varepsilon^2)$.
	Then $(E_1, E_2)$ is a chiral frame if and only if $(\varepsilon^1, \varepsilon^2)$ is a chiral coframe.
\end{lemma}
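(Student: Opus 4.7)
The plan is to reduce the claim to a pointwise linear-algebraic statement, and then verify it by explicitly computing the musical isomorphism in a chiral frame. At each $p \in M$, a chiral frame $(E_1, E_2) = (n_-, n_+)$ consists of two future-directed null tangent vectors lying in the two distinct null directions at $p$; such a pair is automatically linearly independent and so forms a basis of $T_p M$. The key observation, used repeatedly, is that whenever $n_-$ and $n_+$ are future-directed null vectors in distinct null directions, the sum $n_- + n_+$ is future-directed timelike, so $2 g(n_-, n_+) = g(n_- + n_+,\, n_- + n_+) < 0$.

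For the forward direction, I would suppose $(E_1, E_2) = (n_-, n_+)$ is a chiral frame and expand each $(\varepsilon^i)^\sharp$ in this frame. The duality conditions $\varepsilon^i(E_j) = g\bigl((\varepsilon^i)^\sharp, E_j\bigr) = \delta^i_j$, combined with the nullity $g(n_\pm, n_\pm) = 0$, immediately force $(\varepsilon^1)^\sharp$ to be a scalar multiple of $n_+$ and $(\varepsilon^2)^\sharp$ a scalar multiple of $n_-$. The sign of $g(n_-, n_+)$ noted above then pins down that both proportionality factors are \emph{negative} smooth functions on $M$; equivalently, $-(\varepsilon^1)^\sharp$ and $-(\varepsilon^2)^\sharp$ are positive multiples of $n_+$ and $n_-$ respectively. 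Setting $\eta^+ := -\varepsilon^1$ and $\eta^- := -\varepsilon^2$, the vector fields $(\eta^\pm)^\sharp$ are right- and left-chiral by the convexity of the respective cones noted after \cref{def:chiral_vectors}, so $(\varepsilon^1, \varepsilon^2) = (-\eta^+, -\eta^-)$ is a chiral coframe in the sense of \cref{def:chiral_vectors}.

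The converse runs by the same bookkeeping with the roles of frame and coframe exchanged: starting from a chiral coframe $(\varepsilon^1, \varepsilon^2) = (-\eta^+, -\eta^-)$, I would expand each $E_i$ in the basis $\bigl((\eta^-)^\sharp, (\eta^+)^\sharp\bigr)$, use the nullity of $(\eta^\pm)^\sharp$ together with $g\bigl((\eta^+)^\sharp, (\eta^-)^\sharp\bigr) < 0$ (again by the future-timelike-sum argument), and read off that $E_1$ is a positive smooth multiple of $(\eta^-)^\sharp$ while $E_2$ is a positive smooth multiple of $(\eta^+)^\sharp$; hence $(E_1, E_2)$ is a chiral frame. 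I expect no substantive obstacle: the argument is a direct computation in each tangent space, with smoothness of the multipliers following from smoothness of $g$ and of the original (co)frame. The only real subtlety is tracking signs carefully enough to confirm that $g(n_-, n_+) < 0$ propagates into the correct \emph{positive}-multiple statements matching the sign conventions in \cref{def:chiral_vectors}.
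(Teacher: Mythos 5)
Your proposal is correct and takes essentially the same route as the paper: both arguments compute $(\varepsilon^i)^\sharp$ (respectively the dual frame) explicitly from the duality relations $\varepsilon^i(E_j)=\delta^i_j$, the nullity of the chiral directions, and the sign $g(n_-,n_+)<0$, concluding that each is a positive multiple of the opposite chiral direction after the sign flip. The only cosmetic difference is that you justify $g(n_-,n_+)<0$ via the future-directed timelike sum $n_-+n_+$, where the paper appeals directly to the standard fact about non-collinear future-directed causal vectors.
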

\begin{proof}
$(\Rightarrow)$:
	The dual coframe $(\varepsilon^1, \varepsilon^2)$ is defined such that $\varepsilon^i(E_j) = \delta^i_j$.
	Using this, it is straightforward to show that if $E_1$ and $E_2$ are respectively left- and right-chiral vector fields then
	\begin{equation*}
		(- \varepsilon^1)^\sharp = \frac{E_2}{-g(E_1,E_2)}
		\qquad \text{and} \qquad
		(- \varepsilon^2)^\sharp = \frac{E_1}{-g(E_1,E_2)}.
	\end{equation*}
	This shows that $(-\varepsilon^1)^\sharp$ and $(-\varepsilon^2)^\sharp$ are respectively right- and left-chiral vector fields,
	since $E_1$ and $E_2$ are both future-directed causal (and not collinear) so that $g(E_1,E_2) < 0$.
	Hence $(\varepsilon^1, \varepsilon^2)$ is a chiral coframe.
	
$(\Leftarrow)$:
	Similarly, if $(\varepsilon^1, \varepsilon^2)$ is a chiral coframe then its dual frame may be written as
	\begin{equation*}
		E_1 = \frac{(-\varepsilon^2)^\sharp}{- g \left((-\varepsilon^1)^\sharp, (-\varepsilon^2)^\sharp\right)}
		\qquad \text{and} \qquad
		E_2 = \frac{(-\varepsilon^1)^\sharp}{- g \left((-\varepsilon^1)^\sharp, (-\varepsilon^2)^\sharp\right)}.
	\end{equation*}
	So $E_1$ and $E_2$ are respectively left- and right-chiral, using $g \left((-\varepsilon^1)^\sharp, (-\varepsilon^2)^\sharp\right) < 0$.
	Hence $(E_1, E_2)$ is a chiral frame.
\end{proof}

% ---------
% CHIRAL_EXAMPLE_EDITS
% ---------
% % original:
% 
% 
% \begin{example} \label{ex:Minkowski_lightcone_coordinates}
% 	Consider 2D Minkowski spacetime $\mathbb{R}^{1,1}$ with metric $g = -\dd t^2 + \dd x^2$ in its standard coordinates $(t,x)$.
% 	The standard time-orientation can be represented by future-directed timelike vector field $\partial_t$, and the standard orientation by 2-form $\dd t \wedge \dd x$.
% 	
% 	Lightcone coordinates $x^\pm := t \pm x$ give coordinate vector fields
% 	$\partial_+ = \frac{1}{2} \left(\partial_t + \partial_x\right)$ a right-chiral vector field and $\partial_- = \frac{1}{2} \left(\partial_t - \partial_x\right)$ a left-chiral vector field.
% 	Hence $(\partial_-, \partial_+)$ is a chiral frame for $\mathbb{R}^{1,1}$.
% 	
% 	In terms of lightcone coordinates, the metric is $g = -\frac{1}{2} (\dd x^- \otimes \dd x^+ + \dd x^+ \otimes \dd x^-)$.
% 	Using this, it is easy to show that the dual coframe $(\dd x^-, \dd x^+)$ to $(\partial_-, \partial_+)$ has  $(\dd x^\pm)^\sharp = -2 \partial_\mp$ so that $- \dd x^+$ is a left-chiral 1-form and $- \dd x^-$ is a right-chiral 1-form.
% \end{example}
% 
% 
% ---------
% % with example:

% \todo[color=purple]{Trim down, since I'm moving initial intro lightcone coordinates on $\mathbb{R}^{1,1}$ higher. Also check referencing still works right.}

\begin{example} \label{ex:Minkowski_chiral_frame}
	Consider 2D Minkowski spacetime $\mathbb{R}^{1,1}$, with lightcone coordinates $(x^-,x^+)$ as per \cref{ex:Minkowski_lightcone_coordinates}.
	Since the coordinate vector field $\partial_+$ is everywhere null, future-directed and right-pointing, it is a right-chiral vector field on $\mathbb{R}^{1,1}$.
	Similarly, the coordinate vector field $\partial_-$ is a left-chiral vector field, so that $(\partial_-, \partial_+)$ is a chiral frame on $\mathbb{R}^{1,1}$.
	
	Written in lightcone coordinates, the metric is $g = -\frac{1}{2} (\dd x^- \otimes \dd x^+ + \dd x^+ \otimes \dd x^-)$.
	Using this, it is easy to show that the dual coframe $(\dd x^-, \dd x^+)$ to $(\partial_-, \partial_+)$ has  $(\dd x^\pm)^\sharp = -2 \partial_\mp$ so that $- \dd x^+$ is a left-chiral 1-form and $- \dd x^-$ is a right-chiral 1-form.
	Thus $(\dd x^-, \dd x^+)$ is a chiral coframe on $\mathbb{R}^{1,1}$ as per \cref{def:chiral_vectors}.
\end{example}

% ---------

\begin{proposition} \label{prop:chiral_frames_exist}
	Let $M$ be any 2D oriented spacetime.
	There exists a chiral frame $(n_-, n_+)$ on $M$.
\end{proposition}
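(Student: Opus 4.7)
The plan is to construct the chiral frame explicitly from the time-orientation and orientation data, modulo a normalisation. First, I would pick a smooth, everywhere-timelike, future-directed vector field $T$ on $M$, which exists because $M$ is time-oriented. Let $\Omega$ be a smooth nowhere-vanishing 2-form representing the orientation on $M$. Define a 1-form $\alpha := \iota_T \Omega = \Omega(T, \cdot)$, and let $X := \alpha^\sharp$ with respect to $g$, so that $g(X, \cdot) = \Omega(T, \cdot)$. By construction, $X$ is smooth; it is also nowhere-vanishing, because $\Omega$ is non-degenerate and $T$ is nowhere-zero, so $\iota_T \Omega$ is a nowhere-vanishing 1-form. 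Antisymmetry of $\Omega$ gives $g(T, X) = \Omega(T, T) = 0$, so $X \perp T$; since $T$ is timelike and $X \neq 0$ in the two-dimensional vector space $T_p M$, it follows that $X$ is spacelike at every point.

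Next I would rescale to match norms. Set $\lambda := \sqrt{-g(T,T)/g(X,X)} \in C^\infty(M)$, which is smooth and strictly positive since $g(T,T) < 0$ and $g(X,X) > 0$ pointwise, and let $X' := \lambda X$. Then $X'$ is smooth, nowhere-vanishing and spacelike, with $g(T, X') = 0$ and $g(X', X') = -g(T, T)$. Define
\begin{equation*}
	n_+ := T + X', \qquad n_- := T - X'.
\end{equation*}
A direct computation gives $g(n_\pm, n_\pm) = g(T,T) \pm 2 g(T, X') + g(X', X') = 0$, so $n_\pm$ are both null. They are future-directed because $g(T, n_\pm) = g(T,T) < 0$, placing them in the same time-cone as $T$. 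Finally, $\Omega(T, n_\pm) = \pm \Omega(T, X') = \pm \lambda \, g(X, X)$, which is strictly positive for $n_+$ and strictly negative for $n_-$; hence $n_+$ is right-pointing and $n_-$ is left-pointing, as required by \cref{def:chiral_vectors}.

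It remains to note that $(n_-, n_+)$ is a frame for $TM$. At any $p \in M$, if $a \, n_-(p) + b \, n_+(p) = 0$ then contracting with $\Omega_p(T, \cdot)$ gives $-a \lambda(p) g(X,X)(p) + b \lambda(p) g(X,X)(p) = 0$, so $a = b$, and then $a (n_- + n_+)(p) = 2a T(p) = 0$ forces $a = b = 0$. Hence $n_-(p)$ and $n_+(p)$ are linearly independent, and since $\dim T_pM = 2$ the pair is a basis of $T_pM$; so $(n_-, n_+)$ is a smooth global frame, concluding the construction.

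The only step requiring care is the global smooth construction of the spacelike vector field $X$; this is precisely where orientability is used, by packaging the global choice of a "right-pointing" spacelike direction into the single tensor $\iota_T \Omega$. Without orientability this choice could not be made consistently across $M$, so the argument would fail. The remainder is a routine linear-algebra computation in each tangent plane.
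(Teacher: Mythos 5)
Your construction is correct, but it takes a genuinely different route from the paper. The paper proves \cref{prop:chiral_frames_exist} locally: in each chart compatible with orientation and time-orientation it writes down explicit coordinate formulas for null, future-directed, left/right-pointing vector fields, and then glues these with a partition of unity, using that the right-chiral (resp.\ left-chiral) vector fields form a convex cone in $\Gamma(TM)$. You instead build the frame globally in one step: with $T$ a future-directed timelike field and $\Omega$ an orientation form, the field $X=(\iota_T\Omega)^\sharp$ is smooth, nowhere-vanishing, orthogonal to $T$ and hence spacelike, and after normalising to $g(X',X')=-g(T,T)$ the combinations $n_\pm=T\pm X'$ are null, future-directed, and respectively right- and left-pointing since $\Omega(T,n_\pm)=\pm\lambda\,g(X,X)$; linear independence is immediate. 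All the pointwise checks are right, and smoothness of $\lambda$ is unproblematic since $g(T,T)<0$ and $g(X,X)>0$ everywhere. In fact the paper's remark following the proposition already exhibits $(T,\Omega(T,\cdot)^\sharp)$ as a global frame to conclude parallelisability; your argument completes that observation into a proof of the proposition itself. What your approach buys is a coordinate-free, partition-of-unity-free construction that makes the dependence on the orientation and time-orientation data transparent; what the paper's approach buys is an illustration of the convex-cone structure of chiral vector fields, which is a tool it states separately and reuses, and a local-coordinate formula that is occasionally convenient in computations.
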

\begin{proof}
	Locally-defined chiral vector fields may be constructed directly in coordinate charts.
	Let $U \subseteq M$ be an open subset with chart $\phi : U \to \mathbb{R}^2$, $p \mapsto (t(p),x(p))$ that is compatible with orientation and with time-orientation.
	Compatibility of the chart with orientation means that $\dd t \wedge \dd x$ represents the orientation on $U \subseteq M$, while compatibility with time-orientation means that the coordinate vector field $\partial_t$ is future-directed timelike.
	Then by direct computation, the smooth local vector fields
	\begin{equation*}
		n_\pm^U := \left[ \sqrt{- \det g} \pm g_{tx}\right] \partial_t \mp g_{tt} \partial_x
	\end{equation*}
	are null and future-directed, with $n_+^U$ right-pointing and $n_-^U$ left-pointing.
	Here $g_{ij} = g(\partial_i, \partial_j)$ are the components of the metric and $\det g$ is the determinant of the matrix of metric components.
	
	Given an atlas $\{ U_\alpha, \phi_\alpha : U_\alpha \to \mathbb{R}^2 \}$ of such charts, take a smooth partition of unity $\{ f_\alpha : M \to [0,1] \}$ subordinate to it.
	Set
	\begin{equation*}
		n_\pm := \sum_\alpha f_\alpha n_\pm^{U_\alpha}.
	\end{equation*}
	Since the collection of right-chiral vector fields is a convex cone in $\Gamma(TM)$, it follows that $n_+$ is right-chiral.
	Similarly $n_-$ is left-chiral.
\end{proof}

\begin{remark}
	Since \cref{prop:chiral_frames_exist} demonstrates the existence of a smooth global frame on any 2D oriented spacetime, it shows in particular that every 2D oriented spacetime is parallelisable.
	
	Alternatively, we see that any 2D oriented spacetime $(M,g)$ is parallelisable because $(T, \Omega(T, \cdot)^\sharp)$ is a smooth global frame, where $T$ is a future-directed timelike vector field representing the time-orientation on $M$, $\Omega$ is a nowhere-vanishing 2-form representing the orientation on $M$, and the musical isomorphism is taken with respect to metric $g$.
\end{remark}

Chiral frames on a two-dimensional Lorentzian manifold encode essentially the same information as orientations and time-orientations:

\begin{proposition} \label{prop:chiral_frame_orientation}
	Let $(M, g)$ be any Lorentzian manifold with $\dim M = 2$,
	and let $(n_1, n_2)$ be a smooth global frame on $M$ where the $n_i$ are null $g(n_i,n_i) = 0$ and have $g(n_1, n_2) < 0$ everywhere.
	
	Then there is a unique choice of orientation and time-orientation on $M$ such that $(n_1, n_2)$ is a chiral frame.
\end{proposition}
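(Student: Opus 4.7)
The plan is to construct the time-orientation and orientation directly from the frame $(n_1, n_2)$, verify they make the frame chiral, and then observe that each sign convention is forced.

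First, I would address time-orientation. Define the vector field $T := n_1 + n_2$. Since $g(n_1,n_1) = g(n_2,n_2) = 0$ and $g(n_1,n_2) < 0$ everywhere, $g(T,T) = 2g(n_1,n_2) < 0$, so $T$ is smooth and everywhere timelike. It therefore represents a time-orientation on $M$. With respect to this choice, both $n_i$ are future-directed: $g(T, n_i) = g(n_1, n_2) < 0$ for $i = 1, 2$, by the sign convention used to define the future lightcone from $T$.

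Next I would address orientation. Because $(n_1, n_2)$ is a smooth global frame, it has a smooth global dual coframe $(n^1, n^2)$, and the 2-form $\Omega := n^1 \wedge n^2$ is nowhere-vanishing, so it represents a unique orientation on $M$. By construction, $\Omega(n_1, n_2) = 1 > 0$. I would then verify the chirality of the frame with respect to this $\Omega$ and the above $T$: since
\begin{equation*}
    \Omega(T, n_2) = \Omega(n_1, n_2) = 1 > 0
    \qquad \text{and} \qquad
    \Omega(T, n_1) = \Omega(n_2, n_1) = -1 < 0,
\end{equation*}
the field $n_2$ is right-pointing and $n_1$ is left-pointing. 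Combined with their null and future-directed character, this makes $n_2$ right-chiral and $n_1$ left-chiral, so $(n_1, n_2)$ is a chiral frame per \cref{def:chiral_vectors}.

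For uniqueness, I would argue by exhaustion: there are only two possible time-orientations on $M$ (once it admits any) and two possible orientations, so I need to show that each alternative fails. If the opposite time-orientation were chosen, then both $n_i$ would become past-directed (their causal character being intrinsic, only the future/past distinction is swapped), contradicting the requirement that a chiral frame consist of future-directed vector fields. If the opposite orientation $-\Omega$ were chosen, then $-\Omega(T, n_2) < 0$ and $-\Omega(T, n_1) > 0$, swapping right- and left-pointing, so $n_2$ would be left-chiral and $n_1$ right-chiral; this would make $(n_1, n_2)$ the ``wrong-ordered'' chiral frame rather than one in the form $(n_-, n_+)$ prescribed by \cref{def:chiral_vectors}. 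Hence both the time-orientation and the orientation are uniquely determined.

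The argument is essentially bookkeeping with signs, and I do not anticipate a substantive obstacle; the one subtlety worth making explicit is that defining $\Omega := n^1 \wedge n^2$ globally requires the global frame $(n_1, n_2)$, which is given as a hypothesis. Likewise, smoothness of $T$ and $\Omega$ is inherited immediately from smoothness of $n_1, n_2$ and of the dual coframe.
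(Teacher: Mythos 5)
Your proposal is correct and takes essentially the same route as the paper: the same timelike field $T := n_1 + n_2$, the same 2-form built from the dual coframe, and the same sign computations $g(T,n_i) = g(n_1,n_2) < 0$, $\Omega(T,n_1) = -1$, $\Omega(T,n_2) = 1$ pinning down the two structures. The only nitpick is that ``there are only two possible time-orientations and two possible orientations'' holds per connected component rather than globally (the paper does not assume $M$ connected), but your exhaustion argument applies verbatim on each component, so uniqueness follows all the same.
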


\begin{proof}
	From the null frame $(n_1, n_2)$ and its dual coframe $(\epsilon_1, \epsilon_2)$,
	construct vector field $T := n_1 + n_2$ and 2-form $\Omega := \epsilon_1 \wedge \epsilon_2$ on $M$.
	$T$ is everywhere timelike since
	\begin{equation*}
		g(T, T) = g(n_1 + n_2, n_1 + n_2) = 2 g(n_1, n_2) < 0,
	\end{equation*}
	and $\Omega$ is nowhere-vanishing by linear independence of $(n_1, n_2)$.
	
	The time-orientation in which $T$ is future-directed is the unique time-orientation on $M$ for which $n_1$ and $n_2$ are everywhere future-directed, since
	\begin{equation*}
		g(T, n_i) = g(n_1, n_2) < 0.
	\end{equation*}
	With this time-orientation fixed, the orientation on $M$ specified by volume form $\Omega$ is the unique orientation for which $n_1$ is left-pointing and $n_2$ is right-pointing:
% 	\begin{equation*}
		$\Omega(T,n_1) = \epsilon_1 \wedge \epsilon_2 (n_2, n_1) = -1 < 0$, and
% 		\quad \text{and} \quad
		$\Omega(T,n_2) = \epsilon_1 \wedge \epsilon_2 (n_1, n_2) = 1 > 0$.
% 	\end{equation*}
\end{proof}

\begin{example}
	Lightcone coordinates $x^\pm = t \pm x$ on two-dimensional Minkowski spacetime $\mathbb{R}^{1,1}$ have coordinate frame $(\partial_-, \partial_+)$ with both $\partial_+$ and $\partial_-$ everywhere null; see also
% ---------
% CHIRAL_EXAMPLE_EDITS
% ---------
% % original:
% 	\cref{ex:Minkowski_lightcone_coordinates}.
% ---------
% % with example:
	\cref{ex:Minkowski_lightcone_coordinates,ex:Minkowski_chiral_frame}.
% ---------
	
	From the dual coframe $(\dd x^-, \dd x^+)$, the orientation on $\mathbb{R}^{1,1}$ specified as per the proposition above is given by 2-form $ \dd x^- \wedge \dd x^+ = (\dd t - \dd x) \wedge (\dd t + \dd x) = 2 \dd t \wedge \dd x$.
	This coincides with the standard orientation on $\mathbb{R}^{1,1}$ given by $\dd t \wedge \dd x$.
	
	Similarly, the time-orientation specified above is given by timelike vector field $\partial_- + \partial_+ = \frac{1}{2} \left(\partial_t - \partial_x \right) + \frac{1}{2} \left(\partial_t + \partial_x \right) = \partial_t$; this coincides with the standard time-orientation on $\mathbb{R}^{1,1}$.
\end{example}

Any null frame $(n_1, n_2)$ has, on any given path-component, either $g(n_1, n_2) < 0$ or $g(n_1, n_2) > 0$ everywhere by linear independence.
So, up to swapping $n_1 \mapsto -n_1$ on some path-components of $M$, the hypothesis $g(n_1, n_2) < 0$ of \cref{prop:chiral_frame_orientation} may be satisfied by any null frame:
\begin{corollary}
	Let $(M, g)$ be any Lorentzian manifold with $\dim M = 2$,
	and let $(n_1, n_2)$ be a smooth global frame on $M$ where the $n_i$ are null everywhere.
	Then $(M,g)$ is orientable and time-orientable.
\end{corollary}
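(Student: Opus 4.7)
The plan is to reduce to the hypotheses of \cref{prop:chiral_frame_orientation} by flipping the sign of $n_1$ on selected path-components of $M$, so that the modified frame satisfies $g(n_1', n_2) < 0$ everywhere. The key observation enabling this is that the continuous function $p \mapsto g_p(n_1(p), n_2(p))$ is nowhere zero, so it has locally constant sign.

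First I would verify that $g(n_1, n_2)$ vanishes nowhere on $M$. At any point $p \in M$, the pair $(n_1(p), n_2(p))$ is a basis of $T_pM$ since $(n_1, n_2)$ is a frame; in particular it spans a 2-dimensional subspace. If $g_p(n_1(p), n_2(p))$ were zero, then together with the null conditions $g_p(n_i(p), n_i(p)) = 0$ this subspace would be totally null. But in Lorentzian signature on a 2-dimensional vector space, totally null subspaces have dimension at most $1$, a contradiction. Hence $g(n_1, n_2)$ is a continuous nowhere-zero real-valued function on $M$.

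Next I would construct the modified frame. Let $M^+$ be the union of those path-components of $M$ on which $g(n_1, n_2) > 0$ and $M^-$ the union of the remaining path-components. Since $M$ is a smooth manifold it is locally path-connected, so its path-components coincide with its connected components and are therefore open. Define
\begin{equation*}
	n_1'(p) := \begin{cases} -n_1(p) & p \in M^+, \\ \phantom{-}n_1(p) & p \in M^-. \end{cases}
\end{equation*}
Because $M^+$ and $M^-$ are open and disjoint, $n_1'$ agrees with a smooth vector field on a neighbourhood of every point, so $n_1'$ is smooth. It is also null and nowhere-vanishing, and $(n_1', n_2)$ remains a global frame (linear independence is preserved by negation). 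By construction, $g(n_1', n_2) < 0$ everywhere on $M$.

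Finally I would apply \cref{prop:chiral_frame_orientation} to the frame $(n_1', n_2)$, obtaining an orientation and a time-orientation on $M$ with respect to which $(n_1', n_2)$ is a chiral frame. In particular, $M$ admits an orientation and a time-orientation, so it is orientable and time-orientable. No step here is a genuine obstacle; the only point requiring a modicum of care is the smoothness of $n_1'$, which rests on the openness of path-components of manifolds.
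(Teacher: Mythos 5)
Your proposal is correct and follows essentially the same route as the paper: the paper likewise observes that $g(n_1,n_2)$ is nowhere zero and hence of constant sign on each path-component, flips $n_1$ on the components where it is positive, and then invokes \cref{prop:chiral_frame_orientation}. Your only addition is spelling out the nowhere-vanishing claim (via the impossibility of a totally null 2-plane) and the smoothness of the modified field, both of which the paper leaves implicit.
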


Any (right- or left-) chiral vector field is nowhere-vanishing, and remains a chiral vector field after scaling by an everywhere-strictly-positive function.
Consequently, we may always rescale a chiral vector field to a complete chiral vector field, i.e. such that the domain of any maximal integral curve is the whole of $\mathbb{R}$ \cite{mathsx378789}\todo{can I find earlier or textbook citation?}:

\begin{lemma} \label{lem:vector_field_rescale_to_complete}
	Let $X$ be a smooth manifold, and $V$ be a smooth, nowhere-vanishing vector field on $X$.
	There exists a smooth function $f \in C^\infty(X)$ which is everywhere-strictly-positive $f > 0$ such that $fV$ is a complete vector field.
\end{lemma}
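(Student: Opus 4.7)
The plan is to slow down $V$ enough that its integral curves cannot escape to infinity in finite parameter time. The natural way to quantify `escape to infinity' is with respect to a background complete Riemannian metric, so the strategy is to introduce one on $X$, rescale $V$ to have bounded norm with respect to it, and use metric completeness to extend all maximal integral curves to all of $\mathbb{R}$.

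First I would invoke the standard fact (Nomizu--Ozeki) that any smooth manifold $X$ admits a complete Riemannian metric $g$; this is a partition-of-unity construction applied to an exhaustion by compact sets. Given such a $g$, define
\begin{equation*}
	f := \frac{1}{\sqrt{1 + g(V,V)}} \in C^\infty(X),
\end{equation*}
which is smooth and everywhere-strictly-positive since $V$ is smooth and $g$ is positive-definite. A direct computation gives $g(fV, fV) = g(V,V) / (1 + g(V,V)) < 1$, so the rescaled vector field $fV$ has $g$-norm strictly less than $1$ everywhere.

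Next I would show $fV$ is complete. Let $\gamma : (a,b) \to X$ be any maximal integral curve of $fV$, and suppose for contradiction that $b < \infty$. For $t_1, t_2 \in (a,b)$ with $t_1 < t_2$, the $g$-length of $\gamma|_{[t_1,t_2]}$ satisfies
\begin{equation*}
	d_g(\gamma(t_1), \gamma(t_2)) \leq \int_{t_1}^{t_2} \sqrt{g_{\gamma(s)}(fV, fV)} \, \dd s \leq t_2 - t_1,
\end{equation*}
where $d_g$ is the Riemannian distance. Hence for any sequence $t_n \to b^-$, the sequence $\gamma(t_n)$ is Cauchy in $(X, d_g)$, and so converges to some $p \in X$ by completeness of $g$. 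Applying the local existence theorem for ODEs at $p$ (to the smooth vector field $fV$) yields an integral curve through $p$ defined on an open interval about $0$, which glues with $\gamma$ to extend it past $b$, contradicting maximality. The same argument rules out $a > -\infty$, so $\gamma$ is defined on all of $\mathbb{R}$.

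The main obstacle is really only the invocation of the complete Riemannian metric; everything else is routine. If one wished to avoid this black box, one could instead work with a smooth proper exhaustion function $\rho : X \to [0,\infty)$ and rescale $V$ so that $|V(\rho)|$ is bounded along integral curves, but the completeness route above is the cleanest and is the approach suggested by the citation.
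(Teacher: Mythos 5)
Your proposal is correct and takes essentially the same approach as the paper: invoke Nomizu--Ozeki to get a complete Riemannian metric, rescale $V$ so that its norm is bounded, and use completeness of the metric to rule out finite-parameter-time escape of a maximal integral curve. The only (immaterial) difference is in the endgame: the paper normalises $fV$ to unit length and argues via the Escape Lemma together with Hopf--Rinow compactness of closed balls, whereas you bound the norm by $1$, extract a limit point from the resulting Cauchy estimate, and extend by local existence -- both are routine and equivalent.
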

\begin{proof}
	Choose any complete Riemannian metric $h$ on $X$ -- existence of such is proved in \cite{NomizuOzeki1961}.
	Set
	\begin{equation*}
		f = \frac{1}{\sqrt{h(V,V)}},
	\end{equation*}
	so that $fV$ is normalised to unit $h$-length everywhere; this is possible since $V$ is nowhere-vanishing.
	
	Let $\gamma : (a,b) \to M$ be the maximal integral curve of $fV$ from $\gamma(0) = p \in X$.
	By the choice of normalisation of $fV$, $\gamma$ is parameterised by $h$-arc-length.
	Assume for the sake of contradiction that $b < \infty$; then by the Escape Lemma \parencite[Lemma 9.19]{Lee2013}, $\gamma([0,b))$ is not contained in any compact subset of $M$.
	In particular, $\gamma$ leaves the closed $h$-ball $B_h(p,r)$ of radius $r > b$ centred at $p$, which is compact by the Hopf-Rinow theorem since $h$ is complete.
	So there exists $t \in (0,b)$ with $\gamma(t) \not \in B_h(p,r)$;
	but then $\gamma : [0,t] \to M$ is a path from $p$ of $h$-length $t < r$ which ends outside the ball $B_h(p,r)$ -- a contradiction.
	
	A similar contradiction can be shown for $a > -\infty$, so $\gamma$ has domain $\mathbb{R}$.
\end{proof}

\begin{definition} \label{def:chiral_flow}
	A \emph{right-chiral flow}  $\theta^+ : \mathbb{R} \times M \to M$, $(\tau, p) \mapsto \theta^+_\tau(p)$ on a 2D oriented spacetime $M$ is the global flow of a complete right-chiral vector field $n_+$ on $M$.
	Similarly, a \emph{left-chiral flow} $\theta^- : \mathbb{R} \times M \to M$ is the global flow of a complete left-chiral vector field $n_-$.
\end{definition}
Chiral flows $\theta^\pm$ are actions of the Lie group $\mathbb{R}$ on the spacetime.
Since any 2D oriented spacetime has a chiral frame, and any chiral vector field may be rescaled to a complete chiral vector field, it follows that left- and right-chiral flows exist on any 2D oriented spacetime. Note that the chiral flows on a spacetime are not unique.

\begin{lemma} \label{lem:chiral_curves_integral}
	Let $M$ be a 2D oriented spacetime with right-chiral vector field $n_+$.
	Fix interval $I \subseteq \mathbb{R}$ and $t_0 \in I$.
	A smooth curve $\gamma : I \to M$ is right-chiral if and only if there is an interval $J \subseteq \mathbb{R}$ containing $0 \in J$ and a diffeomorphism $\phi : J \to I$ with $\phi(0) = t_0$ such that the reparameterised curve $\gamma \circ \phi|_{\operatorname{int}J}$ restricted to the interior $\operatorname{int}J$ of $J$ is an integral curve of $n_+$.
	The reparameterisation $\phi$ is increasing if $\gamma$ is future-directed and decreasing if $\gamma$ is past-directed.
	Moreover, $\gamma$ is inextendable if and only if $\gamma \circ \phi|_{\operatorname{int}J}$ is a maximal integral curve.
\end{lemma}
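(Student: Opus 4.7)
The plan is to exploit the fact that, at any $p \in M$, the null, future-directed, right-pointing vectors in $T_p M$ form exactly the open ray $\{\alpha\, n_+(p) : \alpha > 0\}$, while the null, past-directed, left-pointing ones form $\{\alpha\, n_+(p) : \alpha < 0\}$. Hence for any smooth right-chiral $\gamma : I \to M$ there is a unique function $\lambda : I \to \mathbb{R}$ with $\dot\gamma_t = \lambda(t)\, n_+(\gamma(t))$; this $\lambda$ is smooth (one way to see it: locally choose a smooth $1$-form $\omega$ with $\omega(n_+) \equiv 1$, so that $\lambda(t) = \omega_{\gamma(t)}(\dot\gamma_t)$), and of strictly constant sign on $\operatorname{int} I$, positive when $\gamma$ is future-directed and negative when past-directed.

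Define $\psi : I \to \mathbb{R}$ by $\psi(t) := \int_{t_0}^t \lambda(s)\,ds$. Since $\psi' = \lambda$ has strictly constant sign, $\psi$ is a smooth strictly-monotonic diffeomorphism onto its image $J := \psi(I)$, with $\psi(t_0) = 0 \in J$; take $\phi := \psi^{-1} : J \to I$, so $\phi(0) = t_0$ and $\phi' = 1/(\lambda \circ \phi)$. The chain rule then yields
\begin{equation*}
	(\gamma \circ \phi)'(s) \;=\; \phi'(s)\,\dot\gamma_{\phi(s)} \;=\; \phi'(s)\,\lambda(\phi(s))\, n_+(\gamma(\phi(s))) \;=\; n_+\!\left((\gamma \circ \phi)(s)\right)
\end{equation*}
on $\operatorname{int} J$, exhibiting $\gamma \circ \phi|_{\operatorname{int} J}$ as an integral curve of $n_+$; the sign of $\phi' = 1/(\lambda \circ \phi)$ matches that of $\lambda$, so $\phi$ is increasing when $\gamma$ is future-directed and decreasing when past-directed. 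The converse is the same computation read backwards: from any diffeomorphism $\phi : J \to I$ making $\gamma \circ \phi|_{\operatorname{int} J}$ an integral curve of $n_+$ one reads off $\dot\gamma_{\phi(s)} = \left(1/\phi'(s)\right) n_+\!\left((\gamma \circ \phi)(s)\right)$, exhibiting $\dot\gamma$ as a nowhere-zero scalar multiple of $n_+ \circ \gamma$ with sign $\operatorname{sign}(\phi')$, which is constant since $\phi$ is a diffeomorphism; this gives right-chirality with the advertised time-orientation.

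For the inextendability claim, the monotonic diffeomorphism $\phi$ induces a bijection between endpoints of $J$ and of $I$ (future-to-future, past-to-past if $\phi$ is increasing; swapped otherwise) and preserves existence of one-sided limits of composed curves. Hence $\gamma$ fails to have a past-endpoint in $M$ if and only if the corresponding one-sided limit of $\gamma \circ \phi|_{\operatorname{int} J}$ at the matching endpoint of $\operatorname{int} J$ fails to exist, if and only if (by the standard ODE criterion that maximal integral curves are exactly those that escape every compact set near each missing endpoint) $\gamma \circ \phi|_{\operatorname{int} J}$ cannot be extended past that endpoint as an integral curve of $n_+$; the future-endpoint case is symmetric. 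Combining gives the claimed equivalence. The main technical subtlety I anticipate is the bookkeeping at possibly non-open endpoints of $I$ and $J$: one must verify that $\lambda$ remains strictly of one sign at boundary points of $I$ (not merely on $\operatorname{int} I$), so that $\psi$ is a diffeomorphism up to the boundary and $\phi$ is a genuine diffeomorphism, and one must align the paper's endpoint-limit notion of inextendability with the classical ODE notion of maximality that is ordinarily phrased on open intervals.
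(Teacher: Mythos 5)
Your reparameterisation construction is essentially the paper's own proof of the main equivalence: write $\dot\gamma_t = \lambda(t)\, n_+|_{\gamma(t)}$ with $\lambda$ smooth and nowhere-zero of constant sign, set $\psi(t)=\int_{t_0}^t\lambda$, invert to get $\phi$, and check the chain rule; the converse is the same computation read backwards, exactly as in the paper. The endpoint subtlety you flag (the sign of $\lambda$ at non-open boundary points of $I$) is genuine but is equally glossed in the paper's argument, so it is not where the two proofs differ.

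The gap is in the inextendability claim. You reduce it to ``the standard ODE criterion that maximal integral curves are exactly those that escape every compact set near each missing endpoint.'' That criterion is false as stated when the missing endpoint of the parameter interval is infinite: for the field $-x\,\partial_x$ on $\mathbb{R}$, the maximal integral curve $t \mapsto e^{-t}$ is defined on all of $\mathbb{R}$, hence maximal, yet it converges to $0$ and stays in the compact set $[0,1]$ for $t \geq 0$. In the present setting the scenario you must exclude is: $\gamma \circ \phi|_{\operatorname{int}J}$ is a maximal integral curve with, say, $\sup \operatorname{int}J = +\infty$, while $\lim_{\tau \to \infty}\gamma\circ\phi(\tau) = q$ exists; then $\gamma$ would have a future endpoint (so be extendable) even though the integral curve is maximal, breaking the ``if and only if.'' Ruling this out requires using that $n_+$ is nowhere-vanishing, which your argument never invokes in this step. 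The paper handles it by applying the Escape Lemma to get ``not maximal or parameter endpoint infinite,'' and then, in the infinite case, using continuity of the flow to show $\theta^+_{\tau'}(q) = q$ for all admissible $\tau'$, so that $q$ is a fixed point and $n_+|_q = 0$, a contradiction. (Equivalently, a flow-box chart for the nowhere-vanishing field around $q$ shows the curve must leave a small neighbourhood of $q$ in bounded parameter time, so the limit cannot exist at an infinite endpoint.) With that case closed, the rest of your endpoint bookkeeping goes through.
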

\begin{proof}
	By definition, any smooth $\gamma : I \to M$ is right-chiral if and only if there is some smooth, nowhere-zero $\alpha : \operatorname{int}I \to \mathbb{R}$ such that for all $t \in \operatorname{int}I$,
	\begin{equation} \label{eqn:chiral_curves_integral}
		\dot{\gamma}_t = \alpha(t) \cdot n_+|_{\gamma(t)}.
	\end{equation}
	$\gamma$ is future-directed and right-moving exactly when $\alpha$ is positive everywhere; dually, $\gamma$ is past-directed and left-moving exactly when $\alpha$ is negative everywhere.
	Say $\gamma \circ \phi|_{\operatorname{int}J}$ is an integral curve of $n_+$; then \cref{eqn:chiral_curves_integral} holds with
	\begin{equation*}
		\alpha(t) = \frac{1}{\dot{\phi}\left(\phi^{-1}(t)\right) }.
	\end{equation*}
	$\gamma$ is future-directed and right-moving when $\phi$ is increasing, and past-directed and left-moving when $\phi$ is decreasing.
	
	Conversely, say $\gamma$ is right-chiral so that \cref{eqn:chiral_curves_integral} holds.
	Define $\psi : I \to \mathbb{R}$ by
	\begin{equation*}
		\psi(t) = \int_{t_0}^t \alpha(t') \dd t'.
	\end{equation*}
	Since $\alpha$ is smooth and nowhere-zero, $\psi$ is smooth and strictly monotonic (increasing if $\alpha$ is everywhere positive, decreasing if $\alpha$ is everywhere negative)
	and so a diffeomorphism onto its image in $\mathbb{R}$.
	Take reparameterisation $\phi : J \to I$ to be the inverse of the codomain-restriction of $\psi$.
	Then for any $s \in \operatorname{int}J$,
	\begin{equation*}
		\dot{\left(\gamma \circ \phi\right)}_s
		=
		\dot{\phi}(s) \cdot \dot{\gamma}_{\phi(s)}
		=
		\frac{1}{\dot{\psi}(\phi(s))} \cdot \alpha(\phi(s)) \cdot n_+|_{\gamma \circ \phi (s)}
		=
		n_+|_{\gamma \circ \phi (s)},
	\end{equation*}
	so $\gamma \circ \phi|_{\operatorname{int}J}$ is an integral curve of $n_+$.

	It remains to show that the right-chiral curve $\gamma$ is inextendable if and only if the integral curve $\gamma \circ \phi|_{\operatorname{int}J}$ of $n_+$ is maximal.
	Denote $\operatorname{int}I = (i_1, i_2)$ and $\operatorname{int}J = (j_1, j_2)$, with some of these values possibly infinite.
	Then it holds that
	\begin{equation*}
		\lim_{t\to i_1^+} \gamma(t) = \lim_{\tau \to j_1^+} \gamma \circ \phi (\tau)
		\qquad \text{and} \qquad
		\lim_{t\to i_2^-} \gamma(t) = \lim_{\tau \to j_2^-} \gamma \circ \phi (\tau)
	\end{equation*}
	for $\phi$ increasing, and vice versa for $\phi$ decreasing; in particular, the limits of $\gamma$ exist exactly when the corresponding limits of $\gamma \circ \phi$ exist.
	So $\gamma$ is inextendable if and only if the limits $\lim_{\tau \to j_1^+} \gamma \circ \phi (\tau)$ and $\lim_{\tau \to j_2^-} \gamma \circ \phi (\tau)$ do not exist.

	Say $q := \lim_{\tau \to j_2^-} \gamma \circ \phi (\tau)$ exists,
	i.e. for any neighbourhood $U \subseteq M$ of $q$ there is $\tau_0 \in \operatorname{int}J$ such that $\gamma \circ \phi \left( [\tau_0, j_2) \right) \subseteq U$.
	Since manifold $M$ is locally compact, there is a compact neighbourhood $U$ of $q$.
	Then the Escape Lemma \parencite[Lemma 9.19]{Lee2013} gives that either the integral curve $\gamma \circ \phi|_{\operatorname{int}J}$ is not maximal or $j_2 = \infty$.
	Assume for the sake of contradiction that $j_2 = \infty$.
	Then for any $\tau' \in \mathbb{R}$ such that $(\tau' , q)$ is in the domain of the maximal flow $\theta^+$ of vector field $n_+$, it follows that
	\begin{equation*}
		\theta^+_{\tau'} (q)
		=
		\theta^+_{\tau'} \left(\lim_{\tau \to \infty} \gamma \circ \phi (\tau)\right)
		=
		\lim_{\tau \to \infty} \theta^+_{\tau'} \left(\gamma \circ \phi (\tau)\right)
		=
		\lim_{\tau \to \infty} \gamma \circ \phi (\tau + \tau')
		=
		q,
	\end{equation*}
	so $q$ is a fixed point of the flow.
	But then $n_+|_q = 0$, which contradicts that $n_+$ is right-chiral.
	We conclude that $\gamma \circ \phi|_{\operatorname{int}J}$ is not maximal.
	A similar argument shows that if $\lim_{\tau \to j_1^+} \gamma \circ \phi (\tau)$ exists then $\gamma \circ \phi|_{\operatorname{int}J}$ is not maximal.
	
	Conversely, say $\gamma \circ \phi|_{\operatorname{int}J}$ is not maximal.
	Then there exists  an open interval $K = (k_1,k_2)$ with $\operatorname{int}J \subset K$ strictly and an integral curve $\delta : K \to M$ of $n_+$ with $\delta|_{\operatorname{int}J} = \gamma \circ \phi|_{\operatorname{int}J}$.
	Because $\operatorname{int}J \subset K$ strictly, either $k_1 < j_1$ or $k_2 > j_2$.
	In the former case, $\lim_{\tau \to j_1^+} \gamma \circ \phi (\tau) = \lim_{\tau \to j_1^+} \delta(\tau) = \delta(j_1)$ exists, while in the latter $\lim_{\tau \to j_2^-} \gamma \circ \phi (\tau) = \lim_{\tau \to j_2^-} \delta(\tau) = \delta(j_2)$ exists.
\end{proof}

The lemma above gives directly that any right-chiral curve $\gamma : I \to M$ defined on an open interval $I$ is, up to reparameterisation, an integral curve of a right-chiral vector field $n_+$.
In case $\gamma$ is defined on a non-open interval, it gives that after reparameterisation, $\gamma$ extends to an integral curve of $n_+$.
For if $I$ is not open, then $\gamma : I \to M$ is not inextendable; consequently, the reparameterisation  $\gamma \circ \phi|_{\operatorname{int}J} : \operatorname{int} J \to M$ given by the lemma is not maximal and so has extension to an integral curve $\delta : K \to M$ of $n_+$ with $(j_1, j_2) := \operatorname{int} J \subset K$.
Say $j_2 \in J$; then so long as $j_2 \not \in K$, the extension $\delta$ constitutes a non-inextendable right-chiral curve in its own right since $\gamma \circ \phi(j_2) = \lim_{\tau \to j_2^-} \delta(\tau)$ exists.
Similar if $j_1 \in J$.
Consequently, we may always find an extension $\delta : K \to M$ of the integral curve $\gamma \circ \phi|_{\operatorname{int}J}$ such that $J \subset K$ entirely, rather than merely $\operatorname{int} J \subset K$.
It follows that $\delta|_J = \gamma \circ \phi$, so $\delta$ extends not merely the restriction $\gamma \circ \phi|_{\operatorname{int}J}$ but the whole reparameterisation $\gamma \circ \phi$ of $\gamma$.

For instance, the lemma gives that any right-chiral curve $\gamma : [0,1] \to M$ has reparameterisation diffeomorphism $\phi : [0,\tau] \to [0,1]$ with $\phi(0) = 0$ such that $\gamma \circ \phi|_{(0,\tau)}$ is an integral curve of $n_+$.
By the above argument, there is some open interval $K$ with $[0,\tau] \subseteq K$ and integral curve $\delta : K \to M$ of $n_+$ with $\delta |_{[0,\tau]} = \gamma \circ \phi$.

From this we immediately obtain characterisations of the chiral relation $\chi^+$ and its symmetric closure in terms of the integral curves of a chiral vector field $n_+$:

\begin{proposition} \label{prop:characteristion_chiral_relation}
	Let $M$ be a 2D oriented spacetime with right-chiral vector field $n_+$, and denote by $\theta^+$ the maximal flow of $n_+$.
	Let $p,q$ be points in $M$.
	The following are equivalent:
	\begin{enumerate}[label=\textup{(\roman*)}]
		\item
		$(p,q) \in \chi^+_M$,
		
		\item
		There is an integral curve $\gamma : I \to M$ of $n_+$ from $\gamma(0) = p$, with $\gamma(\tau) = q$ for some $\tau \in I$ with $\tau \geq 0$,
		
		\item
		There is $\tau \geq 0$ such that $q = \theta^+_\tau(p)$.
	\end{enumerate}
	
	Likewise, the following are equivalent:
	\begin{enumerate}[label=\textup{(\roman*)}]
		\item
		$(p,q) \in s\chi^+_M$,
		
		\item
		There is an integral curve $\gamma : I \to M$ of $n_+$ from $\gamma(0) = p$, with $\gamma(\tau) = q$ for some $\tau \in I$,
		
		\item
		There is $\tau \in \mathbb{R}$ such that $q = \theta^+_\tau(p)$.
	\end{enumerate}
\end{proposition}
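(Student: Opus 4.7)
The plan is to establish the first set of three equivalences for $\chi^+_M$, then deduce the second set for $s\chi^+_M$ using $s\chi^+_M = \chi^+_M \cup (\chi^+_M)^T$. For the first set I would prove the cycle (i) $\Rightarrow$ (ii) $\Rightarrow$ (iii) $\Rightarrow$ (i).

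The equivalence (ii) $\Leftrightarrow$ (iii) is bookkeeping via standard ODE theory: by uniqueness of integral curves, any integral curve of $n_+$ starting at $p$ agrees with $\tau \mapsto \theta^+_\tau(p)$ on the overlap of their domains, and the maximal flow $\theta^+$ records precisely the maximal such curves. For (iii) $\Rightarrow$ (i), the case $\tau = 0$ gives $p = q$ directly; for $\tau > 0$, the curve $s \mapsto \theta^+_s(p)$ is defined and smooth on $[0,\tau]$ with tangent $n_+$ everywhere, so after reparameterising to $[0,1]$ it is a right-moving null curve from $p$ to $q$, witnessing $(p,q) \in \chi^+_M$.

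The main step is (i) $\Rightarrow$ (ii). If $p = q$, I take the maximal integral curve of $n_+$ through $p$ and set $\tau = 0$. Otherwise, by definition of $\chi^+_M$ there is a right-moving null curve $\gamma : [0,1] \to M$ with $\gamma(0) = p$ and $\gamma(1) = q$. I would apply \cref{lem:chiral_curves_integral} with $I = [0,1]$ and $t_0 = 0$ to produce an interval $J \ni 0$ and a diffeomorphism $\phi : J \to [0,1]$ with $\phi(0) = 0$ such that $\gamma \circ \phi|_{\operatorname{int} J}$ is an integral curve of $n_+$; since $\gamma$ is future-directed, $\phi$ is increasing, so $J = [0,\tau]$ for some $\tau > 0$ with $\phi(\tau) = 1$. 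To obtain an honest integral curve defined \emph{at} the endpoints $0$ and $\tau$, I then invoke the extension discussion immediately following \cref{lem:chiral_curves_integral}, which provides an integral curve $\delta : K \to M$ on an open interval $K \supseteq [0,\tau]$ with $\delta|_{[0,\tau]} = \gamma \circ \phi$; thus $\delta(0) = p$ and $\delta(\tau) = q$.

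For the second set of equivalences I use $s\chi^+_M = \chi^+_M \cup (\chi^+_M)^T$, so that $(p,q) \in s\chi^+_M$ if and only if $(p,q) \in \chi^+_M$ or $(q,p) \in \chi^+_M$. Applying the first part to each alternative yields some $\tau \geq 0$ with $\theta^+_\tau(p) = q$ or with $\theta^+_\tau(q) = p$; the flow identity $\theta^+_\tau(q) = p \Leftrightarrow \theta^+_{-\tau}(p) = q$ on the appropriate domain collapses these into the existence of some $\tau \in \mathbb{R}$ with $\theta^+_\tau(p) = q$, and (ii) matches (iii) exactly as in the first part. The main obstacle throughout is the endpoint handling in (i) $\Rightarrow$ (ii): \cref{lem:chiral_curves_integral} only guarantees the integral-curve property on $\operatorname{int} J$, so the extension argument immediately after it is essential to place $p$ and $q$ on a genuine integral curve rather than merely in its closure. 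Everything else is routine.
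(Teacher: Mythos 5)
Your proposal is correct and takes essentially the same route as the paper: the paper presents this proposition as following ``immediately'' from \cref{lem:chiral_curves_integral} together with the extension remark right after it (reparameterise the right-moving curve to an integral curve on $[0,\tau]$, then extend to an integral curve on an open interval containing the endpoints), which is precisely the argument you spell out. Your explicit handling of the endpoint issue, the (ii) $\Leftrightarrow$ (iii) bookkeeping via uniqueness of integral curves, and the reduction of the $s\chi^+$ case via $s\chi^+ = \chi^+ \cup (\chi^+)^T$ and the flow identity all match what the paper leaves implicit.
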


Using the group law of flows, this gives:
\begin{corollary} \label{cor:chi_rel_transitive}
	Let $M$ be a 2D oriented spacetime.
	The right-chiral relation $\chi^+_M$ and its symmetric closure $s\chi^+_M$ are both transitive.
\end{corollary}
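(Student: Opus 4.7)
The plan is to deduce both transitivity statements directly from the flow-characterisation of $\chi^+_M$ and $s\chi^+_M$ given in \cref{prop:characteristion_chiral_relation}, using the $\mathbb{R}$-action structure of a chiral flow.

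First I would fix a complete right-chiral vector field $n_+$ on $M$, whose existence is guaranteed by combining \cref{prop:chiral_frames_exist} (to produce \emph{some} right-chiral vector field) with \cref{lem:vector_field_rescale_to_complete} (to rescale it to a complete one by a strictly positive smooth function, preserving right-chirality since the right-chiral vector fields form a convex cone in $\Gamma(TM)$ closed under multiplication by positive functions). Let $\theta^+ : \mathbb{R} \times M \to M$ be its global flow, as in \cref{def:chiral_flow}. The key property I shall use is the group law $\theta^+_{\tau_2} \circ \theta^+_{\tau_1} = \theta^+_{\tau_1 + \tau_2}$ for all $\tau_1, \tau_2 \in \mathbb{R}$.

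For transitivity of $\chi^+_M$, suppose $(p,q), (q,r) \in \chi^+_M$. By \cref{prop:characteristion_chiral_relation}, there exist $\tau_1, \tau_2 \geq 0$ with $q = \theta^+_{\tau_1}(p)$ and $r = \theta^+_{\tau_2}(q)$. Then $r = \theta^+_{\tau_2}(\theta^+_{\tau_1}(p)) = \theta^+_{\tau_1 + \tau_2}(p)$ with $\tau_1 + \tau_2 \geq 0$, so by the same proposition $(p,r) \in \chi^+_M$. Transitivity of $s\chi^+_M$ follows by the identical argument, but now drawing $\tau_1, \tau_2$ from all of $\mathbb{R}$, whose closure under addition gives $\tau_1 + \tau_2 \in \mathbb{R}$.

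There is no real obstacle here beyond confirming that the chiral flow exists and that the flow-characterisation applies, both of which have already been done. The only point deserving mild care is that the integral-curve rephrasing in \cref{prop:characteristion_chiral_relation} does not by itself make transitivity obvious, because concatenating two integral curves of $n_+$ at $q$ requires that the endpoint of one and the starting point of the other genuinely agree in parameter-shifted form; the $\mathbb{R}$-action viewpoint (item (iii) of the proposition) sidesteps this cleanly, which is why I would phrase the argument in terms of $\theta^+$ rather than curves.
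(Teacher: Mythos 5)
Your proof is correct and is essentially the paper's own argument: both invoke \cref{prop:characteristion_chiral_relation} to translate $\chi^+_M$ and $s\chi^+_M$ into statements about a flow of a right-chiral vector field, then conclude via the group law, with $\tau_1+\tau_2 \geq 0$ handling the non-symmetric case. The only difference is that you rescale to a complete vector field via \cref{lem:vector_field_rescale_to_complete}, whereas the paper works directly with the maximal flow of an arbitrary right-chiral vector field; this is immaterial, since \cref{prop:characteristion_chiral_relation} already covers that case.
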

\begin{proof}
	Choose right-chiral vector field $n_+$ on $M$, as is always possible by \cref{prop:chiral_frames_exist}.
	Denote its maximal flow by $\theta^+$.
	If $(p,q) \in s\chi^+_M$ and $(q,r) \in s\chi^+_M$ then by the preceding proposition
% \cref{prop:characteristion_chiral_relation}
	there exist $\tau, \tau' \in \mathbb{R}$ such that $q = \theta^+_{\tau}(p)$ and $r = \theta^+_{\tau'}(q)$.
	Then by the group law of the flow,
	\begin{equation*}
		r = \theta^+_{\tau'}( \theta^+_{\tau}(p)) = \theta^+_{\tau + \tau'}(p),
	\end{equation*}
	so that $(p,r) \in s\chi^+_M$.
	Similar holds if $(p,q) \in \chi^+_M$ and $(q,r) \in \chi^+_M$, but with the restriction that $\tau, \tau' \geq 0$.
	Then $(p,r) \in \chi^+_M$ since $\tau + \tau' \geq 0$.
\end{proof}

By definition, the symmetric closure $s\chi^+$ of a right-chiral relation is symmetric and reflexive; the preceding proposition and corollary lead to:
\begin{corollary}\label{cor:sym_chi_equivalence_rel}
	Let $M$ be a 2D oriented spacetime.
	The relation $s\chi^+_M$ is an equivalence relation on $M$.
	
% 	A subset $S \subseteq M$ is an equivalence class of $s\chi^+_M$ if and only if $S$ is an orbit of a right-chiral flow $\theta^+ : \mathbb{R}\times M \to M$, if and only if $S = \gamma(I)$ is the image of an inextendable right-chiral curve $\gamma : I \to M$.
	
	For subset $S \subseteq M$, the following are equivalent:
	\begin{enumerate} [label=\textup{(\roman*)}]
		\item $S$ is an equivalence class of $s\chi^+_M$,
		
		\item $S$ is an orbit of a right-chiral flow $\theta^+ : \mathbb{R}\times M \to M$,
		
		\item $S$ is the image $\gamma(I)$ of an inextendable right-chiral curve $\gamma : I \to M$.
	\end{enumerate}
\end{corollary}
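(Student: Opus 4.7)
The first claim is essentially immediate from what has already been established: the symmetric closure $s\chi^+_M$ is symmetric by construction and reflexive since $\chi^+_M$ is reflexive by definition, and transitivity follows from \cref{cor:chi_rel_transitive}. So the substantive content lies in the three-way equivalence for subsets $S \subseteq M$.

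My plan is to fix, once and for all, a right-chiral vector field $n_+$ on $M$ (which exists by \cref{prop:chiral_frames_exist}) rescaled by \cref{lem:vector_field_rescale_to_complete} to be complete, with associated right-chiral flow $\theta^+ : \mathbb{R} \times M \to M$. All three equivalent characterisations will be related to this single flow, and then shown to be independent of the choice of $n_+$.

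For (i)$\Leftrightarrow$(ii): The equivalence class of $p \in M$ under $s\chi^+_M$ is $\Set{q \in M | (p,q) \in s\chi^+_M}$, which by the second half of \cref{prop:characteristion_chiral_relation} equals $\Set{\theta^+_\tau(p) | \tau \in \mathbb{R}}$, i.e. the orbit of $p$ under $\theta^+$. Hence the equivalence classes of $s\chi^+_M$ and the orbits of $\theta^+$ coincide as partitions of $M$; this already shows both directions since the orbits of any \emph{other} right-chiral flow must also coincide with these equivalence classes by the same argument applied to that flow.

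For (ii)$\Leftrightarrow$(iii): A flow orbit through $p$ is, by definition, the image of the maximal integral curve $\tau \mapsto \theta^+_\tau(p)$ of $n_+$; completeness of $n_+$ ensures this curve has domain all of $\mathbb{R}$ and is in particular inextendable. Conversely, given an inextendable right-chiral curve $\gamma : I \to M$, \cref{lem:chiral_curves_integral} furnishes a diffeomorphism $\phi : J \to I$ so that $\gamma \circ \phi|_{\operatorname{int} J}$ is a maximal integral curve of $n_+$; its image $\gamma(I) = \gamma \circ \phi(J)$ is then (up to at most two boundary points, which lie in the orbit anyway by continuity and the group law) the orbit of $n_+$ through $\gamma \circ \phi(0)$. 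I expect this last boundary-point bookkeeping to be the only place needing some care: I would argue that if $j \in J$ is an endpoint, then $\gamma \circ \phi(j) = \lim_{\tau \to j} \theta^+_\tau(\gamma\circ\phi(0))$ exists by continuity of $\phi$ and $\gamma$, and this forces $j \in \operatorname{int} J$ (else the integral curve would extend beyond $j$, contradicting maximality established in \cref{lem:chiral_curves_integral}); so in fact $I = \operatorname{int} J$ up to reparameterisation and the image is exactly the full orbit.

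The main obstacle is thus a technical one rather than conceptual: keeping track of the reparameterisation $\phi$ and showing that the distinction between integral curves of $n_+$ and right-chiral curves (which differ only by a positive smooth rescaling of the tangent vector) does not affect images, orbits, or inextendability. Once this is handled via \cref{lem:chiral_curves_integral,prop:characteristion_chiral_relation}, the three conditions are tautologously interchangeable.
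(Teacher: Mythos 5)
Your proposal is correct and follows essentially the same route as the paper, which states this corollary without a separate proof precisely because it is meant to fall out of \cref{prop:characteristion_chiral_relation}, \cref{cor:chi_rel_transitive} and \cref{lem:chiral_curves_integral} in the way you describe (fix a complete right-chiral field, identify equivalence classes with flow orbits, and identify orbits with images of inextendable right-chiral curves via the reparameterisation lemma). Your worry about endpoints of $J$ is moot: an inextendable curve is necessarily defined on an open interval, and since $\gamma\circ\phi|_{\operatorname{int}J}$ is a maximal integral curve of a \emph{complete} field its domain is all of $\mathbb{R}$, so $J=\mathbb{R}$ and the image is the full orbit, exactly as you conclude.
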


In addition to their use in describing chiral relations $\chi^+$, chiral (co)frames may be used to characterise conformal maps which preserve orientation and time-orientation.

\begin{lemma} \label{lem:frames_preserved_by_map}
	Let $X, Y$ be parallelisable smooth manifolds with $\dim X = \dim Y = n$ and smooth global frames $(E_i^X)$ and $(E_i^Y)$ on $X$ and $Y$ respectively.
	Let $(\varepsilon^i_X)$ and $(\varepsilon^i_Y)$ be the respective dual coframes.
	
	If $f : X \to Y$ is any smooth map and $f_i \in C^\infty(X)$ for $i \in \{1, \ldots , n\}$, the following are equivalent:
	\begin{enumerate}[label=\textup{(\roman*)}]
		\item \label{lem:frames_preserved_by_map:coframe}
		$f^* \varepsilon^i_Y = e^{f_i} \varepsilon^i_X$ for all $i \in \{1, \ldots , n\}$,
		
		\item \label{lem:frames_preserved_by_map:frame}
		$ df \circ E_i^X = e^{f_i} \cdot (E_i^Y \circ f)$ as sections of the pullback bundle $f^* TY \to X$ for all $i \in \{1, \ldots , n\}$.
	\end{enumerate}
\end{lemma}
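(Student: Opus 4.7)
The plan is to unify both conditions by expanding everything in terms of the given frame and coframe, then exploiting the duality $\varepsilon^i_Y(E_j^Y) = \delta^i_j$ to identify the two matrices of coefficients that appear.

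First I would introduce the coefficient functions. Since $(E_i^Y)$ is a global frame on $Y$, any section of the pullback bundle $f^*TY \to X$ can be uniquely written as a $C^\infty(X)$-linear combination of the sections $E_i^Y \circ f$. In particular, for each $i$, there exist unique smooth functions $A_i^j \in C^\infty(X)$ such that
\begin{equation*}
	df \circ E_i^X = \sum_j A_i^j \cdot (E_j^Y \circ f).
\end{equation*}
Dually, since $(\varepsilon^i_X)$ is a global coframe on $X$, there exist unique smooth functions $B^j_k \in C^\infty(X)$ such that
\begin{equation*}
	f^* \varepsilon^j_Y = \sum_k B^j_k \, \varepsilon^k_X.
\end{equation*}
In this notation, condition \labelcref{lem:frames_preserved_by_map:frame} is equivalent to $A_i^j = \delta_i^j \, e^{f_i}$ (no sum), and condition \labelcref{lem:frames_preserved_by_map:coframe} is equivalent to $B^j_k = \delta^j_k \, e^{f_j}$ (no sum). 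So the lemma reduces to showing $A_i^j = B^j_i$ for all $i,j$.

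The key step is to compute $\varepsilon^j_Y(df \circ E_i^X)$ at an arbitrary point $p \in X$ in two ways. On one hand, by definition of pullback,
\begin{equation*}
	\varepsilon^j_Y\big|_{f(p)}\big(df_p(E_i^X|_p)\big) = (f^*\varepsilon^j_Y)|_p\big(E_i^X|_p\big) = \sum_k B^j_k(p) \, \varepsilon^k_X|_p(E_i^X|_p) = B^j_i(p),
\end{equation*}
using $\varepsilon^k_X(E_i^X) = \delta^k_i$. On the other hand, substituting the frame expansion,
\begin{equation*}
	\varepsilon^j_Y\big|_{f(p)}\big(df_p(E_i^X|_p)\big) = \varepsilon^j_Y\big|_{f(p)}\Big( \sum_k A_i^k(p) \, E_k^Y|_{f(p)} \Big) = A_i^j(p),
\end{equation*}
using $\varepsilon^j_Y(E_k^Y) = \delta^j_k$. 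Hence $A_i^j = B^j_i$ as functions on $X$, and the lemma follows immediately.

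There is no serious obstacle here; the argument is purely a bookkeeping exercise once one has the right variables. The only subtlety worth flagging is to work consistently with $df \circ E_i^X$ as a section of $f^*TY$ (rather than as a vector field on $Y$, which it need not extend to since $f$ need not be injective), so that the uniqueness of the expansion in terms of $E_j^Y \circ f$ makes sense on all of $X$.
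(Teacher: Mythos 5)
Your proof is correct and follows essentially the same route as the paper: both arguments come down to evaluating the pairing $\varepsilon^j_Y\big(df_p(E_i^X|_p)\big)$ in two ways via the duality $\varepsilon^i(E_j)=\delta^i_j$ and the definition of the pullback. Your packaging of the computation into coefficient matrices $A_i^j$ and $B^j_i$ is just a mild reorganisation of the same calculation, and your remark about working in $f^*TY$ is exactly the framing the paper uses.
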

\begin{proof}
	Fix some $i \in \{1,\ldots, n\}$.
	By definition $\varepsilon^i_X(E_j^X) = \delta^i_j = \varepsilon^i_Y(E_j^Y)$,
	from which it is straightforward to verify that \labelcref{lem:frames_preserved_by_map:coframe} implies that
	\begin{equation*}
		\left(\varepsilon^k_Y\right)_{f(p)} \left(df_p E_i^X|_p \right)
		=
		\left(f^* \varepsilon^k_Y\right)_p \left(E_i^X|_p \right)
		=
		\left(\varepsilon^k_Y\right)_{f(p)} \left( e^{f_i(p)} E_i^Y|_{f(p)} \right)
	\end{equation*}
	for all $k \in \{1, \ldots , n\}$ and $p \in X$, which gives \labelcref{lem:frames_preserved_by_map:frame}.
	Similarly, \labelcref{lem:frames_preserved_by_map:frame} implies that
	\begin{equation*}
		\left(f^* \varepsilon^i_Y\right)_p \left(E_j^X|_p \right)
		=
		\left(\varepsilon^i_Y\right)_{f(p)} \left(df_p E_j^X|_p \right)
		=
		\left(e^{f_i} \varepsilon^i_X\right)_p \left(E_j^X |_p\right)
	\end{equation*}
	for all $j \in \{1, \ldots , n\}$ and $p \in X$, which gives \labelcref{lem:frames_preserved_by_map:coframe}.
\end{proof}

\begin{proposition} \label{prop:chiral_frame_characterisation_conf_map}
	Let $M, N$ be 2D oriented spacetimes, and $(n_-^M, n_+^M)$, $(n_-^N, n_+^N)$ chiral frames on $M$ and $N$ respectively.
	Denote by $(-\eta^+_M,-\eta^-_M)$, $(-\eta^+_N,-\eta^-_N)$ the respective dual chiral coframes on $M$ and $N$.
	
	A smooth map $f : M \to N$ is conformal and orientation- and time-orientation-preserving if and only if there exist $F_+, F_- \in C^\infty(M)$ such that the following equivalent conditions hold:
	\begin{enumerate}[label=\textup{(\roman*)}]
		\item \label{prop:chiral_frame_characterisation_conf_map:coframe}
		$f^* \eta^\pm_N = e^{F_\mp} \eta^\pm_M$,
		
		\item \label{prop:chiral_frame_characterisation_conf_map:frame}
		$ df \circ n_\pm^M = e^{F_\pm} \cdot (n_\pm^N \circ f)$ as sections of the pullback bundle $f^* TN \to M$.
	\end{enumerate}
\end{proposition}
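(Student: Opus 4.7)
The plan is to first dispatch the equivalence of (i) and (ii) using \cref{lem:frames_preserved_by_map}, then prove the main biconditional via condition (ii) in both directions.

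For (i) $\iff$ (ii), I would identify the general frame/coframe of \cref{lem:frames_preserved_by_map} with the chiral data: set $(E_1^M, E_2^M) = (n_-^M, n_+^M)$ with dual $(\varepsilon^1_M, \varepsilon^2_M) = (-\eta^+_M, -\eta^-_M)$, and analogously on $N$. Applying the lemma with $f_1 = F_-$ and $f_2 = F_+$ yields the equivalence, modulo checking that the signs from the $(-\eta^\pm)$ convention cancel correctly.

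For the $(\Leftarrow)$ direction of the main statement, I would assume (ii) and expand $f^*h$ in the chiral frame $(n_-^M, n_+^M)$. Because the $n_\pm$ are null, only the mixed components survive, giving
\begin{equation*}
    (f^*h)(n_\mu^M, n_\nu^M) = e^{F_-+F_+}\, h(n_-^N, n_+^N)\circ f
\end{equation*}
for $\{\mu,\nu\} = \{-,+\}$ and zero otherwise. Dividing by $g(n_-^M, n_+^M)$, which is strictly negative (as is the numerator), produces a smooth strictly-positive scaling factor, so $f^*h = e^{\omega_f} g$ for some $\omega_f \in C^\infty(M)$. Time-orientation preservation follows because $df$ sends the future-directed representative $n_-^M + n_+^M$ to a strictly-positive combination of future-directed null vectors, hence again future-directed timelike. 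For orientation preservation, I would pick the orientation 2-form $\Omega^M$ as in \cref{prop:chiral_frame_orientation} (constructed from the dual chiral coframe) and directly compute $f^*\Omega^N = e^{F_- + F_+}\Omega^M$, a strictly positive rescaling.

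For the $(\Rightarrow)$ direction, assume $f$ is conformal and preserves orientation and time-orientation. Conformality forces $df$ to be fibrewise injective and to send null vectors to null vectors, so each $df(n_\pm^M)$ is a nonzero null vector on $TN$; time-orientation preservation upgrades this to future-directed null. At each $q \in N$, the future-directed null vectors lie on the two rays $\mathbb{R}_{>0}\cdot n_+^N|_q$ and $\mathbb{R}_{>0}\cdot n_-^N|_q$, so at every $p \in M$, $df_p(n_\pm^M|_p)$ is a strictly positive multiple of either $n_+^N|_{f(p)}$ or $n_-^N|_{f(p)}$. The main subtlety, and the step I expect to be the real obstacle, is ruling out the \enquote{swapped} assignment $n_+^M \mapsto \mathbb{R}_{>0}\cdot n_-^N$, $n_-^M \mapsto \mathbb{R}_{>0}\cdot n_+^N$: computing $f^*\Omega^N$ on the chiral frame in this case yields a sign opposite to $\Omega^M$, contradicting orientation preservation. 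Continuity of $df$ ensures the correct assignment is consistent on each path component. Defining $F_\pm$ as the logarithm of the resulting positive rescaling then gives (ii), with smoothness of $F_\pm$ inherited from smoothness of $df$ and the chiral (co)frames.
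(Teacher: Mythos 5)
Your proposal is correct and follows essentially the same route as the paper: the (i)$\iff$(ii) equivalence via the frame/coframe lemma, then both directions by direct computation with the chiral frames (conformality from the null expansion, orientation via $\eta^+\wedge\eta^-$, time-orientation via the future-directed combination $n_-+n_+$). Your explicit ruling-out of the ``swapped'' assignment $n_\pm^M\mapsto\mathbb{R}_{>0}\cdot n_\mp^N$ using $f^*\Omega^N$ is just a slightly more spelled-out version of the step the paper asserts directly (that an orientation- and time-orientation-preserving conformal map sends future-directed right-pointing null vectors to future-directed right-pointing null vectors), so no substantive difference remains.
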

\begin{proof}
	Equivalence of \labelcref{prop:chiral_frame_characterisation_conf_map:coframe,prop:chiral_frame_characterisation_conf_map:frame} follows from the preceding lemma.
% 	\cref{lem:frames_preserved_by_map}.

$(\Rightarrow)$:
	Since $f$ is conformal and orientation- and time-orientation-preserving, it sends future-directed left/right-pointing null vectors to future-directed left/right-pointing null vectors; hence at each $p \in M$ there are unique $F_-(p), F_+(p) \in \mathbb{R}$ such that
	\begin{equation*}
		df_p n_\pm^M|_p = e^{F_\pm(p)} n_\pm^N |_{f(p)}.
	\end{equation*}
	Smoothness of the functions $F_\pm : M \to \mathbb{R}$ so defined follows by recognising $e^{F_\pm}$ as components of the smooth map
	\begin{equation*}
		M \xrightarrow{n_\pm^M} TM \xrightarrow{df} TN \xrightarrow{\psi_N} N \times \mathbb{R}^2,
	\end{equation*}
	where $\psi_N$ is the global trivialisation of the tangent bundle $TN \to N$ given by the frame $(n_-^N, n_+^N)$.
	It follows that \labelcref{prop:chiral_frame_characterisation_conf_map:frame} holds.

$(\Leftarrow)$:
	We show that \labelcref{prop:chiral_frame_characterisation_conf_map:coframe} implies $f$ is conformal and orientation- and time-orientation-preserving.
	Because $n_\pm^M$ are null,
	the metric $g_M$ on $M$ may be expressed in terms of the chiral coframe as
	\begin{equation*}
		g_M = - e^{G_M} \left(\eta^+_M \otimes \eta^-_M + \eta^-_M \otimes \eta^+_M\right),
	\end{equation*}
	where $G_M \in C^\infty(M)$ is defined by $e^{G_M} = - g(n_-^M, n_+^M)$ which is strictly positive since $n_+^M$ and $n_-^M$ lie in the same half of the lightcone.
	Similar is true for the metric $g_N$ on $N$.
	Then from \labelcref{prop:chiral_frame_characterisation_conf_map:coframe} it follows that
	\begin{align*}
		f^* g_N
		&=
		- e^{G_N \circ f} \left(f^*\eta^+_N \otimes f^*\eta^-_N + f^*\eta^-_N \otimes f^*\eta^+_N\right)
		\\ &=
		- e^{F_+ + F_- + G_N \circ f} \left(\eta^+_M \otimes \eta^-_M + \eta^-_M \otimes \eta^+_M\right)
		\\ &=
		e^{F_+ + F_- - G_M + G_N \circ f} \cdot g_M,
	\end{align*}
	so $f$ is conformal.
	
	As in \cref{prop:chiral_frame_orientation}, the orientation and time-orientation on $M$ may be represented by 2-form $\Omega_M  = \eta^+_M \wedge \eta^-_M$ and future-directed timelike vector field $T^M = n_-^M + n_+^M$ on $M$ respectively; similar for $N$.
	Then from \labelcref{prop:chiral_frame_characterisation_conf_map:coframe} it follows that $f$ preserves orientation since
	\begin{equation*}
		f^* \Omega_N
		=
		f^*( \eta^+_N \wedge \eta^-_N)
		=
		e^{F_+ + F_-} \cdot \eta^+_M \wedge \eta^-_M
		=
		e^{F_+ + F_-} \cdot \Omega_M.
	\end{equation*}
	Because $(-\eta^+_M, -\eta^-_M)$ is dual to $(n_-^M, n_+^M)$, we have $\eta^\pm_M(n_\pm^M) = 0$ and $\eta^\pm_M(n_\mp^M) = -1$, and hence $\eta^\pm_M(T^M) = \eta^\pm_M(n_-^M + n_+^M) = -1$; similar for $N$.
	Then at every $p \in M$, it follows from \labelcref{prop:chiral_frame_characterisation_conf_map:coframe} that
	\begin{align*}
		g_N|_{f(p)} \left(df_p(T_p^M) , T_{f(p)}^N\right)
		&=
		- e^{G_N \circ f}(p) \left[\eta^+_N \otimes \eta^-_N + \eta^-_N \otimes \eta^+_N\right]_{f(p)}  \left(df_p(T_p^M) , T_{f(p)}^N\right)
		\\ &=
		- e^{G_N \circ f}(p) \left[-(f^*\eta^+_N)_p(T_p^M) - (f^*\eta^-_N)_p(T_p^M) \right]
		\\ &=
		- e^{G_N \circ f}(p)\left[e^{F_-}(p) + e^{F_+}(p)\right]
		< 0,
	\end{align*}
	so $df_p(T_p^M)$ and $T_{f(p)}^N$ lie in the same half of the lightcone.
	Hence $f$ preserves time-orientation.
\end{proof}

\begin{example}
	Consider open subsets $U, V \subseteq \mathbb{R}^{1,1}$ of two-dimensional Minkowski spacetime and smooth map $f : U \to V$.
	It can be shown by direct computation that $f$ is conformal and orientation- and time-orientation-preserving if and only if, in lightcone coordinates $(x^-, x^+)$ as described in \cref{ex:Minkowski_lightcone_coordinates}, we have
	\begin{equation*}
		f(x^-, x^+) = (f_-(x^-), f_+(x^+))
	\end{equation*}
	for some smooth functions $f_\pm : \operatorname{pr}_\pm (U) \to \operatorname{pr}_\pm (V)$ which are strictly increasing, i.e. $f_\pm' > 0$ everywhere.
	Here $\operatorname{pr}_\pm : \mathbb{R}^{1,1} \to \mathbb{R}$ are projections onto the lightcone coordinates.
	
	Then with respect to chiral frames $(\partial_-, \partial_+)$ on $U$ and $V$, the smooth functions $F_\pm \in C^\infty(U)$ of \cref{prop:chiral_frame_characterisation_conf_map} above are given by $e^{F_\pm} = f_\pm' \circ \operatorname{pr}_\pm$ since
	\begin{equation*}
		df_{(x^-,x^+)} \partial_-|_{(x^-,x^+)}
		= \frac{\partial f_-}{\partial x^-} \partial_-|_{f(x^-,x^+)}  +  \frac{\partial f_-}{\partial x^+} \partial_+|_{f(x^-,x^+)}
		= (f_-' \circ \operatorname{pr}_-) \partial_-|_{f(x^-,x^+)},
	\end{equation*}
	and similarly 
		$df_{(x^-,x^+)} \partial_+|_{(x^-,x^+)}
		= (f_+' \circ \operatorname{pr}_+) \partial_+|_{f(x^-,x^+)}$.
\end{example}

\subsection{Spacetimes with good chiral properties}
\label{sec:spacetimes-cats-chiral:chiral-properties}

In analogy with the hierarchy of causal properties discussed in \cref{sec:spacetimes-cats-relativistic:causal_properties}, we now introduce a hierarchy of \emph{chiral properties} that a two-dimensional oriented spacetime may possess.
Each chiral property has a left-chiral and a right-chiral version.
We present the right-chiral properties below; their left-chiral duals are obtained by changing all right-chiral curves to left-chiral curves, i.e. swapping $\chi^+$ with $\chi^-$.
Upon restriction to spacetimes with particular chiral properties, the characterisation in \cref{thm:overlap-monics-CSpTm} of $\overlap$-monics will simplify.

Let $M$ be a 2D oriented spacetime.
We call the spacetime \emph{$\chi^+$-causal} if it contains no closed right-chiral curves, or equivalently if the right-chiral relation $\chi^+_M$ is antisymmetric.
This is a chiral analogue of causal spacetimes, wherein the causal relation $J$ is antisymmetric\footnote{It may be clarifying to call causal spacetimes $J$-causal, to compare with $\chi^+$-causal spacetimes.}.
\todo{should I give all chiral properties in definition environments?}

Trivially, if $M$ contains no closed causal curves then in particular it contains no closed right-chiral curves; hence if 2D oriented spacetime $M$ is causal then it is also $\chi^+$-causal.
The converse is not true; for instance, consider the spacetime given by the quotient of Minkowski $\mathbb{R}^{1,1}$ under identification $(x^-,x^+) \sim (x^- +1, x^+)$ in lightcone coordinates.
This spacetime contains no closed right-chiral curves (i.e. curves of constant $x^-$) and so is $\chi^+$-causal; however, it clearly contains closed left-chiral curves and so is neither $\chi^-$-causal nor causal.

Let $M$ and $N$ be 2D oriented spacetimes, and let $f : M \to N$ be a conformal map which preserves orientation and time-orientation.
We obtain several useful facts about $f$ when either its domain $M$ or its codomain $N$ is $\chi^+$-causal, analogous to \cref{prop:conf_map_to_causal_strictly_preserves_J,cor:codomain_caus_reflect_J_iff_reflect_sJ,prop:domain_caus_reflect_J_injective} for causal spacetimes.
The proofs carry over verbatim, replacing causal curves with right-chiral curves and $J$ with $\chi^+$:

\begin{proposition}
	Say codomain $N$ of $f$ is $\chi^+$-causal.
	Then $f$ strictly preserves $\chi^+$.
\end{proposition}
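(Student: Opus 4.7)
The plan is to adapt verbatim the earlier proof that conformal maps into causal spacetimes strictly preserve $J$ up to time-orientation reversal, as the surrounding text already flags. First I would invoke the fact, noted earlier, that any conformal map between 2D oriented spacetimes which preserves both orientation and time-orientation sends right-moving null curves to right-moving null curves, and therefore preserves $\chi^+$. It remains only to verify the ``strict'' clause: for $(p,q) \in \chi^+_M$ with $f(p) = f(q)$, one must deduce $p = q$.

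For this, I would argue by contradiction. Suppose $p \neq q$; then by definition of $\chi^+_M$ there exists a right-moving null curve $\gamma : [0,1] \to M$ with $\gamma(0) = p$ and $\gamma(1) = q$. The composite $f \circ \gamma : [0,1] \to N$ is a right-moving null curve in $N$, because $f$ sends such curves to such curves. It is also non-constant: any conformal map between spacetimes is a smooth immersion, so its tangent map is injective at every point, and hence $d(f\circ\gamma)_t = df_{\gamma(t)}(\dot\gamma_t)$ is nowhere zero since $\gamma$ itself has nowhere-vanishing tangent. But then $(f \circ \gamma)(0) = f(p) = f(q) = (f \circ \gamma)(1)$ exhibits $f \circ \gamma$ as a non-constant closed right-chiral curve in $N$, contradicting the assumption that $N$ is $\chi^+$-causal. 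Therefore $p = q$.

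No real obstacle arises: the argument is a line-for-line transcription of the causal case, with ``future-directed causal curve'' replaced by ``right-moving null curve'' and $J$ replaced by $\chi^+$. The only micro-step to state carefully is that $f \circ \gamma$ is non-constant, and this follows immediately from $f$ being a smooth immersion together with $\dot\gamma \neq 0$ (which is built into our convention that chiral curves have nowhere-zero tangent).
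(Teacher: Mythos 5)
Your proof is correct and is essentially the paper's own argument: the paper explicitly states that the proof of the causal-case proposition (conformal maps into causal spacetimes strictly preserve $J$ up to time-orientation reversal) carries over verbatim with causal curves replaced by right-chiral curves and $J$ by $\chi^+$, which is exactly the transcription you give, including the immersion argument ruling out a constant image curve.
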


\begin{corollary} \label{cor:codomain_chi_caus_reflect_chi_iff_reflect_schi}
	Say codomain $N$ of $f$ is $\chi^+$-causal.
	Then $f$ reflects $\chi^+$ if and only if $f$ reflects $s\chi^+$.
\end{corollary}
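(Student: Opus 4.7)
The plan is to mirror the structure of the proof of Corollary~\ref{cor:codomain_caus_reflect_J_iff_reflect_sJ}, using the chiral analogue of antisymmetry (the $\chi^+$-causality of $N$) together with the strict preservation result in the preceding proposition. The chiral situation is actually simpler than the causal case because morphisms of $\mathsf{CSpTm}_{1+1}^\mathrm{o,to}$ preserve orientation and time-orientation, so I do not contend with time-orientation reversals.

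For the forward direction, I observe that reflecting $\chi^+$ implies reflecting $s\chi^+$ with no additional hypothesis on $N$. Given $p,q \in M$ with $(f(p),f(q)) \in s\chi^+_N = \chi^+_N \cup (\chi^+_N)^T$, either $(f(p),f(q)) \in \chi^+_N$ or $(f(q),f(p)) \in \chi^+_N$. Reflection of $\chi^+$ then yields respectively $(p,q) \in \chi^+_M$ or $(q,p) \in \chi^+_M$, and in either case $(p,q) \in s\chi^+_M$ by symmetry of $s\chi^+_M$.

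For the backward direction, which is where the $\chi^+$-causality hypothesis on $N$ is essential, I assume $f$ reflects $s\chi^+$ and pick $p,q \in M$ with $(f(p),f(q)) \in \chi^+_N$. Since $\chi^+_N \subseteq s\chi^+_N$, the hypothesis gives $(p,q) \in s\chi^+_M$, so either $(p,q) \in \chi^+_M$ (and we are done) or $(q,p) \in \chi^+_M$. In the latter subcase, I use that $f$ preserves $\chi^+$ (since morphisms of $\mathsf{CSpTm}_{1+1}^\mathrm{o,to}$ preserve orientation and time-orientation) to obtain $(f(q),f(p)) \in \chi^+_N$. Together with $(f(p),f(q)) \in \chi^+_N$ and antisymmetry of $\chi^+_N$ (from $\chi^+$-causality of $N$), this forces $f(p) = f(q)$. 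Applying strict preservation of $\chi^+$ from the preceding proposition to $(q,p) \in \chi^+_M$ with $f(q) = f(p)$ then yields $p = q$, whence $(p,q) \in \chi^+_M$ by reflexivity.

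The only real obstacle is closing off the "wrong direction" branch $(q,p) \in \chi^+_M$ in the backward direction. This must be handled by combining all three chiral ingredients in concert — preservation of $\chi^+$ by $f$, antisymmetry of $\chi^+_N$, and strict preservation — to conclude $p = q$. This exactly parallels the causal argument, with $\chi^+$-causality replacing causality and with the simplification that no time-orientation reversals need be tracked.
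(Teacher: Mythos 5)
Your proof is correct and follows essentially the same route as the paper, which simply notes that the argument of \cref{cor:codomain_caus_reflect_J_iff_reflect_sJ} carries over verbatim with $J$ replaced by $\chi^+$: the forward direction via $s\chi^+ = \chi^+ \cup (\chi^+)^T$, and the backward direction by combining preservation of $\chi^+$, antisymmetry of $\chi^+_N$, and strict preservation to force $p=q$ in the wrong-direction branch. Your observation that no time-orientation reversals need be tracked is exactly the simplification the chiral setting affords.
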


\begin{proposition} \label{prop:domain_chi_causal_reflect_chi_injective}
	Say domain $M$ of $f$ is $\chi^+$-causal.
	If $f$ reflects $\chi^+$ then $f$ is injective.
\end{proposition}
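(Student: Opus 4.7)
The plan is to mimic almost verbatim the proof of Proposition~\ref{prop:domain_caus_reflect_J_injective}, exploiting the fact that $\chi^+$-causality of $M$ is defined exactly as antisymmetry of $\chi^+_M$, in direct analogy with how causality of a spacetime is defined as antisymmetry of $J_M$.

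First I would take any $p, q \in M$ with $f(p) = f(q)$, aiming to conclude $p = q$. Since the right-chiral relation $\chi^+_N$ is reflexive by definition, the coincidence $f(p) = f(q)$ immediately yields both $\bigl(f(p), f(q)\bigr) \in \chi^+_N$ and $\bigl(f(q), f(p)\bigr) \in \chi^+_N$. Next, applying the hypothesis that $f$ reflects $\chi^+$ transports these two memberships back to $M$, so that both $(p,q) \in \chi^+_M$ and $(q,p) \in \chi^+_M$. Finally, $\chi^+$-causality of $M$, which is precisely antisymmetry of $\chi^+_M$, forces $p = q$; this establishes injectivity of $f$.

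I do not anticipate a significant obstacle: the statement is a direct chiral analogue of the earlier relativistic result, and in fact slightly cleaner. In the chiral setting there is no need for an ``up to time-orientation reversal'' clause, because morphisms of $\mathsf{CSpTm}_{1+1}^{\mathrm{o,to}}$ are stipulated to preserve both orientation and time-orientation; consequently $f$ reflects $\chi^+$ outright rather than only up to such a reversal, and every step of the argument above goes through as stated.
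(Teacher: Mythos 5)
Your proof is correct and follows exactly the same route as the paper, which states that the argument of \cref{prop:domain_caus_reflect_J_injective} carries over verbatim with $J$ replaced by $\chi^+$ (and, as you note, without the time-orientation-reversal clause since morphisms preserve orientation and time-orientation). Nothing further is needed.
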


A subset $U \subseteq M$ of 2D oriented spacetime $M$ is called \emph{$\chi^+$-convex} if any right-chiral curve $\gamma : [0,1] \to M$ with endpoints $\gamma(0), \gamma(1)$ lying in $U$ has $\gamma([0,1]) \subseteq U$, i.e. $\gamma$ remains always in $U$.
Clearly any causally convex $U \subseteq M$ is also $\chi^+$-convex.
There is a chiral analogue of the observation on $J$-reflecting maps in \cref{prop:causally_convex_conf_emb_reflect_J}:

\begin{proposition} \label{prop:chirally_convex_conf_inj_reflect_chi}
	Let $M$ and $N$ be 2D oriented spacetimes, and let $f : M \to N$ be a conformal map which preserves orientation and time-orientation.
	If $f$ is injective with image $f(M)$ $\chi^+$-convex in $N$, then $f$ reflects $\chi^+$.
\end{proposition}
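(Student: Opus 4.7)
The plan is to mimic closely the proof of \cref{prop:causally_convex_conf_emb_reflect_J}, with two simplifications: since $f$ is assumed to preserve orientation and time-orientation, there is no time-orientation-reversal ambiguity to track; and the global hyperbolicity/non-imprisoning machinery is not needed here, because we are only lifting a curve whose codomain already lies in $f(M)$ by hypothesis.

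Concretely, take $p,q \in M$ with $(f(p), f(q)) \in \chi^+_N$. If $f(p) = f(q)$ then injectivity forces $p=q$, so $(p,q) \in \chi^+_M$ by reflexivity. Otherwise, there is a right-moving null curve $\gamma : [0,1] \to N$ with $\gamma(0) = f(p)$, $\gamma(1) = f(q)$. First I would observe that $f$, being a smooth conformal map between manifolds of the same dimension, is a local diffeomorphism (the argument of \cref{rm:sptm-mor-local-diffeo} applies verbatim in $\mathsf{CSpTm}$); combined with injectivity, this makes $f$ a smooth embedding and thus a diffeomorphism onto its open image $f(M) \subseteq N$. Next, $\chi^+$-convexity of $f(M)$ gives $\gamma([0,1]) \subseteq f(M)$, so the lift $\delta := f^{-1}|_{f(M)} \circ \gamma : [0,1] \to M$ is a well-defined smooth curve from $p$ to $q$. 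Finally I would verify that $\delta$ is right-moving: the inverse of a conformal, orientation- and time-orientation-preserving local diffeomorphism is again conformal, orientation- and time-orientation-preserving (immediate from \cref{prop:chiral_frame_characterisation_conf_map} applied to $f^{-1}$, since the conditions there on $F_\pm$ are symmetric under inversion, yielding $d(f^{-1}) \circ n_\pm^N = e^{-F_\pm \circ f^{-1}} (n_\pm^M \circ f^{-1})$ on $f(M)$). Hence $df^{-1}$ sends null, future-directed, right-pointing vectors to vectors of the same type, and by the chain rule each $\dot{\delta}_t = df^{-1}_{\gamma(t)}(\dot{\gamma}_t)$ is null, future-directed and right-pointing. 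Thus $\delta$ is right-moving and exhibits $(p,q) \in \chi^+_M$.

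No genuine obstacle is anticipated; the argument is essentially bookkeeping about chiral tangent vectors. The only place where care is required is in justifying that $f^{-1}|_{f(M)}$ inherits preservation of orientation and time-orientation, but this is purely linear-algebraic at each tangent space and follows from reading \cref{prop:chiral_frame_characterisation_conf_map} backwards. Once this is in hand, the proof reduces to the single sentence \enquote{lift the curve and check chirality pointwise}.
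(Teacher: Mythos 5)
Your proposal is correct and follows essentially the same route as the paper, which proves this by the same curve-lifting argument as \cref{prop:causally_convex_conf_emb_reflect_J}, noting that the injective conformal local diffeomorphism $f$ is an embedding so that the $\chi^+$-convexity hypothesis lets one pull the right-moving null curve back through $f^{-1}$. The only difference is presentational: you verify explicitly (via \cref{prop:chiral_frame_characterisation_conf_map}) that $f^{-1}$ preserves orientation and time-orientation, whereas the paper leaves this implicit in its earlier discussion of lifting curves through conformal embeddings.
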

\begin{proof}
	Similar to \cref{prop:causally_convex_conf_emb_reflect_J}, noting that $f$ is a local diffeomorphism (it is a smooth immersion whose domain and codomain have the same dimension) so $f$ is injective if and only if it is a smooth embedding.
\end{proof}

When both the domain and codomain of $f$ possess the $\chi^+$-causal property, \cref{prop:domain_chi_causal_reflect_chi_injective} can be significantly strengthened into a converse of \cref{prop:chirally_convex_conf_inj_reflect_chi}:

\begin{proposition} \label{prop:chi_causal_reflect_chi_if_chirally_convex_conf_inj}
	Let $M$ and $N$ be $\chi^+$-causal 2D oriented spacetimes, and let $f : M \to N$ be a conformal map which preserves orientation and time-orientation.
	If $f$ reflects $\chi^+$ then $f$ is injective with image $f(M)$ $\chi^+$-convex in $N$.
\end{proposition}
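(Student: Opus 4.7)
Injectivity of $f$ is immediate from Proposition \ref{prop:domain_chi_causal_reflect_chi_injective}, since $M$ is $\chi^+$-causal and $f$ reflects $\chi^+$.

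For $\chi^+$-convexity of $f(M)$ in $N$, I would take any right-chiral curve $\gamma : [0,1] \to N$ with $\gamma(0) = f(p)$ and $\gamma(1) = f(q)$ for some $p,q \in M$, and aim to show $\gamma([0,1]) \subseteq f(M)$. After possibly reversing the parameterisation by $t \mapsto \gamma(1-t)$ and swapping the roles of $p$ and $q$ (which leaves the image unchanged), I may assume $\gamma$ is right-moving. Then $\gamma$ exhibits $(f(p), f(q)) \in \chi^+_N$, so reflection of $\chi^+$ by $f$ gives $(p,q) \in \chi^+_M$; hence there is a right-moving curve $\delta : [0,1] \to M$ from $p$ to $q$. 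Since morphisms of $\mathsf{CSpTm}_{1+1}^\mathrm{o,to}$ preserve right-moving null curves, $f \circ \delta$ is a right-moving curve in $N$ from $f(p)$ to $f(q)$ whose image lies in $f(M)$. The plan is to show that $\gamma([0,1]) = (f\circ\delta)([0,1])$.

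To identify the two images, I would fix a complete right-chiral vector field $n_+^N$ on $N$ (which exists by Proposition \ref{prop:chiral_frames_exist} combined with Lemma \ref{lem:vector_field_rescale_to_complete}) with maximal flow $\theta^+_N$. The crucial observation is that in the $\chi^+$-causal spacetime $N$, each orbit $\tau \mapsto \theta^+_{N,\tau}(r)$ is an injection $\mathbb{R} \to N$: if $\theta^+_{N,\tau_1}(r) = \theta^+_{N,\tau_2}(r)$ for some $\tau_1 < \tau_2$, then restricting the orbit to $[\tau_1,\tau_2]$ produces a closed right-chiral curve, contradicting $\chi^+$-causality of $N$. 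By Lemma \ref{lem:chiral_curves_integral} and the extension discussion following it, both $\gamma$ and $f\circ\delta$ are reparameterisations, by increasing diffeomorphisms, of integral curves of $n_+^N$ starting at $f(p)$. Their images are therefore segments of the orbit through $f(p)$ of the form $\{\theta^+_{N,\tau}(f(p)) : \tau \in [0,\tau_\gamma]\}$ and $\{\theta^+_{N,\tau}(f(p)) : \tau \in [0,\tau_\delta]\}$, with $f(q) = \theta^+_{N,\tau_\gamma}(f(p)) = \theta^+_{N,\tau_\delta}(f(p))$. Injectivity of the orbit then forces $\tau_\gamma = \tau_\delta$, and hence $\gamma([0,1]) = (f\circ\delta)([0,1]) \subseteq f(M)$.

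The main conceptual ingredient is the injectivity of chiral orbits in $\chi^+$-causal spacetimes; once this is in hand, the argument reduces to recognising both $\gamma$ and $f\circ\delta$ as arcs of a common orbit, uniquely determined by their endpoints. This is notably cleaner than the relativistic analogue in Proposition \ref{prop:reflect_J_causally_convex_conf_emb}, which required the much stronger hypothesis of global hyperbolicity and an Escape-Lemma argument trapping an inextendable causal curve in a compact causal diamond. In the chiral setting, the completeness of chiral vector fields supplies a direct global parameterisation of each orbit, and the weaker hypothesis of $\chi^+$-causality alone rules out self-intersection.
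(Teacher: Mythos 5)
Your proposal is correct and follows essentially the same route as the paper: reduce to a right-moving $\gamma$, use reflection of $\chi^+$ to produce $\delta$ in $M$, reparameterise both $\gamma$ and $f\circ\delta$ (via \cref{lem:chiral_curves_integral}) as arcs of a single integral curve/orbit of a right-chiral vector field on $N$ through $f(p)$, and invoke $\chi^+$-causality of $N$ -- i.e.\ the impossibility of a non-injective right-chiral curve, which is exactly your injectivity-of-orbits observation -- to force the two arcs to coincide, so $\gamma([0,1]) = (f\circ\delta)([0,1]) \subseteq f(M)$. The only small point to add is that $(p,q)\in\chi^+_M$ also permits $p=q$, so before asserting the existence of $\delta$ you should note that $f(p)=f(q)$ would make $\gamma$ a closed right-chiral curve in the $\chi^+$-causal spacetime $N$, hence $f(p)\neq f(q)$ and, by the already-established injectivity, $p\neq q$ -- precisely the step the paper's proof includes.
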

\begin{proof}
	Injectivity is given by \cref{prop:domain_chi_causal_reflect_chi_injective}.
% 	It remains to show that for $f$ $\chi^+$-reflecting, the image $f(M)$ is $\chi^+$-convex in $N$.
	Let $\gamma : [0,1] \to N$ be a right-chiral curve with $p,q \in M$ such that $\gamma(0) = f(p)$ and $\gamma(1) = f(q)$.
	Without loss of generality, take $\gamma$ future-directed so $(f(p),f(q)) \in \chi^+_N$; then $(p,q) \in \chi^+_M$ since $f$ reflects $\chi^+$.
	We have $f(p) \neq f(q)$ since $N$ is $\chi^+$-causal; then also $p \neq q$ since $f$ is injective.
	So there must exist right-moving null curve $\delta : [0,1] \to M$ with $\delta(0) = p$ and $\delta(1) = q$.
	
	Now both $\gamma$ and $f \circ \delta$ are right-moving null curves in $N$ starting at point $p$.
	Choose any right-chiral vector field $n_+$ on $M$.
	By \cref{lem:chiral_curves_integral}, we have reparameterisation diffeomorphisms $\phi : [0, \tau] \to [0,1]$ and $\phi' : [0, \tau'] \to [0,1]$ such that $\gamma \circ \phi
	: [0, \tau] \to N$ and $f \circ \delta \circ \phi' : [0, \tau'] \to N$ may be extended to integral curves of $n_+$ starting at $f(p) = \gamma \circ \phi (0) = f \circ \delta \circ \phi'(0)$ and containing the point $f(q) = \gamma \circ \phi (\tau) = f \circ \delta \circ \phi'(\tau')$.
% 	\todo{at \cref{lem:chiral_curves_integral} I limited my attention to curves defined on open intervals... In paragraphs below that lemma, I mention this extension business. Is this clear/formal enough?}
	Assume for the sake of contradiction that $\tau < \tau'$.
	Then since integral curves starting at the same point coincide on shared domain, $f \circ \delta \circ \phi' |_{[0,\tau]} = \gamma \circ \phi$ and in particular $f \circ \delta \circ \phi' (\tau) = \gamma \circ \phi (\tau) = f(q)$.
	But then $f \circ \delta \circ \phi'$ is not injective, in contradiction with the $\chi^+$-causal property of $N$.
	A similar contradiction arises if $\tau' < \tau$.
	Hence we must have $\tau = \tau'$, so that $\gamma \circ \phi = f \circ \delta \circ \phi' : [0, \tau] \to N$ by uniqueness of integral curves.
	Then $\gamma = f \circ \delta \circ ( \phi' \circ \phi^{-1})$, showing in particular that $\gamma ([0,1]) \subseteq f(M)$.
\end{proof}

The condition that $M$ is $\chi^+$-causal may be expressed in terms of chiral flows (\cref{def:chiral_flow}):
\begin{lemma} \label{lem:chiral_flow_free}
	Let $M$ be a 2D oriented spacetime, and let $\theta^+ : \mathbb{R}\times M \to M$ be a right-chiral flow on $M$.
	Then $M$ is $\chi^+$-causal if and only if $\theta^+$ is a free $\mathbb{R}$-action.
\end{lemma}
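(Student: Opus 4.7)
The proof would proceed by directly invoking \cref{prop:characteristion_chiral_relation}, which translates membership in $\chi^+_M$ into existence of orbit parameters for $\theta^+$. The key observation is that any integral curve $s \mapsto \theta^+_s(p)$ of the right-chiral vector field $n_+$ generating $\theta^+$ is, by definition, a right-moving null curve in $M$.

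For the $(\Leftarrow)$ direction, assume $\theta^+$ is free and take $(p,q), (q,p) \in \chi^+_M$. By the characterisation, there exist $\tau, \tau' \geq 0$ with $q = \theta^+_\tau(p)$ and $p = \theta^+_{\tau'}(q)$. The group law gives $p = \theta^+_{\tau + \tau'}(p)$, so freeness forces $\tau + \tau' = 0$; combined with $\tau, \tau' \geq 0$ this yields $\tau = \tau' = 0$ and hence $p = q$. Thus $\chi^+_M$ is antisymmetric, so $M$ is $\chi^+$-causal.

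For the $(\Rightarrow)$ direction, assume $M$ is $\chi^+$-causal and suppose $\theta^+_\tau(p) = p$ for some $\tau \in \mathbb{R}$ and $p \in M$. Applying $\theta^+_{-\tau}$ shows the stabiliser is symmetric, so without loss of generality $\tau > 0$. Then the restriction of the orbit curve $\gamma : [0, \tau] \to M$, $\gamma(s) := \theta^+_s(p)$ is an integral curve of $n_+$ and therefore a right-moving null curve with $\gamma(0) = p = \gamma(\tau)$, i.e.\ a closed right-chiral curve. This contradicts $\chi^+$-causality, forcing $\tau = 0$, so $\theta^+$ is free.

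There is no significant obstacle here; the statement essentially rephrases the characterisation of $\chi^+_M$ established in \cref{prop:characteristion_chiral_relation,cor:sym_chi_equivalence_rel}. The only minor care required is to reduce to the case $\tau > 0$ in the forward direction (using that $\theta^+_\tau(p) = p$ implies $\theta^+_{-\tau}(p) = p$) so as to produce a \emph{future-directed} closed right-chiral curve, matching the definition of $\chi^+$-causality.
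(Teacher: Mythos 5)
Your proof is correct and takes essentially the same approach as the paper: both directions come down to the fact that orbits of $\theta^+$ realise (reparameterised) right-chiral curves, and your $(\Rightarrow)$ argument—turning a non-trivial stabiliser into a closed right-chiral orbit segment—matches the paper's. The only cosmetic difference is in $(\Leftarrow)$: the paper reparameterises a putative closed right-chiral curve into a non-injective integral curve via \cref{lem:chiral_curves_integral}, whereas you verify antisymmetry of $\chi^+_M$ using \cref{prop:characteristion_chiral_relation} and the group law, but since that proposition is itself an immediate consequence of the same lemma, the content coincides.
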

\begin{proof}
	The $\mathbb{R}$-action $\theta^+$ is not free if and only if there exists $p \in M$ and $\tau \in \mathbb{R}$ with $\tau \neq 0$, such that $\theta^+_\tau(p) = p$.
	This is true if and only if there is non-injective integral curve $\gamma : \mathbb{R} \to M$ of the complete right-chiral vector field $n_+$ generating $\theta^+$.
	If such $\gamma$ exists, then $\gamma$ is a closed right-chiral curve.
	Conversely, any closed right-chiral curve $\gamma' : I \to M$ may be reparameterised to an integral curve of $n_+$ by \cref{lem:chiral_curves_integral}; this integral curve is necessarily not injective since $\gamma'$ is closed.
\end{proof}

A 2D oriented spacetime $M$ is \emph{$\chi^+$-simple} if its right-chiral relation $\chi^+$ is antisymmetric and topologically closed.
This is a chiral analogue of causally simple spacetimes\footnote{Just as we may call causal spacetimes $J$-causal to compare with the $\chi^+$-causal property, we may also call causally simple spacetimes $J$-simple to compare with the $\chi^+$-simple property.}.

Clearly if $M$ is $\chi^+$-simple then it is $\chi^+$-causal.
An example of a spacetime that is $\chi^+$-causal but not $\chi^+$-simple is given by $\mathbb{R}^{1,1} \setminus \{r\}$, for arbitrary point $r$ in Minkowski $\mathbb{R}^{1,1}$; see also \cref{ex:minkowski_minus_point}.

Global hyperbolicity, which is the strongest condition in the hierarchy of causal properties in \cref{sec:spacetimes-cats-relativistic:causal_properties}, can be phrased in terms of the existence of Cauchy surfaces.
To find an appropriate chiral analogue of global hyperbolicity, we begin with a chiral analogue of Cauchy surfaces:

\begin{definition} \label{def:chiral_cauchy_surface}
	Let $M$ be a 2D oriented spacetime.
	A \emph{$\chi^+$-Cauchy surface}
	in $M$ is a smoothly embedded one-dimensional submanifold $S \subseteq M$ such that for every inextendable right-chiral curve $\gamma : I \to M$, there is a unique $t \in I$ such that $\gamma(t) \in S$.
\end{definition}

This is analogous to (smooth, spacelike) Cauchy surfaces, which are intersected exactly once by any inextendable causal curve.
Where a Cauchy surface is an appropriate surface on which to define initial conditions for relativistic equations of motion, a $\chi^+$-Cauchy surface is similarly an appropriate surface on which to define initial conditions for right-chiral equations of motion.

% ---------
% CHIRAL_EXAMPLE_EDITS
% ---------
% % inserted with example

\begin{example} \label{ex:free_fermion:initial_data}
	Recall the right-moving component $\widetilde{\psi}$ of the free fermion field theory in \cref{ex:free_fermion}.
% 	In lightcone coordinates $(x^-, x^+)$ on $\mathbb{R}^{1,1}$, the classical equation of motion for $\widetilde{\psi}$ simplifies to $\partial_+ \widetilde{\psi} = 0$; the general solution is that $\widetilde{\psi}$ is an arbitrary function of $x^-$ only:
% 	\begin{equation*}
% 		\widetilde{\psi}(x^-, x^+) = f(x^-).
% 	\end{equation*}
% 	
	In standard coordinates $(t,x)$ on $\mathbb{R}^{1,1}$, the equation of motion for $\widetilde{\psi}$ is $(\partial_t + \partial_x) \widetilde{\psi} = 0$; its general solution is that $\widetilde{\psi}$ is an arbitrary function of $x^- = t - x$ only:
	\begin{equation*}
		\widetilde{\psi}(t,x) = f(t-x).
	\end{equation*}
	Consider the 1-dimensional submanifold given by the line at constant $x=0$.
	This is a $\chi^+$-Cauchy surface of $\mathbb{R}^{1,1}$: as noted in \cref{ex:Minkowski_lightcone_coordinates}, right-chiral curves in $\mathbb{R}^{1,1}$ are lines of constant $c = t-x$, and every such line intersects the line $x=0$, specifically at $t=c$.
	
	It suffices to specify initial conditions for $\widetilde{\psi}$ on the line $x=0$: for, say initial condition $\widetilde{\psi}(t,0) = \widetilde{\psi}_0(t)$ is given for some function $\widetilde{\psi}_0$.
	Then for arbitrary $(t',x')$, we have
	\begin{equation*}
		\widetilde{\psi}(t',x') = f(t'-x') = \widetilde{\psi}(t'-x',0) = \widetilde{\psi}_0(t'-x').
	\end{equation*}
\end{example}

% ---------

Any smooth, spacelike Cauchy surface in a 2D oriented spacetime is also a $\chi^+$-Cauchy since any inextendable causal curve intersects it exactly once and right-chiral curves are causal.
The converse is not true; take for example the timelike surface $x=0$ in Minkowski $\mathbb{R}^{1,1}$.
% ---------
% CHIRAL_EXAMPLE_EDITS
% ---------
% % original
% This is a $\chi^+$-Cauchy surface;
% however, it is clearly not a Cauchy surface since it is not achronal.
% ---------
% % with example
This is a $\chi^+$-Cauchy surface as observed in the preceding example;
however, it is clearly not a Cauchy surface since it is not achronal.
% ---------

Since images of inextendable right-chiral curves coincide with equivalence classes of $s\chi^+$ (\cref{cor:sym_chi_equivalence_rel}), $\chi^+$-Cauchy surfaces in $M$ are related to the quotient space $M/s\chi^+$:

\begin{lemma} \label{lem:chiral_cauchy_sections}
	Let $M$ be a $\chi^+$-causal 2D oriented spacetime, with quotient $ M / s\chi^+$ a smooth manifold.
	The image of any smooth section $\sigma : M / s\chi^+ \to M$ of the quotient map $\pi_M: M \to M/s\chi^+$ is a $\chi^+$-Cauchy surface of $M$.
\end{lemma}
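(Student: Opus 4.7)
The plan is to verify that $S := \sigma(M/s\chi^+)$ satisfies the two defining conditions of \cref{def:chiral_cauchy_surface}: first, that $S$ is a smoothly embedded one-dimensional submanifold of $M$; second, that every inextendable right-chiral curve $\gamma : I \to M$ meets $S$ at exactly one parameter value.

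For the first condition, I would observe that since $M$ is $\chi^+$-causal, \cref{lem:chiral_flow_free} ensures any right-chiral flow acts freely on $M$, and by \cref{cor:sym_chi_equivalence_rel} the equivalence classes of $s\chi^+$ coincide with the one-dimensional orbits of this $\mathbb{R}$-action. Under the standing assumption that $M/s\chi^+$ is a smooth manifold with $\pi_M$ a smooth submersion (the usual situation, guaranteed e.g.\ by the Quotient Manifold Theorem when the action is additionally proper), the fibres of $\pi_M$ are these orbits, so $\dim(M/s\chi^+) = \dim M - 1 = 1$. Any smooth section of a smooth submersion is automatically a smooth embedding: $\sigma$ is injective because $\pi_M \circ \sigma = \operatorname{id}$; its differential is injective by differentiating the same identity; and $\sigma$ is a topological embedding because $\sigma(U) = \sigma(M/s\chi^+) \cap \pi_M^{-1}(U)$ is open in $\sigma(M/s\chi^+)$ for every open $U \subseteq M/s\chi^+$. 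Hence $S$ is a smoothly embedded one-dimensional submanifold of $M$.

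For the second condition, let $\gamma : I \to M$ be an inextendable right-chiral curve. By \cref{cor:sym_chi_equivalence_rel}, the image $\gamma(I)$ is a single equivalence class of $s\chi^+$, say $[p] \in M/s\chi^+$, viewed as the fibre $\pi_M^{-1}(\{[p]\})$. Since $\pi_M(\sigma([p])) = [p]$, the point $\sigma([p])$ lies in $\gamma(I)$, so there exists $t \in I$ with $\gamma(t) = \sigma([p]) \in S$. For uniqueness, suppose $\gamma(t_1), \gamma(t_2) \in S$ so that $\gamma(t_i) = \sigma([q_i])$ for some $[q_i]$; applying $\pi_M$ gives $[q_i] = [\gamma(t_i)] = [p]$, and hence $\gamma(t_1) = \sigma([p]) = \gamma(t_2)$. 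To promote equality of values to equality of parameters, I would argue that $\gamma$ is injective: by \cref{lem:chiral_curves_integral} it is, up to a reparameterisation diffeomorphism, a maximal integral curve of some complete right-chiral vector field, and by \cref{lem:chiral_flow_free} the associated flow is free on the $\chi^+$-causal spacetime $M$, so the maximal integral curve is injective, and hence so is $\gamma$. This yields $t_1 = t_2$.

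The only genuine technical point is ensuring that $\pi_M$ is a submersion, without which a smooth section need not be an immersion and $S$ could fail to have the correct dimension as a submanifold. This regularity is essentially forced by the $\chi^+$-causal hypothesis together with the standing assumption that $M/s\chi^+$ carries a smooth manifold structure compatible with $\pi_M$; all remaining steps are direct consequences of \cref{cor:sym_chi_equivalence_rel,lem:chiral_flow_free,lem:chiral_curves_integral}.
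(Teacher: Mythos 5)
Your argument is essentially sound and follows the same overall skeleton as the paper's proof: use \cref{cor:sym_chi_equivalence_rel} to identify images of inextendable right-chiral curves with fibres of $\pi_M$, get existence of an intersection point from $\pi_M\circ\sigma=\mathrm{id}$, get uniqueness of the intersection point from injectivity of the section, and get uniqueness of the parameter from injectivity of $\gamma$ (which, as you note, follows from $\chi^+$-causality; the paper states this directly rather than routing through \cref{lem:chiral_curves_integral} and \cref{lem:chiral_flow_free} -- a non-injective $\gamma$ would immediately contain a closed right-chiral segment). Where you genuinely diverge is the embedding step: the paper shows $\sigma$ is an injective immersion and then proves $\sigma$ is a \emph{proper} map by a compactness argument, invoking the fact that a proper injective immersion is a smooth embedding. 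You instead observe that $\sigma(U)=\sigma(M/s\chi^+)\cap\pi_M^{-1}(U)$, so any continuous section is automatically a homeomorphism onto its image; combined with the immersion property this gives the embedding more directly, and with less machinery. That part of your argument is correct and arguably cleaner than the paper's.

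The weak point is your ``standing assumption'' that $\pi_M$ is a smooth submersion, together with the closing claim that this is ``essentially forced'' by $\chi^+$-causality plus the quotient being a smooth manifold. That claim is unjustified: $\chi^+$-causality makes the chiral flow free, but submersivity of $\pi_M$ is what one extracts from freeness \emph{and properness} via the Quotient Manifold Theorem (this is exactly how it enters in \cref{thm:chiral_initial_characterisations}), and the lemma's hypotheses do not grant properness. Note, however, that your proof does not actually need the full submersion property except for the dimension count $\dim(M/s\chi^+)=1$: the topological-embedding step uses only continuity of $\pi_M$, and the immersion step uses only smoothness of $\pi_M$ (differentiating $\pi_M\circ\sigma=\mathrm{id}$), which is the same implicit reading of the hypothesis that the paper's own proof relies on. So either drop the submersion assumption and weaken your framing to ``$\pi_M$ smooth'', accepting (as the paper implicitly does) that one-dimensionality of the quotient is part of the intended setting, or state the submersion hypothesis honestly as an additional assumption rather than asserting it is forced.
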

\begin{proof}
	A section $\sigma : M/s\chi^+ \to M$ of $\pi_M$ intersects each equivalence class of $s\chi^+$ in $M$ exactly once by definition.
	The image $\gamma(I)$ of any inextendable right-chiral curve $\gamma : I \to M$ is an equivalence class of $s\chi^+$, by \cref{cor:sym_chi_equivalence_rel}; thus $\sigma(M / s\chi^+)$ intersects $\gamma(I)$ at exactly one point.
	Since $M$ is $\chi^+$-causal, $\gamma$ is injective; it follows that there is a unique $t \in I$ such that $\gamma(t) \in \sigma(M / s\chi^+)$.
	
	It remains to show that if section $\sigma$ is smooth then it is an embedding.	
	That $\sigma$ is an injective immersion follows immediately from $\pi_M \circ \sigma = \mathrm{id}_{M/ s\chi^+}$.
	Any section $\sigma$ is also a proper map.
	For, if $K \subseteq M$ is compact, then by continuity so is $\pi_M (K) \subseteq M / s\chi^+$.
	Compact $K$ in Hausdorff $M$ is closed so $\sigma^{-1}(K)$ is also closed.
	But $\sigma^{-1}(K) \subseteq \pi_M(K)$: if $[p] \in M / s\chi^+$ has $\sigma[p] \in K$, then $[p] = \pi_M \circ \sigma[p] \in \pi_M (K)$.
	Since $\sigma^{-1}(K)$ is a closed subset contained in $\pi_M(K)$ a compact subset, it follows that $\sigma^{-1}(K)$ is compact in $M / s\chi^+$.
	Thus $\sigma$ is a proper injective immersion, and so a smooth embedding \cite[Proposition 4.22]{Lee2013}.
\end{proof}

We say that a 2D oriented spacetime $M$ is \emph{$\chi^+$-initial}
\todo{name? maybe $\chi^+$-globally-hyperbolic instead? `Initial' has categorical meaning...}
if it contains a $\chi^+$-Cauchy surface.
There are at least two other useful characterisations of this chiral property, one in terms of the quotient space $M /s\chi^+$ and the other in terms of chiral flows:

\begin{lemma} \label{lem:chiral_initial_R-equivariant_RxS}
	Let $M$ be any $\chi^+$-initial 2D oriented spacetime.
	Take any $S \subseteq M$ a $\chi^+$-Cauchy surface and $n_+$ a complete right-chiral vector field on $M$.
	
	The flowout $\phi^+ : \mathbb{R} \times S \to M$ from $S$ along $n_+$ is an $\mathbb{R}$-equivariant diffeomorphism with respect to the $\mathbb{R}$-actions given on $\mathbb{R} \times S$ by translation in the first factor, and on $M$ by the flow of $n_+$.
\end{lemma}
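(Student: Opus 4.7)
The plan is to verify in turn that $\phi^+$ is $\mathbb{R}$-equivariant, smooth, bijective, and a local diffeomorphism; together these imply that $\phi^+$ is a diffeomorphism. Equivariance is immediate from the group law of the flow, as
\begin{equation*}
    \phi^+(\tau + \sigma, p) = \theta^+_{\tau+\sigma}(p) = \theta^+_\sigma\bigl(\phi^+(\tau, p)\bigr),
\end{equation*}
and smoothness follows from smoothness of $\theta^+ : \mathbb{R} \times M \to M$ restricted along the smooth embedding $S \hookrightarrow M$.

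For bijectivity I would first observe that $\chi^+$-initial implies $\chi^+$-causal: a periodic inextendable right-chiral curve in $M$ would either miss $S$ entirely or meet it at infinitely many parameter values, contradicting the $\chi^+$-Cauchy property. By \cref{lem:chiral_flow_free} the flow $\theta^+$ is therefore a free $\mathbb{R}$-action on $M$. For surjectivity: given $q \in M$, completeness of $n_+$ together with \cref{prop:characteristion_chiral_relation} identifies the orbit of $q$ with the image of an inextendable right-chiral curve, which meets $S$ at a unique $p = \theta^+_{t_0}(q)$, so $q = \phi^+(-t_0, p)$. For injectivity: $\phi^+(\tau, p) = \phi^+(\sigma, p')$ with $p, p' \in S$ puts $p$ and $p'$ on a common orbit, i.e.\ on the image of a single inextendable right-chiral curve; uniqueness in the $\chi^+$-Cauchy property forces $p = p'$, and freeness of the action then yields $\tau = \sigma$.

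It remains to show $\phi^+$ is a local diffeomorphism. By $\mathbb{R}$-equivariance this reduces to verifying bijectivity of the differential at $(0, p)$ for each $p \in S$, where
\begin{equation*}
    d\phi^+_{(0,p)}(a\partial_\tau + v) = a \cdot n_+|_p + v \qquad \text{for } v \in T_p S.
\end{equation*}
Since domain and codomain are both 2-dimensional, this is bijective iff $n_+|_p \notin T_p S$, i.e.\ iff $n_+$ is transverse to $S$ at $p$. Establishing this transversality is where I expect the main technical work to lie: in a flow-box chart around $p$ with $n_+ = \partial_2$, the $\chi^+$-Cauchy property says $S$ meets each vertical integral line exactly once, and ruling out a vertical tangent of $S$ at $p$ requires combining the smooth-embedding condition on $S$ with the $\chi^+$-causal structure of $M$. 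Granted transversality at every point of $S$, the flowout theorem (\cite[Theorem 9.20]{Lee2013}) makes $\phi^+$ a smooth immersion, which together with bijectivity yields the sought diffeomorphism.
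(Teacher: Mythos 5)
Your outline follows the same skeleton as the paper's proof: equivariance from the group law, bijectivity from the unique-intersection property of a $\chi^+$-Cauchy surface (the paper simply writes down the inverse $p \mapsto (\tau_p, \theta^+_{-\tau_p}(p))$, which is your surjectivity/injectivity argument packaged differently), and the Flowout Theorem \cite[Theorem 9.20]{Lee2013} to get smooth invertibility. The genuine gap is precisely the step you defer, transversality of $n_+$ to $S$ -- and it is not a technicality you can extract from \cref{def:chiral_cauchy_surface}, because it is false under the stated hypotheses. The paper's own remark following \cref{cor:chiral_initial_has_section} exhibits a $\chi^+$-Cauchy surface of $\mathbb{R}^{1,1}$ given in lightcone coordinates by $x^- = (x^+)^3$: parametrising it by $s = x^+$, its tangent at the origin is $\partial_+$, i.e.\ the complete right-chiral field $n_+ = \partial_+$ is tangent to $S$ there. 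The flowout is then $(\tau, s) \mapsto (s^3, s + \tau)$, a smooth bijection whose Jacobian determinant $-3s^2$ vanishes along $s=0$; it is a homeomorphism but not a diffeomorphism. So no amount of "combining the smooth-embedding condition on $S$ with the $\chi^+$-causal structure of $M$" will rule out the vertical tangent in your flow-box chart: the transversality, and with it the lemma's conclusion as literally stated, can fail.

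You should also know that the paper's own proof has the same soft spot: it establishes bijectivity and then cites the Flowout Theorem, whose hypotheses include that the vector field be nowhere tangent to $S$, without verifying this. Your instinct that this is where the real work lies is therefore sound, but the cure is a hypothesis rather than a cleverer argument: either require $\chi^+$-Cauchy surfaces to be nowhere tangent to right-chiral directions (the chiral analogue of taking Cauchy surfaces smooth and spacelike, cf.\ \cref{rm:cauchy_surfaces_smooth}), or prove the lemma for $S$ the image of a smooth section $\sigma$ of $\pi_M$ as in \cref{lem:chiral_cauchy_sections,cor:chiral_initial_has_section}, where transversality is automatic: $d\pi_M \circ d\sigma = \mathrm{id}$ makes $d\pi_M$ injective on $T_pS$, while $d\pi_M(n_+|_p) = 0$ and $n_+|_p \neq 0$, so $n_+|_p \notin T_pS$. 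With such a hypothesis added, your argument (equivariance, bijectivity, immersion at $(0,p)$ propagated by equivariance) does complete correctly, and is essentially the paper's proof with the missing transversality made explicit.
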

\begin{proof}
	Denote by $i : S \hookrightarrow M$ the inclusion, and $\theta^+ : \mathbb{R} \times M \to M$ the flow of $n_+$.
	The flowout $\phi^+ : \mathbb{R} \times S \to M$ from $S$ along $n_+$ is given by the composite
	\begin{equation*}
		\mathbb{R} \times S \xhookrightarrow{\mathrm{id}_\mathbb{R} \times i} \mathbb{R} \times M \xrightarrow{\theta^+} M.
	\end{equation*}
	By definition of $\chi^+$-Cauchy surfaces and \cref{lem:chiral_curves_integral}, for every $p \in M$ there is a unique $\tau_p \in \mathbb{R}$ such that $\theta^+_{-\tau_p} (p) \in S$.
	Then the map
	\begin{align*}
		M & \to \mathbb{R} \times S,
		\\
		p & \mapsto (\tau_p, \theta^+_{-\tau_p} (p)),
	\end{align*}
	is the inverse of flowout map $\phi^+$ since for $p \in M$,
	\begin{align*}
		\phi^+(\tau_p, \theta^+_{-\tau_p}(p)) = \theta^+_{\tau_p} \circ \theta^+_{-\tau_p} (p) = p,
	\end{align*}
	and for $(\tau, q) \in \mathbb{R} \times S$, it is clear that $\tau_{\theta^+_{\tau}(q)} = \tau$ so
	\begin{equation*}
		\left(\tau_{\phi^+(\tau,q)}, \theta^+_{-\tau_{\phi^+(\tau,q)}}\left({\phi^+(\tau,q)}\right)\right)
		= \left(\tau, \theta^+_{-\tau} \circ \theta^+_{\tau}(q)\right)
		= (\tau, q).
	\end{equation*}
	Since $\phi^+$ is bijective, it follows that $\phi^+$ is a diffeomorphism by the Flowout Theorem \cite[Theorem 9.20]{Lee2013}.
	
	By the group law of $\theta^+$, for any $(\tau,p) \in \mathbb{R}\times S$ and group element $t \in \mathbb{R}$,
	\begin{equation*}
		\theta^+_t(\phi^+(\tau,p))
		=
		\theta^+_t (\theta^+_\tau(p))
		=
		\theta^+_{t+\tau}(p)
		=
		\phi^+(t+\tau,p),
	\end{equation*}
	so that $\phi^+ : \mathbb{R} \times S \to M$ is $\mathbb{R}$-equivariant.
\end{proof}

\begin{theorem} \label{thm:chiral_initial_characterisations}
	Let $M$ be a 2D oriented spacetime. The following are equivalent:
	\begin{enumerate}[label=\textup{(\roman*)}]
		\item \label{thm:chiral_initial_characterisations:cauchy}
		$M$ contains a $\chi^+$-Cauchy surface,
		
		\item \label{thm:chiral_initial_characterisations:flow}
		Any right-chiral flow $\theta^+ : \mathbb{R} \times M \to M$ on $M$ is a free and proper $\mathbb{R}$-action.
		
		\item \label{thm:chiral_initial_characterisations:quotient}
		$M$ is $\chi^+$-causal and the quotient space $M / s\chi^+$ is a smooth 1-manifold such that the quotient map $\pi_M : M \to M / s\chi^+$ is a smooth submersion,
	\end{enumerate}
\end{theorem}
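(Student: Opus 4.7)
I would prove the cyclic chain \labelcref{thm:chiral_initial_characterisations:cauchy} $\Rightarrow$ \labelcref{thm:chiral_initial_characterisations:flow} $\Rightarrow$ \labelcref{thm:chiral_initial_characterisations:quotient} $\Rightarrow$ \labelcref{thm:chiral_initial_characterisations:cauchy}, leveraging the flowout diffeomorphism of \cref{lem:chiral_initial_R-equivariant_RxS}, the orbit characterisation \cref{cor:sym_chi_equivalence_rel}, and the section-to-Cauchy-surface correspondence \cref{lem:chiral_cauchy_sections}.

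For \labelcref{thm:chiral_initial_characterisations:cauchy} $\Rightarrow$ \labelcref{thm:chiral_initial_characterisations:flow}: given a $\chi^+$-Cauchy surface $S \subseteq M$ and a complete right-chiral vector field $n_+$ generating $\theta^+$, \cref{lem:chiral_initial_R-equivariant_RxS} supplies an $\mathbb{R}$-equivariant diffeomorphism $\phi^+ : \mathbb{R} \times S \xrightarrow{\sim} M$ conjugating $\theta^+$ to the translation action on the first factor, which is manifestly free and proper. Any other right-chiral flow is generated by $fn_+$ for some positive smooth $f$ (chosen so that $fn_+$ is complete), has the same orbits in $M$, and thus remains free by \cref{lem:chiral_flow_free} (which ties freeness to $\chi^+$-causality of $M$); properness persists since the reparameterisation is a diffeomorphism on each orbit and leaves the orbit structure unchanged.

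For \labelcref{thm:chiral_initial_characterisations:flow} $\Rightarrow$ \labelcref{thm:chiral_initial_characterisations:quotient}: the Quotient Manifold Theorem yields a smooth quotient $M/\mathbb{R}$ of dimension $\dim M - \dim \mathbb{R} = 1$ with the projection a smooth submersion. \Cref{cor:sym_chi_equivalence_rel} identifies orbits of $\theta^+$ with equivalence classes of $s\chi^+$, so $M/\mathbb{R} = M/s\chi^+$; and $\chi^+$-causality of $M$ follows from freeness of $\theta^+$ by \cref{lem:chiral_flow_free}.

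For \labelcref{thm:chiral_initial_characterisations:quotient} $\Rightarrow$ \labelcref{thm:chiral_initial_characterisations:cauchy}: by \cref{lem:chiral_cauchy_sections}, it suffices to construct a smooth global section of $\pi_M$. I would pick a left-chiral vector field $n_-$ on $M$: since $(n_-, n_+)$ is a frame of $TM$ and $n_+$ spans $\ker d\pi_M$ fibrewise, the pushforward $d\pi_M \circ n_-$ is nowhere-zero on the $1$-manifold $M/s\chi^+$. Fix a complete vector field $X$ on $M/s\chi^+$ (available on any parallelisable $1$-manifold), rescale $n_-$ by a positive smooth function so that it is complete and satisfies $d\pi_M \circ n_- = X \circ \pi_M$, then transport a chosen initial point of a single fibre along the flow of this rescaled $n_-$ to produce a smooth section over each connected component of $M/s\chi^+$. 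On components diffeomorphic to $S^1$ one must additionally check that the transported section closes up consistently after one period; this can be handled either by a direct argument using the action of $\theta^+$ to adjust for any mismatch, or by appealing to the triviality of principal $\mathbb{R}$-bundles (since $\mathbb{R}$ is contractible). The main obstacle is this $S^1$ case, where one must rule out or correct for the holonomy of the transverse flow; the remaining implications reduce to direct applications of the Quotient Manifold Theorem and the flowout construction.
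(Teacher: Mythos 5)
Your first two implications are essentially the paper's proof: \labelcref{thm:chiral_initial_characterisations:cauchy} $\Rightarrow$ \labelcref{thm:chiral_initial_characterisations:flow} via the equivariant flowout diffeomorphism of \cref{lem:chiral_initial_R-equivariant_RxS} (note that this lemma already applies to \emph{any} complete right-chiral vector field, so your extra paragraph about rescaling $n_+$ to $fn_+$ is redundant -- and its one-line claim that properness survives reparameterisation because the orbits are unchanged is not a valid argument on its own), and \labelcref{thm:chiral_initial_characterisations:flow} $\Rightarrow$ \labelcref{thm:chiral_initial_characterisations:quotient} via \cref{lem:chiral_flow_free}, the Quotient Manifold Theorem and \cref{cor:sym_chi_equivalence_rel}, exactly as in the paper.

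The gap is in \labelcref{thm:chiral_initial_characterisations:quotient} $\Rightarrow$ \labelcref{thm:chiral_initial_characterisations:cauchy}. Your transverse-transport construction asks for a positive rescaling $f n_-$ that is simultaneously complete and $\pi_M$-related to a chosen complete vector field $X$ on $M/s\chi^+$; but the relation $d\pi_M(f n_-) = X \circ \pi_M$ already determines $f$ pointwise (the fibres of $d\pi_M$ are spanned by $n_+$, so the ratio is a well-defined nonzero scalar), leaving no freedom to arrange completeness. A lift of a complete vector field along a submersion need not be complete, so the flow you want to transport $p_0$ along may fail to exist for the time needed to sweep out a whole component of $M/s\chi^+$; without this, you do not get a globally defined section even over line components. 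Likewise, the circle-component holonomy issue is only gestured at: the clean fix you mention -- triviality of principal $\mathbb{R}$-bundles -- presupposes that $\pi_M$ \emph{is} a principal $\mathbb{R}$-bundle, which is precisely what the paper establishes (freeness of a right-chiral flow from $\chi^+$-causality via \cref{lem:chiral_flow_free}, orbits equal to fibres via \cref{cor:sym_chi_equivalence_rel}, plus the submersion hypothesis), and then uses contractibility of $\mathbb{R}$ to extract a smooth global section, which \cref{lem:chiral_cauchy_sections} converts into a $\chi^+$-Cauchy surface. If you supply the principal-bundle step, your argument collapses into the paper's; as written, the construction of the section is not complete.
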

\begin{proof}
\labelcref{thm:chiral_initial_characterisations:cauchy} $\Rightarrow$ \labelcref{thm:chiral_initial_characterisations:flow}:
	Pick $\chi^+$-Cauchy surface $S \subseteq M$; the flowout $\phi^+ : \mathbb{R} \times S \to M$ from $S$ along any complete right-chiral vector field $n_+$ is an $\mathbb{R}$-equivariant diffeomorphism by the preceding lemma.
% 	\cref{lem:chiral_initial_R-equivariant_RxS}.
	The action of $\mathbb{R}$ on itself by translation is free and proper.
	Hence so is the $\mathbb{R}$-action on $\mathbb{R} \times S$ by translation in the first factor.
	Since $\phi^+$ is an $\mathbb{R}$-equivariant diffeomorphism intertwining this $\mathbb{R}$-action with the flow $\theta^+ : \mathbb{R} \times M \to M$ of $n_+$, it follows that $\theta^+$ is also free and proper.
	
\labelcref{thm:chiral_initial_characterisations:flow} $\Rightarrow$ \labelcref{thm:chiral_initial_characterisations:quotient}:
	Say flow $\theta^+ : \mathbb{R} \times M \to M$ of a complete right-chiral vector field $n_+$ is a free and proper $\mathbb{R}$-action.
	Then $M$ is $\chi^+$-causal by \cref{lem:chiral_flow_free}.
	By the Quotient Manifold Theorem \cite[Theorem 21.10]{Lee2013} the quotient $M / \mathbb{R}$ of $M$ by $\theta^+$ is a manifold of dimension $\dim M - \dim \mathbb{R} = 1$, with unique smooth structure such that the quotient map $\pi_M : M \to M / \mathbb{R}$ is a smooth submersion.
	But by 
% 	\cref{prop:characteristion_chiral_relation} 
	\cref{cor:sym_chi_equivalence_rel}
	the orbits of $\theta^+$ are exactly the equivalence classes of $s\chi^+$, so $M / \mathbb{R} \equiv M / s\chi^+$.
	
\labelcref{thm:chiral_initial_characterisations:quotient} $\Rightarrow$ \labelcref{thm:chiral_initial_characterisations:cauchy}:
	Pick any right-chiral flow $\theta^+ : \mathbb{R} \times M \to M$.
	By \cref{lem:chiral_flow_free}, $\theta^+$ is a free $\mathbb{R}$-action since $M$ is $\chi^+$-causal.
% 	By \cref{prop:characteristion_chiral_relation}, 
	By \cref{cor:sym_chi_equivalence_rel},
	the orbits of $\theta^+$ are the equivalence classes of $s\chi^+$, and so the fibres of the quotient map $\pi_M : M \to M/s \chi^+$.
	Thus $\pi_M : M \to M / s\chi^+$ together with $\mathbb{R}$-action $\theta^+$ constitutes a principal $\mathbb{R}$-bundle \cite[Lemma 10.3]{KolarMichorSlovak1993}.
	
	For any contractible Lie group $G$, any principal $G$-bundle $p : E \to B$ is trivial:
	for then the principal $G$-bundle $G \to \{*\}$ is a universal principal $G$-bundle, and there is only one homotopy class of maps $B \to \{*\}$ and hence only one isomorphism class of principal $G$-bundles over base $B$.
	
	So since $\mathbb{R}$ is contractible, the principal $\mathbb{R}$-bundle $\pi_M : M \to M / s\chi^+$ is trivial and thus has a smooth global section $\sigma : M / s\chi^+ \to M$.
	By \cref{lem:chiral_cauchy_sections}, $\sigma$ is a $\chi^+$-Cauchy surface of $M$.
\end{proof}

The proof above also demonstrates the following:
\begin{corollary} \label{cor:chiral_initial_has_section}
	Let $M$ be a $\chi^+$-initial 2D oriented spacetime.
	Then the quotient map $\pi_M : M \to M /s\chi^+$ has a smooth global section whose image is a $\chi^+$-Cauchy surface in $M$.
\end{corollary}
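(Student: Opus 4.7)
The plan is to extract the conclusion directly from the machinery assembled in the proof of \cref{thm:chiral_initial_characterisations}, specifically the implication \labelcref{thm:chiral_initial_characterisations:quotient} $\Rightarrow$ \labelcref{thm:chiral_initial_characterisations:cauchy}. Since $M$ is $\chi^+$-initial, by \cref{thm:chiral_initial_characterisations} it satisfies condition \labelcref{thm:chiral_initial_characterisations:quotient}: $M$ is $\chi^+$-causal and $\pi_M : M \to M/s\chi^+$ is a smooth submersion onto a smooth 1-manifold.

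Next, I would pick any right-chiral flow $\theta^+ : \mathbb{R} \times M \to M$, which exists by \cref{prop:chiral_frames_exist} together with \cref{lem:vector_field_rescale_to_complete,def:chiral_flow}. This flow is free by \cref{lem:chiral_flow_free} since $M$ is $\chi^+$-causal. By \cref{cor:sym_chi_equivalence_rel}, the orbits of $\theta^+$ coincide with the equivalence classes of $s\chi^+$, hence with the fibres of $\pi_M$. Consequently $\pi_M : M \to M/s\chi^+$ together with $\theta^+$ is a principal $\mathbb{R}$-bundle.

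I would then invoke the standard classification fact (used in the proof of the theorem): since $\mathbb{R}$ is contractible, every principal $\mathbb{R}$-bundle is trivial, and therefore $\pi_M$ admits a smooth global section $\sigma : M/s\chi^+ \to M$. Finally, apply \cref{lem:chiral_cauchy_sections} to conclude that the image $\sigma(M/s\chi^+)$ is a $\chi^+$-Cauchy surface in $M$.

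There is no genuine obstacle here; the statement is essentially a restatement of what was already established inside the proof of \cref{thm:chiral_initial_characterisations}. The only point needing care is making sure to cite the equivalence in that theorem so that \labelcref{thm:chiral_initial_characterisations:quotient} is available as a hypothesis, and then verifying that the section produced is precisely a global section of $\pi_M$ (rather than merely a local one), which is immediate from the triviality of the bundle.
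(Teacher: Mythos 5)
Your proposal is correct and follows exactly the paper's route: the corollary is extracted from the implication \labelcref{thm:chiral_initial_characterisations:quotient} $\Rightarrow$ \labelcref{thm:chiral_initial_characterisations:cauchy} in the proof of \cref{thm:chiral_initial_characterisations}, where the free right-chiral flow makes $\pi_M$ a principal $\mathbb{R}$-bundle, triviality (contractibility of $\mathbb{R}$) yields a smooth global section, and \cref{lem:chiral_cauchy_sections} identifies its image as a $\chi^+$-Cauchy surface. The paper simply records this as "the proof above also demonstrates the following," so your argument is the same one spelled out.
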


\begin{remark}
	Not all $\chi^+$-Cauchy surfaces $S \subseteq M$ may be realised as images of smooth sections of $\pi_M : M \to M /s\chi^+$.
	For example, take Minkowski $\mathbb{R}^{1,1}$ and surface $S$ defined in lightcone coordinates by $x^- = (x^+)^3$.
	Since right-chiral curves in $\mathbb{R}^{1,1}$ are simply curves of constant $x^-$, this surface is $\chi^+$-Cauchy -- every curve at constant $x^- = c$  intersects the surface exactly once, at $(c, c^{1/3})$.
	Also, $S$ is the image of map $\mathbb{R}^{1,1}/s\chi^+ \to \mathbb{R}^{1,1}$, $[(x^-,x^+)] \mapsto \left(x^-, (x^-)^{1/3}\right)$.
	While this map is indeed a section of the quotient map $\pi_{\mathbb{R}^{1,1}} : \mathbb{R}^{1,1} \to \mathbb{R}^{1,1}/s\chi^+$, it is not smooth at $[(0,x^+)]$.
	
	We may therefore compare \cref{cor:chiral_initial_has_section} to a causal analogue: not all Cauchy surfaces are smoothly embedded and spacelike, but every globally hyperbolic spacetime nonetheless contains a smoothly embedded spacelike Cauchy surface.
	See also \cref{rm:cauchy_surfaces_smooth}.
\end{remark}

We have related antisymmetry of the relation $\chi^+$ to chiral flows in \cref{lem:chiral_flow_free}.
We may do similar for topological closedness of $\chi^+$:

\begin{lemma} \label{lem:chiral_flow_proper_implies_relation_closed}
	Let $M$ be a 2D oriented spacetime, and take $\theta^+ : \mathbb{R} \times M \to M$ any right-chiral flow.
	If the $\mathbb{R}$-action $\theta^+$ is proper, then the right-chiral relation $\chi^+_M$ on $M$ is topologically closed.
\end{lemma}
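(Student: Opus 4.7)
The plan is to express $\chi^+_M$ as the image of a proper continuous map into $M\times M$ and invoke the fact that proper continuous maps into locally compact Hausdorff spaces are closed.

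First I would use \cref{prop:characteristion_chiral_relation} to write
\begin{equation*}
	\chi^+_M \;=\; \left\{ (p, \theta^+_\tau(p)) \in M\times M \;\middle|\; p \in M,\ \tau \in [0,\infty) \right\},
\end{equation*}
i.e.\ $\chi^+_M$ is the image of the continuous map
\begin{equation*}
	\Phi \colon [0,\infty) \times M \longrightarrow M \times M,\qquad (\tau, p)\longmapsto \bigl(p,\theta^+_\tau(p)\bigr).
\end{equation*}
The map $\Phi$ is the restriction of the orbit map
\begin{equation*}
	\Psi \colon \mathbb{R}\times M \longrightarrow M\times M, \qquad (\tau,p)\longmapsto \bigl(p,\theta^+_\tau(p)\bigr),
\end{equation*}
to the closed subset $[0,\infty)\times M\subseteq \mathbb{R}\times M$. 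By hypothesis the $\mathbb{R}$-action $\theta^+$ is proper, which by definition means exactly that $\Psi$ is a proper map.

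Next I would observe that properness is preserved under restriction to closed subspaces of the domain: for any compact $K\subseteq M\times M$, $\Phi^{-1}(K) = \Psi^{-1}(K)\cap \bigl([0,\infty)\times M\bigr)$ is a closed subset of the compact set $\Psi^{-1}(K)$, hence compact. Thus $\Phi$ is proper.

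Finally, since $M$ is a (Hausdorff, second-countable) manifold, $M\times M$ is locally compact Hausdorff, so any proper continuous map into $M\times M$ is a closed map. In particular $\Phi$ sends the closed subset $[0,\infty)\times M$ of its domain (namely, the whole domain) to a closed subset of $M\times M$, giving $\chi^+_M = \Phi([0,\infty)\times M)$ topologically closed in $M\times M$, as required. I do not anticipate any genuine obstacle here; the only points requiring care are (i) the identification of $\chi^+_M$ as the image of $\Phi$, which is immediate from \cref{prop:characteristion_chiral_relation}, and (ii) the standard topological fact that proper continuous maps into locally compact Hausdorff spaces are closed, which can be cited rather than reproved.
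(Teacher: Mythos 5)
Your proposal is correct and follows essentially the same route as the paper: both identify $\chi^+_M$ as the image of $[0,\infty)\times M$ under the shear map $(\tau,p)\mapsto(p,\theta^+_\tau(p))$, use properness of the action, and invoke the fact that a proper continuous map into the locally compact space $M\times M$ is closed. The only cosmetic difference is that you restrict the shear map to $[0,\infty)\times M$ and show the restriction is proper, whereas the paper applies closedness of the full shear map to the closed subset $[0,\infty)\times M$ directly.
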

\begin{proof}
	If $\theta^+ : \mathbb{R} \times M \to M$ is a proper action, then its shear map
	\begin{align*}
		\Theta^+ : \mathbb{R} \times M & \to M \times M, \\
		(\tau, p) & \mapsto (p, \theta^+_\tau (p)),
	\end{align*}
	is a proper map.
	Since $\Theta^+$ is continuous and its codomain is locally compact, it follows that $\Theta^+$ is a closed map \cite[Theorem A.57]{Lee2013}.
	Since $[0, \infty) \subseteq \mathbb{R}$ is a closed subset, so is $[0, \infty) \times M \subseteq \mathbb{R} \times M$; hence using the characterisation in \cref{prop:characteristion_chiral_relation},
	\begin{equation*}
		\chi^+_M 
		= \Set{ (p,q) \in M \times M | \exists \tau \geq 0 \text{ such that } q = \theta^+_\tau(p) }
		= \Theta^+\left([0, \infty) \times M\right)
	\end{equation*}
	is a closed subset of $M \times M$.
\end{proof}

We conclude that any $\chi^+$-initial spacetime $M$ is $\chi^+$-simple: for by \cref{thm:chiral_initial_characterisations}, any right-chiral flow $\theta^+$ on $M$ is free and proper, so $\chi^+_M$ is antisymmetric by \cref{lem:chiral_flow_free} and topologically closed by \cref{lem:chiral_flow_proper_implies_relation_closed}.

The author's lack of examples of $\chi^+$-simple spacetimes which are not $\chi^+$-initial leads to:

\begin{openquestion} \label{ques:chi_simple_implies_initial}
	If $M$ is a $\chi^+$-simple 2D oriented spacetime, does it follow that $M$ is $\chi^+$-initial?
	In light of \cref{lem:chiral_flow_free,thm:chiral_initial_characterisations}, this may be rephrased as follows: if $M$ is $\chi^+$-simple, does it follow that any right-chiral flow $\theta^+ : \mathbb{R}\times M \to M$ on $M$ is a proper $\mathbb{R}$-action?
	
	In case the answer is affirmative, then \emph{$\chi^+$-initial} and \emph{$\chi^+$-simple} are two formulations of the same chiral property.
\end{openquestion}

It is straightforward to observe that topological closedness of the relation $\chi^+$ does not by itself imply that a right-chiral flow $\theta^+ : \mathbb{R} \times M \to M$ is proper.
Recall that by one characterisation, $\mathbb{R}$-action $\theta^+$ is proper if and only if for any compact $K \subseteq M$ the set $\mathbb{R}_K := \Set{ \tau \in \mathbb{R} | \theta^+_\tau(K) \cap K \neq \varnothing}$ is compact.
Any 2D oriented spacetime $M$ with topologically closed but not antisymmetric $\chi^+_M$ has any right-chiral flow $\theta^+$ not proper: for then $M$ contains a closed right-chiral curve and so a periodic integral curve of complete right-chiral vector field $n_+$ generating $\theta^+$.
Say this periodic integral curve starts at $p \in M$ and has period $\Delta \tau$; then
$\mathbb{R}_{\{p\}} = \Set{ \tau \in \mathbb{R} | \theta^+_\tau(p) = p } = \Set{ n \Delta\tau | n \in \mathbb{Z} }$ is not compact in $\mathbb{R}$.
An example of such a spacetime is the quotient of $\mathbb{R}^{1,1}$ under identification $(x^-, x^+ + 1) \sim (x^-, x^+) $ in lightcone coordinates.

On the other hand, if right-chiral flow $\theta^+$ is a free and proper $\mathbb{R}$-action then the orbits of $\theta^+$ are closed embedded submanifolds \cite[Proposition 21.7]{Lee2013}.
It is suggestive that $\chi^+$-simple spacetimes satisfy the following necessary but not sufficient condition for the $\chi^+$-initial property\footnote{Recall \cref{cor:sym_chi_equivalence_rel}: inextendable right-chiral curves coincide with orbits of $\theta^+$.}:

\begin{proposition}
	Let $M$ be a $\chi^+$-simple 2D oriented spacetime.
	Then inextendable right-chiral curves in $M$ are closed embedded submanifolds.
\end{proposition}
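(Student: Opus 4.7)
The plan is to use \cref{cor:sym_chi_equivalence_rel} to identify the image of an inextendable right-chiral curve with an orbit of a right-chiral flow $\theta^+ : \mathbb{R} \times M \to M$ generated by a complete right-chiral vector field. By \cref{lem:chiral_flow_free}, this action is free since $M$ is $\chi^+$-causal, so each orbit map $o_p : \mathbb{R} \to M$, $\tau \mapsto \theta^+_\tau(p)$, is an injective smooth immersion (its tangent at each $\tau$ is nonzero). The task reduces to showing that each orbit $O := o_p(\mathbb{R})$ is closed in $M$ and that $o_p$ is a topological embedding onto $O$; an injective immersion that is a topological embedding is automatically a smooth embedding.

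Closedness of $O$ is quick. Since $\chi^+_M$ is topologically closed, so is its symmetric closure $s\chi^+_M$ by \cref{lem:sym_top_closures_commute}. The orbit is $O = \{q \in M : (p,q) \in s\chi^+_M\}$, which is the preimage of the closed set $s\chi^+_M$ under the continuous map $q \mapsto (p,q)$, hence closed.

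For the embedding statement I would argue by contradiction. If $o_p$ is not a topological embedding, then there exist $\tau_n \in \mathbb{R}$ and $\tau^* \in \mathbb{R}$ with $o_p(\tau_n) \to o_p(\tau^*)$ in $M$ but $\tau_n \not\to \tau^*$. A bounded subsequence $\tau_n \to \tau^{**}$ would force $\theta^+_{\tau^{**}}(p) = \theta^+_{\tau^*}(p)$ by continuity and Hausdorffness, hence $\tau^{**} = \tau^*$ by freeness, a contradiction. So on a subsequence $s_n := \tau_n - \tau^* \to \pm\infty$, while $\theta^+_{s_n}(p) = \theta^+_{-\tau^*}(o_p(\tau_n)) \to p$. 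Fix any $t > 0$. If $s_n \to +\infty$, then for large $n$ one has $s_n - t > 0$, so $(p,\theta^+_{s_n - t}(p)) \in \chi^+_M$, and $\theta^+_{s_n - t}(p) = \theta^+_{-t}(\theta^+_{s_n}(p)) \to \theta^+_{-t}(p)$. If $s_n \to -\infty$, then $-s_n - t > 0$ gives $(\theta^+_{s_n}(p), \theta^+_{-t}(p)) \in \chi^+_M$, with $\theta^+_{s_n}(p) \to p$. In either case, closedness of $\chi^+_M$ yields $(p,\theta^+_{-t}(p)) \in \chi^+_M$. Since also $(\theta^+_{-t}(p), p) \in \chi^+_M$ trivially (flow by $t > 0$), antisymmetry of $\chi^+_M$ forces $p = \theta^+_{-t}(p)$, contradicting freeness of $\theta^+$.

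The main obstacle is this final embedding step: one must convert the runaway sequence into a genuine fixed point of the flow, and doing so genuinely requires both halves of $\chi^+$-simplicity (antisymmetry and closedness of the \emph{asymmetric} relation $\chi^+_M$, not merely of $s\chi^+_M$) together with freeness. The remaining ingredients are essentially bookkeeping via \cref{cor:sym_chi_equivalence_rel,lem:chiral_flow_free,lem:sym_top_closures_commute}.
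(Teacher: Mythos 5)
Your proof is correct, and its decisive step differs from the paper's. Both arguments reduce, via \cref{cor:sym_chi_equivalence_rel} (the paper goes through \cref{lem:chiral_curves_integral}) and freeness from \cref{lem:chiral_flow_free}, to showing that each orbit map $\tau \mapsto \theta^+_\tau(p)$ of a right-chiral flow is an injective immersion whose image is a closed embedded copy of $\mathbb{R}$. The paper then proves in one stroke that the orbit map is a \emph{closed map}: it writes closed subsets of $\mathbb{R}$ as intersections of sets $(-\infty,a]\cup[b,\infty)$ and observes, using \cref{prop:characteristion_chiral_relation}, that the forward ray image $\theta^{(p)}([b,\infty))$ is a slice $\chi^+ \cap (\{\theta^{(p)}(b)\}\times M)$ of the closed relation $\chi^+$ (and dually for backward rays); a continuous injective closed map is automatically a homeomorphism onto its closed image. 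You instead split the task: closedness of the orbit follows cheaply from closedness of $s\chi^+$ (your use of \cref{lem:sym_top_closures_commute} is fine, since $\overline{s\chi^+}=s(\overline{\chi^+})=s\chi^+$), and the topological-embedding property is obtained by contradiction, converting a failure of the inverse's continuity into a sequence $\theta^+_{s_n}(p)\to p$ with $s_n\to\pm\infty$ and then using closedness of $\chi^+$, antisymmetry, and freeness to manufacture the impossible identity $\theta^+_{-t}(p)=p$. Your recurrence argument is a legitimately different mechanism and makes explicit where both halves of $\chi^+$-simplicity enter, at the cost of sequential bookkeeping (which is justified since $M$ is metrizable, though you should state that $\tau_n\not\to\tau^*$ only yields a subsequence bounded away from $\tau^*$, within which you extract the convergent or divergent further subsequence); the paper's closed-map argument is shorter and delivers closedness and embeddedness simultaneously, but hides the recurrence phenomenon your proof displays.
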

\begin{proof}
	Pick a right-chiral flow $\theta^+: \mathbb{R} \times M \to M$.
	By \cref{lem:chiral_curves_integral}, inextendable right-chiral curves are diffeomorphic reparameterisations of the orbit maps $\theta^{(p)} : \mathbb{R} \to M$ defined as $\theta^{(p)}(\tau) := \theta^+_\tau(p)$.
	We show that for each $p \in M$, $\theta^{(p)}$ is a closed embedding.
	
	By definition of the flow $\theta^+$, $\theta^{(p)}$ is a smooth immersion.
	$M$ is $\chi^+$-causal since it is $\chi^+$-simple;
	then $\theta^+$ is free by \cref{lem:chiral_flow_free} so that $\theta^{(p)}$ is injective.
	It remains to show that $\theta^{(p)} : \mathbb{R} \to M$ is a closed map.
	Recall that $\mathbb{R}$ has a basis of closed sets $C(a,b) := (-\infty, a] \cup [b, \infty)$ for $a,b \in \mathbb{R}$.
	(The complements $\mathbb{R} \setminus C(a,b) = (a,b)$ form a basis of open sets.)
	Then any closed subset $A \subseteq \mathbb{R}$ is an arbitrary intersection of closed basis elements, $A = \cap_\alpha C(a_\alpha, b_\alpha)$.
	Since $\theta^{(p)}$ is injective, $\theta^{(p)}(A) = \theta^{(p)} \left(\cap_\alpha C \left(a_\alpha, b_\alpha\right)\right) = \cap_\alpha \theta^{(p)}\left(C(a_\alpha, b_\alpha)\right)$.
	So to show that $\theta^{(p)}$ is a closed map, it suffices to show that
	\begin{equation*}
		\theta^{(p)}\left(C(a, b)\right) = \theta^{(p)}\left((-\infty,a]\right) \cup \theta^{(p)}\left([b, \infty)\right)
	\end{equation*}
	is a closed set in $M$ for any $a,b \in \mathbb{R}$.
	But since the relation $\chi^+ \subseteq M \times M$ is topologically closed, so is
	\begin{align*}
		\chi^+ \cap \left( \{\theta^{(p)}(b) \} \times M \right)
		&=
% 		\Set{ (p',q') \in M \times M | p' = \theta^{(p)}(b) \text{ and } (p',q') \in \chi^+}
% 		\\&=
		\Set{ \left(\theta^{(p)}(b),q \right) \in M \times M | q = \theta^+_\tau(p) \text{ for } \tau \geq b}
		\\&=
		\{ \theta^{(p)}(b) \} \times \theta^{(p)}\left([b, \infty)\right),
	\end{align*}
	where we have used the characterisation of $\chi^+$ given in \cref{prop:characteristion_chiral_relation}.
	It follows that $\theta^{(p)}\left([b, \infty)\right) \subseteq M$ is closed.
	Similarly, $\theta^{(p)}\left((-\infty,a]\right) \subseteq M$ is closed since $ \chi^+ \cap \left( M \times \{\theta^{(p)}(a) \}\right) = \theta^{(p)}\left( (-\infty, a] \right) \times \{ \theta^{(p)}(a) \}$ is closed.
\end{proof}

Though its obvious use is as a causal property, \emph{global hyperbolicity} may also serve as a chiral property of 2D oriented spacetimes.
As observed under \cref{def:chiral_cauchy_surface}, any smooth, spacelike Cauchy surface is also a $\chi^+$-Cauchy surface.
Consequently, if 2D oriented spacetime $M$ is globally hyperbolic then it is also $\chi^+$-initial.
The converse is not true: for instance, take the open strip $U := \Set{(t,x) \in \mathbb{R}^{1,1} | x \in (-1,1)}$ in Minkowski $\mathbb{R}^{1,1}$ defined using its standard coordinates.
Then $U$ has $\chi^+$-Cauchy surface $x=0$, but is not globally hyperbolic.

Similar to the hierarchy of causal properties, we may realise the hierarchy of chiral properties as a nested sequence of full $\disjoint$-subcategories
% of $(\mathsf{CSpTm}, \disjoint_{\chi^+})$:
of $\mathsf{CSpTm}$:

\begin{definition}
	Let $\kappa^+$ denote any right-chiral property in the hierarchy above; we define $\kappa^+ \mathsf{CSpTm}$ to be the full $\disjoint$-subcategory of $(\mathsf{CSpTm}, \disjoint_{\chi^+})$ consisting of those spacetimes which satisfy $\kappa^+$.
\end{definition}
In particular, we have full $\disjoint$-subcategory inclusions:
\begin{equation*}
	\mathsf{GlobHypCSpTm} \hookrightarrow \chi^+\mathsf{InitCSpTm} \hookrightarrow \chi^+\mathsf{SimCSpTm} \hookrightarrow \chi^+\mathsf{CausCSpTm} \hookrightarrow \mathsf{CSpTm}.
\end{equation*}
If \cref{ques:chi_simple_implies_initial} has affirmative answer,
then $\chi^+\mathsf{InitCSpTm} \hookrightarrow \chi^+\mathsf{SimCSpTm}$ is the identity functor.

\begin{remark} \label{rm:chiral_hierarchy_left-right}
	There is a dual hierarchy of full $\disjoint$-subcategories of $(\mathsf{CSpTm}, \disjoint_{\chi^-})$ defined with left-chiral properties $\kappa^-$ dual to right-chiral properties $\kappa^+$.
	Any left-chiral property $\kappa^-$ is obtained by swapping right-chiral curves for left-chiral curves everywhere in the definition of right-chiral property $\kappa^+$.
	For any such pair $\kappa^+$ and $\kappa^-$,
	there is an isomorphism of $\disjoint$-categories $(\kappa^+\mathsf{CSpTm}, \disjoint_{\chi^+}) \to (\kappa^-\mathsf{CSpTm}, \disjoint_{\chi^-})$ which reverses orientation (but not time-orientation) on spacetimes.
	That this functor is an isomorphism of $\disjoint$-categories means the functor is invertible and preserves and reflects $\disjoint$-relations.
\end{remark}

\subsection{Overlap-monics in categories of spacetimes with good chiral properties}
\label{sec:spacetimes-cats-chiral:overlap-monics}

Analogous to the causal case of \cref{sec:spacetimes-cats-relativistic:overlap-monics}, we use the general characterisation of $\overlap$-monics in $(\mathsf{CSpTm}, \disjoint_{\chi^+})$ given by \cref{thm:overlap-monics-CSpTm} to produce simplified such characterisations in $\disjoint$-subcategories $\kappa^+\mathsf{CSpTm}$.

% Before doing so, we verify that restricting to such a $\disjoint$-subcategory does not introduce new $\overlap$-monicity of morphisms.
The $\disjoint$-subcategory inclusions $\kappa^+ \mathsf{CSpTm} \hookrightarrow \lambda^+ \mathsf{CSpTm}$ reflect $\overlap$-monics by \cref{prop:functors-respect-overlap-monics}.
They also preserve $\overlap$-monics; the proof is verbatim the same as the causal analogue \cref{prop:spacetime_inclusions_preserve_overlap-monics}, up to replacing $\kappa\mathsf{SpTm}$ and $\disjoint_J$ with $\kappa^+\mathsf{CSpTm}$ and $\disjoint_{\chi^+}$ respectively:

\begin{proposition} \label{prop:chiral_spacetime_inclusions_preserve_overlap-monics}
	The inclusion functors $(\kappa^+ \mathsf{CSpTm}, \disjoint_{\chi^+}) \hookrightarrow (\lambda^+ \mathsf{CSpTm}, \disjoint_{\chi^+})$ preserve $\overlap$-monics.
\end{proposition}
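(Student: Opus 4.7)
The plan is to mirror the causal analogue \cref{prop:spacetime_inclusions_preserve_overlap-monics} essentially verbatim, swapping $\disjoint_J$ for $\disjoint_{\chi^+}$, the subcategories $\kappa\mathsf{SpTm}$ for $\kappa^+\mathsf{CSpTm}$, and the covering lemma \cref{cor:unions_preserve_causal_disjointness} for its chiral counterpart \cref{cor:unions_preserve_chiral_disjointness}. The whole strategy hinges on the observation that globally hyperbolic 2D oriented spacetimes sit at the top of the chiral hierarchy, so the open globally hyperbolic neighbourhoods available at every point of any 2D oriented spacetime automatically qualify as objects of $\kappa^+\mathsf{CSpTm}$ for any chiral property $\kappa^+$ in the hierarchy.

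First I would fix $g : N \to P$ that is $\overlap$-monic in $\kappa^+\mathsf{CSpTm}$ and an arbitrary conterminous pair $f_1 : M_1 \rightarrow N \leftarrow M_2 : f_2$ in $\lambda^+\mathsf{CSpTm}$. Because $\lambda^+$ is weaker than $\kappa^+$, the objects $M_1$ and $M_2$ need not lie in $\kappa^+\mathsf{CSpTm}$; to remedy this, for each $p \in M_1$ I would choose an open globally hyperbolic neighbourhood $U_p \subseteq M_1$ with inclusion $i_p : U_p \hookrightarrow M_1$. Equipped with the inherited orientation and time-orientation, $U_p$ is a 2D oriented spacetime which, being globally hyperbolic, satisfies every chiral property in the hierarchy and so is an object of $\kappa^+\mathsf{CSpTm}$, while $i_p$ is a morphism of $\lambda^+\mathsf{CSpTm}$. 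I would do the same on the other side, producing neighbourhoods $V_q \subseteq M_2$ with inclusions $j_q : V_q \hookrightarrow M_2$.

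Assuming $f_1 \disjoint_{\chi^+} f_2$ in $\lambda^+\mathsf{CSpTm}$, stability of $\disjoint_{\chi^+}$ under pre-composition yields $f_1 \circ i_p \disjoint_{\chi^+} f_2 \circ j_q$ in $\lambda^+\mathsf{CSpTm}$ for every $(p,q)$. Since the inclusion $\kappa^+\mathsf{CSpTm} \hookrightarrow \lambda^+\mathsf{CSpTm}$ is full and both preserves and reflects $\disjoint$-relations, this same disjointness holds in $\kappa^+\mathsf{CSpTm}$; applying $\overlap$-monicity of $g$ there and then pushing back through the inclusion delivers $g \circ f_1 \circ i_p \disjoint_{\chi^+} g \circ f_2 \circ j_q$ in $\lambda^+\mathsf{CSpTm}$ for all $p \in M_1$ and $q \in M_2$. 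Since $\bigcup_{p \in M_1} U_p = M_1$ and $\bigcup_{q \in M_2} V_q = M_2$, \cref{cor:unions_preserve_chiral_disjointness} (applied in each slot of the conterminous pair, using symmetry of $\disjoint_{\chi^+}$ to swap sides) concludes $g \circ f_1 \disjoint_{\chi^+} g \circ f_2$, showing $g$ is $\overlap$-monic in $\lambda^+\mathsf{CSpTm}$.

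I do not anticipate any substantive obstacle: the categorical bookkeeping is formally identical to the causal proof, and the only additional prerequisite is that globally hyperbolic 2D oriented spacetimes occupy the top of the chiral hierarchy, which was already established when the hierarchy was set up.
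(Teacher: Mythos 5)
Your proof is correct and is essentially the paper's own argument: the paper simply declares the proof to be verbatim that of \cref{prop:spacetime_inclusions_preserve_overlap-monics} with $\kappa\mathsf{SpTm}$, $\disjoint_J$ replaced by $\kappa^+\mathsf{CSpTm}$, $\disjoint_{\chi^+}$, relying as you do on global hyperbolicity being the top of both hierarchies and on \cref{cor:unions_preserve_chiral_disjointness}. Your explicit remark about applying the covering lemma in each slot via symmetry is a minor extra care the paper leaves implicit.
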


\begin{remark}
	Note that \cref{prop:spacetime_inclusions_preserve_overlap-monics,prop:chiral_spacetime_inclusions_preserve_overlap-monics} are proven using that any spacetime admits a covering by open globally hyperbolic neighbourhoods.
	It is relevant that global hyperbolicity is the strongest condition in both the causal and chiral hierarchies.
\end{remark}

\begin{theorem} \label{thm:overlap-monics-chiSimCSpTm}
	A morphism $f : M \to N$ in $(\chi^+\mathsf{SimCSpTm}, \disjoint_{\chi^+})$ is $\overlap$-monic if and only if the following equivalent conditions hold:
	\begin{enumerate}[label=\textup{(\roman*)}]
		\item \label{thm:overlap-monics-chiSimCSpTm:reflect_chi}
		$f$ reflects $\chi^+$,
		
		\item \label{thm:overlap-monics-chiSimCSpTm:inj_chi_convex}
		$f$ is injective and has $\chi^+$-convex image $f(M)$ in $N$.
	\end{enumerate}
\end{theorem}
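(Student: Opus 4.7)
The plan is to follow the same pattern as the causal analogue Theorem~\ref{thm:overlap-monics-GlobHypSpTm}, combining the general $\overlap$-monic characterisation in $\mathsf{CSpTm}$ (Theorem~\ref{thm:overlap-monics-CSpTm}) with the chiral-simplicity hypothesis to strip away the topological closure, and then translating between $\chi^+$-reflection and injectivity-with-$\chi^+$-convex-image.

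First, I would reduce to the ambient category. By Propositions~\ref{prop:functors-respect-overlap-monics} and~\ref{prop:chiral_spacetime_inclusions_preserve_overlap-monics}, the inclusion $\chi^+\mathsf{SimCSpTm} \hookrightarrow \mathsf{CSpTm}$ both preserves and reflects $\overlap$-monics, so $f$ is $\overlap$-monic in $\chi^+\mathsf{SimCSpTm}$ iff it is $\overlap$-monic in $\mathsf{CSpTm}$. By Theorem~\ref{thm:overlap-monics-CSpTm} this is in turn equivalent to $f$ reflecting $\overline{s\chi^+}$.

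Next I would use $\chi^+$-simplicity of $M$ and $N$ to eliminate the topological closure. Since $\chi^+_M$ and $\chi^+_N$ are topologically closed, Lemma~\ref{lem:sym_top_closures_commute} yields $\overline{s\chi^+_M} = s(\overline{\chi^+_M}) = s\chi^+_M$ and similarly for $N$. Hence reflecting $\overline{s\chi^+}$ is the same as reflecting $s\chi^+$. Now, because $N$ is $\chi^+$-causal, Corollary~\ref{cor:codomain_chi_caus_reflect_chi_iff_reflect_schi} gives that $f$ reflects $s\chi^+$ iff it reflects $\chi^+$. This establishes equivalence with condition~\labelcref{thm:overlap-monics-chiSimCSpTm:reflect_chi}.

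Finally, for the equivalence of \labelcref{thm:overlap-monics-chiSimCSpTm:reflect_chi} and \labelcref{thm:overlap-monics-chiSimCSpTm:inj_chi_convex}: the direction $\labelcref{thm:overlap-monics-chiSimCSpTm:reflect_chi} \Rightarrow \labelcref{thm:overlap-monics-chiSimCSpTm:inj_chi_convex}$ is exactly Proposition~\ref{prop:chi_causal_reflect_chi_if_chirally_convex_conf_inj}, which applies because both $M$ and $N$ are $\chi^+$-causal. The converse $\labelcref{thm:overlap-monics-chiSimCSpTm:inj_chi_convex} \Rightarrow \labelcref{thm:overlap-monics-chiSimCSpTm:reflect_chi}$ is Proposition~\ref{prop:chirally_convex_conf_inj_reflect_chi}, which needs no chiral hypothesis on $M$ or $N$ at all.

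There is no real obstacle here: all the substantive work has already been done in building the chiral hierarchy and its flow-theoretic tools in Section~\ref{sec:spacetimes-cats-chiral:chiral-properties}. The only point that deserves care is verifying that $\chi^+$-simplicity (antisymmetry plus topological closure of $\chi^+$) transfers cleanly to $s\chi^+$ through Lemma~\ref{lem:sym_top_closures_commute}, and confirming that both the domain and codomain hypotheses required by Proposition~\ref{prop:chi_causal_reflect_chi_if_chirally_convex_conf_inj} are met --- both of which are immediate once $\chi^+$-simplicity is unpacked.
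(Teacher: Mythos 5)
Your proposal is correct and follows essentially the same route as the paper's proof: reduce to $\mathsf{CSpTm}$ via \cref{prop:chiral_spacetime_inclusions_preserve_overlap-monics} and \cref{thm:overlap-monics-CSpTm}, strip the closure using $\chi^+$-simplicity and \cref{lem:sym_top_closures_commute}, pass from $s\chi^+$ to $\chi^+$ via \cref{cor:codomain_chi_caus_reflect_chi_iff_reflect_schi}, and finish with \cref{prop:chirally_convex_conf_inj_reflect_chi,prop:chi_causal_reflect_chi_if_chirally_convex_conf_inj}. Nothing is missing.
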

\begin{proof}
	The inclusion $\chi^+\mathsf{SimCSpTm} \hookrightarrow \mathsf{CSpTm}$ preserves and reflects $\overlap$-monics with respect to $\disjoint_{\chi^+}$; hence $f$ is $\overlap$-monic in $\chi^+\mathsf{SimCSpTm}$ if and only if it is $\overlap$-monic in $\mathsf{CSpTm}$, so if and only if it reflects $\overline{s\chi^+}$ by
	\cref{thm:overlap-monics-CSpTm}.
	All spacetimes in $\chi^+\mathsf{SimCSpTm}$ have $\overline{s\chi^+} = s(\overline{\chi^+}) = s\chi^+$, where the first equality uses \cref{lem:sym_top_closures_commute}.
	Thus $f$ is $\overlap$-monic if and only if it reflects $s\chi^+$.
	
	Since $M$ and $N$ are necessarily $\chi^+$-causal, this is equivalent to \labelcref{thm:overlap-monics-chiSimCSpTm:reflect_chi} by \cref{cor:codomain_chi_caus_reflect_chi_iff_reflect_schi}, which in turn is equivalent to \labelcref{thm:overlap-monics-chiSimCSpTm:inj_chi_convex} by \cref{prop:chirally_convex_conf_inj_reflect_chi,prop:chi_causal_reflect_chi_if_chirally_convex_conf_inj}.
\end{proof}

Restriction to $\chi^+$-initial or globally hyperbolic spacetimes is trivial, and the resulting characterisation parallels the causal case of \cref{thm:overlap-monics-GlobHypSpTm}:
\begin{corollary} \label{cor:overlap-monics-chiInitCSpTm}
	A morphism $f : M \to N$ in either $(\chi^+\mathsf{InitCSpTm}, \disjoint_{\chi^+})$ or $(\mathsf{GlobHypCSpTm}, \disjoint_{\chi^+})$ is $\overlap$-monic in that $\disjoint$-category if and only if the following equivalent conditions hold:
	\begin{enumerate}[label=\textup{(\roman*)}]
		\item
		$f$ reflects $\chi^+$,
		
		\item
		$f$ is injective and has $\chi^+$-convex image $f(M)$ in $N$.
	\end{enumerate}
\end{corollary}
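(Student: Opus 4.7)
The plan is to reduce this corollary directly to \cref{thm:overlap-monics-chiSimCSpTm} via the inclusion functors in the chiral hierarchy. Recall that we have the chain of full $\disjoint$-subcategory inclusions
\begin{equation*}
    \mathsf{GlobHypCSpTm} \hookrightarrow \chi^+\mathsf{InitCSpTm} \hookrightarrow \chi^+\mathsf{SimCSpTm},
\end{equation*}
because every globally hyperbolic 2D oriented spacetime is $\chi^+$-initial (the smooth spacelike Cauchy surface is a $\chi^+$-Cauchy surface), and every $\chi^+$-initial spacetime is $\chi^+$-simple (as established after \cref{lem:chiral_flow_proper_implies_relation_closed}).

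First, I would note that each of these inclusions, being a full $\disjoint$-subcategory inclusion, both preserves $\disjoint$-relations (it is $\disjoint$-preserving and $\disjoint$-reflecting by definition of the pullback $\disjoint$-relation) and therefore reflects $\overlap$-monics by \cref{prop:functors-respect-overlap-monics}. By \cref{prop:chiral_spacetime_inclusions_preserve_overlap-monics} the inclusions also preserve $\overlap$-monics. Hence a morphism $f : M \to N$ of either $\chi^+\mathsf{InitCSpTm}$ or $\mathsf{GlobHypCSpTm}$ is $\overlap$-monic in its ambient $\disjoint$-category if and only if its image under the inclusion into $\chi^+\mathsf{SimCSpTm}$ is $\overlap$-monic there.

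Second, I would apply \cref{thm:overlap-monics-chiSimCSpTm} to conclude that this is in turn equivalent to $f$ reflecting $\chi^+$, and equivalent to $f$ being injective with $\chi^+$-convex image in $N$. Since the inclusions are full, these conditions are intrinsic to the morphism $f$ and do not depend on which of the three ambient categories we work in.

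I do not anticipate any significant obstacle: the work is all contained in the preceding theorem and in \cref{prop:chiral_spacetime_inclusions_preserve_overlap-monics,prop:functors-respect-overlap-monics}. The only mild point to verify is that each category in the chain is indeed a full $\disjoint$-subcategory of the next; this is immediate from the definitions, since the hierarchy of chiral properties is defined by restricting objects while keeping all morphisms of $\mathsf{CSpTm}$ between them, and the $\disjoint$-relation $\disjoint_{\chi^+}$ is pulled back along the inclusions by convention.
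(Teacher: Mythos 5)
Your proposal is correct and matches the paper's intended argument: the paper simply declares the restriction ``trivial,'' relying implicitly on exactly the machinery you spell out — the chain $\mathsf{GlobHypCSpTm} \hookrightarrow \chi^+\mathsf{InitCSpTm} \hookrightarrow \chi^+\mathsf{SimCSpTm}$ of full $\disjoint$-subcategory inclusions, which preserve and reflect $\overlap$-monics by \cref{prop:chiral_spacetime_inclusions_preserve_overlap-monics,prop:functors-respect-overlap-monics}, combined with \cref{thm:overlap-monics-chiSimCSpTm}. Your explicit justification that globally hyperbolic implies $\chi^+$-initial implies $\chi^+$-simple is exactly what the paper establishes just before the corollary, so there is no gap.
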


In analogy with the nomenclature for the standard category $\mathsf{Loc}$ of spacetimes for relativistic AQFTs:
\begin{definition} \label{def:chi-loc}
	Denote by $\chi \mathsf{Loc}$ the wide $\disjoint$-subcategory in $\chi^+\mathsf{InitCSpTm}$ of $\overlap$-monics with respect to $\disjoint_{\chi^+}$:
	\begin{equation*}
		\chi \mathsf{Loc} := \OverlapMonics(\chi^+\mathsf{InitCSpTm}, \disjoint_{\chi^+}).
	\end{equation*}
\end{definition}
Then $\chi \mathsf{Loc}$ has as objects all two-dimensional oriented spacetimes $M$ containing $\chi^+$-Cauchy surfaces, and as morphisms all orientation- and time-orientation-preserving injective conformal maps $f : M \to N$ with $\chi^+$-convex image $f(M)$ in $N$.
It is equipped with $\disjoint$-relation $\disjoint_{\chi^+}$, which is in particular an orthogonality relation in the sense of \cite{BeniniSchenkelWoike2021}.

As per our proposal described in \cref{sec:introduction}, we therefore expect that the right-moving half of a chiral CFT may be formulated as a functor
\begin{equation*}
	\mathcal{A}:\chi\mathsf{Loc} \to \mathsf{Obs},
\end{equation*}
satisfying the causality axiom with respect to $\disjoint_{\chi^+}$.
It remains to describe a time slice axiom on such functors.

\begin{definition}
	For 2D oriented spacetimes $M$ and $N$, an orientation- and time-orientation-preserving conformal map $f : M \to N$ is called a \emph{$\chi^+$-Cauchy map} if its image $f(M)$ contains a $\chi^+$-Cauchy surface of $N$.
\end{definition}
Evidently, the codomain of any $\chi^+$-Cauchy map is necessarily $\chi^+$-initial.

A functor $\mathcal{A} : \chi\mathsf{Loc} \to \mathsf{Obs}$ obeys the \emph{chiral time-slice axiom}\footnote{Since $\chi^+$-Cauchy surfaces need not be achronal, `time-slice' axiom is a misnomer; we use it for compatibility with standard terminology.} if $\mathcal{A}f$ is an isomorphism in $\mathsf{Obs}$ for every morphism $f$ in $\chi\mathsf{Loc}$ which is a $\chi^+$-Cauchy map.
Equivalently, in the terminology of \cite{BeniniSchenkelWoike2021}, $\mathcal{A} : \chi\mathsf{Loc} \to \mathsf{Obs}$ obeys the chiral time-slice axiom if it is $W$-constant, for $W$ the collection of morphisms in $\chi\mathsf{Loc}$ which are $\chi^+$-Cauchy maps.

\begin{remark}
	Following from \cref{rm:chiral_hierarchy_left-right}, there is an isomorphism of orthogonal categories $\OverlapMonics(\chi^-\mathsf{InitCSpTm}, \disjoint_{\chi^-}) \to \OverlapMonics(\chi^+\mathsf{InitCSpTm}, \disjoint_{\chi^+})$ which reverses orientations on spacetimes.
	Given this isomorphism, we do not retain any notational distinction between the category $\chi\mathsf{Loc} = \OverlapMonics(\chi^+\mathsf{InitCSpTm}, \disjoint_{\chi^+})$ proposed for right-moving halves of chiral CFTs, and its counterpart for left-moving halves.
	
	It follows that the left-moving half of a chiral CFT is also a functor $\chi\mathsf{Loc} \to \mathsf{Obs}$ satisfying causality and time-slice axioms just as the right-moving half.
	Only upon combining the two halves into a full CFT does a distinction need to be made between left- and right-moving halves.
\end{remark}

\begin{remark}
	We define $\chi\mathsf{Loc}$ using $\chi^+$-initial spacetimes rather than globally hyperbolic spacetimes because, in their own right, (right-)chiral theories require well-posedness of (right-)chiral initial value problems and hence the existence of $\chi^+$-Cauchy surfaces rather than standard Cauchy surfaces.
	But for chiral theories arising as chiral halves of full CFTs, it may be more appropriate to restrict to globally hyperbolic spacetimes in the definition of $\chi\mathsf{Loc}$.
	For the discussion in the next section at least, this distinction is unimportant: the weaker $\chi^+$-initial property is enough.
\end{remark}

\begin{remark}
	Non-chiral 2D CFTs may be formulated as functors $\mathsf{CLoc}_{1+1}^\mathrm{o,to} \to \mathsf{Obs}$.
	The domain $\mathsf{CLoc}_{1+1}^\mathrm{o,to}$ consists of globally hyperbolic 2D oriented spacetimes and orientation- and time-orientation-preserving conformal maps which are injective with causally convex image \cite{Pinamonti2009}.
	It can be equipped with a causal $\disjoint$-relation $\disjoint_J$ defined similarly to \cref{def:causal_discon}.
	Compare to $\mathsf{CLoc}_{d+1}$ discussed in \cref{rm:CLoc}: in $\mathsf{CLoc}_{1+1}^\mathrm{o,to}$ we have restricted to oriented spacetimes and orientation- and time-orientation preserving maps, but the settings are otherwise similar.

	There is an evident inclusion functor
	\begin{equation*}
		\iota : (\mathsf{CLoc}_{1+1}^\mathrm{o,to}, \disjoint_J) \hookrightarrow (\chi\mathsf{Loc}, \disjoint_{\chi^+}),
	\end{equation*}
	using that any globally hyperbolic 2D oriented spacetime is necessarily $\chi^+$-initial, and any causally convex subset is necessarily $\chi^+$-convex.
	This functor $\iota$ preserves $\disjoint$-relations: if there is no causal curve in codomain $N$ connecting images $f_i(M_i)$ of conterminous pair $f_1 : M_1 \rightarrow N \leftarrow M_2 : f_2$ in $\mathsf{CLoc}_{1+1}^\mathrm{o,to}$, then necessarily also there is no right-chiral curve in $N$ connecting them.
	
	Because any (smooth, spacelike) Cauchy surface in a 2D oriented spacetime is also a $\chi^+$-Cauchy surface, it follows that Cauchy maps between 2D oriented spacetimes are also $\chi^+$-Cauchy maps.
	Consequently, the inclusion $\iota$ sends Cauchy maps in $\mathsf{CLoc}_{1+1}^\mathrm{o,to}$ to $\chi^+$-Cauchy maps in $\chi\mathsf{Loc}$.

	As a result, any chiral CFT formulated as a functor $\mathcal{A} : \chi\mathsf{Loc} \to \mathsf{Obs}$ may be pre-composed with $\iota$ to describe a general CFT $\mathcal{A}\circ \iota : \mathsf{CLoc}_{1+1}^\mathrm{o,to} \to \mathsf{Obs}$.
	If $\mathcal{A}$ satisfies the causality axiom with respect to $\disjoint_{\chi^+}$ then $\mathcal{A} \circ \iota$ satisfies the causality axiom with respect to $\disjoint_J$ because $\iota$ preserves $\disjoint$-relations.
	If $\mathcal{A}$ satisfies the chiral time-slice axiom (with respect to $\chi^+$-Cauchy maps) then $\mathcal{A} \circ \iota$ satisfies the time-slice axiom (with respect to Cauchy maps) because $\iota$ sends Cauchy maps to $\chi^+$-Cauchy maps.
\end{remark}

\subsection{Comparison with established approaches to chiral CFT}
\label{sec:spacetimes-cats-chiral:comparison}

Our proposed formulation of chiral CFTs as functors $\chi\mathsf{Loc} \to \mathsf{Obs}$ differs from established AQFT formulations of chiral CFTs.
Commonly, in the older net-based formulations of AQFT, a chiral CFT is defined on a net of open subsets of the circle $\mathbb{S}^1$.
An analogous categorical formulation, for instance as in \cite{BeniniGiorgettiSchenkel2021}, uses a category $\mathsf{Emb}_1$ of one-dimensional oriented manifolds and smooth orientation-preserving embeddings.
(The net of open subsets of $\mathbb{S}^1$ may be recovered from the categorical formulation as the skeleton of the slice category $\mathsf{Emb}_1 / \mathbb{S}^1$.)

The key difference between these established formulations and our proposal is that the established formulations use one-dimensional manifolds as representatives of spacetimes.
This builds a time-slice axiom into the formulation preemptively; a functor $\mathsf{Emb}_1 \to \mathsf{Obs}$ may be understood as a chiral CFT without imposing that it obey any time-slice axiom (so long as it obeys an appropriate causality property).

We construct below a functor $\chi\mathsf{Loc} \to \mathsf{Emb}_1$ which sends any morphisms $f$ in $\chi\mathsf{Loc}$ which are $\chi^+$-Cauchy maps to isomorphisms in $\mathsf{Emb}_1$.
It then follows that any functor $\mathsf{Emb}_1 \to \mathsf{Obs}$ representing a chiral CFT in the established formulation may be pre-composed with the functor $\chi\mathsf{Loc} \to \mathsf{Emb}_1$ to give a chiral CFT obeying the chiral time-slice axiom in our proposed formulation.

To begin, let us consider in more detail the established formulation.
Denote by $\mathsf{Man}_1$ the category whose objects are smooth, oriented, one-dimensional manifolds, and whose morphisms are orientation-preserving local diffeomorphisms.
Since this is a concrete category, we may equip it with the $\disjoint$-relation $\disjoint_\mathrm{set}$ of setwise-disjointness, as per \cref{ex:setwise_disjointness:concrete}.

\begin{proposition} \label{prop:overlap-monics-Man}
	A morphism $f : X \to Y$ of $(\mathsf{Man}_1, \disjoint_\mathrm{set})$ is $\overlap$-monic if and only if it is injective.
\end{proposition}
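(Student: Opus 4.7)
The forward ($\Leftarrow$) direction will be a direct appeal to the theory already developed. Since $\mathsf{Man}_1$ is concrete with forgetful functor $U : \mathsf{Man}_1 \to \mathsf{Set}$, and $\disjoint_\mathrm{set}$ on $\mathsf{Man}_1$ is by definition the pullback of $\disjoint_\mathrm{set}$ on $\mathsf{Set}$ along $U$, \cref{ex:disjoint_subcat:overlap-monics} gives that any morphism $f$ of $\mathsf{Man}_1$ whose underlying function $Uf$ is $\overlap$-monic in $(\mathsf{Set}, \disjoint_\mathrm{set})$ is itself $\overlap$-monic in $(\mathsf{Man}_1, \disjoint_\mathrm{set})$. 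By \cref{ex:setwise_disjointness:overlap_monics}, $Uf$ is $\overlap$-monic in $\mathsf{Set}$ precisely when it is injective, giving this direction.

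For the reverse ($\Rightarrow$) direction, I will argue by contraposition: assume $f : X \to Y$ is not injective and construct a conterminous pair in $\mathsf{Man}_1$ exhibiting that $f$ fails to be $\overlap$-monic. Choose distinct $x_1, x_2 \in X$ with $f(x_1) = f(x_2)$. The key obstruction compared to the $\mathsf{Set}$ argument of \cref{ex:setwise_disjointness:overlap_monics} is that the constant maps used there have domain a singleton, which is not a 1-manifold, and constant maps are not local diffeomorphisms between 1-manifolds; hence the counterexample morphisms must have positive-dimensional domains. This is easily handled: by Hausdorffness of $X$, choose disjoint open neighbourhoods $U_i \subseteq X$ of $x_i$. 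Each $U_i$ is a smooth oriented 1-manifold with the induced orientation, and the inclusion $\iota_i : U_i \hookrightarrow X$ is an orientation-preserving local diffeomorphism, hence a morphism of $\mathsf{Man}_1$.

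Then $\iota_1(U_1) \cap \iota_2(U_2) = U_1 \cap U_2 = \varnothing$, so $\iota_1 \disjoint_\mathrm{set} \iota_2$, while $f \circ \iota_1(U_1) \cap f \circ \iota_2(U_2)$ contains the common image $f(x_1) = f(x_2)$ and is therefore non-empty, so $f \circ \iota_1 \overlap_\mathrm{set} f \circ \iota_2$. This contradicts \cref{def:overlap-monic} applied to $f$, completing the proof. The argument contains no genuine difficulty; the only subtlety is merely recognising that open neighbourhoods provide 1-manifold replacements for the singleton-domain morphisms used in the corresponding argument for $\mathsf{Set}$.
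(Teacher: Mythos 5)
Your proof is correct, but it takes a different route from the paper's. The paper does not argue the two directions separately: it transports the closure-based characterisation of $\overlap$-monics (the analogue of \cref{thm:overlap-monics-SpTm}, with the diagonal relation $\Delta$ in place of $J$) to $\mathsf{Man}_1$, so that $f$ is $\overlap$-monic if and only if it reflects $\overline{s\Delta}$, and then collapses $\overline{s\Delta} = \Delta$ using symmetry of $\Delta$ and Hausdorffness; injectivity is exactly reflection of $\Delta$. Indeed, the paper follows the proposition with a remark stressing that the singleton-domain argument of \cref{ex:setwise_disjointness:overlap_monics} is unavailable in $\mathsf{Man}_1$ and that one must therefore pass through the topological closure. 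You sidestep that machinery: for ($\Leftarrow$) you correctly invoke \cref{ex:disjoint_subcat:overlap-monics} (concreteness plus the pullback description of $\disjoint_\mathrm{set}$), and for ($\Rightarrow$) you replace the missing pointlike generalised elements by inclusions of disjoint open neighbourhoods $U_i \ni x_i$, which are legitimate objects and morphisms of $\mathsf{Man}_1$, and observe directly that $\iota_1 \disjoint_\mathrm{set} \iota_2$ while $f\circ\iota_1 \overlap_\mathrm{set} f\circ\iota_2$ because $f(x_1) = f(x_2)$ lies in both images. No closure argument is needed in your version, because the witnessing pair lies exactly on the diagonal rather than only in its closure; the Hausdorffness that the paper uses to close $\Delta$ appears in your argument as the separation of $x_1$ and $x_2$ by disjoint neighbourhoods. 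What the paper's route buys is uniformity with the spacetime case and an explanation of why the closure characterisation is the right general tool; what yours buys is a shorter, more elementary and self-contained argument for this particular category.
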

\begin{proof}
	From \cref{ex:setwise_disjointness:concrete,ex:setwise_disjointness:diagonal}, $\disjoint_\mathrm{set}$ on $\mathsf{Man}_1$ is the pullback of $\disjoint_\mathrm{bin}$ along
	\begin{equation*}
		\mathsf{Man}_1 \xrightarrow{U} \mathsf{Set} \xrightarrow{\Delta} \mathsf{sBin},
	\end{equation*}
	where $U$ is the forgetful functor to $\mathsf{Set}$ and $\Delta$ equips sets with the diagonal relation.
	So morphisms $g_i : W_i \to X$ in $\mathsf{Man}_1$ have $g_1 \disjoint_\mathrm{set} g_2$ if and only if $g_1(W_1) \times g_2(W_2)$ does not intersect the diagonal relation $\Delta_X := \Set{(x,x) \in X \times X}$ on $X$.
	
	Since all morphisms in $\mathsf{Man}_1$ are local diffeomorphisms and hence open maps, $f : X \to Y$ is $\overlap$-monic with respect to $\disjoint_\mathrm{set}$ if and only if $f$ reflects $\overline{s\Delta}$;
	the proof is verbatim the same as \cref{thm:overlap-monics-SpTm}, with $\mathsf{SpTm}_{d+1}$, spacetimes and relations $J$ replaced by everywhere by $\mathsf{Man}_1$, oriented 1-manifolds and relations $\Delta$.
	
	But diagonal relations $\Delta$ here are symmetric by definition and topologically closed since manifolds are Hausdorff, so $\overline{s\Delta} = \Delta$.
	Hence $f$ is $\overlap$-monic with respect to $\disjoint_\mathrm{set}$ if and only if $f$ reflects $\Delta$, i.e. is injective.
\end{proof}

\begin{remark}
	Compare $(\mathsf{Man}_1, \disjoint_\mathrm{set})$ to $(\mathsf{Set}, \disjoint_\mathrm{set})$: in both cases, morphisms are $\overlap$-monic if and only if they are injective.
	However, the methods of proof of this fact differ.
	For $(\mathsf{Set}, \disjoint_\mathrm{set})$, the argument in \cref{ex:setwise_disjointness:overlap_monics} uses the pointlike generalised elements $\{*\} \to X$ of any object $X$.
	Since $(\mathsf{Man}_1, \disjoint_\mathrm{set})$ lacks such pointlike generalised elements, $\overlap$-monics must be characterised via the topological closures $\overline{s\Delta}$ of the underlying binary relations.
\end{remark}

Then the orthogonal category $\mathsf{Emb}_1$ used in the established formulation of chiral CFTs is the $\disjoint$-subcategory of $\overlap$-monics, $\mathsf{Emb}_1 := \OverlapMonics(\mathsf{Man}_1, \disjoint_\mathrm{set})$.
Since its morphisms are injective local diffeomorphisms, they are smooth embeddings.

On any two-dimensional oriented spacetime $M$, we have $s\chi^+$ an equivalence relation by \cref{cor:sym_chi_equivalence_rel}.
We claim that the assignment of the quotient $M / s\chi^+$ to spacetime $M$ gives a functor from $\chi^+\mathsf{InitCSpTm}$ to $\mathsf{Man}_1$:

\begin{theorem}
	There is a functor $Q : \chi^+\mathsf{InitCSpTm} \to \mathsf{Man}_1$ acting on objects $M$ by $QM = M / s\chi^+$, and on morphisms $f : M \to N$ by
	\begin{align}
		Qf : \faktor{M}{s\chi^+}  &\to \faktor{N}{s\chi^+}, \label{eqn:Q:morphisms}\\
			[p] & \mapsto [f(p)]. \nonumber
	\end{align}
% 	$Qf : M / s\chi^+ \to N / s\chi^+$ defined as $(Qf)[p] = [f(p)]$ for $p \in M$.
\end{theorem}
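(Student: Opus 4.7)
The plan is to build $Q$ in stages: first verify that the object assignment lands in $\mathsf{Man}_1$ with a canonical orientation, then check that $Qf$ is well-defined on equivalence classes, smooth, a local diffeomorphism, and orientation-preserving; functoriality is immediate from the set-theoretic definition.

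For the object assignment, \cref{thm:chiral_initial_characterisations} already gives that $QM = M / s\chi^+$ is a smooth $1$-manifold with $\pi_M$ a smooth submersion. To orient $QM$, I would choose any left-chiral vector field $n_-$ on $M$ and declare the positive direction at $[p] \in QM$ to be that of $d\pi_M(n_-|_p)$. This is nonzero because $n_+|_p$ spans the kernel of $d\pi_M|_p$ (the fibres of $\pi_M$ are orbits of any right-chiral flow by \cref{cor:sym_chi_equivalence_rel}), and $n_-$ is linearly independent from $n_+$ pointwise. Independence of the choice of $n_-$ is immediate because any two left-chiral vector fields are pointwise positive scalar multiples of one another (the future-directed left-pointing null ray at each point is unique up to positive scaling). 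The real technical point is independence of the representative $p$ within an orbit: for $p' = \theta^+_\tau(p)$ along the flow $\theta^+$ of any right-chiral vector field $n_+$, the identity $\pi_M = \pi_M \circ \theta^+_\tau$ gives $d\pi_M|_p(n_-|_p) = d\pi_M|_{p'}(d\theta^+_\tau(n_-|_p))$, so I need $d\theta^+_\tau(n_-|_p)$ to have positive $n_-|_{p'}$-component in the chiral frame at $p'$. Since $d\theta^+_\tau$ preserves $n_+$ (which generates $\theta^+$), its matrix relative to the chiral frames at $p$ and $p'$ is lower-triangular of the form $\bigl(\begin{smallmatrix} a & 0 \\ b & 1 \end{smallmatrix}\bigr)$; its determinant $a$ is positive because $\theta^+_\tau$ preserves the orientation on $M$ and $(n_-, n_+)$ is a positively oriented frame (by \cref{prop:chiral_frame_orientation}).

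For morphisms, well-definedness of $Qf$ on equivalence classes is immediate because any conformal, orientation- and time-orientation-preserving $f$ sends right-chiral curves to right-chiral curves and hence preserves $s\chi^+$. Smoothness of $Qf$ then follows from $\pi_N \circ f = Qf \circ \pi_M$ being smooth together with $\pi_M$ a smooth submersion. Both the local-diffeomorphism and orientation-preserving properties of $Qf$ come directly from \cref{prop:chiral_frame_characterisation_conf_map}: differentiating $Qf \circ \pi_M = \pi_N \circ f$ and using $df(n_-^M|_p) = e^{F_-(p)} n_-^N|_{f(p)}$ yields
\begin{equation*}
    d(Qf) \bigl( d\pi_M(n_-^M|_p) \bigr) = e^{F_-(p)} \, d\pi_N \bigl( n_-^N|_{f(p)} \bigr),
\end{equation*}
which is nonzero (so $Qf$ is a local diffeomorphism between $1$-manifolds) and a positive multiple of the positive direction at $[f(p)]$ (so $Qf$ preserves orientation). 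Functoriality $Q(g \circ f)[p] = [g(f(p))] = Qg(Qf[p])$ and $Q(\mathrm{id})[p] = [p]$ is purely set-theoretic.

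The main obstacle is the orientation step for objects: showing that $d\pi_M(n_-|_p)$ determines a direction depending only on $[p]$ is the only place where the chiral-frame machinery of \cref{sec:spacetimes-cats-chiral:frames} is really used essentially, since every other step follows fairly directly from results already established there.
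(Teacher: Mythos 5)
Your proposal is correct, and its overall skeleton (smooth structure on $M/s\chi^+$ from \cref{thm:chiral_initial_characterisations}, well-definedness and smoothness of $Qf$ from the commuting square with the submersion $\pi_M$, local diffeomorphism of $Qf$ via $df(n_-^M) = e^{F_-}\, n_-^N\circ f$ from \cref{prop:chiral_frame_characterisation_conf_map}, functoriality trivial) matches the paper's proof; where you genuinely diverge is the treatment of orientations. The paper orients $M/s\chi^+$ by a nowhere-vanishing $1$-form $\omega$ whose pullback $\pi_M^*\omega$ is right-chiral: it expands an arbitrary $1$-form in the chiral coframe, uses $d\pi_M(n_+)=0$ to kill the $\eta^-$-component, and lets the sign of the remaining coefficient descend to the quotient; orientation-preservation of $Qf$ is then a one-line pullback computation. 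You instead orient by the direction of $d\pi_M(n_-|_p)$, which forces you to prove representative-independence along each fibre via the flow: this needs the standard facts that $n_+$ is invariant under its own flow and that each $\theta^+_\tau$ preserves the orientation of $M$ (continuity of the frame-determinant in $\tau$ from $1$ at $\tau=0$), giving the lower-triangular Jacobian with positive determinant $a$ — your argument here is sound. Two small points you should still supply: (i) a pointwise choice of direction is not yet an orientation; you need it to vary smoothly, which follows immediately by pulling your direction field back along a local (or, by \cref{cor:chiral_initial_has_section}, global) smooth section of $\pi_M$ to get a smooth nonvanishing representative vector field; (ii) your convention is actually the reverse of the paper's — since $\eta^+(n_-)=-1$, the paper's $\omega$ is negative on $d\pi_M(n_-)$ — which is harmless for the theorem as stated and for the downstream results (none depend on which of the two consistent orientations is chosen), but is worth being aware of. What each route buys: the form-based construction avoids any flow argument because right-chiral $1$-forms automatically annihilate the fibre direction and it dovetails with the pullback check that $Qf$ preserves orientation, while your vector-based construction is more geometrically immediate and keeps the whole proof at the level of the chiral frame, at the cost of the two auxiliary flow facts and the smoothness check above.
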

\begin{proof}
	We show that the spaces and maps so prescribed are well-defined objects and morphisms of $\mathsf{Man}_1$; it is trivial after this to see that $Q$ is functorial.
	
	\Cref{thm:chiral_initial_characterisations} shows that the quotient space $M / s\chi^+$ of $\chi^+$-initial $M$ is a smooth 1-manifold such that the quotient map $\pi_M : M \to M/s\chi^+$ is a smooth submersion.
	To define an orientation on $M / s\chi^+$,
	observe that there exists on $M / s\chi^+$ a nowhere-vanishing 1-form $\omega$ such that $\pi_M^* \omega$ is a right-chiral 1-form on $M$.
	Choosing a chiral frame $(n_-,n_+)$ on $M$ with dual coframe $(-\eta^+, -\eta^-)$, this means there is some $\kappa \in C^\infty(M)$ such that $\pi_M^* \omega = e^\kappa \cdot \eta^+$.
	
	Specifically, since $M / s\chi^+$ is a smooth 1-manifold, it is orientable; pick an arbitrary nowhere-vanishing 1-form $\widetilde{\omega}$ on it.
	By expanding in coframe $(-\eta^+, -\eta^-)$, we have
	\begin{equation*}
		\pi_M^* \widetilde{\omega} = \lambda_- \eta^- + \lambda_+ \eta^+,
	\end{equation*}
	for some coefficient functions $\lambda_\pm \in C^\infty(M)$.
	Because $n_+$ is tangent to the fibres of $\pi_M$ (\cref{cor:sym_chi_equivalence_rel}),
% 	Because $n_+$ is vertical with respect to $\pi_M$,
	so $d\pi_M|_p (n_+) = 0$ at all $p \in M$, it follows that
	\begin{align*}
		-\lambda_-(p) = \lambda_-(p) \eta^-_p(n_+) + \lambda_+(p) \eta^+_p(n_+)
		= (\pi_M^* \widetilde{\omega})_p (n_+)
		= \widetilde{\omega}_{[p]} ( d\pi_M |_p (n_+))
		= 0,
	\end{align*}
	for all $p \in M$; hence $\pi_M^* \widetilde{\omega} = \lambda_+ \eta^+$.
	
	Since $\pi_M$ is a submersion and $\widetilde{\omega}$ is nowhere-vanishing, it follows that $\lambda_+$ is also nowhere-vanishing.
	So on each path-component of $M$, $\lambda_+$ has fixed sign, i.e. $\operatorname{sgn} \lambda_+ : M \to \{-1,1\} \subseteq \mathbb{R}$ is locally constant.
	Since the orbits of $\theta^+$ are path-connected, this means $\operatorname{sgn} \lambda_+$ is constant on those orbits and so descends to the quotient, i.e. there is $\rho : M / s\chi^+ \to \mathbb{R}$ such that the following diagram is commutative:
	\begin{equation*}
	% https://q.uiver.app/?q=WzAsMyxbMCwwLCJNIl0sWzIsMCwiXFxtYXRoYmJ7Un0iXSxbMSwxLCJNIC8gc1xcY2hpXisiXSxbMCwxLCJcXG9wZXJhdG9ybmFtZXtzZ259XFxsYW1iZGFfKyJdLFswLDIsIlxccGlfTSIsMl0sWzIsMSwiXFxyaG8iLDJdXQ==
	\begin{tikzcd}
		M && {\mathbb{R}} \\
		& {M / s\chi^+}
		\arrow["{\operatorname{sgn}\lambda_+}", from=1-1, to=1-3]
		\arrow["{\pi_M}"', from=1-1, to=2-2]
		\arrow["\rho"', from=2-2, to=1-3]
	\end{tikzcd}
	\end{equation*}
	In particular, $\rho$ is also locally constant; for if $\operatorname{sgn} \lambda_+$ is constant on a neighbourhood $U$ of $p$, then $\rho$ is constant on $\pi_M(U)$ which is a neighbourhood of $[p] = \pi_M(p)$ since the submersion $\pi_M$ is an open map.
	
	With this, set $\omega := \rho \cdot \widetilde{\omega}$.
	Then $\pi_M^* \omega = (\rho \circ \pi_M) \cdot \pi_M^* \widetilde{\omega} = (\operatorname{sgn} \lambda_+) \cdot \lambda_+ \eta^-$,
	and the coefficient function $(\operatorname{sgn} \lambda_+) \cdot \lambda_+ =: e^\kappa$ is everywhere-strictly-positive by construction.

	Equip 1-manifold $M / s\chi^+$ with the orientation represented by this 1-form $\omega$.
	This orientation is well-defined, i.e. any nowhere-vanishing 1-forms $\omega$ and $\omega'$ on $M/s\chi^+$ which both have $\pi_M^* \omega$ and $\pi_M^* \omega'$ right-chiral on $M$ define the same orientation.
	For, say $\omega$ and $\omega'$ represent distinct orientations of $M / s\chi^+$, so that there exists some path-component $X$ of $M / s\chi^+$ on which the restrictions have $\omega'|_X = - e^\mu \omega|_X$ for some $\mu \in C^\infty(X)$.
	Then $\pi_M^* (\omega'|_X) = - e^{\mu \circ \pi_M} \pi_M^* (\omega|_X)$, so that $\pi_M^* \omega'|_{\pi_M^{-1}(X)}$ and $\pi_M^* \omega|_{\pi_M^{-1}(X)}$ differ by a sign; in particular, they cannot both be right-chiral.
	
	This defines smooth, oriented 1-manifold $QM = M / s\chi^+$ an object of $\mathsf{Man}_1$ for any object $M$ of $\chi^+\mathsf{InitCSpTm}$.
	Next we show that, for $f : M \to N$ a smooth conformal map which preserves orientation and time-orientation, the induced map \cref{eqn:Q:morphisms} on quotients is a well-defined local diffeomorphism which preserves orientation.
	
	Well-definedness of \cref{eqn:Q:morphisms} follows because morphisms $f : M \to N$ of $\chi^+\mathsf{InitCSpTm}$ preserve the relation $\chi^+$.
	By definition of $Qf$, the following diagram is commutative:
	\begin{equation*}
		% https://q.uiver.app/?q=WzAsNCxbMCwwLCJNIl0sWzEsMCwiTiJdLFsxLDEsIk4gXFxiaWcvIHMoXFxjaGleKykiXSxbMCwxLCJNIFxcYmlnLyBzKFxcY2hpXispIl0sWzAsMSwiZiJdLFsxLDIsIlxccGlfTiJdLFswLDMsIlxccGlfTSIsMl0sWzMsMiwiUWYiLDJdXQ==
	\begin{tikzcd}
		M & N \\
		{M / s\chi^+} & {N / s\chi^+}
		\arrow["f", from=1-1, to=1-2]
		\arrow["{\pi_N}", from=1-2, to=2-2]
		\arrow["{\pi_M}"', from=1-1, to=2-1]
		\arrow["Qf"', from=2-1, to=2-2]
	\end{tikzcd}
	\end{equation*}
	Then $Qf$ is smooth since $ Qf \circ \pi_M = \pi_N \circ f$ is smooth and $\pi_M$ is a smooth submersion \cite[Theorem 4.29]{Lee2013}.
	
	Take chiral frames $(n_-^M, n_+^M)$ on $M$ and $(n_-^N, n_+^N)$ on $N$, with dual coframes $(-\eta^+_M , -\eta^-_M)$ and $(-\eta^+_N, \eta^-_N)$ respectively.
	Since $f$ is conformal, orientation- and time-orientation-preserving, there exist $F_\pm \in C^\infty(M)$ such that $df \circ n_\pm^M = e^{F_\pm} \cdot n_\pm^N \circ f$ or equivalently $f^* \eta^\pm_N = e^{F_\mp} \eta^\pm_M$ by \cref{prop:chiral_frame_characterisation_conf_map}.

	To show that $Qf$ is a local diffeomorphism, it suffices to show that its tangent map $d(Qf)_{[p]}$ is not the zero map for any $[p]\in M / s\chi^+$ since both the domain and codomain of $Qf$ are one-dimensional.
	For $q \in N$, the left-chiral vector $n_-^N(q)$ on $N$ is not in $\ker d\pi_N|_q = \operatorname{span}\{n_+^N(q)\}$.
	Then for any $p \in M$,
	\begin{equation*}
		d(Qf)_{[p]} \circ d\pi_M |_p (n_-^M|_p)
		= d \pi_N |_{f(p)} \circ df_p(n_-^M|_p)
		= e^{F_-(p)} \cdot d \pi_N|_{f(p)} (n_-^N|_{f(p)})
		\neq 0,
	\end{equation*}
	so $d(Qf)_{[p]}$ is not the zero map.
	
	Denote by $\omega_N$ a nowhere-vanishing 1-form representing the orientation on $QN = N / s\chi^+$.
	This means $\pi_N^* \omega_N$ is right-chiral and so there exists $\kappa \in C^\infty(N)$ such that $\pi_N^* \omega_N = e^\kappa \eta^+_N$.
	Then
	\begin{align*}
		\pi_M^* (Qf^*\omega_N)
		&=
		(Qf \circ \pi_M)^* \omega_N
		=
		(\pi_N \circ f)^* \omega_N
		=
		f^* (e^\kappa \eta^+_N)
		=
		e^{\kappa \circ f + F_-} \eta^+_M,
	\end{align*}
	which demonstrates that the nowhere-vanishing 1-form $Qf^* \omega_N$ on $M / s\chi^+$ has $\pi_M^* (Qf^*\omega_N)$ right-chiral on $M$.
	Hence $Qf^* \omega_N$ agrees with the orientation on $M / s\chi^+$, and consequently $Qf$ preserves orientation.
\end{proof}

Taking quotients $M /s\chi^+$ of spacetimes $M$ moreover relates the right-chiral disjointness relation on spacetimes to the setwise-disjointness relation on their quotients:

\begin{proposition} \label{prop:Q_reflects_preserves_disjointness_relations}
	The functor $Q : (\chi^+\mathsf{InitCSpTm}, \disjoint_{\chi^+}) \to (\mathsf{Man}_1, \disjoint_\mathrm{set})$ both preserves and reflects $\disjoint$-relations.
\end{proposition}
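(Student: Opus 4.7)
The plan is to unfold the definitions of $\disjoint_{\chi^+}$ and $\disjoint_\mathrm{set}$ and observe that the two conditions coincide essentially tautologically, once one exploits that $s\chi^+_N$ is the equivalence relation whose equivalence classes are precisely the fibres of the quotient map $\pi_N : N \to N/s\chi^+ = QN$ (by \cref{cor:sym_chi_equivalence_rel}).

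First I would record the commutative square $Qf \circ \pi_M = \pi_N \circ f$ established in the preceding theorem. Applied to the images, this gives $Qf_i(QM_i) = Qf_i(\pi_{M_i}(M_i)) = \pi_N(f_i(M_i))$ for $i = 1,2$. Hence
\begin{equation*}
	Qf_1(QM_1) \cap Qf_2(QM_2) = \pi_N\bigl(f_1(M_1)\bigr) \cap \pi_N\bigl(f_2(M_2)\bigr).
\end{equation*}
By definition of the quotient by an equivalence relation, this intersection is non-empty if and only if there exist $p_1 \in M_1$ and $p_2 \in M_2$ such that $\pi_N(f_1(p_1)) = \pi_N(f_2(p_2))$, which holds if and only if $(f_1(p_1), f_2(p_2)) \in s\chi^+_N$.

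The right-hand side is precisely the condition that $s\chi^+_N$ intersects $f_1(M_1) \times f_2(M_2)$, i.e. that $f_1 \overlap_{\chi^+} f_2$ by \cref{def:chiral_discon}. The left-hand side is precisely $Qf_1 \overlap_\mathrm{set} Qf_2$ by \cref{ex:setwise_disjointness:sets}. Taking contrapositives yields $f_1 \disjoint_{\chi^+} f_2$ if and only if $Qf_1 \disjoint_\mathrm{set} Qf_2$, which simultaneously establishes that $Q$ preserves and reflects $\disjoint$-relations.

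There is essentially no obstacle here: the only subtlety is making sure to use that $s\chi^+_N$ is an equivalence relation (hence fibres of $\pi_N$ coincide with equivalence classes) rather than $\chi^+_N$ alone, but this was exactly the content of \cref{cor:sym_chi_equivalence_rel} and it is also why $\disjoint_{\chi^+}$ was defined in terms of $s\chi^+$ rather than $\chi^+$ in \cref{def:chiral_discon}.
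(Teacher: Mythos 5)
Your proof is correct and follows essentially the same route as the paper: both unfold the definitions of $\disjoint_\mathrm{set}$ and $\disjoint_{\chi^+}$ and use that $Qf_i[p_i]=[f_i(p_i)]$, so that the images of the $Qf_i$ intersect in $QN$ exactly when some $(f_1(p_1),f_2(p_2))$ lies in $s\chi^+_N$. Your explicit use of the square $Qf\circ\pi_M=\pi_N\circ f$ and of \cref{cor:sym_chi_equivalence_rel} only makes explicit what the paper's shorter argument leaves implicit.
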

\begin{proof}
	Take any conterminous pair $M_1 \xrightarrow{f_1} N \xleftarrow{f_2} M_2$ in $\chi^+\mathsf{InitCSpTm}$.
	Then $Qf_1 \overlap_\mathrm{set} Qf_2$ holds if and only if there exist arbitrary representatives $p_i \in M_i$ of $[p_i] \in M_i/ s\chi^+$ with $Qf_1 [p_1] = Qf_2 [p_2]$, which holds if and only if there exist $p_i \in M_i$ with $(f_1(p_1), f_2(p_2)) \in s\chi^+_N$, i. e. $f_1 \overlap_{\chi^+} f_2$.
\end{proof}
Consequently by \cref{prop:functors-respect-overlap-monics}, $Q$ reflects $\overlap$-monics.
\todo[color=cyan]{Is $Q$ also full and essentially surjective on objects?}
Further, using the characterisations of $\overlap$-monics established above in its domain and codomain, $Q$ also preserves $\overlap$-monics:

\begin{proposition}
	The functor $Q : (\chi^+\mathsf{InitCSpTm}, \disjoint_{\chi^+}) \to (\mathsf{Man}_1, \disjoint_\mathrm{set})$ preserves $\overlap$-monics.
\end{proposition}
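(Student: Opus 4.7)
The plan is to combine the characterisations of $\overlap$-monics in both the domain and codomain of $Q$ already established in the excerpt, with the fact that morphisms of $\chi^+\mathsf{InitCSpTm}$ between $\chi^+$-causal spacetimes reflect $\chi^+$ if and only if they reflect $s\chi^+$. Specifically, by \cref{cor:overlap-monics-chiInitCSpTm} an $\overlap$-monic $f : M \to N$ in $\chi^+\mathsf{InitCSpTm}$ is exactly one that reflects $\chi^+$, while by \cref{prop:overlap-monics-Man} an $\overlap$-monic in $\mathsf{Man}_1$ is exactly an injective morphism. So the problem reduces to showing: if $f$ reflects $\chi^+$, then the induced map $Qf : M / s\chi^+ \to N / s\chi^+$ is injective.

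First I would observe that every object of $\chi^+\mathsf{InitCSpTm}$ is $\chi^+$-causal: as established in the excerpt, $\chi^+$-initial spacetimes are $\chi^+$-simple and hence $\chi^+$-causal (via \cref{lem:chiral_flow_free,lem:chiral_flow_proper_implies_relation_closed,thm:chiral_initial_characterisations}). In particular the codomain $N$ is $\chi^+$-causal, so \cref{cor:codomain_chi_caus_reflect_chi_iff_reflect_schi} applies: $f$ reflects $\chi^+$ if and only if $f$ reflects the symmetric closure $s\chi^+$.

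Next I would unpack the definition of $Qf$ to connect reflection of $s\chi^+$ with injectivity. Suppose $[p_1], [p_2] \in M / s\chi^+$ satisfy $Qf[p_1] = Qf[p_2]$. By the defining formula $Qf[p] = [f(p)]$, this means $[f(p_1)] = [f(p_2)]$ in $N / s\chi^+$, which is exactly the statement $(f(p_1), f(p_2)) \in s\chi^+_N$. Since $f$ reflects $s\chi^+$, it follows that $(p_1,p_2) \in s\chi^+_M$, so $[p_1] = [p_2]$. Hence $Qf$ is injective, and by \cref{prop:overlap-monics-Man} it is therefore $\overlap$-monic in $(\mathsf{Man}_1, \disjoint_\mathrm{set})$.

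There is no significant obstacle here; the statement is essentially a bookkeeping consequence of the characterisation theorems already proved. The only subtlety worth flagging is ensuring that the invocation of \cref{cor:codomain_chi_caus_reflect_chi_iff_reflect_schi} is legitimate, which requires knowing that $\chi^+$-initial implies $\chi^+$-causal — but this has already been established in the preceding subsection.
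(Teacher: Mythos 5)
Your proof is correct and follows essentially the same route as the paper's: characterise $\overlap$-monics in the domain via \cref{cor:overlap-monics-chiInitCSpTm}, in the codomain via \cref{prop:overlap-monics-Man}, and check injectivity of $Qf$ directly from reflection of $s\chi^+$. If anything, you are slightly more careful than the paper, which passes from ``$f$ reflects $\chi^+$'' to ``$f$ reflects $s\chi^+$'' without explicitly invoking \cref{cor:codomain_chi_caus_reflect_chi_iff_reflect_schi} and the fact that $\chi^+$-initial spacetimes are $\chi^+$-causal.
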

\begin{proof}
	Let morphism $f : M \to N$ of $(\chi^+\mathsf{InitCSpTm}, \disjoint_{\chi^+})$ be $\overlap$-monic, so $f$ reflects the chiral relation $\chi^+$ by \cref{cor:overlap-monics-chiInitCSpTm}.
	The induced map $Qf$ on quotients is injective: if $[p], [p'] \in M / s\chi^+$ have $Qf[p] = [f(p)] = [f(p')] = Qf[p']$, then $(f(p), f(p')) \in s\chi^+_N$ so $(p,p') \in s\chi^+_M$ since $f$ reflects $s\chi^+$; then $[p] = [p']$.
	Hence by \cref{prop:overlap-monics-Man}, $Qf$ is $\overlap$-monic in $(\mathsf{Man}_1, \disjoint_\mathrm{set})$.
\end{proof}

Then $Q$ restricts to a functor $\chi\mathsf{Loc} \to \mathsf{Emb}_1$ on the subcategories of $\overlap$-monics, which we denote by the same name.
The morphisms made invertible by this functor are exactly the $\chi^+$-Cauchy maps contained in $\chi\mathsf{Loc}$:

\begin{proposition} \label{prop:chi_cauchy_iso}
	A morphism $f : M \to N$ in $\chi\mathsf{Loc}$ is a $\chi^+$-Cauchy map if and only if $Qf : M/s\chi^+ \to N /s\chi^+$ is an isomorphism in $\mathsf{Emb}_1$.
\end{proposition}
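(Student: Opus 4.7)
The plan is first to recognise that $Qf$ is automatically a morphism in $\mathsf{Emb}_1$: it is a local diffeomorphism and orientation-preserving by the construction of $Q$, and it is injective because $f$ is $\overlap$-monic in $\chi\mathsf{Loc}$ and $Q$ preserves $\overlap$-monics (the preceding proposition, together with the characterisation of $\overlap$-monics in $\mathsf{Man}_1$). Since isomorphisms in $\mathsf{Emb}_1$ are exactly the bijective morphisms, the claim reduces to showing that $f$ is a $\chi^+$-Cauchy map if and only if $Qf$ is surjective. In either direction, the key bridge is \cref{cor:sym_chi_equivalence_rel}, which identifies the images of inextendable right-chiral curves in a spacetime $X$ with the equivalence classes of $s\chi^+_X$, i.e.\ with the fibres of $\pi_X$.

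For the direction $(\Rightarrow)$, suppose $f(M)$ contains a $\chi^+$-Cauchy surface $\Sigma \subseteq N$. Given $[q] \in N/s\chi^+$, let $\gamma$ be the inextendable right-chiral curve whose image is $[q]$. Because $\Sigma$ is a $\chi^+$-Cauchy surface, $\gamma$ meets $\Sigma$, so there exists $p \in M$ with $f(p) \in \Sigma \cap \gamma$, giving $Qf[p] = [f(p)] = [q]$. Hence $Qf$ is surjective.

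For $(\Leftarrow)$, suppose $Qf$ is bijective, and choose a $\chi^+$-Cauchy surface $S \subseteq M$, for instance the image of a smooth global section of $\pi_M$ supplied by \cref{cor:chiral_initial_has_section}. Set $\Sigma := f(S) \subseteq f(M)$; since morphisms of $\chi\mathsf{Loc}$ are injective local diffeomorphisms, $\Sigma$ is a smoothly embedded $1$-submanifold of $N$. Let $\gamma$ be any inextendable right-chiral curve in $N$, with image an equivalence class $[q] \in N/s\chi^+$. Surjectivity of $Qf$ yields $p \in M$ with $[f(p)] = [q]$; the inextendable right-chiral curve through $p$ in $M$ meets $S$ at a unique point $p_0$, and $(p, p_0) \in s\chi^+_M$ gives $(f(p), f(p_0)) \in s\chi^+_N$, so $f(p_0) \in \Sigma \cap \gamma$. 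For uniqueness, if $f(p_0), f(p_0') \in \Sigma \cap \gamma$ with $p_0, p_0' \in S$, then $(f(p_0), f(p_0')) \in s\chi^+_N$, whence injectivity of $Qf$ (equivalently, the fact that $f$ reflects $s\chi^+$, via \cref{cor:overlap-monics-chiInitCSpTm} and $\chi^+$-causality) gives $(p_0, p_0') \in s\chi^+_M$; but $S$ meets each equivalence class of $s\chi^+_M$ at most once, so $p_0 = p_0'$.

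No step looks genuinely hard; the main point to be careful about is the uniqueness in $(\Leftarrow)$, which must use injectivity of $Qf$ rather than merely injectivity of $f$, and must invoke the defining property of the $\chi^+$-Cauchy surface $S$ in $M$ rather than any global property of $N$. Everything else is bookkeeping with the equivalence $s\chi^+$ and the already-established functoriality properties of $Q$.
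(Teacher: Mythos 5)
Your proposal is correct, and its forward direction is essentially the paper's own argument: both reduce to surjectivity of the (already injective, embedding-type) morphism $Qf$ and produce a preimage via the $\chi^+$-Cauchy surface contained in $f(M)$, using \cref{cor:sym_chi_equivalence_rel} to identify fibres of $\pi_N$ with images of inextendable right-chiral curves.

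In the converse direction you end up with the same candidate surface as the paper but verify it differently. The paper composes $f \circ \sigma \circ (Qf)^{-1}$ for a smooth section $\sigma$ of $\pi_M$ (from \cref{cor:chiral_initial_has_section}), checks this is a smooth section of $\pi_N$, and then invokes \cref{lem:chiral_cauchy_sections}, which delivers embeddedness and the Cauchy property in one stroke; note that the image of this section is exactly $f(S)$ for $S = \sigma(M/s\chi^+)$, i.e.\ the same set you use. You instead check the defining property of a $\chi^+$-Cauchy surface for $f(S)$ by hand, using surjectivity of $Qf$ for existence and injectivity of $Qf$ (equivalently, $s\chi^+$-reflection of $f$, which indeed follows from \cref{cor:overlap-monics-chiInitCSpTm} together with $\chi^+$-causality of $N$) for uniqueness. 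This is a legitimate, more elementary route; its cost is that you must separately justify two small points the section lemma handles automatically: that $f(S)$ is a smoothly embedded $1$-submanifold of $N$ (fine, since $f$ is an injective local diffeomorphism, hence a diffeomorphism onto the open subset $f(M)$), and that uniqueness of the intersection \emph{point} upgrades to uniqueness of the parameter $t$ in \cref{def:chiral_cauchy_surface}, which needs injectivity of inextendable right-chiral curves in $N$, i.e.\ $\chi^+$-causality of $N$ — a step you leave implicit but which is immediate in this setting.
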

\begin{proof}
$(\Rightarrow)$:
	Let $S \subseteq f(M) \subseteq N$ be a $\chi^+$-Cauchy surface of $N$.
% 	Since $Q$ preserves $\overlap$-monics, $Qf$ is a morphism of $\mathsf{Emb}_1$ and so a smooth embedding; 
	To show that $Qf$ is a diffeomorphism, and so an isomorphism in $\mathsf{Emb}_1$, it suffices to show that it is surjective.
	Take any $[q] \in N/s\chi^+$ with arbitrarily chosen representative $q \in N$, and let $\gamma : I \to N$ be an inextendable right-chiral curve through $q = \gamma(0)$.
	By definition of $\chi^+$-Cauchy surfaces, there exists a unique $t \in I$ such that $\gamma(t) \in S$.
	Since $S \subseteq f(M)$ there exists $p \in M$ with $f(p) = \gamma(t)$, and hence
	\begin{equation*}
		Qf[p] = [f(p)] = [\gamma(t)] = [\gamma(0)] = [q].
	\end{equation*}
	
$(\Leftarrow)$:
	Say $Qf$ has inverse $Qf^{-1}$.
	By \cref{cor:chiral_initial_has_section}, there is a smooth section $\sigma$ of the quotient map $\pi_M : M \to M/s\chi^+$.
	Then the composite
	\begin{equation*}
		\faktor{N}{s\chi^+} \xrightarrow{Qf^{-1}} \faktor{M}{s\chi^+} \xrightarrow{\sigma} M \xrightarrow{f} N
	\end{equation*}
	of smooth maps satisfies
	\begin{equation*}
		\pi_N \circ (f \circ \sigma \circ Qf^{-1}) = Qf \circ (\pi_M \circ \sigma) \circ Qf^{-1} = Qf \circ \mathrm{id}_{M/s\chi^+} \circ Qf^{-1} = \mathrm{id}_{N/s\chi^+}.
	\end{equation*}
	So $f \circ \sigma \circ Qf^{-1}$ is a smooth section of $\pi_N : N \to N/s\chi^+$.
	Then by \cref{lem:chiral_cauchy_sections}, the image of $f \circ \sigma \circ Qf^{-1}$ is a $\chi^+$-Cauchy surface in $N$ and is evidently contained in $f(M)$.
\end{proof}

As an aside, from this it follows that compositions of $\chi^+$-Cauchy maps in $\chi\mathsf{Loc}$ are also $\chi^+$-Cauchy maps: for if $f: M \to N$ and $g: N \to P$ are $\chi^+$-Cauchy in $\chi\mathsf{Loc}$, then $Qf$, $Qg$ are isomorphisms in $\mathsf{Emb}_1$; then so too is $Q(gf)$, so that $gf$ is $\chi^+$-Cauchy.
As a consequence, the collection $W$ of morphisms in $\chi\mathsf{Loc}$ which are $\chi^+$-Cauchy constitute a wide subcategory $\mathsf{W}$.
Moreover, this subcategory possesses the 2-of-6 property:

\begin{corollary} \label{cor:Cauchy_weak_equivalences}
	The wide subcategory $\mathsf{W}$ of $\chi^+$-Cauchy maps in $\chi\mathsf{Loc}$ satisfies the following 2-of-6 property: for composable triple $f,g,h$ of morphisms in $\chi\mathsf{Loc}$, if $gf$ and $hg$ are in $\mathsf{W}$ then so are $f$, $g$, $h$, and $hgf$.
\end{corollary}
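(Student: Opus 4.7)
The plan is to reduce the statement to the well-known 2-of-6 property of isomorphisms in any category by transporting the problem through the functor $Q : \chi\mathsf{Loc} \to \mathsf{Emb}_1$ constructed above. By \cref{prop:chi_cauchy_iso}, a morphism $k$ in $\chi\mathsf{Loc}$ lies in $\mathsf{W}$ if and only if $Qk$ is an isomorphism in $\mathsf{Emb}_1$. Thus $\mathsf{W}$ is exactly the preimage under $Q$ of the class of isomorphisms in $\mathsf{Emb}_1$.

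Given a composable triple $f, g, h$ in $\chi\mathsf{Loc}$ with $gf, hg \in \mathsf{W}$, functoriality of $Q$ gives that $Q(gf) = Qg \circ Qf$ and $Q(hg) = Qh \circ Qg$ are both isomorphisms in $\mathsf{Emb}_1$. I would then invoke the standard fact that isomorphisms in any category satisfy 2-of-6: letting $\alpha := Qg \circ Qf$ and $\beta := Qh \circ Qg$, the morphism $Qg$ has right inverse $Qf \circ \alpha^{-1}$ and left inverse $\beta^{-1} \circ Qh$, so $Qg$ is an isomorphism. Then $Qf = Qg^{-1} \circ \alpha$ and $Qh = \beta \circ Qg^{-1}$ are isomorphisms as composites of isomorphisms, and $Q(hgf) = \beta \circ Qf = Qh \circ \alpha$ is an isomorphism as well.

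Applying \cref{prop:chi_cauchy_iso} in the reverse direction then yields that $f$, $g$, $h$, and $hgf$ are all $\chi^+$-Cauchy maps in $\chi\mathsf{Loc}$, hence lie in $\mathsf{W}$. There is no real obstacle: once the equivalence between membership in $\mathsf{W}$ and invertibility of $Q(-)$ is in hand, the argument is a two-line verification, and the only care needed is to cite \cref{prop:chi_cauchy_iso} in both directions.
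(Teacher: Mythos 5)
Your proposal is correct and follows essentially the same route as the paper: transport the problem along $Q$ using \cref{prop:chi_cauchy_iso} in both directions, and then run the standard 2-of-6 argument for isomorphisms in $\mathsf{Emb}_1$ (your variant exhibits both a left inverse $Q(hg)^{-1} \circ Qh$ and a right inverse $Qf \circ Q(gf)^{-1}$ for $Qg$, while the paper deduces two-sidedness of the latter from monicity of $Qg$ — the same standard argument in slightly different dress). No gaps.
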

\begin{proof}
	This follows from the 2-of-6 property of isomorphisms via the preceding proposition: say composable triple $f,g,h$ in $\chi\mathsf{Loc}$ has $Q(gf)$ and $Q(hg)$ isomorphisms in $\mathsf{Emb}_1$.
	Then $Qg$ has left-inverse $Qf \circ Q(gf)^{-1}$; also since $Q(hg) = Qh \circ Qg$ is an isomorphism, $Qg$ is monic so left-inverse $Qf \circ Q(g f)^{-1}$ of $Qg$ is also a right-inverse:
	\begin{equation*}
		Qg \circ \left\{\left[Qf \circ Q(gf)^{-1}\right] \circ Qg\right\}
		=
		\mathrm{id} \circ Qg
		=
		Qg \circ \mathrm{id}.
	\end{equation*}
	It then follows since $Qg$ is invertible that $Qf$ has inverse $Q(gf)^{-1} \circ Qg$, $Qh$ has inverse $Qg \circ Q(hg)^{-1}$, and $Q(hgf)$ has inverse $Q(gf)^{-1} \circ Qg \circ Q(hg)^{-1}$.
\end{proof}
Consequently, the data $(\chi\mathsf{Loc}, \mathsf{W})$ may be understood as a homotopical category, with $\chi^+$-Cauchy morphisms $\mathsf{W}$ its weak equivalences; see for instance \cite[Definition 2.1.1]{Riehl2014}.

The functor $Q : \chi\mathsf{Loc} \to \mathsf{Emb}_1$ allows us to produce from any chiral CFT in the established formalism a chiral CFT in our proposed formalism:

\begin{proposition}
	Consider any functor $\mathcal{A} : \mathsf{Emb}_1 \to \mathsf{Obs}$.
	Then the composite $\mathcal{A} \circ Q : \chi\mathsf{Loc} \to \mathsf{Obs}$ obeys the chiral time-slice axiom.
	
	Moreover, if $\mathcal{A}$ obeys the causality axiom with respect to orthogonality relation $\disjoint_\mathrm{set}$ on domain $\mathsf{Emb}_1$, then $\mathcal{A} \circ Q$ obeys the causality axiom with respect to orthogonality relation $\disjoint_{\chi^+}$ on $\chi\mathsf{Loc}$.
\end{proposition}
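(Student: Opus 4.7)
The proof will be short, essentially assembling the functoriality of $\mathcal{A}$ with the properties of $Q$ already established in the excerpt. The plan is to handle the two claims separately, since each reduces immediately to the corresponding property of $Q$ combined with a generic functoriality fact about $\mathcal{A}$.

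First, for the chiral time-slice axiom, I would take an arbitrary morphism $f : M \to N$ in $\chi\mathsf{Loc}$ which is a $\chi^+$-Cauchy map, and argue that $(\mathcal{A}\circ Q)(f) = \mathcal{A}(Qf)$ is an isomorphism in $\mathsf{Obs}$. By \cref{prop:chi_cauchy_iso}, the hypothesis that $f$ is a $\chi^+$-Cauchy map is equivalent to $Qf$ being an isomorphism in $\mathsf{Emb}_1$. Since functors preserve isomorphisms, $\mathcal{A}(Qf)$ is an isomorphism in $\mathsf{Obs}$, which is precisely the content of the chiral time-slice axiom for $\mathcal{A}\circ Q$.

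Second, for the causality axiom, I would take any conterminous pair $f_1 : M_1 \to N \leftarrow M_2 : f_2$ in $\chi\mathsf{Loc}$ with $f_1 \disjoint_{\chi^+} f_2$. By \cref{prop:Q_reflects_preserves_disjointness_relations}, $Q$ preserves $\disjoint$-relations, so $Qf_1 \disjoint_\mathrm{set} Qf_2$ in $\mathsf{Emb}_1$. The assumed causality axiom on $\mathcal{A} : \mathsf{Emb}_1 \to \mathsf{Obs}$ with respect to $\disjoint_\mathrm{set}$ then gives
\begin{equation*}
    [\mathcal{A}(Qf_1)(a_1), \mathcal{A}(Qf_2)(a_2)]_{\mathcal{A}(QN)} = 0
\end{equation*}
for any $a_i \in \mathcal{A}(QM_i) = (\mathcal{A}\circ Q)(M_i)$. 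This is exactly the causality axiom for $\mathcal{A}\circ Q$ with respect to $\disjoint_{\chi^+}$ on $\chi\mathsf{Loc}$.

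There is no real obstacle here: both claims follow by direct composition, and all the substantive work has already been done in establishing that $Q$ preserves $\disjoint$-relations and sends $\chi^+$-Cauchy maps to isomorphisms in $\mathsf{Emb}_1$. If anything, the only care needed is to note that the causality axiom statement given in the introduction is formulated in terms of the $\disjoint$-relation of the domain category and is preserved under pre-composition with any $\disjoint$-preserving functor; this is an immediate general fact and requires no additional argument beyond citing \cref{prop:Q_reflects_preserves_disjointness_relations}.
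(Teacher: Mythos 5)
Your proposal is correct and follows essentially the same route as the paper: the time-slice claim is reduced via \cref{prop:chi_cauchy_iso} to functors preserving isomorphisms, and the causality claim via \cref{prop:Q_reflects_preserves_disjointness_relations} to $Q$ preserving $\disjoint$-relations. Nothing is missing.
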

\begin{proof}
	The former statement follows because any $\chi^+$-Cauchy map $f$ in $\chi\mathsf{Loc}$ has $Qf$ and hence necessarily $\mathcal{A} Qf$ an isomorphism (\cref{prop:chi_cauchy_iso}).
	
	The latter statement follows since $Q$ preserves $\disjoint$-relations (\cref{prop:Q_reflects_preserves_disjointness_relations}): if conterminous pair $f_1 : M_1 \rightarrow N \leftarrow M_2 : f_2$ in $\chi\mathsf{Loc}$ has $f_1 \disjoint_{\chi^+} f_2$, then necessarily $Qf_1 \disjoint_\mathrm{set} Qf_2$ in $\mathsf{Emb}_1$.
	Then if $\mathcal{A}: \mathsf{Emb}_1 \to \mathsf{Obs}$ obeys the causality axiom, it follows that
	\begin{equation*}
		\left[\mathcal{A} Q f_1 (a_1), \mathcal{A} Q f_2 (a_2)\right]_{\mathcal{A} Q N} = 0,
	\end{equation*}
	for any observables $a_1 \in \mathcal{A} QM_1$ and $a_2 \in \mathcal{A} QM_2$.
\end{proof}

It is not clear to us yet whether a converse to the above holds:

\begin{openquestion}
	Say a functor $\mathcal{B} : \chi\mathsf{Loc} \to \mathsf{Obs}$ satisfies both the causality axiom with respect to $\disjoint_{\chi^+}$ and the chiral time-slice axiom.
	Does there exist a functor $\widetilde{\mathcal{B}} : \mathsf{Emb}_1 \to \mathsf{Obs}$ satisfying the causality axiom with respect to $\disjoint_\mathrm{set}$ such that $\mathcal{B} \cong \widetilde{\mathcal{B}} \circ Q$?
\end{openquestion}

One possibility is that the functor $Q : \chi\mathsf{Loc} \to \mathsf{Emb}_1$ is a localisation of $\chi\mathsf{Loc}$ on the $\chi^+$-Cauchy morphisms $\mathsf{W}$, so $\mathsf{Emb}_1 \simeq \chi\mathsf{Loc}[\mathsf{W}^{-1}]$; see \cite[Chapter I]{GabrielZisman1967} and \cite[Chapter 7]{KashiwaraSchapira2006} for general information about localisations of categories, and \cite[Lemma 3.20]{BeniniSchenkelWoike2021} for their interaction with orthogonality relations.
In this case, our formulation of chiral CFTs in terms of $\chi\mathsf{Loc}$ would be equivalent to the established formulation in terms of $\mathsf{Emb}_1$.

If $Q$ is not such a localisation, then our proposed formulation of chiral CFTs may be meaningfully new; it would remain to determine whether it is of any greater physical interest than the established formulation.

% END

\paragraph*{Acknowledgments}
We thank Matthew Headrick and tslil clingman for useful discussions and comments.
This work has been supported by grants DE-SC0020194 and DE-SC0009986 from the U.S. Department of Energy.

\paragraph*{Declarations}
\begin{itemize}
	\item Data sharing not applicable to this article as no datasets were generated or analysed during the current study.
	
	\item The author has no relevant financial or non-financial interests to disclose.
\end{itemize}

\printbibliography

\end{document}